\documentclass[preprint,11pt,authoryear]{elsarticle}
\usepackage{amsmath}%
\usepackage{amsfonts}%
\usepackage{amssymb}%
\usepackage{graphicx}
\usepackage{amssymb}
\usepackage{color}
\usepackage{multirow}
\usepackage{booktabs}
\usepackage{float}
\usepackage{comment}
\usepackage{enumitem}
\usepackage{placeins}
\usepackage[table,dvipsnames]{xcolor}
\usepackage{bbm}

\usepackage[subrefformat=parens, labelfont=up]{subcaption}
\usepackage{amsthm}
\usepackage{thmtools}
\usepackage{hyperref}
\usepackage[margin=2.5cm]{geometry}

\newcounter{panel}

\newcounter{group}

\setcitestyle{round}

\theoremstyle{plain}

\newtheorem{proposition}{Proposition}[section]

\newtheorem{assumption}{Assumption}[section]

\declaretheoremstyle[
  spaceabove=1.1\baselineskip, 
  spacebelow=1.1\baselineskip, 
  headfont=\bfseries,           
  notefont=\itshape,            
  bodyfont=\normalfont,         
  headpunct=.,                  
  postheadspace=1em
]{estimatorstyle}

\theoremstyle{plain}
\newtheorem{theorem}{Theorem}[section]
\newtheorem{corollary}{Corollary}[section]
\newtheorem{lemma}{Lemma}[section]

\theoremstyle{remark}
\newtheorem{remark}{Remark}[section]
\newtheorem{ex}{Example}[section]

\newcommand\be{\begin{eqnarray}}
\newcommand\ee{\end{eqnarray}}
\newcommand\bee{\begin{eqnarray*}}
\newcommand\eee{\end{eqnarray*}}

\newcommand\un{\underline}

\newcommand{\R}{\mathbb{R}}
\newcommand{\Z}{\mathbb{Z}}

\newcommand{\E}{\mathbb{E}}

\newcommand{\eones}{\underline{\mathbbm{1}}}
\newcommand{\Var}{\textup{Var}}
\newcommand{\Cov}{\textup{Cov}}
\newcommand{\vect}{\textup{vec}}
\newcommand{\trace}{\textup{tr}}
\newcommand{\diag}{\textup{diag}}

\newcommand{\du}{\underline{\delta}}

\newcommand{\gu}{\underline{g}}
\newcommand{\hu}{\underline{h}}
\newcommand{\tu}{\underline{t}}
\newcommand{\su}{\underline{s}}
\newcommand{\ru}{\underline{r}}
\newcommand{\uu}{\underline{u}}
\newcommand{\vu}{\underline{v}}
\newcommand{\pu}{\underline{p}}
\newcommand{\qu}{\underline{q}}

\newcommand{\lagset}{H_{\un n}}
\newcommand{\nhset}{G_{\un n}(\un h)}
\newcommand{\nset}{G_{\un n}}
\newcommand{\Cctr}{C_Z}
\newcommand{\Cbar}{\overline{C}}
\newcommand{\Xbar}{\overline{X}}
\newcommand{\ngset}{G_{\un n}(\un g)}

\definecolor{m1color}{RGB}{255, 200, 200}  
\definecolor{m2color}{RGB}{200, 200, 255}  
\definecolor{centercolor}{RGB}{255, 255, 0} 

\usepackage{nicematrix}
\usepackage{tikz}
\usepackage{graphicx} 

\numberwithin{equation}{section}

\begin{document}

\begin{frontmatter}

\title{Nonparametric covariance estimation with bias correction for spatial lattice processes}

\author{Antonio Fioravanti\corref{cor1}}
\ead{fioravanti@statistik.tu-dortmund.de}
\author{Carsten Jentsch}
\address{Department of Statistics, TU Dortmund University, 44221 Dortmund, Germany}
\cortext[cor1]{Corresponding author}
\ead{jentsch@statistik.tu-dortmund.de}

\date{\today}

\begin{abstract}
Accurate covariance estimation is crucial for spatial data analysis. While parametric methods can suffer from model misspecification leading to wrong conclusions, nonparametric approaches are often neglected in practice as they rely on the estimation of a large number of covariance parameters and often face finite-sample bias issues.

In this paper, we study the bias properties of sample covariance estimators for stationary lattice processes on $\mathbb{Z}^2$, when the true mean parameter is unknown and has to be estimated. We derive exact formulas for the finite-sample biases of sample covariance estimators and show that their expectations are linear combinations of population covariances determined by spatial lag and sample size. Based on this characterization, we propose jointly bias-corrected covariance estimators that are nearly unbiased. 
Additionally, we derive formulas for the mean-squared error and prove asymptotic normality results that show asymptotic equivalence for the estimators with and without bias correction. Simulations demonstrate substantial reductions of bias and often also in terms of MSE, particularly under strong spatial dependence. A kriging application to topography data further illustrates the practical benefits of the proposed approach.
\end{abstract}

\begin{keyword}
Spatial data \sep random fields \sep spatial dependence \sep covariance estimation \sep lattice processes \sep finite-sample bias \sep asymptotic theory
\MSC[2020] 62M30 \sep 62M40  \sep 62G05 
\end{keyword}

\end{frontmatter}


\section{Introduction}\label{introduction}

The study of spatially dependent processes, that is, of random variables that are indexed by a set of discrete spatial coordinates of dimension larger than one, has become increasingly important across different scientific disciplines. Among others, these include geostatistics, environmental science, epidemiology, image analysis and econometrics. When studying spatial data, a key aspect is the analysis of the spatial dependence, that is, how strong the random variables are correlated with others in nearby locations and how fast this (linear) stochastic dependence actually decays when the distance between two locations becomes larger. The accurate estimation of the dependence a central problem in the analysis of any kind of dependent processes. Compared with time series, where dependence evolves along a single ordered dimension, spatial processes exhibit dependence across several directions. This makes the dependence structure more complex and accurate covariance estimation becomes even more crucial to guarantee meaningful inference on the underlying process and reliable prediction at unobserved locations. For a broad review of spatial statistics and spatial dependence concepts, see \citet{Cressie1993, sherman2011spatial, schabenberger2017statistical}. 

Two major approaches are commonly used to model spatial dependence. On the one hand, parametric approaches are used to specify models for the covariance function based on a (low-dimensional) parameter vector, such as, e.g., the Mat{\'e}rn, exponential or spherical model, among others \citep{Cressie1993, Guttorp2006} and estimates its parameters from the data, typically by maximum likelihood  \citep{mardia1984} or by fitting an empirical variogram using weighted least squares \citep{Cressie1985}. Parametric models are attractive, because valid covariance families guarantee positive semidefiniteness of the estimated covariance function and boil down the problem to estimating some (low-dimensional) parameter vector that determines the covariance function. However, parametric approaches impose potentially restrictive assumptions on the covariance structure, and the effect of covariance misspecification, e.g., on spatial prediction depends critically on the appropriateness of the assumed covariance models \citep{Stein1989}. Sometimes even under correct specification, e.g. of a Mat{\'e}rn covariance model, individual covariance parameters may fail to be consistently estimable under fixed-domain asymptotics \citep{Zhang2004}.

On the other hand, to address the possible restrictiveness of parametric assumptions, nonparametric approaches estimate the covariances or (semi-)variograms directly from the data without imposing any specific parametric form. Classical geostatistics predominately favors empirical variogram estimation \citep{matheron1962traite}; while more flexible methods include kernel-methods \citep{hall1994nonparametric},  bootstrap methods \citep{Bjrnstad2001}, sieve maximum likelihood \citep{wang2023nonparametric}, among others. A recent review of nonparametric methods for the estimation of the covariance for spatial data is provided by \citet{Bilchouris2025}, see also the references therein. Relying on the common method-of-moment principle, the most direct nonparametric technique is based on the sample covariance. This estimator, however, is well known to be biased in finite samples, when the true mean is unknown \citep{sherman2011spatial}.

For lattice covariance estimation, the focus has been on the bias caused by edge or boundary effects, which are typical for lattices and higher-dimensional data arrays. \citet{guyon1982parameter} showed that such a bias is not asymptotically negligible in a lattice of dimension larger or equal to 2, and proposed to normalize the sample covariance by the number of observed pairs at its respective lag rather than using the total sample size, which in this paper is referred to as the \emph{standard estimator}. This solved the issue of the bias induced by the edge effect \citep{ripley1984spatial}. Later, \citet{dahlhaus1987edge} introduced data tapering to reduce the large variance caused by estimating covariances at high lags. More recently, \citet{guillaumin2022debiased} propose a debiasing of the Whittle quasi-likelihood estimator by replacing the spectral density by the expected periodogram. However, all these methods operate in settings where the process mean is known and often set to zero. In practice, this is not the case and the unknown mean needs to replaced by a suitable estimator (typically the sample mean) in the estimation process. In comparison to the edge effect bias, this introduces a different kind of bias \cite{sherman2011spatial}, which is the subject of the present paper.

The same bias problem arises also in time series analysis, 
where for univariate time series, \citet{vogelsang2016exactly} proposed a matrix inversion framework that yields nearly unbiased sample covariance estimation, subsequently extended also to the multivariate time series case by \citet{li2024correcting}. To the best of our knowledge, no analogous bias correction exists for spatial lattice processes, where the two dimensional structure introduces additional complexity and the number of covariances to be estimated becomes large already for small sample sizes. The present paper aims to fill this gap. 

The remainder of the paper is organized as follows. In Section \ref{sec:exact_bias}, we introduce the spatial lattice setup and provide the notation used throughout the paper. Then, we present the two versions of sample covariance estimators studied in this paper, which differ only in their prefactors, and derive formulas for their exact finite sample bias and also jointly for the whole matrix of sample covariances at all observed lags. 
Based on these findings, in Section~\ref{sec:mainresult}, we propose a joint bias-corrected sample covariance (matrix) estimator that is demonstrated to be nearly unbiased. We quantify the remaining bias, calculate the mean squared error and suggest two data-driven rules for the selection of a required cutoff parameter. Moreover, we
establish that the correction approach applied to both versions of sample covariance estimators studied in this paper lead exactly to the same bias-corrected estimator and we present asymptotic normality results.
In Section~\ref{sec:simulations}, we report the results of a Monte Carlo simulation study for separable and nonseparable Gaussian random fields. A real-data application is shown in Section~\ref{sec:dataapp}, where we apply bias-corrected covariance estimation to conduct ordinary kriging on a topographic dataset. 
Section~\ref{sec:conclusion} concludes and states future research directions. All proofs are collected in the Appendix.

\section{The exact bias of sample covariances for lattice processes}\label{sec:exact_bias}
In this section, after presenting the notation and assumptions employed in this paper, we introduce the two sample covariance estimators (which differ only in the prefactor used), we derive the finite sample bias of each of them, first for a single scalar estimator, and then for a joint matrix estimator of covariances. Finally, we describe the second-order structure of the estimators assuming Gaussianity.

\subsection{Spatial lattice processes: setup and notation}\label{subsec:setup}

We consider a real-valued univariate spatial process $\mathbf{X}=\{X_{\underline{t}},\; \underline{t}=(t_1,t_2)\in\mathbb{Z}^2\}$, indexed on the two-dimensional integer lattice\footnote{Here and throughout the manuscript, vector-valued quantities are denoted by underlined lower-case letters $\un v$ and matrices by bold upper-case letters $\mathbf{M}$.}. A vector $\underline{h}=(h_1,h_2)\in\mathbb{Z}^2$ is called a spatial \emph{lag} and represents the shift separating a location $\un t$ from the location $\un t+\un h$. We state two commonly imposed assumptions in the following.

\begin{assumption}[Second-order stationarity]
\label{ass:stationarity}
$\mathbf{X}$ is second-order (or weakly) stationary with constant mean $\E(X_{\un t})=\mu$ for all $\un t\in\mathbb{Z}^2$ and with spatial covariance function $C:\mathbb{Z}^2\rightarrow \mathbb{R}$, such that
\begin{align}\label{eq:covariance}
 C(\un h) &:=\Cov(X_{\un t+\un h}, X_{\un t}) = \E\bigl(\left(X_{\un t +\un h}-\mu\right)\left(X_{\un t}-\mu\right)\bigr), \quad  \forall \un t, \un h \in \mathbb{Z}^2.
\end{align}
In particular, it holds that $C(\un h)= C(-\un h)$.
\end{assumption}

\begin{assumption}[Short-range dependence] 
\label{ass:abs_summability}
The covariance function $C$ from \eqref{eq:covariance} is absolutely summable over $\mathbb{Z}^2$. That is, we have
\begin{equation}\label{eq:abs_summability}
\|C\|_{1} =\sum_{\hu\in\Z^{2}}|C(\hu)|<\infty.
\end{equation}
    
\end{assumption}
In particular, under Assumption~\ref{ass:stationarity}, a Gaussian random field is fully characterized by its first and second moments, while Assumption~\ref{ass:abs_summability} imposes the common short-range dependence condition that excludes long-memory lattices fields.

Next, we define how data from the process $\mathbf{X}$ is observed. Suppose that for $\un n = (n_1, n_2)\in \mathbb{N}^2$, we obtain a lattice realization and collect the observations in the data matrix $\mathbf{X}_{\underline{n}}:=\bigl(X_{\underline{t}},\; \underline{t}\in G_{\un n}\bigr)$ of the spatial process $\mathbf{X}$ on a $(n_1\times n_2)$ rectangular grid, whose unique coordinate pairs are collected in the index set
\begin{equation}\label{eq:grid_set}
G_{\underline{n}} = \bigl\{\un t = (t_1, t_2) : 1 \le t_i \le n_i,\; i =1,2 \bigr\} \subset \mathbb{Z}^2.
\end{equation}
In the following, we denote its cardinality $|\nset|$, i.e., the total amount of unique coordinate pairs in the grid $\nset$, also as the \emph{total} sample size and we define
\begin{equation}\label{eq:total_sample_size}
N := |\nset| = n_1n_2.
\end{equation}
Accordingly, all lags realized by $G_{\underline n}$ are collected in
\begin{equation}\label{eq:lag_set}
H_{\underline n}
:=\{\underline h=(h_1,h_2): |h_i|\leq n_i-1,\ i \in \{1,2\}\}
\subset\mathbb Z^2.
\end{equation}
Note that not every location in $G_{\underline n}$ admits every lag in $H_{\underline n}$: a location $\un t$ and lag $\un h$ has an observation in $\mathbf{X}_{\underline{n}}$ if and only if the location $\un t + \un h$ is also contained in the observed grid $\nset$. Thus, we define also the subset of all valid locations for the lag $\un h$ as
\begin{equation}\label{eq:effective_set}
\nhset:=\bigl\{\un t\in \nset : \un t+\un h\in \nset\bigr\}\subseteq \nset,
\qquad \un h\in H_{\un n}.
\end{equation}
Its cardinality $|\nhset|$, is the \emph{effective} sample size corresponding to lag $\un h$, i.e., the number of pairs $(X_{\un t}, X_{\un t + \un h})$ separated by the lag $\un h$ such that both $X_{\un t}$ and $X_{\un t+ \un h}$ are contained in the grid $G_{\underline n}$. This leads to  
\begin{align}
N_h &:=|\nhset| =  (n_1-|h_1|)(n_2-|h_2|).
\label{eq:effective_sample_size}
\end{align}
Finally, we introduce some set notation. First, for arbitrary integers $a,b\in\mathbb{Z}$ with $a\leq b$, we use the short notation
\begin{align}\label{eq:I_set_definition}
I(a:b):= \{a,\, a+1,\, a+2,\, \ldots,\,b-2,\,b-1,\,b\}
\end{align}
for the set of all integers from $a$ to $b$. Hence, for $i\in \{1,2\}$, the lags of $\un h \in H_{\un n}$ run along the interval $I\bigl((-n_i+1):(n_i-1)\bigr)$ and the locations $\un t\in \nhset$ run along the interval $
I\bigl(\max(1,1-h_i): \min(n_i, n_i-h_i)\bigr)$. 

Throughout the paper, we will deal with the vectorization operator $\vect(\cdot)$ that stacks a rectangular lattice column-wise. Accordingly, the mapping that allows to recover the correct double indices of the lattice from the single index of the vectorized lattice can be defined as follows: for a lattice $F = I(a_1:b_1)\times I(a_2:b_2)$ with set cardinalities $c_i := |I(a_i:b_i)|= b_i-a_i+1$, $i\in\{1,2\}$ and total cardinality $|F| = c_1c_2$, we use the mapping
\begin{equation}\label{eq:mapping_vec}
    \un{\ell}_F:\{1, \dots, |F|\} \rightarrow F, \quad \un{\ell}_F(k):= \Biggl(\bigl((k-1) \bmod c_1\bigr)+a_1, \; \Bigg\lfloor\frac{k-1}{c_1}\Bigg\rfloor+a_2\Biggr),
\end{equation}
where $\lfloor x\rfloor$ denotes the largest integer smaller or equal $x$, and $a\bmod b$ denotes the remainder of $a$ upon division by $b$.

We illustrate this spatial lattice setup with a Gaussian example, which also serves as the data-generating process in our simulation study in Section~\ref{sec:simulations}.

\begin{ex}[Gaussian spatial lattice process]\label{example_Gaussian}
Suppose the spatial lattice process $\mathbf{X}$ is Gaussian. Further, suppose that the covariance $C(\hu)$ between arbitrary points $X_{\tu+\hu}$ and $X_{\tu}$ having spatial lag $\hu=(h_1,h_2)$ is given by the exponential covariance function
\begin{align}\label{eq:trueautocov}
C_{\lambda}(h_1, h_2) = \sigma^2 \exp \left( - \frac{|h_{1}|}{\lambda} - \frac{|h_{2}|}{\lambda}-\beta\frac{|h_1-h_2|}{\lambda} \right),    \quad   \hu\in H_{\underline{n}},
\end{align}
where $\sigma^2=\Var(X_{\tu})=C_{\lambda}(0, 0)$ is the variance of the lattice process $\mathbf{X}$, the parameter $\lambda$ controls the persistence of the spatial dependence and $\beta$ the degree of (non)-separability. In this paper, we consider $\lambda \in \{2,4,6,8\}$, where $\lambda = 2$ gives the weakest and $\lambda = 8$ the strongest spatial dependence. Further, $\beta=0$ leads to a separable covariance function, while $\beta\in (0,1]$ implies non-separability.
\end{ex}

In Figure \ref{fig:realization_sp}, we show realizations $\mathbf{X}_{\underline{n}}$ for $n_1=n_2=30$ of two (Gaussian) spatial processes $\mathbf{X}$ for the exponential covariance function defined in \eqref{eq:trueautocov} for $\lambda=4$ (moderate dependence) and $\lambda=8$ (large dependence), respectively, and $\beta=0$ (separable case).  

\begin{figure}[t]
    \centering
\includegraphics[width=0.9\textwidth]{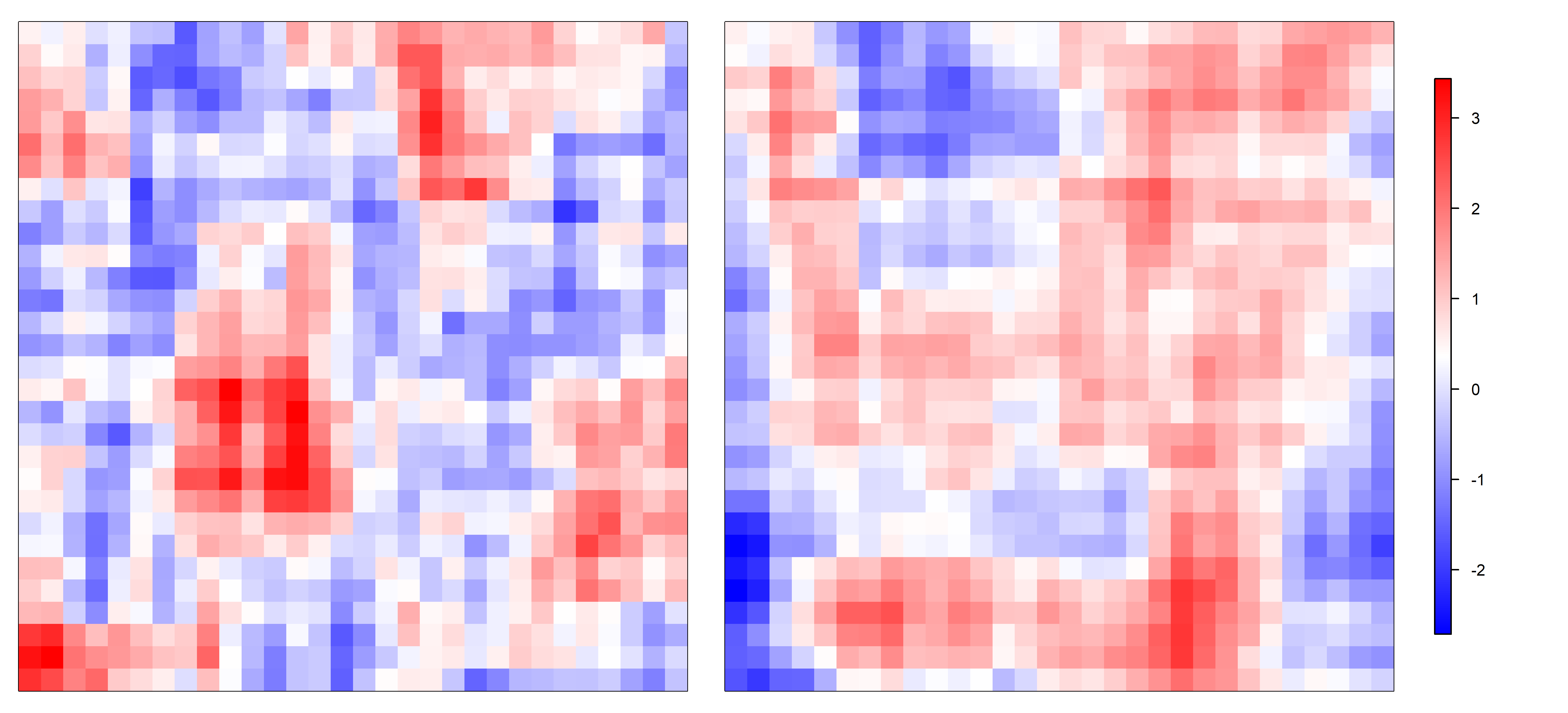}
\caption{Realizations $\mathbf{X}_{\underline{n}}$ with $n_1=n_2=30$ of spatial (Gaussian) processes $\mathbf{X}$ with exponential covariance function $C_\lambda$ defined in \eqref{eq:trueautocov} with $\lambda=4$ (left panel) and $\lambda=8$ (right panel) and $\beta=0$ (separable case) in both panels.}
\label{fig:realization_sp}
\end{figure}

\subsection{Sample covariance estimators for spatial lattice processes}\label{subsec:sample_covariances}

We consider two versions of spatial sample covariance estimators that have been proposed in the literature. As defined in the following, they differ only in their prefactors. On the one hand, using the prefactor $1/N_{\un h}$ that scales by the \emph{effective} sample size $N_{\un h}$ from \eqref{eq:effective_sample_size} for lag $\un h$, we define 
\begin{align}
\widehat{C}(\un h)
&= \frac{1}{N_{\un h}}\sum_{\tu\in \nhset}
   \bigl(X_{\tu+\un h}-\overline X\bigr)\bigl(X_{\tu}-\overline X\bigr),    \quad   \un h\in H_{\underline{n}}, \label{eq:sampleAutocov}
\end{align}
and call $\widehat{C}(\un h)$ a \emph{standard covariance estimator}. On the other hand, using the prefactor $1/N$ that scales by the \emph{total} sample size $N$ from \eqref{eq:total_sample_size} for all lags $\un h$, we define 
\begin{align}
\widetilde{C}(\un h)
&= \frac{1}{N}\sum_{\tu\in \nhset}
   \bigl(X_{\tu+\un h}-\overline X\bigr)\bigl(X_{\tu}-\overline X\bigr),    \quad   \un h\in H_{\underline{n}}.
\label{eq:sampleAutocov_const}
\end{align}
In the following, $\widetilde{C}(\un h)$ is called a \emph{constant prefactor covariance estimator}. In the above, $\nhset$ is the set of coordinates that admit the lag $\un h$ from \eqref{eq:effective_set}, $\overline X$ denotes the sample mean of the observed data on the whole grid $G_{\un n}$, defined as
\begin{equation}\label{eq:samplemean}
 \overline{X}=\frac{1}{N}\sum_{\un u\in \nset }X_{\un u},
 \end{equation} 
and, as usual, we set $\widehat{C}(\hu)=\widetilde{C}(\hu):=0$ if $\hu\notin H_{\underline{n}}$.

The standard estimator \eqref{eq:sampleAutocov} is normalized by the effective sample size $N_h$ from \eqref{eq:effective_sample_size} and is the version most commonly used in geostatistics \citep{Cressie1993}. The constant prefactor estimator \eqref{eq:sampleAutocov_const} instead divides every lag by the total sample size $N$. 
Although the latter approach is quite uncommon in the spatial statistics literature, it can be motivated and is largely inspired by analogous practices in time series analysis. In particular, the choice of the reciprocal sample size as a universal prefactor for all sample covariances (irrespective of the lag $\un h$) guarantees positive definiteness of the resulting sample covariance function (and all sample covariance matrices); see \cite{brockwell1991, kreiss2006einfuhrung}. 

The standard estimator $\widehat C(\un h)$ \eqref{eq:sampleAutocov} is mainly motivated by the fact that its \emph{infeasible} version 
\begin{align}
\widehat{C}_{\mu}(\un h)
&= \frac{1}{N_{\un h}}\sum_{\tu\in \nhset}
   \bigl(X_{\tu+\un h}- \mu\bigr)\bigl(X_{\tu}- \mu\bigr),    \quad   \un h\in H_{\underline{n}}, \label{eq:sampleAutocov_infeasible}
\end{align}
where $\overline{X}$ is replaced by $\mu$, 
is an unbiased estimator of the covariance function $C(\un h)$, i.e., $\E\bigl(\widehat{C}_{\mu}(\un h)\bigr) = C(\un h)$; see \citet{guyon1982parameter}. In contrast, the analogously defined \emph{infeasible} version $\widetilde{C}_\mu(\un h)$ of the constant prefactor estimator $\widetilde{C}(\un h)$ from \eqref{eq:sampleAutocov_const} does still exhibit a bias (unless $\un h=(0,0)$) due to the edge or boundary effect, i.e., $\E\bigl(\widetilde{C}_{\mu}(\un h)\bigr) \neq C(\un h)$ \citep{guyon1982parameter}. Still when comparing $\widehat{C}(\un h)$ and $\widetilde{C}(\un h)$ under the unknown mean setting, $\widehat{C}(\un h)$ exhibits a smaller bias at small lags, thus the standard estimator is often preferred over the constant prefactor estimator.

The issue on which we focus in this paper arises when the true mean $\mu$ is \emph{unknown} and the sample mean estimator $\overline{X}$ from \eqref{eq:samplemean} (which is itself an unbiased estimator for $\mu$) is used
in \eqref{eq:sampleAutocov}-\eqref{eq:sampleAutocov_const}. In this case, also $\widehat C(\un h)$ is biased, i.e., $\E(\widehat{C}(\un h)) \neq C(\un h)$, see \citet{sherman2011spatial}.

\subsection{Finite sample bias of a single covariance estimator }\label{subsec:finite_sample_bias}

In this section, we quantify the finite-sample biases of both versions of sample covariance estimators $\widehat{C}(\un h)$ and $\widetilde{C}(\un h)$ defined in \eqref{eq:sampleAutocov} and \eqref{eq:sampleAutocov_const}, respectively, that arise when the true mean $\mu$ of the process $\mathbf{X}$ is unknown and the sample mean $\overline X$ from \eqref{eq:samplemean} is plugged-in for the estimation.
Consequently, both estimators are computed from the deviations of the observations from the sample mean. Under Assumption~\ref{ass:stationarity}, this implies that their moments are governed by the second-order structure of the demeaned process, rather than that of the process $\mathbf{X}$. By taking a closer look at the covariance between $X_{\tu}-\overline{X}$ and $X_{\su}-\overline{X}$ for some $\tu, \su \in G_{\underline{n}}$, we get 
\begin{equation}\label{eq:C_centered}
\Cctr(\tu, \su):= \Cov(X_{\tu}-\overline{X}, X_{\su}-\overline{X}) = C(\tu-\su) - \overline{C}(\tu) - \overline{C}(\su) + \Var(\overline{X}),
\end{equation}
where, for $\tu \in G_{\underline{n}}$, 
\begin{equation}\label{eq:Cbar}
\overline{C}(\un t) := \Cov(X_{\un t}, \overline{X}) = \frac{1}{N}\sum_{\un u\in \nset} C(\un t-\un u)
\end{equation}
denotes the covariance between a single observation $X_{\un t}$ and the sample mean $\overline{X}$. Further, for the variance of the sample mean, we get 
\begin{equation}\label{eq:D}
  D:=\Var(\overline{X}) = \frac{1}{N}\sum_{\un v\in \nset} \overline{C}(\un v)=  \frac{1}{N^2}\sum_{\un v\in \nset}\sum_{\un u\in \nset} C(\un v-\un u).
\end{equation}
We refer to $\Cctr$ as the \emph{centered covariance function}  of the process $\mathbf{X}$. In contrast to $C$, which takes as arguments the lags $\hu$, the centered covariance function $C_Z$ is a function of the two locations. That is, $C_Z(\un t, \un s)$ does depend on the positions $\tu$ and $\su$ and not only on their difference $\tu-\su$.

Now, with the notation introduced above, the expectation of the standard covariance estimator $\widehat{C}(\hu)$ becomes
\begin{equation}\label{eq:single_st_cov_expectation}
\E\big(\widehat{C}(\hu)\big) = \frac{1}{N_h}\sum_{\tu \in \nhset}C_Z(\tu+\hu, \tu) =  C(\hu) - A_{\hu} - B_{\hu} + D,
\end{equation}
where $D$ is the variance of the sample mean \eqref{eq:D}  and where
\begin{align}
A_{\hu} &:= \frac{1}{N_{\hu}}\sum_{\tu \in \nhset}\overline{C}(\tu+\hu)=\frac{1}{N_{\hu}}\sum_{\tu \in \nhset}\Cov\big(X_{\tu+\hu},\overline X\big)
 = \frac{1}{N_{\hu}N}\sum_{\tu\in \nhset}\ \sum_{\uu\in \nset} C(\tu+\hu-\uu),
 \label{eq:Ah}\\[4pt]
B_{\hu} &:= \frac{1}{N_{\hu}}\sum_{\tu \in \nhset}\overline{C}(\tu)=\frac{1}{N_{\hu}}\sum_{\tu \in \nhset}\Cov\big(X_{\tu},\overline X\big)
 = \frac{1}{N_{\hu}N}\sum_{\tu\in \nhset}\ \sum_{\uu\in \nset} C(\tu-\uu).
 \label{eq:Bh}
\end{align}
The representation of $\E(\widehat{C}(\hu))$ in \eqref{eq:single_st_cov_expectation} makes the source of the bias explicit. The terms $A_h$ and $B_h$ from \eqref{eq:Ah} and \eqref{eq:Bh} are averages of the covariances between the observations entering the sample covariance estimator, namely $X_{\tu +\hu}$ and
$X_{\tu}$, respectively, and the sample mean, while $D$ is the variance of the sample mean itself. 
Note that each of the two terms \eqref{eq:Ah} and \eqref{eq:Bh} are double sums over the set $\nhset$ and $\nset$. Collecting the terms accordingly reduces each double sum to
a single sum over $\ru\in\lagset$, weighted by counts that factorize across the two lattice
dimensions. Details are given in the Appendix, and the result is summarized in the following theorem.

\begin{theorem}[Exact bias of a single $\widehat{C}(\hu)$]\label{prop:single_bias}
Suppose Assumption~\ref{ass:stationarity} holds and let $\hu \in \lagset$. Then,
the bias of the standard covariance estimator $\widehat{C}(\hu)$ defined in
\eqref{eq:sampleAutocov} is
\begin{align}
b(\hu) := \E\bigl(\widehat{C}(\hu)\bigr)-C(\hu)
= -\frac{1}{N_{\hu}}\sum_{\tu \in \nhset}
   \left[\Cov(X_{\tu+\hu}, \Xbar)+ \Cov(X_{\tu}, \Xbar)\right]+\Var(\Xbar)
\label{eq:biasautocov_explicit_mainprop}
\end{align}
or, equivalently represented, we have
\begin{align}
b(\hu) = -\frac{S(\hu)+S(-\hu)}{N N_{\hu}}+\frac{S(\un 0)}{N^{2}},
\label{eq:biasautocov}
\end{align}
where $S(\hu)$ is a weighted sum running over the full lag set $\lagset$ defined by
\begin{align}
S(\hu) := \sum_{\ru \in \lagset} C(\ru)\,\prod_{i=1}^2\nu_{n_i}(r_i,h_i),
\label{eq:S-def}
\end{align}
and, with $x^{+}:=\max(x,0)$, the counting factors $\nu_{n_i}(r_i,h_i)$ are given by
\begin{align}
\nu_{n_i}(r_i,h_i) := \bigl(n_i-\max(|r_i|,|h_i|,|r_i+h_i|)\bigr)^{+},\quad i \in \{1,2\}.
\label{eq:counting_factor}
\end{align}
\end{theorem}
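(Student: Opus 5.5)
The first representation \eqref{eq:biasautocov_explicit_mainprop} requires only bilinearity of the covariance. Write $X_{\tu+\hu}-\Xbar = (X_{\tu+\hu}-\mu)-(\Xbar-\mu)$ and similarly for $X_{\tu}-\Xbar$, and note that $\E(\Xbar)=\mu$. Then $\E\bigl((X_{\tu+\hu}-\Xbar)(X_{\tu}-\Xbar)\bigr)=\Cctr(\tu+\hu,\tu)$. By \eqref{eq:C_centered} this equals $C(\hu)-\Cbar(\tu+\hu)-\Cbar(\tu)+D$. Averaging over $\tu\in\nhset$ with weight $1/N_{\hu}$ gives \eqref{eq:single_st_cov_expectation}, i.e.\ $b(\hu)=-A_{\hu}-B_{\hu}+D$, which is exactly \eqref{eq:biasautocov_explicit_mainprop}.

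For \eqref{eq:biasautocov} we must show three identities:
\[
N N_{\hu}B_{\hu}=S(\hu), \qquad N N_{\hu}A_{\hu}=S(-\hu), \qquad N^2D=S(\un 0).
\]
Each one follows by regrouping a double sum according to the lag $\ru$. Consider first $N N_{\hu}B_{\hu}=\sum_{\tu\in\nhset}\sum_{\uu\in\nset}C(\tu-\uu)$. Substitute $\ru=\tu-\uu$ and interchange the sums to obtain $\sum_{\ru}C(\ru)\,M(\ru,\hu)$, where
\[
M(\ru,\hu)=\#\{\tu: \tu\in\nset,\ \tu+\hu\in\nset,\ \tu-\ru\in\nset\}.
\]
Since $\nset$ is a product of intervals, $M$ factorizes coordinatewise: $M=\prod_{i=1}^2 m_i$, where $m_i$ counts the integers $t$ with $t,\,t+h_i,\,t-r_i\in I(1:n_i)$. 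We need the one-dimensional count: the number of $t$ with $t+a\in I(1:n)$ for all $a\in\{0,h,-r\}$ is $(n-\max\{0,h,-r\}+\min\{0,h,-r\})^+$, since $t$ ranges over $I\bigl(1-\min\{0,h,-r\} : n-\max\{0,h,-r\}\bigr)$.

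The main technical step is identifying this range of the three shifts with $\max(|r|,|h|,|r+h|)$. Checking sign cases, the range $\max\{0,h,-r\}-\min\{0,h,-r\}$ equals the largest pairwise difference among $0,h,-r$, namely $\max(|h|,|r|,|h+r|)$. Hence $m_i=\nu_{n_i}(r_i,h_i)$ and $N N_{\hu}B_{\hu}=S(\hu)$. Lags $\ru\notin\lagset$ automatically give zero count, so the sum may be restricted to $\lagset$.

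For $A_{\hu}$, substitute $\su=\tu+\hu$, which ranges over $G_{\un n}(-\hu)$. Then
\[
N N_{\hu}A_{\hu}=\sum_{\su\in G_{\un n}(-\hu)}\sum_{\uu}C(\su-\uu),
\]
which by the previous computation equals $S(-\hu)$; we also use $N_{-\hu}=N_{\hu}$. Finally, $N^2 D=\sum_{\tu,\uu\in\nset}C(\tu-\uu)$ is the case $\hu=\un 0$ of the $B$ computation, because $G_{\un n}(\un 0)=\nset$. Hence $N^2D=S(\un 0)$, with $\nu_{n_i}(r_i,0)=n_i-|r_i|$.

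Substituting the three identities into $b(\hu)=-A_{\hu}-B_{\hu}+D$ yields \eqref{eq:biasautocov}. The only delicate point is the sign-case verification of the one-dimensional count identity; everything else is bookkeeping.
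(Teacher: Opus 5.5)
Your proof is correct and follows essentially the paper's route: regroup each double sum by the lag $\ru$, count the admissible sites with the one-dimensional window identity (the range of the shifts equals their largest pairwise distance), and factorize over the two axes. The differences are cosmetic. You obtain $A_{\hu}$ and $D$ from the $B_{\hu}$ computation instead of treating each term separately. Your convention $\ru=\tu-\uu$ also pairs $B_{\hu}$ with $S(\hu)$ and $A_{\hu}$ with $S(-\hu)$, the reverse of the paper, which is harmless because $S(\hu)=S(-\hu)$ follows from $C(-\ru)=C(\ru)$ and $\nu_{n_i}(-r_i,-h_i)=\nu_{n_i}(r_i,h_i)$.
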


A similar bias formula holds for the sample covariance estimator with constant prefactor $\widetilde C(\hu)$ from \eqref{eq:sampleAutocov_const}, which we state as a simple corollary.

\begin{corollary}[Exact bias of a single $\widetilde{C}(\hu)$]\label{cor:single_bias_const}
Under the conditions of Theorem~\ref{prop:single_bias}, the bias of the constant prefactor
estimator $\widetilde{C}(\hu)$ defined in \eqref{eq:sampleAutocov_const} becomes
\begin{eqnarray}\label{eq:biasautocov_constant}
\widetilde{b}(\hu)
:= \E\bigl(\widetilde{C}(\hu)\bigr)-C(\hu) &=& \left(\frac{N_h}{N}-1\right)C(\hu) +\frac{N_h}{N}b(\hu)\notag \\
&=& \left(\frac{N_{\hu}}{N}-1\right)C(\hu)
   - \frac{S(\hu)+S(-\hu)}{N^{2}}
   + \frac{N_{\hu}}{N^{3}}\,S(\un 0),
\label{eq:biasautocov_constant}
\end{eqnarray}
with $S(\un h)$ as defined in \eqref{eq:S-def}.
\end{corollary}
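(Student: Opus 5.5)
The plan is to reduce the corollary to Theorem~\ref{prop:single_bias} through the relation between the two estimators, and then substitute the closed form of $b(\hu)$ from \eqref{eq:biasautocov}.

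First, since \eqref{eq:sampleAutocov} and \eqref{eq:sampleAutocov_const} share the same double-centered sum and differ only in the prefactor, we have the deterministic identity $\widetilde{C}(\hu)=\frac{N_{\hu}}{N}\widehat{C}(\hu)$ for every $\hu\in\lagset$. Taking expectations, which is linear, and using the definition $b(\hu)=\E(\widehat C(\hu))-C(\hu)$ from Theorem~\ref{prop:single_bias}, we obtain
\begin{align*}
\widetilde b(\hu)=\frac{N_{\hu}}{N}\bigl(C(\hu)+b(\hu)\bigr)-C(\hu)=\left(\frac{N_{\hu}}{N}-1\right)C(\hu)+\frac{N_{\hu}}{N}\,b(\hu).
\end{align*}
This is the first line of the claim.

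Second, we insert the representation \eqref{eq:biasautocov}, which is valid under Assumption~\ref{ass:stationarity} alone. Multiplying by $N_{\hu}/N$ gives
\begin{align*}
\frac{N_{\hu}}{N}b(\hu)=-\frac{N_{\hu}}{N}\cdot\frac{S(\hu)+S(-\hu)}{NN_{\hu}}+\frac{N_{\hu}}{N}\cdot\frac{S(\un 0)}{N^2}=-\frac{S(\hu)+S(-\hu)}{N^2}+\frac{N_{\hu}}{N^3}S(\un 0).
\end{align*}
The factor $N_{\hu}$ cancels in the first term because $N_{\hu}=(n_1-|h_1|)(n_2-|h_2|)\ge 1$ for $\hu\in\lagset$, so the division in \eqref{eq:biasautocov} is legitimate. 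Combining the two displays yields the second line of the corollary.

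There is no real obstacle here, since all the combinatorial work is contained in Theorem~\ref{prop:single_bias}. The only points to check are that $N_{\hu}>0$ on $\lagset$ and that the convention $\widetilde C(\hu)=\widehat C(\hu)=0$ outside $\lagset$ is not needed, because the statement is restricted to $\hu\in\lagset$.
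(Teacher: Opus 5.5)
Your proposal is correct and follows the paper's own proof: you use the deterministic scaling $\widetilde C(\hu)=(N_{\hu}/N)\,\widehat C(\hu)$ and linearity of expectation to get the first line. You then substitute \eqref{eq:biasautocov} and cancel $N_{\hu}\geq 1$ to get the second line, and your algebra checks out.
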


Some remarks on the biases presented in Theorem~\ref{prop:single_bias} and Corollary~\ref{cor:single_bias_const} are in order.

\begin{remark}[On the bias formulas]\label{remark:single_bias} \ 
\begin{enumerate}[label=(\roman*)]
\item Both biases $b(\hu)$, $\widetilde{b}(\hu)$ from \eqref{eq:biasautocov} and \eqref{eq:biasautocov_constant} are weighted sums of all covariances between any two random variables in $\mathbf{X}_{\underline{n}}$, where the weights depend only on the sample sizes $(n_1,n_2)$ and the spatial lags $(h_1,h_2)$. In particular, the bias does \emph{not} depend on the unknown mean $\mu=\E(X_{\tu})$. 
\item If the true mean $\mu$ is known and replaces the sample mean $\Xbar$ in the standard estimator $\widehat{C}(\hu)$ (leading to $\widehat{C}_{\mu}(\hu)$ defined in \eqref{eq:sampleAutocov_infeasible}), we get an unbiased estimator as all the terms on the right-hand side of \eqref{eq:biasautocov_explicit_mainprop} become zero in that case.
\item Conversely, the constant prefactor estimator $\widetilde{C}(\hu)$ remains biased even when the true mean $\mu$ is known and replaces the sample mean $\Xbar$ (leading to $\widetilde{C}_{\mu}(\hu)$), as that information does not remove the term $(N_h/N-1)C(\hu)$ from the bias in \eqref{eq:biasautocov_constant}. This leftover bias in the $\widetilde{C}(\un h)$ estimator is caused by the \emph{edge effect}, as it represents the fraction of lattice pairs lost to the edge or boundary of the lattice, as studied by \citet{guyon1982parameter}.
\item Under positive spatial dependence, i.e., whenever $C(\hu) \geq 0$ for all $\hu \in H_{\underline n}$, the bias is negative for both estimators, meaning that they systematically \emph{underestimate} the true covariance. 
\item For the sample covariance at lag $\underline{h}=(0,0)$, i.e., when estimating the variance $\Var(X_{\tu})$, we have $N_h=N$, and in this case $b(\hu)$ and $\widetilde{b}(\hu)$ from \eqref{eq:biasautocov} and\eqref{eq:biasautocov_constant} coincide.
\end{enumerate}
\end{remark}

According to Remark \ref{remark:single_bias} (i), for arbitrary spatial lags $\hu$, the bias $b(\hu)$ of $\widehat C(\hu)$ (also for bias $\widetilde b(\hu)$ of $\widetilde C(\hu)$) does depend on \emph{all} covariances between any two random variables in $\mathbf{X}_{\underline{n}}$. Hence, with the goal of bias corrected estimation in view, it appears to be advisable to consider sample covariances and their biases \emph{jointly}, which is done in the following.

\FloatBarrier
\subsection{Joint bias of a matrix of the standard sample covariances}
\label{subsec:bias_matrix_sample_covariance}

In this section, we adopt a matrix-based approach that allows for a simultaneous treatment of \emph{all} spatial lags contained in $\un h\in H_{\underline{n}}$ at once. We focus first on the standard estimator $\widehat{C}(\hu)$ from \eqref{eq:sampleAutocov}. For this purpose, we organize the population covariances in a comprehensive $(L_1\times L_2)$ matrix $\boldsymbol{\Gamma}$, where $L_i = 2n_i - 1$ for $i\in\{1,2\}$ and $L_{\un n} = L_1L_2$, encompassing all lag combinations between any two random variables in $\mathbf{X}_{\underline{n}}$. That is, we define
\begin{align}\label{eq:Gamma_pq}
    \boldsymbol{\Gamma} = \left(\begin{array}{c}\boldsymbol{\Gamma}_{p,q},  \\  p=1,\ldots,L_1  \\  q=1,\ldots,L_2 \end{array}\right) = \left(\begin{array}{c}C(h_1, h_2),  \\  h_1\in I(1-n_1:n_1-1)  \\  h_2\in I(1-n_2:n_2-1) \end{array}\right), \quad   \bigl[\boldsymbol{\Gamma}\bigr]_{p,q} = C(p-n_1,\,q-n_2),
\end{align}
such that the matrix $\mathbf{\Gamma}$ contains all covariances $C(\un h)$, $\un h\in H_{\underline{n}}$. In the same way, we define the corresponding matrix of sample covariances $\boldsymbol{\widehat \Gamma}$ by replacing all $C(h_1,h_2)$'s by $\widehat C(h_1,h_2)$'s in \eqref{eq:Gamma_pq}. 
Moreover, for $i\in\{1,2\}$, we arrange the counting factors $\nu_{n_i}(r_i, h_i)$ from \eqref{eq:counting_factor} into three $(L_i\times L_i)$ matrices, whose rows are indexed by the lag $h_i = p-n_i$ and columns by the lag $r_i = q-n_i$. In the bias representation \eqref{eq:biasautocov}, the lag $\hu$ enters as $S(\hu)$, $S(-\hu)$, $S(\un 0)$, and thus affects only the second argument of the counting factor $\nu_{n_i}(r_i, h_i)$ from \eqref{eq:counting_factor}. Accordingly, for $i\in\{1,2\}$ and $p,q=1,\ldots,L_i$, we
define the \emph{raw counting matrices} $\mathbf A_i$, $\mathbf B_i$, and $\mathbf D_i$ having entries
\begin{align}\label{eq:counting_matrices}
  \bigl[\mathbf A_i\bigr]_{p,q} := \nu_{n_i}\bigl(q-n_i,\; p-n_i\bigr),
  \quad
  \bigl[\mathbf B_i\bigr]_{p,q} := \nu_{n_i}\bigl(q-n_i,\; n_i-p\bigr),
  \quad
  \bigl[\mathbf D_i\bigr]_{p,q} := \nu_{n_i}\bigl(q-n_i,\; 0\bigr),
\end{align}
\begin{figure}[t]
\begin{center}
\renewcommand{\arraystretch}{0.9}
\footnotesize
\begin{minipage}{0.3\textwidth}
\centering
\textbf{Matrices $A_1 = A_2$}
\vspace{2mm}
\resizebox{\linewidth}{!}{%
\begin{NiceTabular}{c|*{7}{c}}[cell-space-limits=0.5pt]
$h_i \backslash r_i$ & -3 & -2 & -1 & 0 & 1 & 2 & 3 \\ \hline
-3 & 0 & 0 & 0 & 1 & 1 & 1 & 1 \\
-2 & 0 & 0 & 1 & 2 & 2 & 2 & 1 \\
-1 & 0 & 1 & 2 & 3 & 3 & 2 & 1 \\
0 & 1 & 2 & 3 & 4 & 3 & 2 & 1 \\
1 & 1 & 2 & 3 & 3 & 2 & 1 & 0 \\
2 & 1 & 2 & 2 & 2 & 1 & 0 & 0 \\
3 & 1 & 1 & 1 & 1 & 0 & 0 & 0
\end{NiceTabular}
}
\end{minipage}
\hfill
\begin{minipage}{0.3\textwidth}
\centering
\textbf{Matrices $B_1 = B_2$}
\vspace{2mm}
\resizebox{\linewidth}{!}{%
\begin{NiceTabular}{c|*{7}{c}}[cell-space-limits=0.5pt]
$h_i \backslash r_i$ & -3 & -2 & -1 & 0 & 1 & 2 & 3 \\ \hline
-3 & 1 & 1 & 1 & 1 & 0 & 0 & 0 \\
-2 & 1 & 2 & 2 & 2 & 1 & 0 & 0 \\
-1 & 1 & 2 & 3 & 3 & 2 & 1 & 0 \\
0 & 1 & 2 & 3 & 4 & 3 & 2 & 1 \\
1 & 0 & 1 & 2 & 3 & 3 & 2 & 1 \\
2 & 0 & 0 & 1 & 2 & 2 & 2 & 1 \\
3 & 0 & 0 & 0 & 1 & 1 & 1 & 1
\end{NiceTabular}
}
\end{minipage}
\hfill
\begin{minipage}{0.3\textwidth}
\centering
\textbf{Matrices $D_1 = D_2$}
\vspace{2mm}
\resizebox{\linewidth}{!}{%
\begin{NiceTabular}{c|*{7}{c}}[cell-space-limits=0.5pt]
$h_i \backslash r_i$ & -3 & -2 & -1 & 0 & 1 & 2 & 3 \\ \hline
-3 & 1 & 2 & 3 & 4 & 3 & 2 & 1 \\
-2 & 1 & 2 & 3 & 4 & 3 & 2 & 1 \\
-1 & 1 & 2 & 3 & 4 & 3 & 2 & 1 \\
0 & 1 & 2 & 3 & 4 & 3 & 2 & 1 \\
1 & 1 & 2 & 3 & 4 & 3 & 2 & 1 \\
2 & 1 & 2 & 3 & 4 & 3 & 2 & 1 \\
3 & 1 & 2 & 3 & 4 & 3 & 2 & 1
\end{NiceTabular}
}
\end{minipage}
\end{center}
\caption{The matrices $A_1 = A_2$ (left panel), $B_1 = B_2$ (center panel) and $D_1 = D_2$ (right panel) for $n_1=n_2=4$. As $L_i=2n_i-1$ for $i \in \{1,2\}$, all matrices are of dimension $7\times 7$.}
\label{fig:ABD_example}
\end{figure}
which are displayed in Figure~\ref{fig:ABD_example} for $n_1=n_2=4$. These raw counting matrices have to be normalized in order to incorporate the prefactors $(1/NN_h)$ and $1/N^2$ appearing in \eqref{eq:biasautocov}. Since
$N N_{\hu}=\prod_{i \in \{1,2\}} n_i\bigl(n_i-|h_i|\bigr)$, the relevant
normalization in direction $i$ is obtained by using the $(L_i\times L_i)$ diagonal matrix
\begin{align*}
  \mathbf N_i := \diag\bigl(n_i(n_i-|p-n_i|)\bigr), \quad  p=1,\ldots,L_i
\end{align*}
and the \emph{normalized count matrices} are defined by
\begin{align}\label{eq:normalized_counting_matrices}
  \mathbf A^*_i := \mathbf N_i^{-1}\mathbf A_i,
  \quad
  \mathbf B^*_i := \mathbf N_i^{-1}\mathbf B_i,
  \quad
  \mathbf D^*_i := n_i^{-2}\,\mathbf D_i .
\end{align}
Accordingly, their entries can be shown by using the counting factors from \eqref{eq:counting_factor} and by indexing the matrices with $h_i, r_i\in I(1-n_i: n_i-1)$. That is, we have
\begin{equation}\label{eq:normalised_counting_matrices_form_rh}
[\mathbf A^{*}_i]_{h_i,r_i}
=\frac{\nu_{n_i}(r_i,h_i)}{n_i(n_i-|h_i|)},\qquad
[\mathbf B^{*}_i]_{h_i,r_i}
=\frac{\nu_{n_i}(r_i,-h_i)}{n_i(n_i-|h_i|)},\qquad
[\mathbf D^{*}_i]_{h_i,r_i}
=\frac{\nu_{n_i}(r_i,0)}{n_i^{2}}.
\end{equation}
Now, making use of \eqref{eq:single_st_cov_expectation}, we can express the expectation of the whole matrix of sample covariances $\widehat{\boldsymbol{\Gamma}}$ compactly as
\begin{equation}\label{eq:matrix_bias}
\E\big(\widehat{\boldsymbol{\Gamma}}\big)
= \boldsymbol{\Gamma}
- \mathbf{A}_1^* \boldsymbol{\Gamma} \mathbf{A}_2^{*^\top}
- \mathbf{B}_1^* \boldsymbol{\Gamma} \mathbf{B}_2^{*^\top}
+ \mathbf{D}_1^* \boldsymbol{\Gamma} \mathbf{D}_2^{*^\top}.
\end{equation}
Finally, by applying the vectorization operator $\vect(\cdot)$ on both sides of Equation~\eqref{eq:matrix_bias} and using the Kronecker product properties\footnote{For matrices $\mathbf{A}\in \mathbb{R}^{m\times n}$ and $\mathbf{B}\in\mathbb{R}^{p\times q}$, $\mathbf{A}\otimes \mathbf{B}$ indicates the Kronecker product, which is the $(mp)\times(nq)$ matrix obtained by multiplying each element $a_{ij}$ of $\mathbf{A}$ by the matrix $\mathbf{B}$. In particular, for a fitting matrix $\mathbf{M}\in \mathbb{R}^{q\times n}$ it holds that
$\vect(\mathbf{B}\mathbf{M}\mathbf{A}^\top)
=
(\mathbf{A}\otimes\mathbf{B})\vect(\mathbf{M})$.
Moreover, for matrices $\mathbf{C}\in\mathbb{R}^{n\times r}$ and $\mathbf{D}\in\mathbb{R}^{q\times s}$, it holds that
$(\mathbf{A}\otimes\mathbf{B})(\mathbf{C}\otimes\mathbf{D})
=
(\mathbf{A}\mathbf{C})\otimes(\mathbf{B}\mathbf{D})$.}, we can represent the expectation of the vectorized sample covariance matrix $\widehat{\boldsymbol{\Gamma}}$ as a linear combination of all population covariances in $\boldsymbol{\Gamma}$ and provide its joint bias formula, which is summarized in the following theorem.

\begin{theorem}[Joint mean and bias of $\vect(\widehat{\boldsymbol{\Gamma}})$]\label{prop:joint_bias} Under Assumption~\ref{ass:stationarity}, using the vectorization operator $\vect(\cdot)$, Equation \eqref{eq:matrix_bias} can equivalently expressed as
\begin{equation}
\E\big(\vect(\widehat{\boldsymbol{\Gamma}})\big) = \mathbf{W}_{\underline{n}}^* \, \vect(\boldsymbol{\Gamma}),
\label{eq:linear_bias}
\end{equation}
where the $(L_{\un n}\times L_{\un n})$ weight matrix
\begin{equation}\label{eq:W_n_star}
\mathbf{W}_{\underline{n}}^* = \mathbf{I}_{L_{\un n}} - \mathbf{A}_2^* \otimes \mathbf{A}_1^* - \mathbf{B}_2^* \otimes \mathbf{B}_1^* + \mathbf{D}_2^* \otimes \mathbf{D}_1^*
\end{equation}
is a rank-deficient deterministic matrix with $\textup{rank}(\mathbf{W}_{\underline n}^*) \leq L_{\un n} - 1$ that only depends on the grid sizes $(n_1,n_2)$. Further, the joint bias vector $\un{b}_{\vect(\widehat{\boldsymbol{\Gamma}})}$ of $\vect(\widehat{\boldsymbol{\Gamma}})$ becomes
\begin{align}\label{eq:bias_naive_vector}
\un{b}_{\vect(\widehat{\boldsymbol{\Gamma}})} := \E\big(\vect(\widehat{\boldsymbol{\Gamma}})\big) - \vect(\boldsymbol{\Gamma}) =  \left(\mathbf{W}_{\underline{n}}^*-\mathbf{I}_{L_{\un n}}\right)\vect(\boldsymbol{\Gamma}).
\end{align}
Moreover, under Assumption~\ref{ass:abs_summability}, the bias \eqref{eq:bias_naive_vector} is uniformly at most of order $1/N$. That is, we have
\begin{equation}\label{eq:naive_bias_bound}
\bigl\lVert \un{b}_{\vect(\widehat{\boldsymbol{\Gamma}})}\bigr\rVert_\infty
\leq\frac{1}{N}\bigl\lVert\vect(\boldsymbol{\Gamma})\bigr\rVert_1,
\end{equation}
where, for a vector $\un{v}\in\mathbb{R}^d$,
$\lVert\un{v}\rVert_\infty:=\max_{1\leq i\leq d}|v_i|$ denotes its largest absolute component, whereas
$\lVert\un{v}\rVert_1:=\sum_{i=1}^d|v_i|$ denotes the sum of the absolute values of its components.
\end{theorem}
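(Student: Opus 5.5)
The plan has three parts: the vectorized identity, the rank bound, and the sup-norm bound.

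\textbf{Vectorized identity.} I obtain \eqref{eq:linear_bias} directly from \eqref{eq:matrix_bias} via the Kronecker identity $\vect(\mathbf{B}\mathbf{M}\mathbf{A}^\top)=(\mathbf{A}\otimes\mathbf{B})\vect(\mathbf{M})$, applied to each of the three correction terms. This gives $\mathbf{W}_{\un n}^*$ in \eqref{eq:W_n_star}, and \eqref{eq:bias_naive_vector} follows by subtracting $\vect(\boldsymbol{\Gamma})$. Before this I will verify entrywise that \eqref{eq:matrix_bias} encodes \eqref{eq:single_st_cov_expectation}. Using Theorem~\ref{prop:single_bias}, the $(h_1,h_2)$ entry of $\mathbf{A}_1^*\boldsymbol{\Gamma}\mathbf{A}_2^{*\top}$ is $\sum_{\ru}C(\ru)\prod_i\nu_{n_i}(r_i,h_i)/(NN_{\hu})=S(\hu)/(NN_{\hu})$. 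Similarly the $\mathbf{B}^*$ term gives $S(-\hu)/(NN_{\hu})$ and the $\mathbf{D}^*$ term gives $S(\un 0)/N^2$. The only bookkeeping point is that $\mathbf{B}_i$ has row $p$ associated with $-h_i$.

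\textbf{Rank deficiency.} I will exhibit a nonzero kernel vector, namely $\vect(\mathbf{1})$, which corresponds to $C\equiv 1$. The map $\boldsymbol{\Gamma}\mapsto$ (right-hand side of \eqref{eq:matrix_bias}) is a purely algebraic identity in the entries $C(\ru)$, so it may be evaluated at $C\equiv1$ regardless of whether $C$ is summable. Then $\overline C(\tu)=1$ and $D=1$, so \eqref{eq:C_centered} gives $\Cctr\equiv 1-1-1+1=0$. Equivalently, $C(\hu)-A_{\hu}-B_{\hu}+D=1-1-1+1=0$ for every $\hu$. Hence $\mathbf{W}_{\un n}^*\vect(\mathbf{1})=\un 0$ and $\textup{rank}(\mathbf{W}_{\un n}^*)\le L_{\un n}-1$. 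This matches the intuition that a field $X_{\tu}=\mu+Z$ with a common shock $Z$ is invisible after demeaning.

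\textbf{Sup-norm bound.} I will write
\[
b(\hu)=\frac1N\sum_{\ru\in\lagset}C(\ru)\,\bigl(d_{\ru}-a_{\ru}-b_{\ru}\bigr),
\]
with $a_{\ru}=\prod_i\alpha_i$, $b_{\ru}=\prod_i\beta_i$, $d_{\ru}=\prod_i\delta_i$. Here $\alpha_i=\nu_{n_i}(r_i,h_i)/(n_i-|h_i|)$, $\beta_i=\nu_{n_i}(r_i,-h_i)/(n_i-|h_i|)$ and $\delta_i=\nu_{n_i}(r_i,0)/n_i$. It then suffices to show $|d_{\ru}-a_{\ru}-b_{\ru}|\le 1$ for all $\ru$, which yields \eqref{eq:naive_bias_bound}.

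Since $\nu_{n_i}(r_i,\pm h_i)\le n_i-|h_i|$ and $\nu_{n_i}(r_i,0)\le n_i$, all of $\alpha_i,\beta_i,\delta_i$ lie in $[0,1]$. This immediately gives $d_{\ru}-a_{\ru}-b_{\ru}\le 1$. The main obstacle is the other direction, $a_{\ru}+b_{\ru}\le 1+d_{\ru}$, because the crude bound only gives $2$.

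The key coordinatewise claim I will prove is that, for each $i$, $\min(\alpha_i,\beta_i)\le\delta_i$. Take $h_i\ge0$ (the case $h_i<0$ swaps the roles of $\alpha$ and $\beta$) and $r_i\ge0$.
\begin{itemize}
\item If $r_i\le h_i$, then $\alpha_i=(n_i-r_i-h_i)^+/(n_i-h_i)\le 1-r_i/n_i=\delta_i$.
\item If $r_i>h_i$, the cross-multiplication $n_i(n_i-r_i-h_i)-(n_i-r_i)(n_i-h_i)=-r_ih_i\le0$ again gives $\alpha_i\le\delta_i$.
\end{itemize}
For $r_i<0$ the mirror argument gives $\beta_i\le\delta_i$.

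Given the claim, there are two cases.
\begin{itemize}
\item If the same factor type is dominated in both coordinates, say $\alpha_1\le\delta_1$ and $\alpha_2\le\delta_2$, then $a_{\ru}\le d_{\ru}$ and $a_{\ru}+b_{\ru}-d_{\ru}\le b_{\ru}\le1$.
\item In the mixed case, say $\alpha_1\le\delta_1$ and $\beta_2\le\delta_2$, we have $a_{\ru}\le\delta_1$ and $b_{\ru}\le\delta_2$. Then $a_{\ru}+b_{\ru}-d_{\ru}\le\delta_1+\delta_2-\delta_1\delta_2=1-(1-\delta_1)(1-\delta_2)\le1$.
\end{itemize}
Either way $|d_{\ru}-a_{\ru}-b_{\ru}|\le1$. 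Taking the maximum over $\hu$, and using Assumption~\ref{ass:abs_summability} so that $\lVert\vect(\boldsymbol{\Gamma})\rVert_1\le\|C\|_1<\infty$, gives the uniform $O(1/N)$ bound.
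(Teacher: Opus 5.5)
Your proposal is correct and follows the paper's proof essentially step for step. The kernel vector is the same all-ones vector $\vect(\eones_1\eones_2^\top)$: your evaluation at $C\equiv 1$, where $\Cctr\equiv 0$, is exactly the row-stochasticity of $\mathbf A_i^*,\mathbf B_i^*,\mathbf D_i^*$ that the paper invokes. Your bound $|d_{\ru}-a_{\ru}-b_{\ru}|\le 1$ is the paper's entrywise lemma $\bigl|[\mathbf W^*_{\un n}-\mathbf I_{L_{\un n}}]_{\hu,\ru}\bigr|\le 1/N$ rescaled by $N$, proved through the same key inequality $\min(\alpha_i,\beta_i)\le\delta_i$, the same same-type versus mixed case split, and the same row-wise H\"older step.
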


Some remarks on the matrices $\mathbf{A}_i$, $\mathbf{B}_i$ and $\mathbf{D}_i$, $i \in \{1,2\}$ from \eqref{eq:counting_matrices} and the weight matrix $\mathbf{W}_{\underline{n}}^*$ \eqref{eq:W_n_star} and their properties are in order.

\begin{remark}[On $\mathbf{A}_i$, $\mathbf{B}_i$ and $\mathbf{D}_i$, and $\mathbf{W}^*_{\underline{n}}$ ] \ 
\begin{enumerate}[label=(\roman*)]
    \item All matrices $\mathbf{A}_i$, $\mathbf{B}_i$ and $\mathbf{D}_i$ are ``counting matrices'' that have non-negative and integer-valued entries. All entries of $\mathbf{D}_i$ are strictly positive.
    \item Despite 
    $\mathbf{W}_{\underline{n}}^*$ may appear as a diagonally dominant matrix, for arbitrary sample sizes $(n_1,n_2)$, all its row sums equal zero. For the case of $n_1=n_2=4$, a heatmap of the resulting $(49\times 49)$ weight matrix $\mathbf{W}_{\underline{n}}^*$ is shown in Figure~\ref{fig:W_n_star_comparison} (left panel).
    \item By construction, the entries of $\mathbf{N}_i$ are exactly the row sums of $\mathbf{A}_i$ and of $\mathbf{B}_i$. That is, for each fixed $h_i$, summing $\nu_{n_i}(r_i, h_i)$ over $r_i$ recovers $n_i(n_i-|h_i|)$ admissible index pairs for direction $i$, and correspondingly $n_i^2$ when $h_i=0$, which is the constant row sum of $\mathbf{D}_i$. 
    Hence, the normalized count matrices $\mathbf{A}_i^*$, $\mathbf{B}_i^*$ and $\mathbf{D}_i^*$ from \eqref{eq:normalised_counting_matrices_form_rh} are row-stochastic. That is, we have
\begin{equation*}\label{eq:row_stochastic_norm_count}
\mathbf{A}_i^*\underline{\mathbbm{1}}_i=\mathbf{B}_i^*\underline{\mathbbm{1}}_i=\mathbf{D}_i^*\underline{\mathbbm{1}}_i=\underline{\mathbbm{1}}_i, \qquad \textup{for} \quad  i\in\{1,2\},
\end{equation*}
where $\underline{\mathbbm{1}}_i=(1,1,\dots, 1)^\top$ denotes the $(L_i \times 1)$ vector of ones.
\end{enumerate}
\end{remark}

We illustrate further the introduced quantities and the derived formulas with a toy example for the case of a Gaussian spatial lattice process that was used already in Example \ref{example_Gaussian}.

\begin{ex}[Joint bias formulas for spatial lattice process]\label{example_Gaussian_bias}
Let $\mathbf{X}$ be the Gaussian spatial lattice process from Example \ref{example_Gaussian} with exponential covariance function $C_\lambda(\hu)$ defined in \eqref{eq:trueautocov} for some $\lambda\in\{2,4,6,8\}$ and with $\beta\in [0,1]$. Further, suppose $n_1=n_2=4$, that is, $\mathbf{X}_{\underline{n}}$ is a $(4\times 4)$ data matrix and $\boldsymbol{\Gamma}$ defined in \eqref{eq:Gamma_pq} becomes a $(7\times 7)$ matrix of covariances. Figure~\ref{fig:true_gamma_all} illustrates different instances of $\boldsymbol{\Gamma}$, where the panels show varying levels of spatial dependence, controlled by the parameter $\lambda$, for both separable ($\beta=0$) and non-separable ($\beta=1$) settings. For this setup, the $(7\times 7)$ raw counting matrices $\mathbf{A}_1=\mathbf{A}_2$, $\mathbf{B}_1=\mathbf{B}_2$ and $\mathbf{D}_1=\mathbf{D}_2$ are shown in Figure \ref{fig:ABD_example}. Moreover, in Figure \ref{fig:W_n_star_comparison} (left panel), we show a heatmap of the corresponding $(49\times 49)$ weight matrix $\mathbf{W}_{\underline{n}}^*$.
\end{ex}

\begin{figure}[t]
\centering

\captionsetup[subfigure]{skip=6pt}

\begin{minipage}[c]{0.06\textwidth}
  \centering
  \rotatebox{90}{$\beta = 0$}
\end{minipage}\hfill
\begin{minipage}[c]{0.92\textwidth}
  \centering

  \begin{subfigure}[t]{0.24\textwidth}
    \centering
    \includegraphics[width=\linewidth,height=0.18\textheight,keepaspectratio]{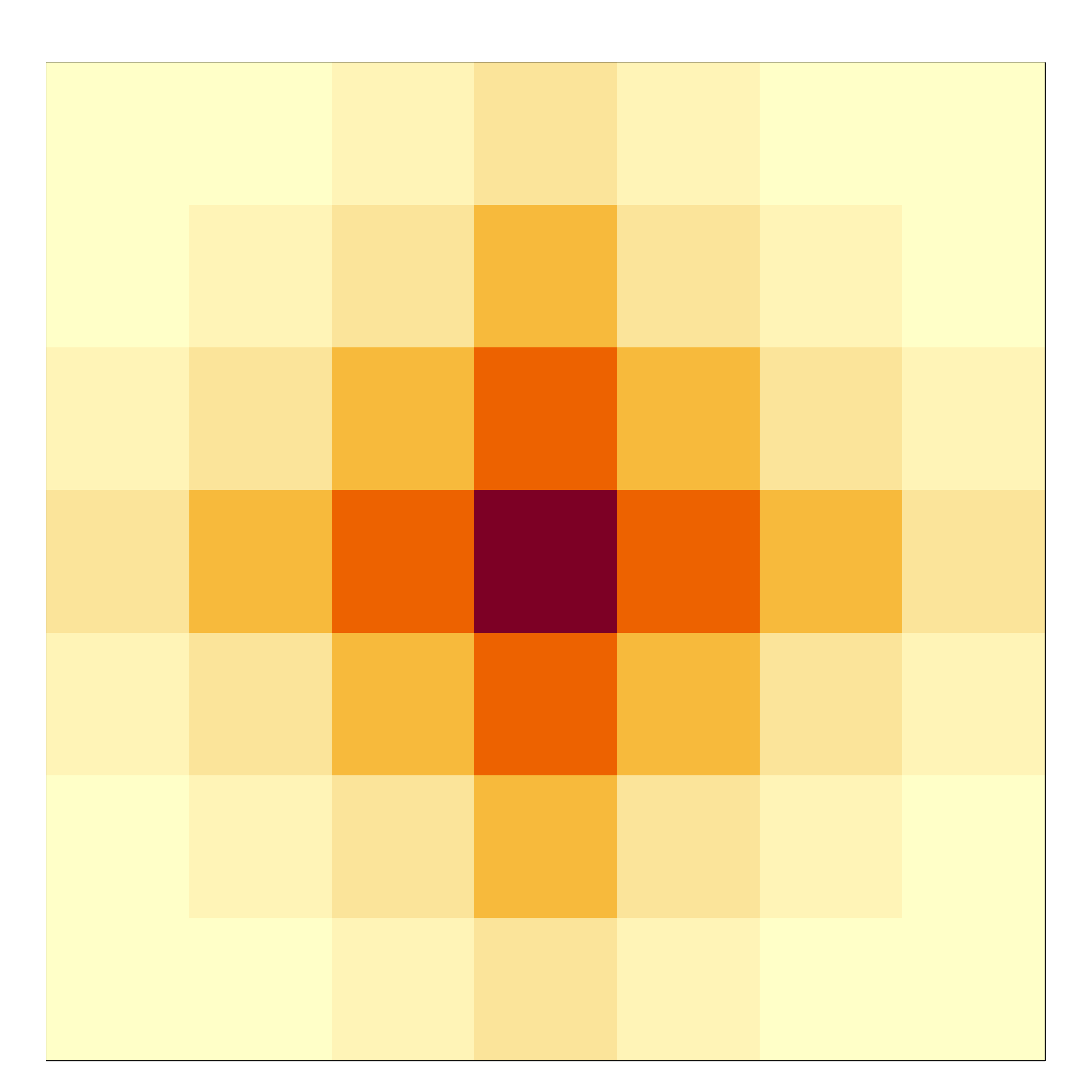}
    \caption{$\lambda=2$}
  \end{subfigure}\hfill
  \begin{subfigure}[t]{0.24\textwidth}
    \centering
    \includegraphics[width=\linewidth,height=0.18\textheight,keepaspectratio]{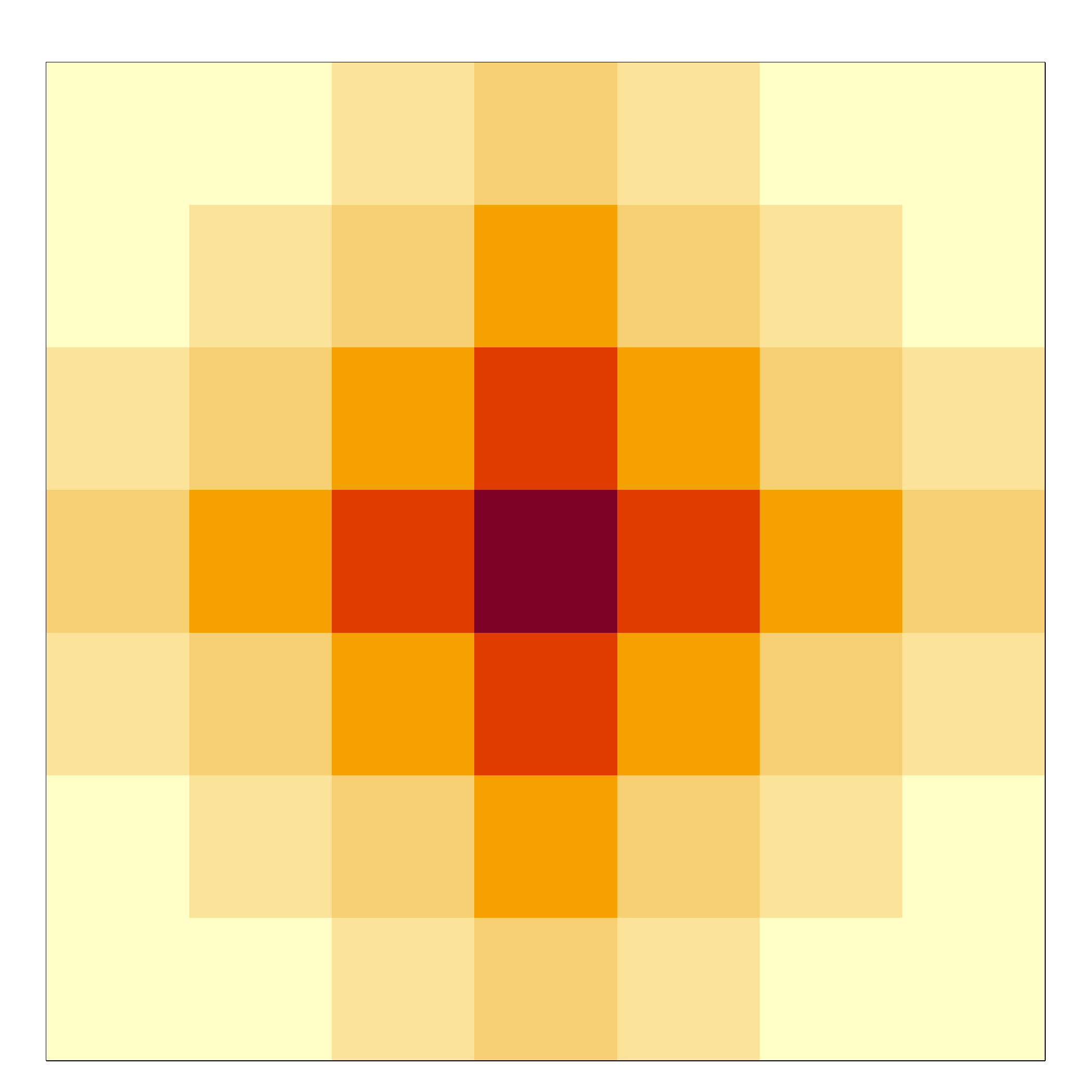}
    \caption{$\lambda=4$}
  \end{subfigure}\hfill
  \begin{subfigure}[t]{0.24\textwidth}
    \centering
    \includegraphics[width=\linewidth,height=0.18\textheight,keepaspectratio]{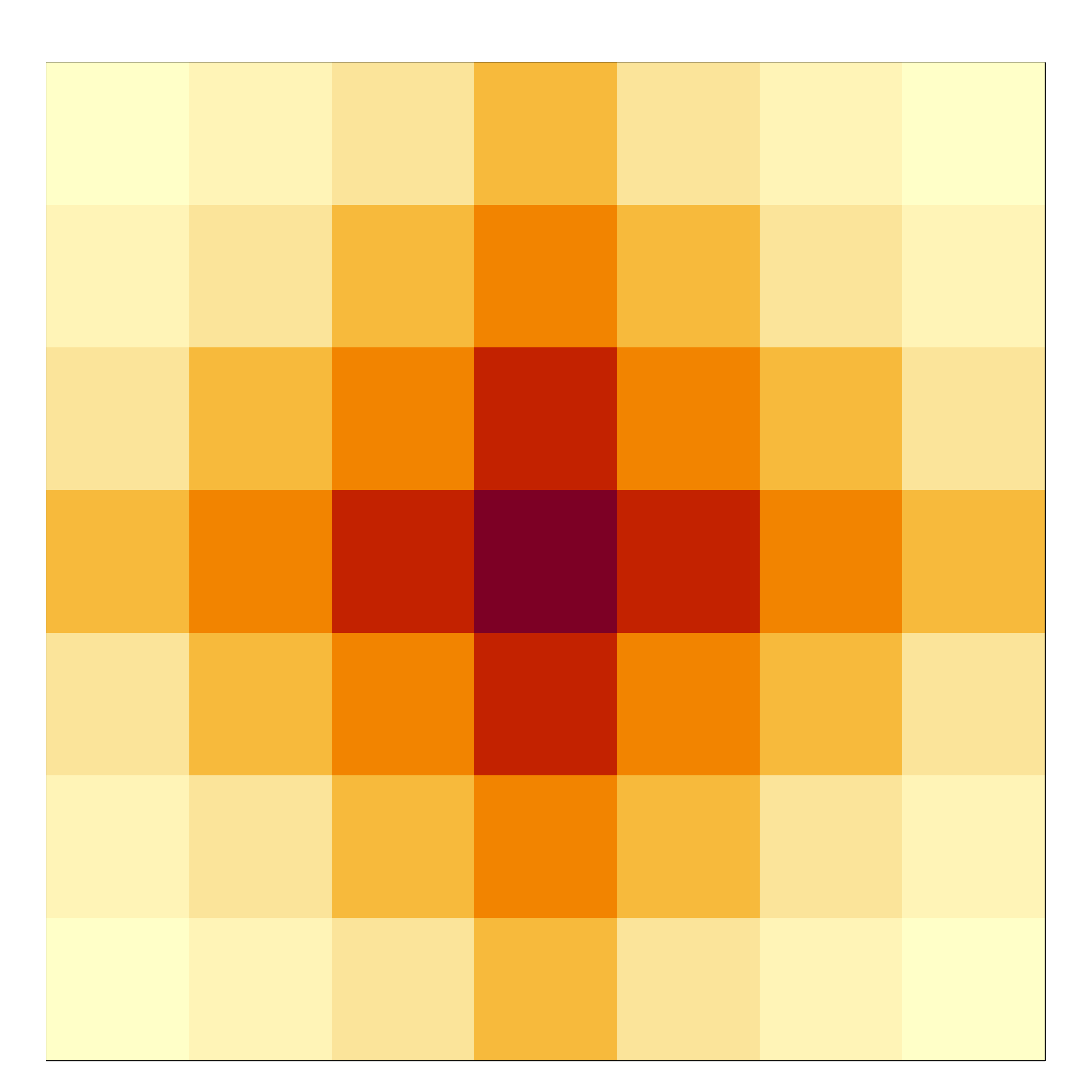}
    \caption{$\lambda=6$}
  \end{subfigure}\hfill
  \begin{subfigure}[t]{0.24\textwidth}
    \centering
    \includegraphics[width=\linewidth,height=0.18\textheight,keepaspectratio]{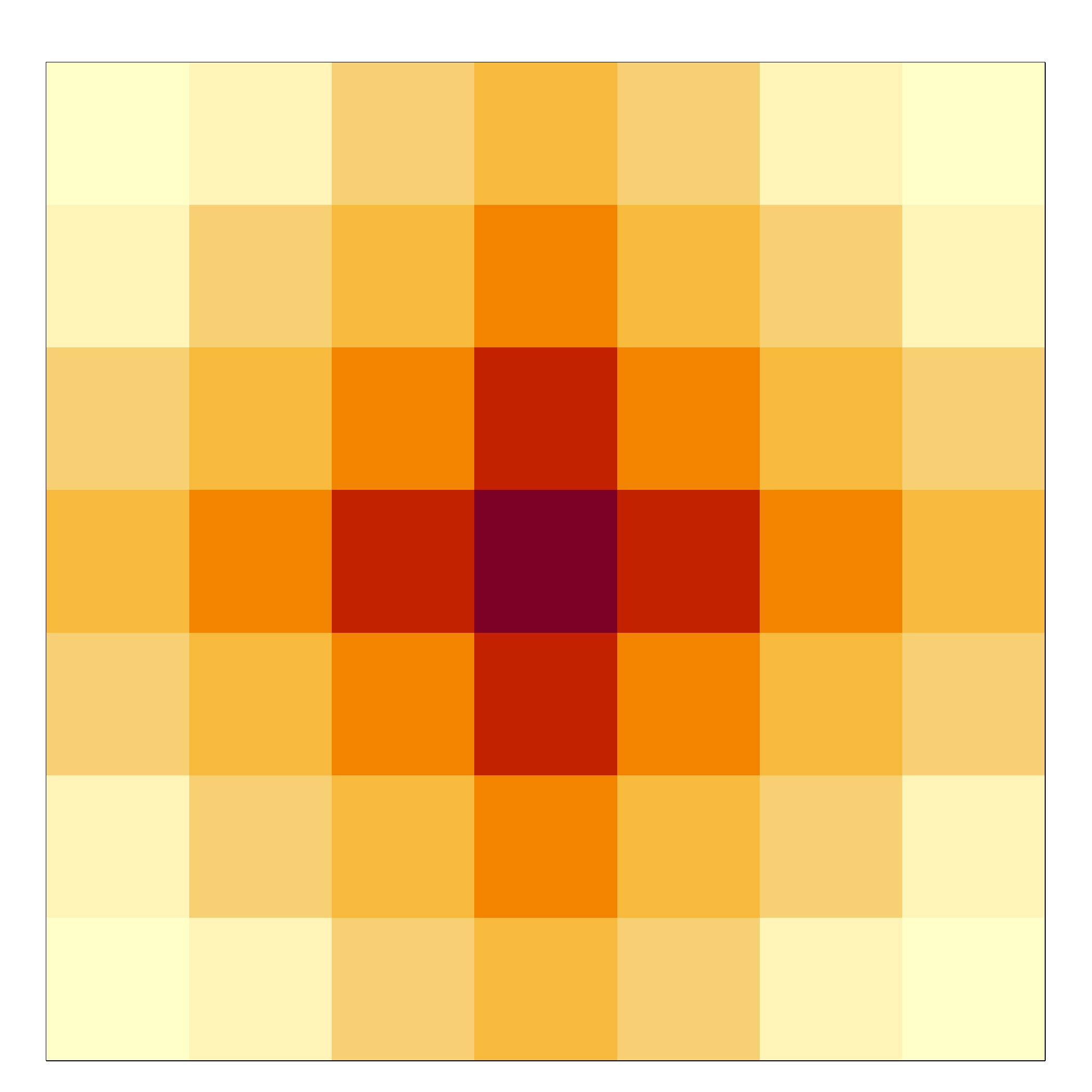}
    \caption{$\lambda=8$}
  \end{subfigure}

\end{minipage}

\vspace{1em}

\begin{minipage}[c]{0.06\textwidth}
  \centering
  \rotatebox{90}{$\beta = 1$}
\end{minipage}\hfill
\begin{minipage}[c]{0.92\textwidth}
  \centering

  \begin{subfigure}[t]{0.24\textwidth}
    \centering
    \includegraphics[width=\linewidth,height=0.18\textheight,keepaspectratio]{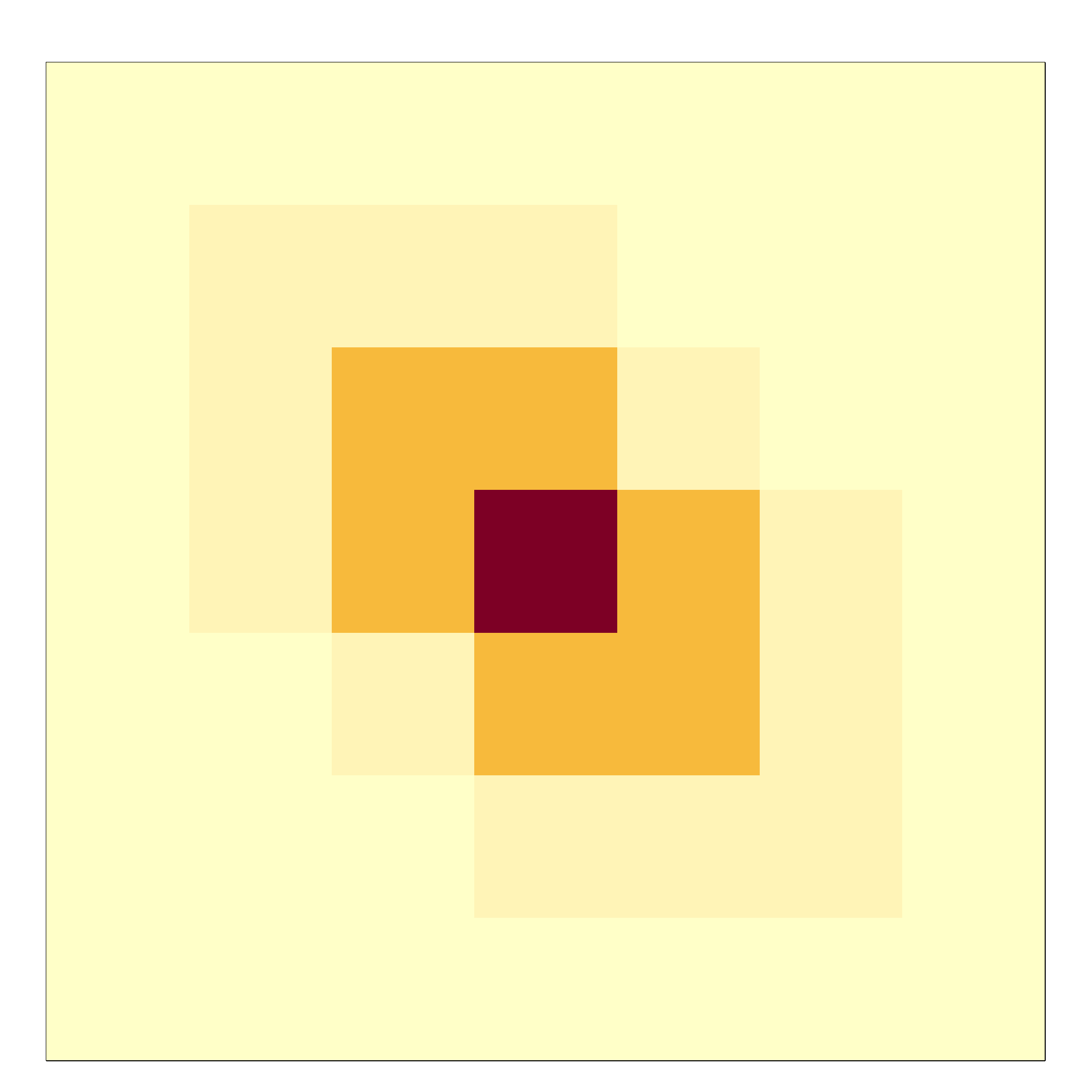}
    \caption{$\lambda=2$}
  \end{subfigure}\hfill
  \begin{subfigure}[t]{0.24\textwidth}
    \centering
    \includegraphics[width=\linewidth,height=0.18\textheight,keepaspectratio]{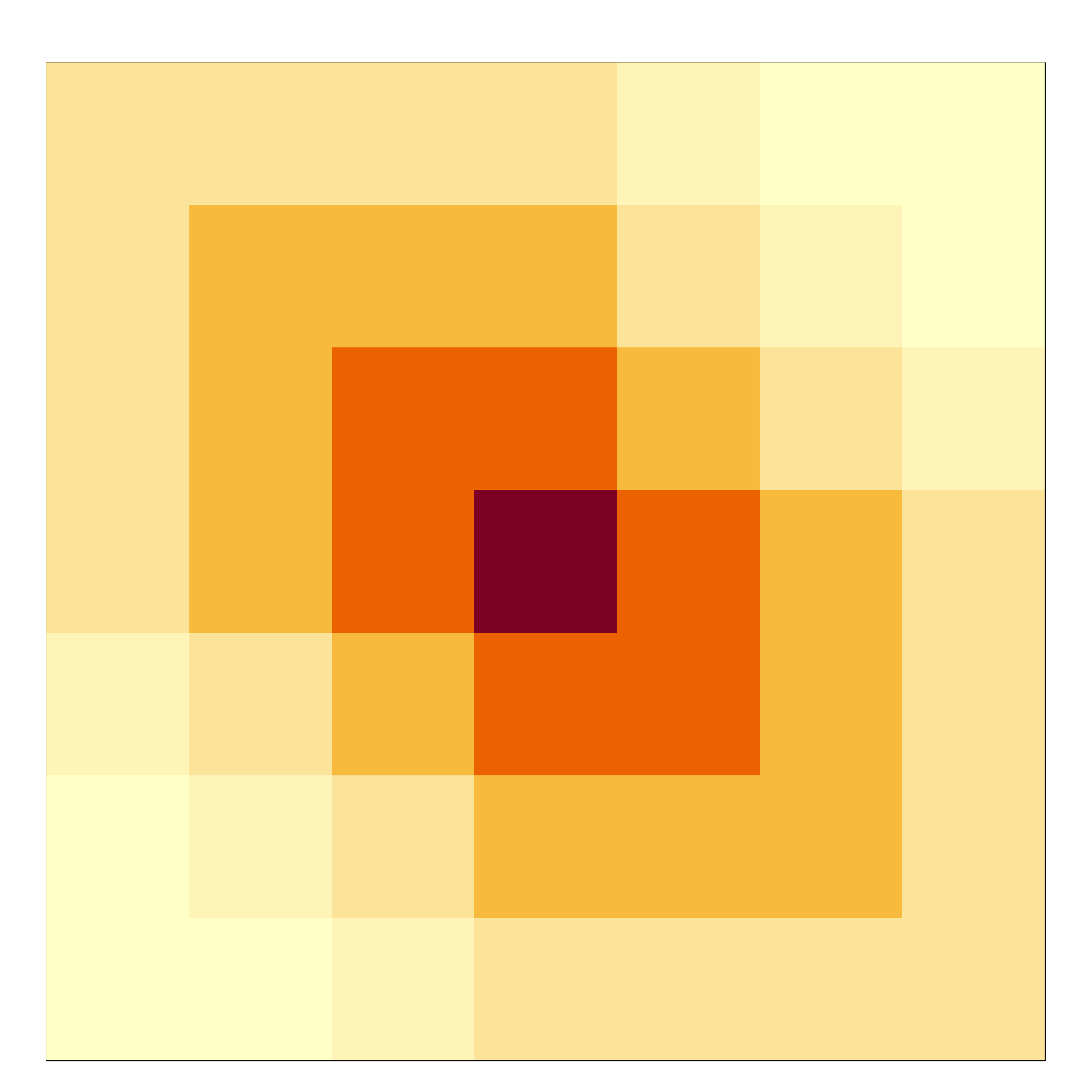}
    \caption{$\lambda=4$}
  \end{subfigure}\hfill
  \begin{subfigure}[t]{0.24\textwidth}
    \centering
    \includegraphics[width=\linewidth,height=0.18\textheight,keepaspectratio]{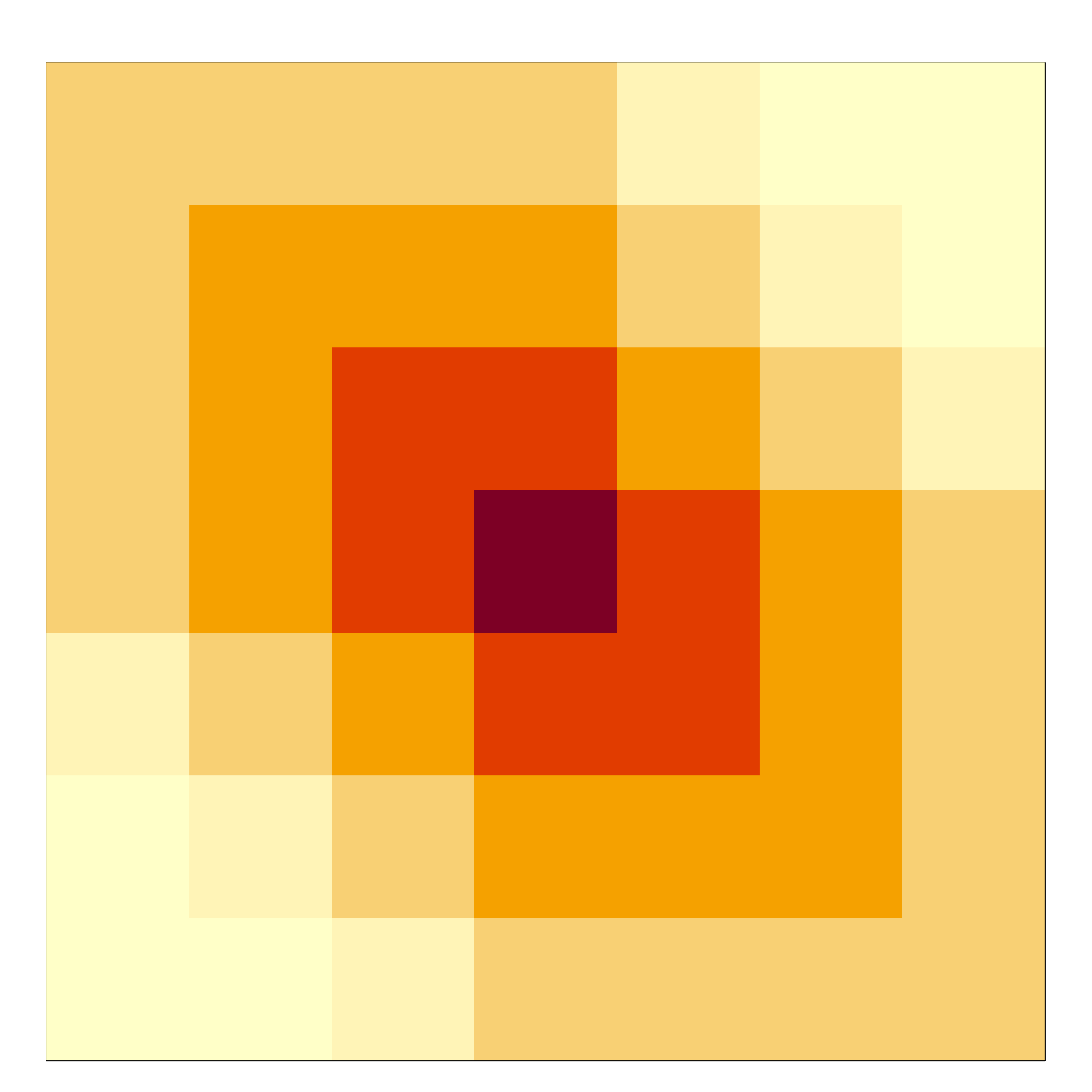}
    \caption{$\lambda=6$}
  \end{subfigure}\hfill
  \begin{subfigure}[t]{0.24\textwidth}
    \centering
    \includegraphics[width=\linewidth,height=0.18\textheight,keepaspectratio]{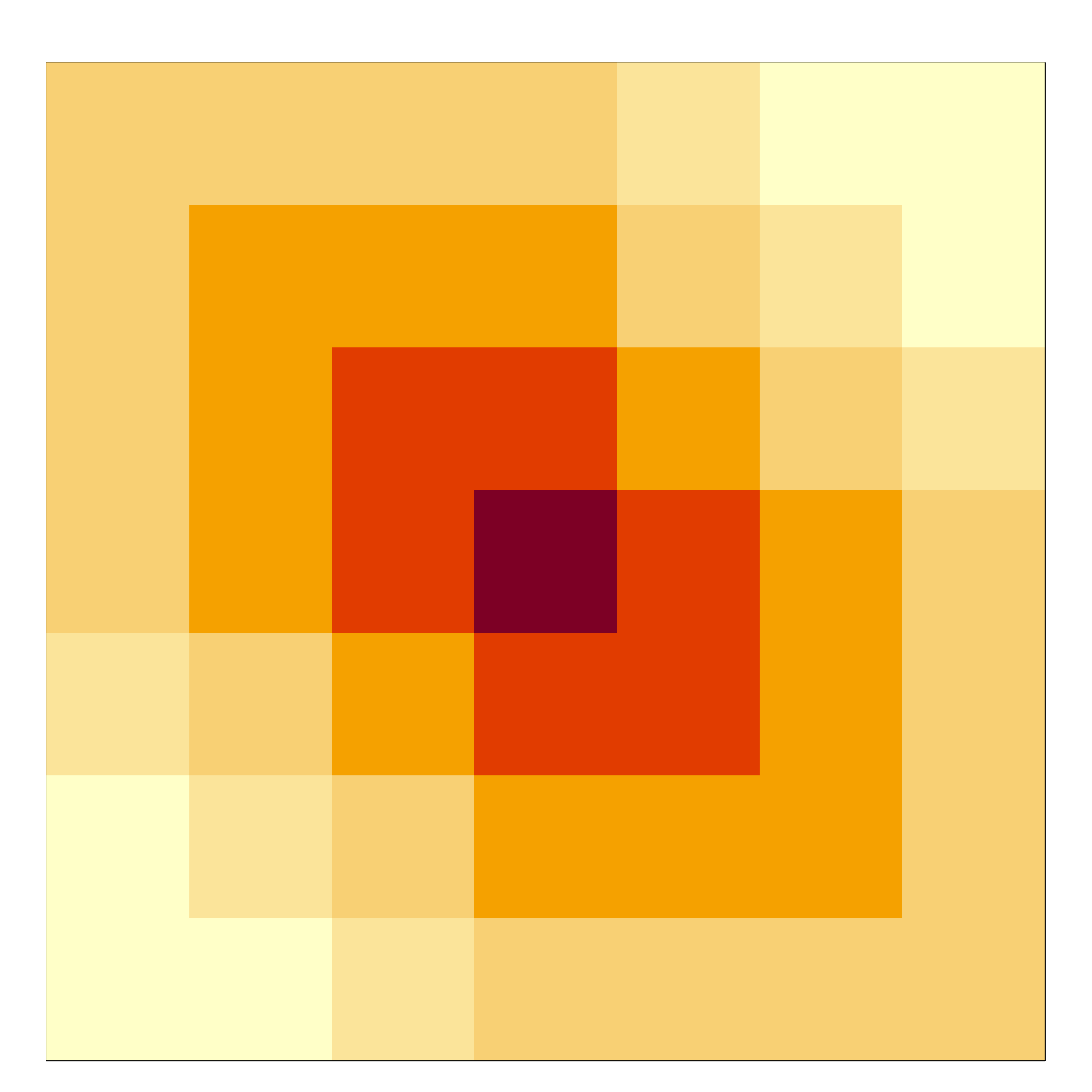}
    \caption{$\lambda=8$}
  \end{subfigure}
\end{minipage}

\caption{Covariance matrices $\mathbf{\Gamma}$ from \eqref{eq:Gamma_pq} of size $(7\times 7)$ for $\mathbf{X}_{\underline{n}}$ with $n_1=n_2=4$ from a spatial (Gaussian) process $\mathbf{X}$ with exponential covariance function $C_\lambda$ defined in \eqref{eq:trueautocov}. $\mathbf{\Gamma}$ is shown for separable ($\beta=0$; top row) and non-separable ($\beta=1$; bottom row) dependence with increasing spatial dependence $\lambda\in\{2,4,6,8\}$.}
\label{fig:true_gamma_all}
\end{figure}

\subsection{Joint bias of a matrix of the constant prefactor sample covariances}
\label{subsec:bias_matrix_sample_covariance_const}

The characterization of the joint mean and bias for the standard covariance estimator $\widehat C(\un h)$ in Theorem \ref{prop:joint_bias} can be easily transferred also to the case of the constant prefactor estimator from $\widetilde C(\un h)$. This is because the expectations of the two estimators are linked by a lag dependent scaling $\E(\widetilde C(\hu)) = N_h/N\E(\widehat{C}(\hu))$, see \eqref{eq:connection_standard_constant}. This property can be used to define diagonal $(L_i\times L_i)$ matrices $\boldsymbol{\Delta}_i$, $i\in\{1,2\}$ that encode the relationship jointly. We define
\begin{equation}
\label{eq:bartlett_matrices}
    \boldsymbol{\Delta}_i := \diag\left(1-\frac{|p-n_i|}{n_i},\; p=1, \dots, L_i\right) = \diag\left(1-\frac{|h_i|}{n_i},\; h_i \in I(1-n_i:n_i-1)\right).
\end{equation}
That is, these matrices essentially collect the Bartlett windows \citep{bartlett1950periodogram} on their main diagonal. Hence, we refer to $\boldsymbol{\Delta}_i$, $i\in\{1,2\}$ as \emph{Bartlett window matrices}.
Since the \emph{total} and \emph{effective} sample size factorizes across the two lattice dimensions, i.e., $N=n_1n_2$ and $N_h=(n_1-|h_1|)(n_2-|h_2|)$, their ratio factorizes as well. That is, for $(h_1,h_2) = (p-n_1,q-n_2)$, we have
\begin{equation*}
    \frac{N_h}{N}=\frac{n_1-|h_1|}{n_1}\cdot\frac{n_2-|h_2|}{n_2}  = \left[\boldsymbol{\Delta}_1\right]_{p,p}\cdot \left[\boldsymbol{\Delta}_2\right]_{q,q}, 
\end{equation*}
such that we can exploit the above together with $\E(\widetilde C(\hu)) = N_h/N\E(\widehat{C}(\hu))$ to
connect the two estimators (for the whole covariance matrix) in the following corollary.

\begin{corollary}[Joint mean and bias of $\vect(\widetilde{\boldsymbol{\Gamma}})$]\label{cor:joint_bias_const}
Let $\widetilde{\boldsymbol{\Gamma}}$ be the matrix of constant prefactor
sample covariances that is defined as $\widehat{\boldsymbol{\Gamma}}$ below \eqref{eq:Gamma_pq}, but uses $\widetilde{C}$ from \eqref{eq:sampleAutocov_const} instead of
$\widehat{C}$. Then, we have
\begin{equation}\label{eq:expectation_const_est}
\E\bigl(\widetilde{\boldsymbol{\Gamma}}\bigr)
=
\boldsymbol{\Delta}_1
\E\bigl(\widehat{\boldsymbol{\Gamma}}\bigr)
\boldsymbol{\Delta}_2,
\qquad
\E\bigl(\vect(\widetilde{\boldsymbol{\Gamma}})\bigr)
=
\widetilde{\mathbf{W}}^{*}_{\underline{n}}
\vect(\boldsymbol{\Gamma}).
\end{equation}
with joint bias vector 
\begin{align}\label{eq:const_bias_vector}
b_{\vect(\widetilde{\boldsymbol{\Gamma}})}
:= \E\bigl(\vect(\widetilde{\boldsymbol{\Gamma}})\bigr) - \vect(\boldsymbol{\Gamma})
= \bigl(\widetilde{\mathbf{W}}^{*}_{\underline{n}} - \mathbf{I}_{L_1 L_2}\bigr)\vect(\boldsymbol{\Gamma}).
\end{align}
Here, the $(L_{\un n} \times L_{\un n})$ constant-prefactor weight matrix
$\widetilde{\mathbf{W}}^{*}_{\underline{n}}$ can be directly linked to
$\widehat{\mathbf{W}}^{*}_{\underline{n}}$ in \eqref{eq:W_n_star} by using the
Bartlett window matrices $\boldsymbol{\Delta}_i$ defined in \eqref{eq:bartlett_matrices}. Precisely, we have
\begin{align}
\widetilde{\mathbf{W}}_{\underline{n}}^* &:= (\boldsymbol{\Delta}_2\otimes\boldsymbol{\Delta}_1)\,\mathbf{W}_{\underline{n}}^*
\label{eq:W_n_tilde}\\
&= (\boldsymbol{\Delta}_2\otimes\boldsymbol{\Delta}_1)
- (\boldsymbol{\Delta}_2\mathbf{A}_2^*)\otimes(\boldsymbol{\Delta}_1\mathbf{A}_1^*)
- (\boldsymbol{\Delta}_2\mathbf{B}_2^*)\otimes(\boldsymbol{\Delta}_1\mathbf{B}_1^*)
+ (\boldsymbol{\Delta}_2\mathbf{D}_2^*)\otimes(\boldsymbol{\Delta}_1\mathbf{D}_1^*),\notag
\end{align}
where  $\boldsymbol{\Delta}_i\mathbf{A}_i^* = n_i^{-2}\mathbf{A}_i$ and $\boldsymbol{\Delta}_i\mathbf{B}_i^* = n_i^{-2}\mathbf{B}_i$, while $\boldsymbol{\Delta}_i\mathbf{D}_i^*=\diag\big((n_i-|h_i|)/n_i^{3}\big)\mathbf{D}_i$ and $\widetilde{\mathbf{W}}^*_{\underline{n}}$ inherits the same rank deficiency as $\mathbf{W}_{\underline{n}}^*$, i.e., $\textup{rank}(\widetilde{\mathbf{W}}^*_{\underline{n}} ) \leq L_{\un n}-1$.
\end{corollary}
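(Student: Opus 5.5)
The plan is to reduce everything to the entrywise relation $\E(\widetilde C(\hu)) = (N_{\hu}/N)\,\E(\widehat C(\hu))$ and then lift it to matrix and vectorized form using Theorem~\ref{prop:joint_bias}.

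\emph{Step 1 (entrywise scaling).} Comparing \eqref{eq:sampleAutocov} and \eqref{eq:sampleAutocov_const}, the two estimators differ only in the prefactor. Hence $\widetilde C(\hu) = (N_{\hu}/N)\widehat C(\hu)$ holds pathwise, not only in expectation. Writing $\hu=(p-n_1,q-n_2)$ and using the factorization $N_{\hu}/N=[\boldsymbol{\Delta}_1]_{p,p}[\boldsymbol{\Delta}_2]_{q,q}$, we get $\widetilde{\boldsymbol{\Gamma}}=\boldsymbol{\Delta}_1\widehat{\boldsymbol{\Gamma}}\boldsymbol{\Delta}_2$. This works because left and right multiplication by diagonal matrices scales rows and columns respectively. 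Taking expectations gives the first identity in \eqref{eq:expectation_const_est}.

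\emph{Step 2 (vectorized form and bias).} Apply $\vect(\mathbf{B}\mathbf{M}\mathbf{A}^\top)=(\mathbf{A}\otimes\mathbf{B})\vect(\mathbf{M})$ with $\mathbf{A}=\boldsymbol{\Delta}_2^\top=\boldsymbol{\Delta}_2$ and $\mathbf{B}=\boldsymbol{\Delta}_1$. This yields $\E\vect(\widetilde{\boldsymbol{\Gamma}})=(\boldsymbol{\Delta}_2\otimes\boldsymbol{\Delta}_1)\E\vect(\widehat{\boldsymbol{\Gamma}})$. Inserting \eqref{eq:linear_bias} gives $\widetilde{\mathbf{W}}^*_{\un n}=(\boldsymbol{\Delta}_2\otimes\boldsymbol{\Delta}_1)\mathbf{W}^*_{\un n}$, and the bias formula \eqref{eq:const_bias_vector} follows by subtracting $\vect(\boldsymbol{\Gamma})$.

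\emph{Step 3 (expanded form).} Distribute $(\boldsymbol{\Delta}_2\otimes\boldsymbol{\Delta}_1)$ over \eqref{eq:W_n_star} and use the mixed-product rule, so that each term becomes $(\boldsymbol{\Delta}_2\mathbf{A}_2^*)\otimes(\boldsymbol{\Delta}_1\mathbf{A}_1^*)$ and analogously for $\mathbf{B}$ and $\mathbf{D}$. The simplifications are direct diagonal computations:
\begin{itemize}
\item For $\mathbf{A}$: $[\boldsymbol{\Delta}_i]_{p,p}[\mathbf N_i^{-1}]_{p,p}=\frac{n_i-|h_i|}{n_i}\cdot\frac{1}{n_i(n_i-|h_i|)}=n_i^{-2}$, hence $\boldsymbol{\Delta}_i\mathbf{A}_i^*=n_i^{-2}\mathbf{A}_i$.
\item For $\mathbf{B}$: the same computation gives $\boldsymbol{\Delta}_i\mathbf{B}_i^*=n_i^{-2}\mathbf{B}_i$.
\item For $\mathbf{D}$: $\boldsymbol{\Delta}_i n_i^{-2}\mathbf{D}_i=\diag\bigl((n_i-|h_i|)/n_i^3\bigr)\mathbf{D}_i$.
\end{itemize}

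\emph{Step 4 (rank).} The rank statement follows from $\textup{rank}(\mathbf{M}\mathbf{W})\le\textup{rank}(\mathbf{W})\le L_{\un n}-1$ by Theorem~\ref{prop:joint_bias}. Since $\boldsymbol{\Delta}_i$ has strictly positive diagonal (because $|h_i|\le n_i-1$), the Kronecker product is invertible and the rank is in fact equal. A concrete witness is the null vector $\eones$ of $\mathbf{W}^*_{\un n}$, which exists because all row sums vanish; it remains in the kernel of $\widetilde{\mathbf{W}}^*_{\un n}$.

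The only place needing care is the index bookkeeping in Step 2. The Kronecker order $\boldsymbol{\Delta}_2\otimes\boldsymbol{\Delta}_1$ must match column-wise stacking, with the first index $p$ running fastest. This is consistent with the ordering used in \eqref{eq:W_n_star}. Otherwise the argument is routine.
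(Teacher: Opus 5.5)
Your proposal is correct and follows essentially the same route as the paper: entrywise scaling by $N_{\hu}/N=[\boldsymbol{\Delta}_1]_{p,p}[\boldsymbol{\Delta}_2]_{q,q}$, then the vec--Kronecker identity combined with Theorem~\ref{prop:joint_bias}, and the null vector $\eones$ of $\mathbf{W}^*_{\un n}$ for the rank claim. You go slightly further than the paper by verifying the diagonal simplifications of $\boldsymbol{\Delta}_i\mathbf{A}_i^*$, $\boldsymbol{\Delta}_i\mathbf{B}_i^*$ and $\boldsymbol{\Delta}_i\mathbf{D}_i^*$, and by noting that the two kernels coincide because $\boldsymbol{\Delta}_2\otimes\boldsymbol{\Delta}_1$ is invertible; the paper leaves both of these implicit.
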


\subsection{Exact finite sample second-order structure of the sample covariances}\label{subsec:finite_second_order}

The bias analysis in the previous section was distribution free as it relied only on the weak stationarity Assumption~\ref{ass:stationarity}. In this section, we present also the second-order structure of single covariance estimators $\widehat{C}(\un h)$ and $\widetilde{C}(\un h)$. For this purpose, some information on the (finite) fourth-order moments is required to describe the variability of the estimators. In the following, we impose Gaussianity to facilitate the derivation and the presentation of the resulting exact formulas for the covariances of sample covariance estimators.
\begin{assumption}[Gaussianity]\label{ass:gaussianity} The spatial lattice process $\mathbf{X}$ is Gaussian.
\end{assumption}
    
Focusing first on the standard estimator $\widehat{C}(\un h)$, Isserlis' theorem \citep{Isserlis1918} applied to the centered random variables of $\mathbf{X}$ leads to the exact formula presented in the following theorem.    
\begin{theorem}[Exact finite-sample covariances of the standard covariance estimator]\label{prop:covofcov}
Let Assumptions \ref{ass:stationarity} and \ref{ass:gaussianity} hold and let $\hu = (h_1,h_2)$ and $\un g=(g_1, g_2)$ be two lag vectors in $H_{\un n}$ \eqref{eq:lag_set} with effective sample sizes $N_h=(n_1-|h_1|)(n_2-|h_2|)$ and $N_g=(n_1-|g_1|)(n_2-|g_2|)$, respectively. Then, using the centered form representation based on $C_Z$ defined in $\eqref{eq:C_centered}$, the covariances of two standard sample covariance estimators $\widehat C(\hu)$ and $\widehat C(\gu)$ defined in \eqref{eq:sampleAutocov} can be represented as 
\begin{equation}\label{eq:cov_of_cov}
    \Cov\bigl(\widehat C(\hu),\widehat C(\gu)\bigr)
    =\frac{1}{N_hN_g}
     \sum_{\tu\in\nhset} \sum_{\su\in\ngset} \left[\Cctr(\tu+\hu,\su+\gu)\,\Cctr(\tu,\su)+\Cctr(\tu+\hu,\su)\,\Cctr(\tu,\su+\gu)\right].
    \end{equation}
    Alternatively, the lag form representation with $\Cbar$ and $D$ from \eqref{eq:Cbar} and \eqref{eq:D} and the shift $\du:=\su-\tu$ becomes    \begin{align}
    & \Cov\bigl(\widehat C(\hu),\widehat C(\gu)\bigr)   \label{eq:cov_of_cov_lag}   \\
    =&\frac{1}{N_{\hu}N_{\gu}}\sum_{\tu\in\nhset}\ \sum_{\su\in\ngset}\Bigl\{\bigl[C(\du+\gu-\hu)-\Cbar(\tu+\hu)-\Cbar(\su+\gu)+D\bigr]
    \bigl[C(\du)-\Cbar(\tu)-\Cbar(\su)+D\bigr]\notag \\
    &\quad\quad\quad+\bigl[C(\du-\hu)-\Cbar(\tu+\hu)-\Cbar(\su)+D\,\bigr]
     \bigl[C(\du+\gu)-\Cbar(\tu)-\Cbar(\su+\gu)+D\bigr]\Bigr\}. \notag
    \end{align}
    Notably, the right-hand sides of \eqref{eq:cov_of_cov} and \eqref{eq:cov_of_cov_lag} does not depend on the mean $\mu$.
    \end{theorem}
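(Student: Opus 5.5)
The plan is to write both estimators as quadratic forms in the centered variables $Z_{\tu}:=X_{\tu}-\Xbar$ and then apply Isserlis' theorem. First observe that $Z_{\tu}=(X_{\tu}-\mu)-(\Xbar-\mu)$, so each $Z_{\tu}$ is a fixed linear combination of the zero-mean Gaussian variables $Y_{\uu}:=X_{\uu}-\mu$, $\uu\in\nset$. Under Assumptions~\ref{ass:stationarity} and \ref{ass:gaussianity}, the vector $(Z_{\tu})_{\tu\in\nset}$ is therefore jointly Gaussian with mean zero, and by \eqref{eq:C_centered} its covariance function is $\Cctr$. Since $\Cctr$ is built only from $C$, the quantity $\mu$ has disappeared at this point. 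This already gives the closing claim that the right-hand sides do not depend on $\mu$.

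Next, write $\widehat C(\hu)=N_{\hu}^{-1}\sum_{\tu\in\nhset}Z_{\tu+\hu}Z_{\tu}$ and likewise for $\gu$. By bilinearity,
\[
\Cov\bigl(\widehat C(\hu),\widehat C(\gu)\bigr)=\frac{1}{N_{\hu}N_{\gu}}\sum_{\tu\in\nhset}\sum_{\su\in\ngset}\Cov\bigl(Z_{\tu+\hu}Z_{\tu},\,Z_{\su+\gu}Z_{\su}\bigr).
\]
For zero-mean jointly Gaussian $W_1,\dots,W_4$, Isserlis' theorem gives $\E(W_1W_2W_3W_4)=\E(W_1W_2)\E(W_3W_4)+\E(W_1W_3)\E(W_2W_4)+\E(W_1W_4)\E(W_2W_3)$. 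Subtracting $\E(W_1W_2)\E(W_3W_4)$ then leaves
\[
\Cov(W_1W_2,W_3W_4)=\E(W_1W_3)\E(W_2W_4)+\E(W_1W_4)\E(W_2W_3).
\]
Take $W_1=Z_{\tu+\hu}$, $W_2=Z_{\tu}$, $W_3=Z_{\su+\gu}$, $W_4=Z_{\su}$. The two surviving pairings become $\Cctr(\tu+\hu,\su+\gu)\Cctr(\tu,\su)$ and $\Cctr(\tu+\hu,\su)\Cctr(\tu,\su+\gu)$, which yields \eqref{eq:cov_of_cov} directly.

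For the lag form \eqref{eq:cov_of_cov_lag}, substitute \eqref{eq:C_centered} into each of the four $\Cctr$ factors, with $\du=\su-\tu$. Using $C(\ru)=C(-\ru)$ from Assumption~\ref{ass:stationarity}, the four factors expand as follows:
\begin{itemize}
\item $\Cctr(\tu+\hu,\su+\gu)=C(\du+\gu-\hu)-\Cbar(\tu+\hu)-\Cbar(\su+\gu)+D$;
\item $\Cctr(\tu,\su)=C(\du)-\Cbar(\tu)-\Cbar(\su)+D$;
\item $\Cctr(\tu+\hu,\su)=C(\du-\hu)-\Cbar(\tu+\hu)-\Cbar(\su)+D$;
\item $\Cctr(\tu,\su+\gu)=C(\du+\gu)-\Cbar(\tu)-\Cbar(\su+\gu)+D$.
\end{itemize}
All arguments of $\Cbar$ lie in $\nset$ because $\tu\in\nhset$ and $\su\in\ngset$, so \eqref{eq:Cbar} applies.

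No step is a genuine obstacle. The only points needing care are:
\begin{itemize}
\item justifying joint Gaussianity of the demeaned variables, which holds because they are linear images of a Gaussian vector;
\item handling the sign conventions in the lag arguments, which is where the symmetry $C(\ru)=C(-\ru)$ is used.
\end{itemize}
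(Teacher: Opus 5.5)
Your proposal is correct and follows essentially the same route as the paper. Both arguments use bilinearity to reduce to a double sum, then apply Isserlis' theorem to the jointly Gaussian, mean-zero demeaned variables, where the first pairing cancels against the product of expectations, and finally substitute \eqref{eq:C_centered} with $\du=\su-\tu$ to get the lag form; your explicit note that the symmetry $C(\ru)=C(-\ru)$ is needed to write the lag arguments as stated is a detail the paper leaves implicit.
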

A remark on Theorem \ref{prop:covofcov} and the interpretation of the exact representation of the (finite sample) covariance of two standard sample covariance estimators is in order.

\begin{remark}[On the representation of $\Cov(\widehat C(\hu),\widehat C(\gu))$ in \eqref{eq:cov_of_cov}]\label{remark:cov_of_cov}\   
\begin{enumerate}[label=(\roman*)]
    \item If the true mean $\mu$ is known and replaces the sample mean $\Xbar$ in the standard estimators $\widehat{C}(\hu)$ (leading to $\widehat{C}_{\mu}(\hu)$) in \eqref{eq:cov_of_cov},
   $C_Z(\un t,\un s)$ collapses to $C(\un t-\un s)$ and \eqref{eq:cov_of_cov_lag} becomes\begin{equation}\label{eq:bartlett_cov-of-cov}
        \frac{1}{N_{\hu}N_{\gu}}\sum_{\tu\in\nhset}\ \sum_{\su\in\ngset}
\left[C(\du+\gu-\hu)C(\du)+C(\du-\hu)C(\du+\gu)\right],
    \end{equation}
which is the two-dimensional lattice version of the Bartlett formula  \citep{Bartlett1946}, which gives the covariance of sample covariances in the univariate time series setting. That is, all other terms in \eqref{eq:cov_of_cov} arise from estimating $\mu$ by $\overline{X}$ in $\widehat C(\hu)$ as defined in \eqref{eq:sampleAutocov}.
\end{enumerate}
\end{remark}

Again, using the simple link between the two sample covariances $\widetilde C(\hu)=N_{\hu}/N\,\widehat C(\hu)$, we can state the second-order structure of the constant prefactor estimator in the following corollary.

\begin{corollary}[Exact finite sample covariance of the constant estimator]\label{cor:cov_const}
Under the conditions of Theorem~\ref{prop:covofcov}, and with $\widetilde C$ from  \eqref{eq:sampleAutocov_const}, we have
\begin{equation}\label{eq:cov_of_cov_const}
\begin{aligned}
\Cov\bigl(\widetilde C(\hu),\widetilde C(\gu)\bigr)
&=\frac{N_{\hu}N_{\gu}}{N^{2}}\Cov\bigl(\widehat C(\hu),\widehat C(\gu)\bigr)\\[2pt]
&=\frac{1}{N^{2}}\sum_{\tu\in\nhset}\ \sum_{\su\in\ngset}
\Bigl[\Cctr(\tu+\hu,\su+\gu)\Cctr(\tu,\su)+\Cctr(\tu+\hu,\su)\Cctr(\tu,\su+\gu)\Bigr].
\end{aligned}
\end{equation}
\end{corollary}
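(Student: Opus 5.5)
The plan is to reduce everything to the deterministic pointwise identity $\widetilde C(\hu)=\frac{N_{\hu}}{N}\,\widehat C(\hu)$, which holds for every realization. Comparing the definitions \eqref{eq:sampleAutocov} and \eqref{eq:sampleAutocov_const}, both estimators contain the same sum
\[
\sum_{\tu\in\nhset}\bigl(X_{\tu+\hu}-\Xbar\bigr)\bigl(X_{\tu}-\Xbar\bigr)
\]
and differ only in the prefactor, $1/N$ versus $1/N_{\hu}$. Since $N_{\hu}$, $N_{\gu}$ and $N$ are deterministic constants depending only on $\un n$, $\hu$ and $\gu$, bilinearity of the covariance gives
\[
\Cov\bigl(\widetilde C(\hu),\widetilde C(\gu)\bigr)=\frac{N_{\hu}}{N}\frac{N_{\gu}}{N}\Cov\bigl(\widehat C(\hu),\widehat C(\gu)\bigr).
\]
This is the first equality in \eqref{eq:cov_of_cov_const}. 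Note that it uses only finiteness of second moments of the estimators, which holds under Assumption~\ref{ass:gaussianity}.

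For the second equality, I would insert the centered-form representation \eqref{eq:cov_of_cov} from Theorem~\ref{prop:covofcov}. The prefactor $\frac{1}{N_{\hu}N_{\gu}}$ there cancels against $\frac{N_{\hu}N_{\gu}}{N^2}$, leaving $\frac{1}{N^2}$ in front of the unchanged double sum of products of $\Cctr$ terms. This is exactly the claimed expression.

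No step is a genuine obstacle. The only point needing care is the convention for lags outside $\lagset$. Since $\hu,\gu\in\lagset$ by hypothesis, $N_{\hu},N_{\gu}\ge 1$ and the scaling identity is valid; the convention $\widehat C=\widetilde C=0$ outside $\lagset$ plays no role.

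Alternatively, one could verify the formula directly by applying Isserlis' theorem to the centered variables $X_{\tu}-\Xbar$, which are jointly Gaussian with covariance $\Cctr$, with prefactor $1/N$ in place of $1/N_{\hu}$. That repeats the proof of Theorem~\ref{prop:covofcov}, so the scaling argument is preferable.
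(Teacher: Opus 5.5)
Your proposal is correct and follows the paper's proof: the pointwise identity $\widetilde C(\hu)=(N_{\hu}/N)\,\widehat C(\hu)$ with deterministic factors, then bilinearity of the covariance, then substituting \eqref{eq:cov_of_cov} so that $N_{\hu}N_{\gu}$ cancels and leaves $1/N^{2}$. You read the pointwise identity directly off the definitions \eqref{eq:sampleAutocov} and \eqref{eq:sampleAutocov_const}, which is slightly cleaner than the paper's appeal to \eqref{eq:connection_standard_constant}, since that equation is stated only for expectations.
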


\section{Bias-corrected covariance estimation}\label{sec:mainresult}

In this section, we use the linear bias representation derived in the previous Section \ref{sec:exact_bias} to construct a framework for bias-corrected estimation of covariances from stationary spatial lattice data. Indeed, Equation~\eqref{eq:linear_bias} suggests a natural route for bias correction. If the deterministic weight matrix $\mathbf{W}_{\underline{n}}^*$ \emph{were} invertible, we could obtain an unbiased estimator for the sample covariances by left-multiplying $\vect(\widehat{\boldsymbol{\Gamma}})$ with $(\mathbf{W}_{\underline{n}}^*)^{-1}$. 

However, according to Proposition \ref{prop:joint_bias}, the deterministic weight matrix $\mathbf{W}_{\underline{n}}^*$ from \eqref{eq:W_n_star} is singular. Hence, since the inverse of the full weight matrix does not exist, we cannot recover such a correction directly. To remedy this issue, we propose an adaptation of the methods proposed by \citet{vogelsang2016exactly} for univariate time series and extended by \citet{li2024correcting} to multivariate time series to the spatial lattice case. That is, instead of using the whole sample covariance matrix $\widehat{\boldsymbol{\Gamma}}$, we consider suitable lower-dimensional (central) 
submatrices, denoted by $\widehat{\boldsymbol{\Gamma}}_{\underline{m}}$, 
of $\widehat{\boldsymbol{\Gamma}}$. As this also reduces the dimension of the corresponding weight matrix, it becomes invertible.

\subsection{Bias correction for submatrices of the matrix of sample covariances} \label{subsec:bias_correction}

Recall that $L_i = 2n_i-1$ for $i \in \{1,2\}$, and $L_{\un n } = L_1L_2$. We define $\underline{m}=(m_1,m_2)$ to be a vector of \emph{cutoff parameters} with $m_i\leq n_i-2$ and set $M_i = (2m_i+1)$, for $i \in \{1,2\}$, and $M_{\un m}:= M_1M_2$. In the following, we split the full lag set $H_{\underline{n}}$ from \eqref{eq:lag_set} into the set of \emph{selected} (or \emph{within-window}) lags 
\begin{equation*}
    \label{eq:m_lag_set}
    W_{\underline{m}} := \{\un h \in \mathbb{Z}^2: |h_i|\leq m_i, \; i \in \{1,2\} \} \subset H_{\underline{n}},\quad |W_{\underline{m}}| = M_{\un m}
\end{equation*}
and the set of the \emph{non-selected} (or \emph{out-of-window}) lags
\begin{equation*}\label{eq:not_selected_lags}
  W_{\underline{m}}^c:= H_{\underline{n}} \setminus W_{\underline{m}}, \qquad |W_{\underline{m}}^c|=L_{\un n}-M_{\un m}.
\end{equation*}
For $i \in\{1,2\}$, let the  $\mathbf{P}_{i,m_i}\in \mathbb{R}^{M_i\times L_i}$ be the matrix that retains the $M_i$ central rows of $\mathbf{I}_{L_i}$. We call them \emph{partition matrices} and their entries become
\begin{equation}\label{eq:partition_matrix} 
[\mathbf{P}_{i,m_i}]_{p,q} =
\mathbbm{1}\{q=n_i-m_i+p-1\},
\qquad p=1,\ldots,M_i,\quad q=1,\ldots,L_i,
\end{equation}
where $\mathbbm{1}(\cdot)$ is the indicator function.
In this way, the $p$-th row of $\mathbf{P}_{i,m_i}$ selects the lag $h_i = p-m_i-1$ and the partition matrices have a single one in each row and no repeated column. Analogously, let $\mathbf{P}^{c}_{i,m_i}\in \mathbb{R}^{(L_i-M_i)\times L_{i}}$ be the complementary matrices
that retain the remaining $L_{i}- M_{i}$ rows of $\mathbf{I}_{L_i}$ in increasing order, that is,
\begin{equation}\label{eq:partition_matrix_c}
[\mathbf{P}^{c}_{i,m_i}]_{p,q}
=
\begin{cases}
1, & q=p,
     \qquad\qquad\;\, p=1,\dots,n_i-m_i-1,\\[6pt]
1, & q=p+M_i,
     \qquad p=n_i-m_i,\dots,K_i,\\[6pt]
0, & \textup{otherwise}.
\end{cases}
\end{equation}
By construction, $\mathbf{P}_{i,m_i}$ and $\mathbf{P}^{c}_{i,m_i}$ together consist of all rows of $\mathbf{I}_{L_i}$.
We apply \eqref{eq:partition_matrix} on both sides of $\boldsymbol{\Gamma}$ and $\boldsymbol{\widehat{\Gamma}}$ defined in \eqref{eq:Gamma_pq} 
to extract an $(M_1 \times M_2)$ principal submatrix centered around the lag $(0,0)$ that contains all the covariances with lags $\hu \in W_{\underline{m}}$. That is, we define
\begin{equation}\label{eq:gammas_m}
\widehat{\mathbf{\Gamma}}_{\underline{m}} :=\mathbf{P}_{1,m_1}\,\widehat{\mathbf{\Gamma}}\,\mathbf{P}_{2,m_2}^{\top},\quad \mathbf{\Gamma}_{\underline{m}} :=\mathbf{P}_{1,m_1}\,\mathbf{\Gamma}\,\mathbf{P}_{2,m_2}^{\top}.
\end{equation}
In Figure \ref{fig:selection_example}, we give examples of $\widehat{\mathbf{\Gamma}}_{\underline{m}}$ for $n_1,n_2=4$ and three choices of cutoff parameters $m_1=m_2=0$, $m_1=m_2=1$ and $m_1=m_2=2$.

For the vectorized versions of \eqref{eq:gammas_m}, we exploit the vectorization property of the Kronecker product and combine the partition matrices into a single \emph{selection matrix} $\mathbf{S}_{\underline{m}}$ and its complement $\mathbf{S}^{c}_{\un m}$. Precisely, we define\begin{equation}\label{eq:selection_matrices}
  \mathbf{S}_{\underline{m}}:= (\mathbf{P}_{2,m_2}\otimes\mathbf{P}_{1,m_1}) \in \mathbb{R}^{M_{\un m}\times L_{\un n}}, \quad \mathbf{S}^{c}_{\un m}:=\begin{pmatrix}\mathbf{I}_{L_2}\otimes\mathbf{P}^{c}_{1,m_1}\\[2pt]
\mathbf{P}^{c}_{2,m_2}\otimes\mathbf{P}_{1,m_1}
\end{pmatrix}
   \in \R^{(L_{\un n}-M_{\un m})\times L_{\un n}}.
 \end{equation}
 where the two blocks of $\mathbf{S}^{c}_{\un m}$ in
\eqref{eq:selection_matrices} correspond to the disjoint decomposition of $W^{c}_{\un m}$ into the lags
with $|h_1|>m_1$, for which $h_2$ is unrestricted, and those with $|h_1|\le m_1$ and $|h_2|>m_2$.
 Since, by construction, $\mathbf{S}_{\un m}$ and $\mathbf{S}_{\un m}^c$ consist of distinct rows of $\mathbf{I}_{L_i}$, exhausting them together, they have orthonormal rows and mutually orthogonal row spaces. That is, we have
\begin{equation}\label{eq:selection_properties}
\mathbf{S}_{\un m}\mathbf{S}_{\un m}^\top = \mathbf{I}_{M_{\un m}},
\qquad
\mathbf{S}_{\un m}\mathbf{S}^{c\top}_{\un m} = \mathbf{0},
\qquad
\mathbf{S}_{\un m}^\top\mathbf{S}_{\un m}
+ \mathbf{S}^{c\top}_{\un m}\mathbf{S}^{c}_{\un m} = \mathbf{I}_{L_{\un n}}.
\end{equation}
Hence, pre-multiplying the selection matrix $\mathbf{S}_{\un m}$ and its complementary $\mathbf{S}_{\un m}^c$ from 
\eqref{eq:selection_matrices} to the $\vect(\boldsymbol{\Gamma})$ and $\vect(\widehat{\boldsymbol{\Gamma}})$ yields
\begin{equation}\label{eq:Cm_Cmc}
\begin{pmatrix}
\un C_{\un m}\\
\un C_{\un m}^{c}
\end{pmatrix}:=
\begin{pmatrix}
\mathbf S_{\un m}\\
\mathbf S_{\un m}^{c}
\end{pmatrix}
\vect(\boldsymbol{\Gamma}),\quad
\begin{pmatrix}
\un{\widehat{C}}_{\un m}\\
\un{\widehat{C}}_{\un m}^{c}
\end{pmatrix}:=
\begin{pmatrix}
\mathbf S_{\un m}\\
\mathbf S_{\un m}^{c}
\end{pmatrix}
\vect(\widehat{\boldsymbol{\Gamma}}),
\end{equation}
where $\un C_{\un m}\in\mathbb R^{M_{\un m}}$ contains the population covariances with lags in $W_{\un m}$ and $\un C_{\un m}^{c}\in\mathbb R^{(L_{\un n}-M_{\un m})}$ contains those with lags in $W_{\un m}^{c}$. By the last identity in \eqref{eq:selection_properties}, we can finally decompose the $L_{\un n}$-dimensional vector of population covariances into the vector of selected and non-selected covariances leading to
\begin{equation}
\label{eq:gamma_decomposition}
  \vect(\boldsymbol{\Gamma}) = \mathbf{S}_{\underline{m}}^\top \mathbf{S}_{\underline{m}}\vect(\boldsymbol{\Gamma}) + \mathbf{S}_{\underline{m}}^{c\top} \mathbf{S}_{\underline{m}}^c\vect(\boldsymbol{\Gamma}).
\end{equation}
Now, using \eqref{eq:linear_bias}, we examine the expectation of the vectorized vector of partitioned sample covariances $\vect(\widehat{\boldsymbol{\Gamma}})$. By plugging in the decomposition of $\vect(\boldsymbol{\Gamma})$ \eqref{eq:gamma_decomposition} and using the notation in \eqref{eq:Cm_Cmc}, we get
\begin{align}
\label{eq:linear_bias_m}
\E[\vect(\widehat{\boldsymbol{\Gamma}}_{\underline{m}})] = \mathbf{S}_{\underline{m}}\mathbf{W}_{\underline{n}}^* \, \vect(\boldsymbol{\Gamma}) &= \mathbf{S}_{\underline{m}}\mathbf{W}_{\underline{n}}^*\mathbf{S}_{\underline{m}}^\top \mathbf{S}_{\underline{m}}\vect(\boldsymbol{\Gamma}) + \mathbf{S}_{\underline{m}}\mathbf{W}_{\underline{n}}^*\mathbf{S}_{\un m}^{c\top}\mathbf{S}_{\un m}^{c}\vect(\boldsymbol{\Gamma})\notag\\
&= \mathbf{W}^{*}_{\un n,\un m}\,\un C_{\un m}
      + \mathbf{W}^{*,c}_{\un n,\un m}\,\un C^{c}_{\un m},  
\end{align}
where the two factors in \eqref{eq:linear_bias_m} are the \emph{reduced weight matrix}, namely a central submatrix of the weight matrix in \eqref{eq:W_n_star}, and the rectangular \emph{leakage matrix}
\begin{align}\label{eq:Wnm_star}
\mathbf{W}_{\underline{n},{\underline{m}}}^{*}
:= \mathbf{S}_{\underline{m}}\,\mathbf{W}_{\underline{n}}^*\,\mathbf{S}_{\underline{m}}^{\top} \in\R^{M_{\un m}\times M_{\un m}},
\quad 
\mathbf{W}^{*,c}_{\un n,\un m}
 := \mathbf{S}_{\un m}\mathbf{W}_{\un n}^{*}\mathbf{S}^{c\top}_{\un m}
 \in \R^{M_{\un m}\times (L_{\un n}-M_{\un m})}
\end{align}
The reduced weight matrix $\mathbf{W}_{\underline{n},{\underline{m}}}^{*}$ inherits the Kronecker structure of $\mathbf{W}_{\underline{n}}^{*}$, such that, for $i\in \{1,2\}$, we have
\[
\mathbf{A}_{i,m_i}^{*} := \mathbf{P}_{i,m_i}\mathbf{A}_i^{*}\mathbf{P}_{i,m_i}^{\top},\quad
\mathbf{B}_{i,m_i}^{*} := \mathbf{P}_{i,m_i}\mathbf{B}_i^{*}\mathbf{P}_{i,m_i}^{\top},\quad
\mathbf{D}_{i,m_i}^{*} := \mathbf{P}_{i,m_i}\mathbf{D}_i^{*}\mathbf{P}_{i,m_i}^{\top},
\]
which are the central $(M_{\un m}\times M_{\un m})$ blocks of the normalized counting matrices in \eqref{eq:normalized_counting_matrices} and the reduced weight matrix is given by
\begin{equation}\label{eq:reducedWeightMatrix} 
\mathbf{W}_{\underline{n},\un{m}}^{*}
=\mathbf{I}_{M_{\un m}}-
\mathbf{A}_{2,m_2}^{*}\otimes \mathbf{A}_{1,m_1}^{*}-
\mathbf{B}_{2,m_2}^{*}\otimes \mathbf{B}_{1,m_1}^{*}+
\mathbf{D}_{2,m_2}^{*}\otimes \mathbf{D}_{1,m_1}^{*},
\end{equation}
which becomes invertible for $m_i\leq n_i-2$ small enough.
Finally, for estimating $\underline{C}_{\underline{m}}$, we define the \emph{bias-corrected vector of (central) sample covariance estimators} $\widehat{\underline{C}}_{\underline{m}}^{*}$ by solving the reduced system in the first term of the last right-hand side of \eqref{eq:linear_bias_m}. That is, we define 
\begin{equation}\label{eq:bias_corrected_estimator}
\widehat{\underline{C}}_{\underline{m}}^{*}
=
\bigl(\mathbf{S}_{\underline{m}}\mathbf{W}_{\underline{n}}^{*}\mathbf{S}_{\underline{m}}^{\top}\bigr)^{-1}
\un{\widehat{C}}_{\un m} = \bigl(\mathbf{W}_{\underline{n},\underline{m}}^{*}\bigr)^{-1}\un{\widehat{C}}_{\un m}.
\end{equation}
Since \eqref{eq:bias_corrected_estimator} inverts only the first of the two terms in \eqref{eq:linear_bias_m}, the estimator $\widehat{\underline{C}}_{\underline{m}}^{*}$ is not exactly unbiased unless the second term vanishes. We formalize this in the following theorem.

\begin{theorem}[Finite sample bias of $\widehat{\underline{C}}_{\underline{m}}^{*}$]\label{prop:bias_of_bce} Let $\un m =(m_1,m_2)$ with $m_i\leq n_i-2, \; i\in \{1,2\}$, and suppose $\mathbf{W}_{\un n, \un m}$ is invertible. Then, the bias vector of the bias-corrected  estimator
$\widehat{\underline{C}}_{\underline{m}}^{*}$ defined in \eqref{eq:bias_corrected_estimator} becomes
\begin{align}
\underline{b}^{*}_{\un m}
&:=\E\bigl(\widehat{\underline{C}}^{*}_{\un m}\bigr)-\underline{C}_{\un m}
=\bigl(\mathbf{W}^{*}_{\un n,\un m}\bigr)^{-1}\mathbf{W}^{*,c}_{\un n,\un m}\,\underline{C}^{c}_{\un m}.
\label{eq:bias_corrected_m}
\end{align}
Moreover, if $M_{\un m}/N < 1$, then 
\begin{equation}\label{eq:bc_bias_bound}
\bigl\lVert\underline{b}^{*}_{\un m}\bigr\rVert_\infty \leq \frac{1}{N-M_{\un m }}\bigl\lVert\underline{C}^{c}_{\un m}\bigr\rVert_1. 
\end{equation}
\end{theorem}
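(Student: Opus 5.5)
The identity \eqref{eq:bias_corrected_m} follows from linearity of expectation. Since $\widehat{\underline{C}}^{*}_{\un m}=(\mathbf W^*_{\un n,\un m})^{-1}\mathbf S_{\un m}\vect(\widehat{\boldsymbol\Gamma})$ and $(\mathbf W^*_{\un n,\un m})^{-1}$ is deterministic, I take expectations and insert the decomposition \eqref{eq:linear_bias_m}, which comes from Theorem~\ref{prop:joint_bias} together with \eqref{eq:gamma_decomposition}. This gives
\[
\E\bigl(\widehat{\underline{C}}^{*}_{\un m}\bigr)=\underline C_{\un m}+(\mathbf W^*_{\un n,\un m})^{-1}\mathbf W^{*,c}_{\un n,\un m}\underline C^c_{\un m},
\]
which is \eqref{eq:bias_corrected_m}.

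For the bound \eqref{eq:bc_bias_bound}, I will work with the matrix $\mathbf K:=\mathbf I_{L_{\un n}}-\mathbf W^*_{\un n}=\mathbf A_2^*\otimes\mathbf A_1^*+\mathbf B_2^*\otimes\mathbf B_1^*-\mathbf D_2^*\otimes\mathbf D_1^*$. By \eqref{eq:selection_properties}, $\mathbf S_{\un m}\mathbf S_{\un m}^{c\top}=\mathbf 0$, so $\mathbf W^{*,c}_{\un n,\un m}=-\mathbf S_{\un m}\mathbf K\mathbf S^{c\top}_{\un m}$. Similarly, $\mathbf W^*_{\un n,\un m}=\mathbf I_{M_{\un m}}-\mathbf K_{\un m}$ with $\mathbf K_{\un m}:=\mathbf S_{\un m}\mathbf K\mathbf S_{\un m}^\top$.

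The key intermediate claim is an entrywise bound $|[\mathbf K]_{\hu,\ru}|\le 1/N$ for all $\hu,\ru\in\lagset$. This is what underlies the bound \eqref{eq:naive_bias_bound} in Theorem~\ref{prop:joint_bias}. I will re-derive it from \eqref{eq:normalised_counting_matrices_form_rh}. Since $\nu_{n_i}(r_i,\pm h_i)\le n_i-|h_i|$, every entry of $\mathbf A_i^*$ and $\mathbf B_i^*$ lies in $[0,1/n_i]$. Since $\nu_{n_i}(r_i,0)\le n_i$, every entry of $\mathbf D_i^*$ also lies in $[0,1/n_i]$. Hence each of the three Kronecker products has entries in $[0,1/N]$.

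The main obstacle is the signed combination $a+b-d$. A priori it lies only in $[-1/N,2/N]$, so I must show that $a+b\le d+1/N$ entrywise. For this I will exploit the product structure. Write $\alpha_i,\beta_i,\delta_i$ for the normalized one-dimensional counts. Then $\alpha_i,\beta_i\le\delta_i\cdot n_i/(n_i-|h_i|)$. Moreover, $\min(\nu_{n_i}(r_i,h_i),\nu_{n_i}(r_i,-h_i))\le n_i-|h_i|-|r_i|$ whenever $h_ir_i\neq0$, which makes one of the two products small exactly when the other is large. If a direct case analysis does not close, the fallback is to use the symmetry $C(\ru)=C(-\ru)$ and bound the paired entries for $\pm\ru$ jointly, as the proof of Theorem~\ref{prop:joint_bias} presumably does. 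In that case every subsequent $\ell_1$-norm is taken over symmetric pairs.

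Granting the entrywise bound, the rest is a Neumann series argument in the maximum-row-sum norm $\|\cdot\|_\infty$. Each row of $\mathbf K_{\un m}$ has $M_{\un m}$ entries bounded by $1/N$, so $\|\mathbf K_{\un m}\|_\infty\le M_{\un m}/N<1$. Thus $\mathbf W^*_{\un n,\un m}$ is invertible (which also confirms the invertibility assumption whenever $M_{\un m}<N$), and
\[
\|(\mathbf W^*_{\un n,\un m})^{-1}\|_\infty\le \sum_{k\ge0}(M_{\un m}/N)^k=\frac{N}{N-M_{\un m}}.
\]
Each entry of $\mathbf W^{*,c}_{\un n,\un m}$ is bounded by $1/N$, so $\|\mathbf W^{*,c}_{\un n,\un m}\underline C^c_{\un m}\|_\infty\le \|\underline C^c_{\un m}\|_1/N$. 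Combining the two bounds yields
\[
\|\underline b^*_{\un m}\|_\infty\le \frac{N}{N-M_{\un m}}\cdot\frac{1}{N}\|\underline C^c_{\un m}\|_1=\frac{1}{N-M_{\un m}}\|\underline C^c_{\un m}\|_1,
\]
which is \eqref{eq:bc_bias_bound}.
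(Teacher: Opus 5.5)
Your overall structure matches the paper's proof. The identity comes from taking expectations in \eqref{eq:bias_corrected_estimator} and inserting \eqref{eq:linear_bias_m}. The bound combines the Neumann-series estimate $\|(\mathbf W^*_{\un n,\un m})^{-1}\|_\infty\le N/(N-M_{\un m})$ with the row-wise H\"older bound $\|\mathbf W^{*,c}_{\un n,\un m}\un C^c_{\un m}\|_\infty\le N^{-1}\|\un C^c_{\un m}\|_1$.

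The gap is that the one nontrivial ingredient, the entrywise bound $|[\mathbf K]_{\hu,\ru}|\le 1/N$, is never proved. You correctly isolate the hard direction $\alpha_1\alpha_2+\beta_1\beta_2\le\delta_1\delta_2+1/N$, list two facts, and then hedge. Both the invertibility estimate and the leakage estimate rest on this bound, so \eqref{eq:bc_bias_bound} is not yet established. Your fallback cannot rescue it. Because $\nu_{n_i}(-r_i,h_i)=\nu_{n_i}(r_i,-h_i)$ and $\nu_{n_i}(-r_i,0)=\nu_{n_i}(r_i,0)$, replacing $\ru$ by $-\ru$ merely swaps the $\mathbf A$- and $\mathbf B$-products. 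Hence $[\mathbf K]_{\hu,-\ru}=[\mathbf K]_{\hu,\ru}$, and pairing the terms for $\pm\ru$ produces no cancellation.

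The missing step is short and uses your second fact, once it is corrected to $\min\bigl(\nu_{n_i}(r_i,h_i),\nu_{n_i}(r_i,-h_i)\bigr)=(n_i-|r_i|-|h_i|)^{+}$. The positive part is needed; the restriction $h_ir_i\neq0$ is not. Since $(n_i-|r_i|)(n_i-|h_i|)-n_i(n_i-|r_i|-|h_i|)=|r_i||h_i|\ge0$, you get $\min(\alpha_i,\beta_i)\le\delta_i$ for each $i$. Now split into cases.
\begin{itemize}
\item Same letter small in both coordinates, say $\alpha_1\le\delta_1$ and $\alpha_2\le\delta_2$: then $\alpha_1\alpha_2\le\delta_1\delta_2$ and $\beta_1\beta_2\le 1/N$.
\item Mixed case $\alpha_1\le\delta_1$, $\beta_2\le\delta_2$ (the other mixed case is symmetric): you get $\alpha_1\alpha_2+\beta_1\beta_2\le\delta_1/n_2+\delta_2/n_1$. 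Moreover $\delta_1\delta_2+1/N-\delta_1/n_2-\delta_2/n_1=(1/n_1-\delta_1)(1/n_2-\delta_2)\ge0$, because $\delta_i\le 1/n_i$.
\end{itemize}
This is exactly the paper's entrywise lemma, proved in the appendix as part of Theorem~\ref{prop:joint_bias}. With it inserted, your argument coincides with the paper's.
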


A few remarks on the result of Theorem~\ref{prop:bias_of_bce} are in order.

\begin{remark}[On the bias correction]\
\begin{enumerate}[label=(\roman*)]
\item The bias correction thus removes the contribution to the bias of the selected covariances exactly, whatever the covariance function is, while the leakage bias is pre-multiplied by the inverse of the reduced weight matrix $(\mathbf{W}^{*}_{\un n,\un m})^{-1}$. 
\item The bias-corrected estimator is exactly unbiased if and only if $\mathbf{W}^{*,c}_{\un n,\un m}\un C^{c}_{\un m}=\un 0$. This holds whenever $\un C^{c}_{\un m}=\un 0$,
that is, whenever the field $\mathbf{X}$ is $\un m$-dependent, so that $C(\un h)=0$ for every
$\un h\notin W_{\un m}$.
\item Both factors in \eqref{eq:linear_bias_m} are governed by the deviation of
$\mathbf{W}^{*}_{\un n}$ from the identity. Indeed, by the first two identities in
\eqref{eq:selection_properties}, the identity part of $\mathbf{W}^{*}_{\un n}$ contributes exactly
$\mathbf{I}_{M_{\un m}}$ to $\mathbf{W}^{*}_{\un n,\un m}$ and nothing at all to
$\mathbf{W}^{*,c}_{\un n,\un m}$, so that
\begin{equation}\label{eq:both_blocks}
  \mathbf{W}^{*}_{\un n,\un m}-\mathbf{I}_{M_{\un m}}
   = \mathbf{S}_{\un m}\bigl(\mathbf{W}^{*}_{\un n}-\mathbf{I}_{L_{\un n}}\bigr)
     \mathbf{S}_{\un m}^{\top},
  \qquad
  \mathbf{W}^{*,c}_{\un n,\un m}
   = \mathbf{S}_{\un m}\bigl(\mathbf{W}^{*}_{\un n}-\mathbf{I}_{L_{\un n}}\bigr)
     \mathbf{S}^{c\top}_{\un m} .
\end{equation}
Both are therefore built from the counting matrices generated by the estimation of the mean. In
particular, if $\mu$ were known and used in place of $\overline X$ (leading to $\widehat{C}_{\mu}(\hu)$), we would have
$\mathbf{W}^{*}_{\un n}=\mathbf{I}_{L_{\un n}}$, both matrices in \eqref{eq:both_blocks} would vanish,
and $\widehat{\un C}^{*}_{\un m}$ would coincide with $\widehat{\un C}_{\un m}$ and be unbiased.
\end{enumerate}
\end{remark}
 
\begin{figure}[t]
\begin{center}
\renewcommand{\arraystretch}{1.1}
\resizebox{\linewidth}{!}{%
\begin{NiceTabular}{c|*{7}{c}}[cell-space-limits=1pt]
$h_1 \backslash h_2$
  & -3 & -2 & -1 & 0 & 1 & 2 & 3 \\ \hline
-3 & $\widehat{C}(-3,-3)$ & $\widehat{C}(-3,-2)$ & $\widehat{C}(-3,-1)$ & $\widehat{C}(-3,0)$ & $\widehat{C}(-3,1)$ & $\widehat{C}(-3,2)$ & $\widehat{C}(-3,3)$ \\
-2 & $\widehat{C}(-2,-3)$ & \cellcolor{blue!15} $\widehat{C}(-2,-2)$ & \cellcolor{blue!15}  $\widehat{C}(-2,-1)$ & \cellcolor{blue!15} $\widehat{C}(-2,0)$ &\cellcolor{blue!15}  $\widehat{C}(-2,1)$ & \cellcolor{blue!15} $\widehat{C}(-2,2)$ & $\widehat{C}(-2,3)$ \\
-1 & $\widehat{C}(-1,-3)$ &\cellcolor{blue!15}  $\widehat{C}(-1,-2)$ & \cellcolor{red!15}$\widehat{C}(-1,-1)$ & \cellcolor{red!15}$\widehat{C}(-1,0)$ & \cellcolor{red!15}$\widehat{C}(-1,1)$ & \cellcolor{blue!15} $\widehat{C}(-1,2)$ & $\widehat{C}(-1,3)$ \\
 0 & $\widehat{C}(0,-3)$  & \cellcolor{blue!15} $\widehat{C}(0,-2)$  & \cellcolor{red!15}$\widehat{C}(0,-1)$  & 
 \cellcolor{green!15}{$\widehat{C}(0,0)$}  & \cellcolor{red!15}$\widehat{C}(0,1)$  & \cellcolor{blue!15} $\widehat{C}(0,2)$  & $\widehat{C}(0,3)$ \\
 1 & $\widehat{C}(1,-3)$  & \cellcolor{blue!15} $\widehat{C}(1,-2)$  & \cellcolor{red!15}$\widehat{C}(1,-1)$  & \cellcolor{red!15}$\widehat{C}(1,0)$  & \cellcolor{red!15}$\widehat{C}(1,1)$  & \cellcolor{blue!15}$\widehat{C}(1,2)$  & $\widehat{C}(1,3)$ \\
 2 & $\widehat{C}(2,-3)$  & \cellcolor{blue!15} $\widehat{C}(2,-2)$  & \cellcolor{blue!15} $\widehat{C}(2,-1)$  & \cellcolor{blue!15} $\widehat{C}(2,0)$  & \cellcolor{blue!15} $\widehat{C}(2,1)$  & \cellcolor{blue!15} $\widehat{C}(2,2)$  & $\widehat{C}(2,3)$ \\
 3 & $\widehat{C}(3,-3)$  & $\widehat{C}(3,-2)$  & $\widehat{C}(3,-1)$  & $\widehat{C}(3,0)$  & $\widehat{C}(3,1)$  & $\widehat{C}(3,2)$  & $\widehat{C}(3,3)$
\end{NiceTabular}
}
\vspace{2mm}
\noindent
\begin{tabular}{@{\hspace{2.2cm}}c@{\hspace{1cm}}c@{\hspace{1cm}}c@{}}
\cellcolor{green!15}\phantom{xx} $m = 0$ &
\cellcolor{red!15}\phantom{xx} $m = 1$ & \cellcolor{blue!15}\phantom{xx} $m = 2$
\end{tabular}
\end{center}
\caption{For $n_1=n_2=4$, the $(L_1\times L_2)=(7\times 7)$ sample covariance matrix $\widehat{\mathbf{\Gamma}}$ is shown, where extracted submatrices $\widehat{\mathbf{\Gamma}}_{\underline{m}}$ are indicated for $m_1=m_2=0$ (green) $m_1=m_2=1$ (green and red) and $m_1=m_2=2$ (green, red and blue), respectively.}
\label{fig:selection_example}
\end{figure}

To illustrate the bias correction approach and the quantities introduced above, we conclude this section with a continuation of the toy example setup considered already in Examples \ref{example_Gaussian} and \ref{example_Gaussian_bias}.

\begin{ex}[Bias correction for spatial lattice process]\label{example_Gaussian_bias_correction}
As in the setup of Examples \ref{example_Gaussian} and \ref{example_Gaussian_bias}, suppose again $n_1=n_2=4$. That is, we observe the $(4\times 4)$ data matrix $\mathbf{X}_{\underline{n}}$ such that $\boldsymbol{\Gamma}$ defined becomes a $(7\times 7)$ matrix of covariances. For this setup, Figure \ref{fig:selection_example} shows the corresponding $(7\times 7)$ sample covariance matrix $\widehat{\mathbf{\Gamma}}$. Additionally, Figure \ref{fig:selection_example} indicates the submatrices $\widehat{\mathbf{\Gamma}}_{\underline{m}}$
for the choices $m_1=m_2=0$ (green), $m_1=m_2=1$ (green and red) and $m_1=m_2=2$ (green, red and blue) that are obtained by applying the selection matrices shown in Figure \ref{fig:P_example}. In these cases, the green-and-red-shaded entries ($m_1=m_2=1$) determine a $(3 \times 3)$ submatrix centered around the zero lag $(0,0)$, capturing all lag combinations $(h_1, h_2)$ where $|h_1| \leq m_1=1$ and $|h_2| \leq m_2=1$, while the green-and-red-and-blue-shaded region ($m_1=m_2=2$) represents the corresponding $(5 \times 5)$ submatrix, which includes the lag combinations where $|h_1| \leq m_1=2$ and $|h_2| \leq m_2=2$. Finally, Figure \ref{fig:W_n_star_comparison} (right panel) shows a heatmap of the $(25\times 25)$ inverted weight matrix $\mathbf{W}_{\underline{n},\underline{m}}^{*}$ for the case $m_1=m_2=2$.
\end{ex}

\begin{figure}[t]
\begin{center}
\renewcommand{\arraystretch}{1.0}
\footnotesize
\begin{minipage}[t]{0.31\textwidth}
\centering
\textbf{$\mathbf{P}_{1,m}=\mathbf{P}_{2,m}$ for $m = 0$}
\vspace{2mm}
\resizebox{\linewidth}{!}{%
\begin{NiceTabular}{c|*{7}{c}}[cell-space-limits=0.5pt]
$p \backslash h$ & $-3$ & $-2$ & $-1$ & $0$ & $1$ & $2$ & $3$ \\ \hline
1 & 0 & 0 & 0 & \text{1} & 0 & 0 & 0 \\
\end{NiceTabular}
}
\end{minipage}
\hfill
\begin{minipage}[t]{0.31\textwidth}
\centering
\textbf{$\mathbf{P}_{1,m}=\mathbf{P}_{2,m}$ for $m = 1$}
\vspace{2mm}
\resizebox{\linewidth}{!}{%
\begin{NiceTabular}{c|*{7}{c}}[cell-space-limits=0.5pt]
$p \backslash h$ & $-3$ & $-2$ & $-1$ & $0$ & $1$ & $2$ & $3$ \\ \hline
1 & 0 & 0 & \text{1} & 0 & 0 & 0 & 0 \\
2 & 0 & 0 & 0 & \text{1} & 0 & 0 & 0 \\
3 & 0 & 0 & 0 & 0 & \text{1} & 0 & 0 \\
\end{NiceTabular}
}
\end{minipage}
\hfill
\begin{minipage}[t]{0.31\textwidth}
\centering
\textbf{$\mathbf{P}_{1,m}=\mathbf{P}_{2,m}$ for $m = 2$}
\vspace{2mm}
\resizebox{\linewidth}{!}{%
\begin{NiceTabular}{c|*{7}{c}}[cell-space-limits=0.5pt]
$p \backslash h$ & $-3$ & $-2$ & $-1$ & $0$ & $1$ & $2$ & $3$ \\ \hline
1 & 0 & \text{1} & 0 & 0 & 0 & 0 & 0 \\
2 & 0 & 0 & \text{1} & 0 & 0 & 0 & 0 \\
3 & 0 & 0 & 0 & \text{1} & 0 & 0 & 0 \\
4 & 0 & 0 & 0 & 0 & \text{1} & 0 & 0 \\
5 & 0 & 0 & 0 & 0 & 0 & \text{1} & 0 \\
\end{NiceTabular}
}
\end{minipage}
\end{center}
\caption{The selection matrices $\mathbf{P}_{1,m}=\mathbf{P}_{2,m}$ for a grid of size $n_1=n_2=4$ ($L_1=L_2=7$). Left: $m_1=m_2=0$, selecting only lag $h_i=0$. Middle: $m_1=m_2=1$, selecting lags $|h_i|\leq1$. Right: $m_1=m_2=2$, selecting lags $|h_i|\leq2$.}
\label{fig:P_example}
\end{figure}

\begin{figure}[ht]
    \centering
    \begin{subfigure}[t]{0.48\textwidth}
        \centering
        \includegraphics[width=\textwidth]{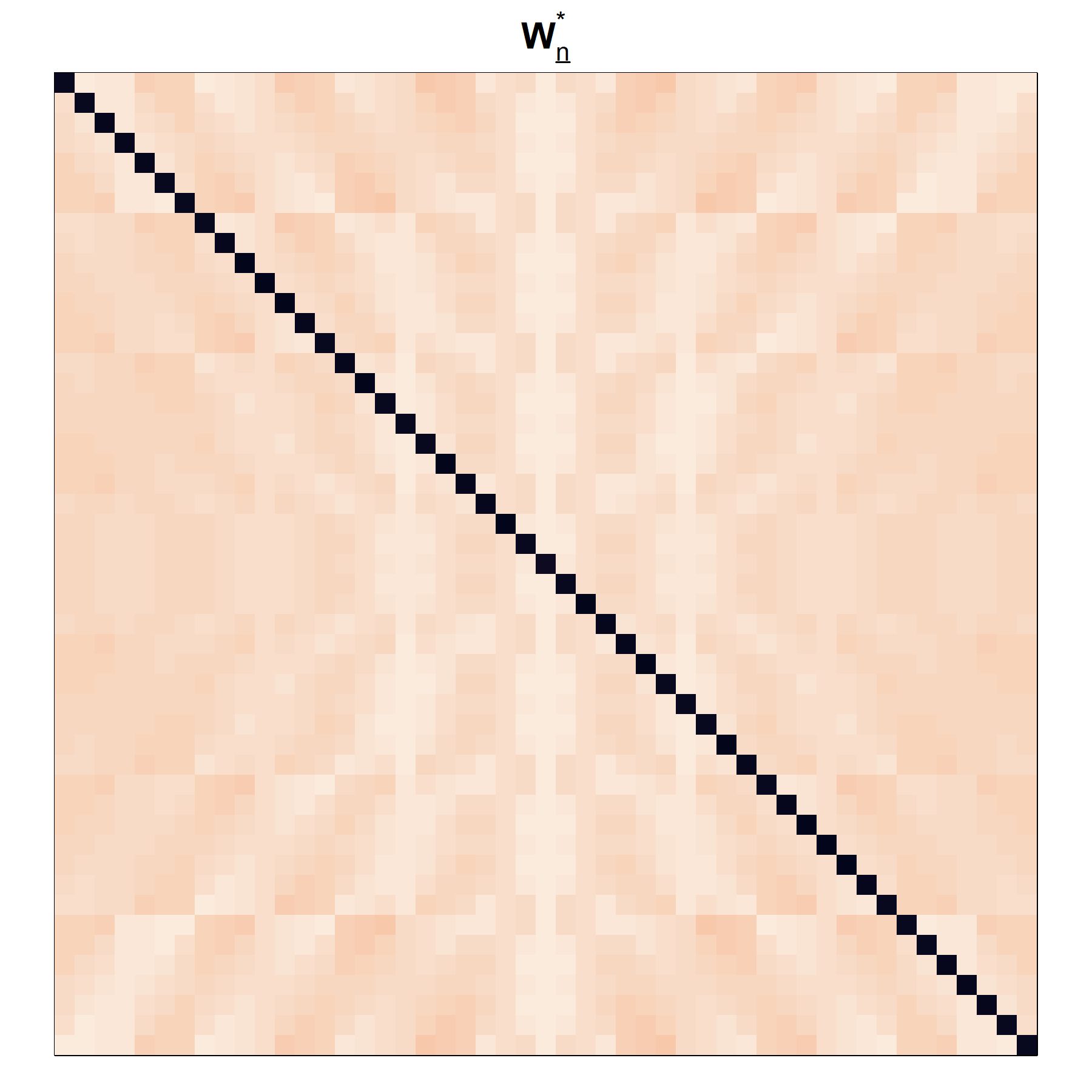}
    \end{subfigure}
    \hfill
    \begin{subfigure}[t]{0.48\textwidth}
        \centering
        \includegraphics[width=\textwidth]{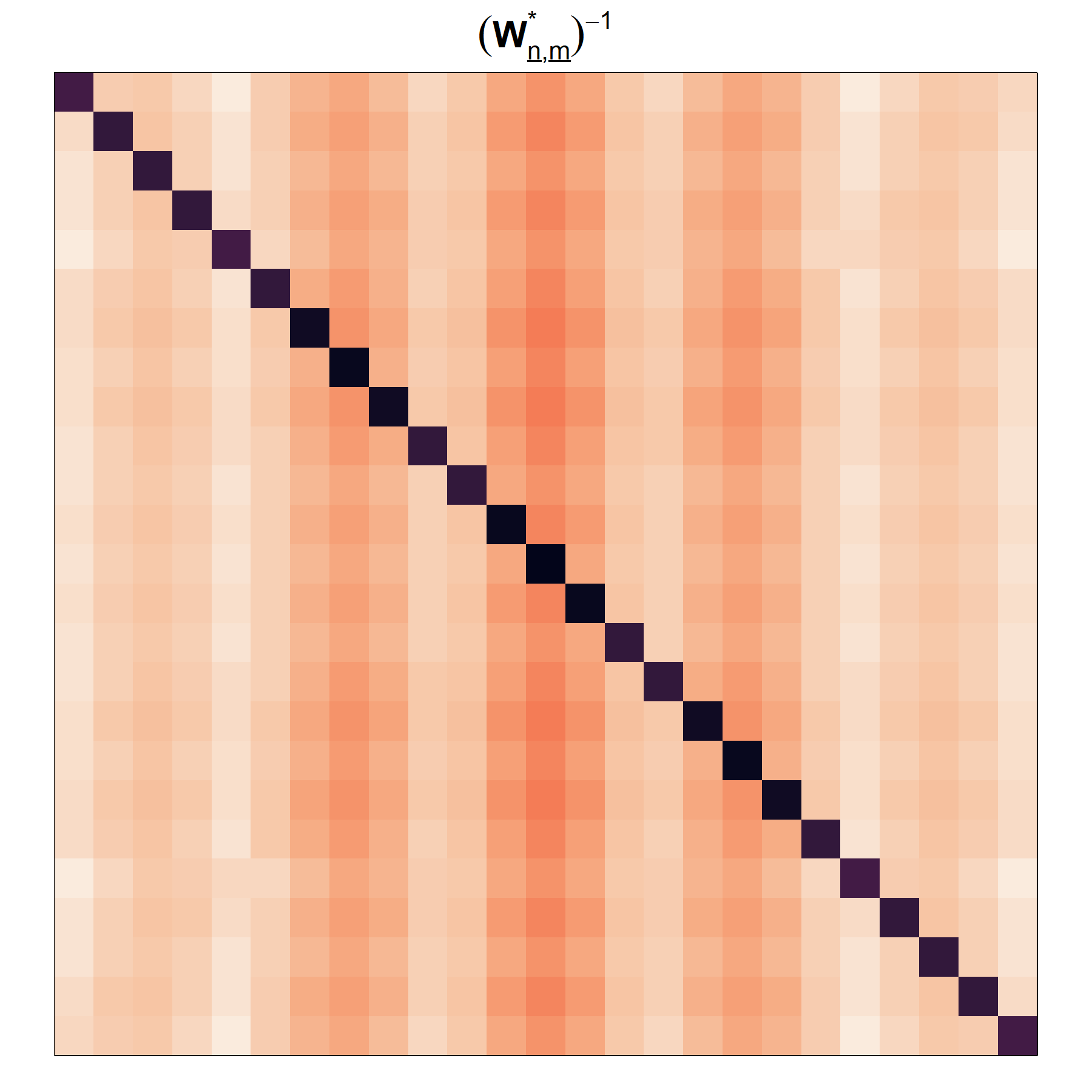}
    \end{subfigure}
\caption{Heatmaps of the (singular) $(49\times 49)$ weight matrix $\mathbf{W}_{\underline{n}}^*$ (left) and the inverse of its central $(25\times 25)$ submatrix $\mathbf{W}_{\underline{n},m}^{*}$, i.e., $(\mathbf{W}_{\underline{n},m}^{*})^{-1}$, (right) for a grid of size $n_1=n_2=4$ and cutoff parameters $m_1=m_2=2$.}
\label{fig:W_n_star_comparison}
\end{figure}

\subsection{On the invertibility of the weight matrix}\label{subsec:invertibility_W}
In the following lemma, we give sufficient conditions on the cutoff parameters $\un m=(m_1,m_2)$ that guarantee invertibility of the resulting weight matrix $\mathbf{W}_{\underline{n},m}^{*}$.

\begin{lemma}[Invertibility and bound of the inverse weight matrix $\mathbf{W}_{\un n, \un m}$]\label{lemma:invertibility_W_nm} Let $\un n = (n_1, n_2)$ be the sizes of the observed lattice $\mathbf{X}_{\un n}$ and $\un m= (m_1, m_2)$ are the cutoff parameters leading to the weight matrix $\mathbf{W}_{\un n, \un m}^*$ from \eqref{eq:reducedWeightMatrix}, with $L_{\un n} = \prod_{i=1}^2 (2n_i - 1)$ and $M_{\un m} = \prod_{i=1}^2 (2m_i + 1)$. Then, we have
\begin{equation}\label{eq:bounds_lemma_W_I}
    \lVert \mathbf{W}_{\un n, \un m}^*-\mathbf{I}_{M_{\un m}}\rVert_\infty \leq \frac{M_{\un m}}{N} \quad \text{and}\quad \lVert \mathbf{W}_{\un n, \un m}^*-\mathbf{I}_{M_{\un m}}\rVert_1 \leq \frac{M_{\un m}}{N}.
\end{equation}
Consequently, whenever $M_{\un m}/N<1$,  the matrix $\mathbf{W}_{\un n, \un m}^*$ is nonsingular with
\begin{equation}\label{eq:invbound}
\bigl\|(\mathbf{W}_{\underline{n},\underline{m}}^{*})^{-1}\bigr\|_{\infty}
\leq\frac{N}{N-M_{\un m}}\quad \text{and}
\quad
\bigl\|\bigl(\mathbf{W}_{\underline{n},\underline{m}}^{*}\bigr)^{-1}-\mathbf{I}_{M_{\underline{m}}}\bigr\|_{\infty}
\leq\frac{M_{\un m}}{N-M_{\un m}}.
\end{equation}
\end{lemma}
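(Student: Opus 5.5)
The plan is to bound every entry of $\mathbf{W}_{\un n,\un m}^*-\mathbf{I}_{M_{\un m}}$ in absolute value by $1/N$, and then sum. Since this matrix has $M_{\un m}$ columns and $M_{\un m}$ rows, an entrywise bound of $1/N$ gives both the maximum absolute row sum (the $\|\cdot\|_\infty$ norm) and the maximum absolute column sum (the $\|\cdot\|_1$ norm) at most $M_{\un m}/N$. That is \eqref{eq:bounds_lemma_W_I}. By \eqref{eq:both_blocks}, $\mathbf{W}_{\un n,\un m}^*-\mathbf{I}_{M_{\un m}}=\mathbf{S}_{\un m}(\mathbf{W}^*_{\un n}-\mathbf{I}_{L_{\un n}})\mathbf{S}_{\un m}^\top$. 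Using \eqref{eq:W_n_star} and \eqref{eq:normalised_counting_matrices_form_rh}, the entry in row $\hu$ and column $\ru$ (both in $W_{\un m}$) is therefore
\begin{equation*}
-a+\,d-b,\qquad a=\frac{\prod_i\nu_{n_i}(r_i,h_i)}{NN_{\hu}},\quad b=\frac{\prod_i\nu_{n_i}(r_i,-h_i)}{NN_{\hu}},\quad d=\frac{\prod_i\nu_{n_i}(r_i,0)}{N^2}.
\end{equation*}

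The elementary bounds come first. From \eqref{eq:counting_factor}, $\nu_{n_i}(r_i,\pm h_i)\le n_i-|h_i|$ and $\nu_{n_i}(r_i,0)=n_i-|r_i|\le n_i$. Hence $0\le a,b\le 1/N$ and $0\le d\le 1/N$. This already gives $-a-b+d\le 1/N$.

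The lower bound $a+b-d\le 1/N$ is the step I expect to be the main obstacle. Writing $x_i=\nu_{n_i}(r_i,h_i)/(n_i-|h_i|)$, $y_i=\nu_{n_i}(r_i,-h_i)/(n_i-|h_i|)$ and $z_i=(n_i-|r_i|)/n_i$, all in $[0,1]$, the claim becomes $x_1x_2+y_1y_2\le 1+z_1z_2$.
\begin{enumerate}[label=(\alph*)]
\item I would first verify the one-dimensional inequality $x_i+y_i\le 1+z_i$. By the symmetry $\nu(r,h)=\nu(-r,-h)$ it suffices to consider $r_i,h_i\ge0$. In that case $\nu(r,h)=n-r-h$ and $\nu(r,-h)=n-\max(r,h)$, both taken as positive parts. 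The inequality then reduces to a short case check according to whether $r\le h$ or $r>h$.
\item Next I would note that $x_i\le z_i$ or $y_i\le z_i$ holds in each case, since one of the two counts is at most $n_i-|r_i|-|h_i|$. Combining the per-coordinate facts then gives the product inequality, e.g.\ via $x_1x_2+y_1y_2\le \max(x_1,y_1)+\min(x_1,y_1)\min(\ldots)$ and similar manipulations.
\item If this route gets messy, the fallback is a direct case split on the sign patterns of $(r_i,h_i)$, $i=1,2$.
\end{enumerate}

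Given \eqref{eq:bounds_lemma_W_I}, the rest is a standard Neumann-series argument. Set $\mathbf{E}=\mathbf{I}_{M_{\un m}}-\mathbf{W}^*_{\un n,\un m}$, so $\|\mathbf{E}\|_\infty\le M_{\un m}/N<1$. Then $\mathbf{W}^*_{\un n,\un m}=\mathbf{I}-\mathbf{E}$ is nonsingular, with $(\mathbf{W}^*_{\un n,\un m})^{-1}=\sum_{k\ge0}\mathbf{E}^k$. Submultiplicativity of the induced norm gives $\|(\mathbf{W}^*_{\un n,\un m})^{-1}\|_\infty\le 1/(1-M_{\un m}/N)=N/(N-M_{\un m})$. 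Similarly, $(\mathbf{W}^*_{\un n,\un m})^{-1}-\mathbf{I}=\sum_{k\ge1}\mathbf{E}^k$ has norm at most $(M_{\un m}/N)/(1-M_{\un m}/N)=M_{\un m}/(N-M_{\un m})$, which gives \eqref{eq:invbound}.
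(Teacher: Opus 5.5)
Your architecture matches the paper's proof: bound every entry of $\mathbf{W}^*_{\un n}-\mathbf{I}_{L_{\un n}}$ by $1/N$, restrict with $\mathbf{S}_{\un m}$, sum the $M_{\un m}$ entries in each row and each column, then apply the Neumann series. The upper bound $-a-b+d\le 1/N$ and the inversion bounds are fine as written. For the lower bound, your fact (b) is exactly the paper's key inequality $\min(a_i,b_i)\le d_i$, i.e.\ $\min(x_i,y_i)\le z_i$. To prove it you need two facts. First, $\min\bigl(\nu_{n_i}(r_i,h_i),\nu_{n_i}(r_i,-h_i)\bigr)=(n_i-|r_i|-|h_i|)^{+}$; note the positive part, which your phrasing drops. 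Second, $(n_i-|r_i|)(n_i-|h_i|)-n_i(n_i-|r_i|-|h_i|)=|r_i||h_i|\ge 0$. Your step (a) is never used and can be dropped.

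What you leave open is the combination step. That is the only place the lemma does real work, and a hint with an unspecified $\min(\ldots)$ is not yet an argument. It closes with what you already have.

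The paper splits into two cases.
\begin{itemize}
\item If the same letter attains the bound in both coordinates, say $x_1\le z_1$ and $x_2\le z_2$, then $x_1x_2\le z_1z_2$ and $y_1y_2\le 1$.
\item In the mixed case $x_1\le z_1$, $y_2\le z_2$, one gets $x_1x_2+y_1y_2\le z_1+z_2\le 1+z_1z_2$, because $(1-z_1)(1-z_2)\ge 0$.
\end{itemize}

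Your own hint also works if the missing factor is $\min(x_2,y_2)$. For $x_i,y_i\in[0,1]$ one has $x_1x_2+y_1y_2\le\max(x_1,y_1)+\min(x_1,y_1)\min(x_2,y_2)$. To see this, assume $x_1\ge y_1$. If $x_2\ge y_2$, the claim reduces to $x_1x_2\le x_1$. Otherwise it reduces to $y_1(y_2-x_2)\le x_1(1-x_2)$, which holds since $0\le y_1\le x_1$, $y_2-x_2\ge 0$ and $y_2\le 1$. Then $\max(x_1,y_1)\le 1$ together with $\min(x_i,y_i)\le z_i$ bounds the right-hand side by $1+z_1z_2$.

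Either route gives the lemma; yours replaces the paper's case split with a single rearrangement-type inequality of the same length. You do, however, need to actually carry one of them out.
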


A few remarks on the condition $M_{\un m}/ N <1$ of Lemma~\ref{lemma:invertibility_W_nm} are in order.

\begin{remark}[On the invertibility condition]\label{remark:invertibility_W_nm} \
\begin{enumerate}[label=(\roman*)]
\item The condition is stricter than $m_i\leq n_i-2$ for $i \in \{1,2\}$. In fact, since it is based on a product of the two cutoff parameters, the two axes trade against each other. For axis 1, we have that that
    \[
    m_1 < \frac{1}{2}\left(\frac{n_1n_2}{2m_2+1}-1\right),
    \] 
so that a larger cutoff on one axis must be compensated by a smaller one on the other axis. For a
square lattice with $n_1=n_2=n$ and $m_1=m_2=m$ it reduces to $2m+1<n$, that is
$m\leq\lfloor (n-1)/2\rfloor$, roughly half of the admissible range $m\leq n-2$. For example for a $20\times 20$ grid, we would only accept cutoff values $m \leq 9$. 

\item The condition is sufficient but not necessary. We have evaluated numerically lattices with dimensions $3\leq n_i\leq 40$ and their respective cutoffs with the more relaxed condition $m_i \leq n_i -2$ and we see that the reduced weight matrix turned out to be nonsingular in all these cases. The spectral radius of $\mathbf{I}_{M_{\un m}}-\mathbf{W}^{*}_{\un n,\un m}$ increases with the lattice size but stays below 1, so that the Neumann series still converges and remains invertible throughout the admissible range, by Lemma 2.3.3 from \citet{golub2013matrix}. Hence, we conjecture that $\mathbf{W}^{*}_{\un n,\un m}$ is nonsingular, whenever $m_i\leq n_i-2$, $i \in \{1,2\}$ holds.
\end{enumerate}
\end{remark}

\subsection{Equivalence of the bias correction for $\widetilde{C}(\un h)$}\label{subsec:equivalence_bias_correction}
The same construction discussed above applies also to the joint estimation using the constant-prefactor estimator 
$\widetilde{C}(\un h)$ from \eqref{eq:sampleAutocov_const}. We apply the joint characterization from
Corollary~\ref{cor:joint_bias_const} with the same partitioning matrices from \eqref{eq:partition_matrix} to get
\begin{equation}\label{eq:const_Gamma}
\widetilde{\boldsymbol{\Gamma}}_{\underline m}
=
\mathbf{P}_{1,m_1}
\widetilde{\boldsymbol{\Gamma}}
\mathbf{P}_{2,m_2}^{\top}
\end{equation}
which denotes the central submatrix of $\widetilde{\boldsymbol{\Gamma}}$, whose
entries are obtained from $\widetilde C(\un h)$ from \eqref{eq:sampleAutocov_const} instead of $\widehat{C}(\un h)$ from \eqref{eq:sampleAutocov}. Recall from \eqref{eq:W_n_tilde} that the full
constant prefactor weight matrix satisfies
$\widetilde{\mathbf W}_{\underline n}^{*}=(\boldsymbol{\Delta}_2\otimes\boldsymbol{\Delta}_1)
\mathbf W_{\underline n}^{*}$.
We now apply the selection matrix $\mathbf S_{\underline m}$ to obtain the
corresponding reduced weight matrix. For $i\in \{1,2\}$, we define the central Bartlett window
matrices by
\[
\boldsymbol{\Delta}_{i,m_i}
:=
\mathbf{P}_{i,m_i}
\boldsymbol{\Delta}_i
\mathbf{P}_{i,m_i}^{\top}.
\]
Since $\mathbf{P}_{i,m_i}$ from \eqref{eq:partition_matrix} is a coordinate selection matrix,
$\mathbf{P}_{i,m_i}^{\top}\mathbf{P}_{i,m_i}$ is the diagonal projection onto the selected coordinates, but also since  $\boldsymbol{\Delta}_i$ is already diagonal we have that $\mathbf{P}_{i,m_i}\boldsymbol{\Delta}_i=\boldsymbol{\Delta}_{i,m_i}\mathbf{P}_{i,m_i}$. This implies that 
\begin{equation}\label{eq:bartlett_commuting}
\mathbf{S}_{\underline m}
(\boldsymbol{\Delta}_2\otimes\boldsymbol{\Delta}_1)
=
(\boldsymbol{\Delta}_{2,m_2}\otimes\boldsymbol{\Delta}_{1,m_1})
\mathbf{S}_{\underline m}.
\end{equation}
Therefore, the weight matrix for the constant prefactor bias correction is obtained by 
\begin{align}
\widetilde{\mathbf{W}}_{\underline n,\underline m}^{*}
&:=
\mathbf{S}_{\underline m}
\widetilde{\mathbf{W}}_{\underline n}^{*}
\mathbf{S}_{\underline m}^{\top}=
\mathbf{S}_{\underline m}
(\boldsymbol{\Delta}_2\otimes\boldsymbol{\Delta}_1)
\mathbf{W}_{\underline n}^{*}
\mathbf{S}_{\underline m}^{\top}\notag\\
&=(\boldsymbol{\Delta}_{2,m_2}\otimes\boldsymbol{\Delta}_{1,m_1})
\mathbf{S}_{\underline m}
\mathbf{W}_{\underline n}^{*}
\mathbf{S}_{\underline m}^{\top}=
(\boldsymbol{\Delta}_{2,m_2}\otimes\boldsymbol{\Delta}_{1,m_1})
\mathbf{W}_{\underline n,\underline m}^{*}.
\label{eq:Wnm_tilde}
\end{align}
Whenever the reduced weight matrix $\mathbf{W}_{\underline n,\underline m}^{*}$ is invertible, the matrix $\widetilde{\mathbf{W}}_{\underline n,\underline m}^{*}$ is invertible as well, since it is the
product of
$\mathbf{W}_{\underline n,\underline m}^{*}$ and the invertible diagonal
matrix
$\boldsymbol{\Delta}_{2,m_2}\otimes\boldsymbol{\Delta}_{1,m_1}$.
Analogously to \eqref{eq:bias_corrected_estimator}, we define the
bias-corrected vector of central sample covariance estimators with constant
prefactor by
\begin{equation}
\widetilde{\underline{C}}_{\underline m}^{*}
=
\bigl(
\mathbf{S}_{\underline m}
\widetilde{\mathbf{W}}_{\underline n}^{*}
\mathbf{S}_{\underline m}^{\top}
\bigr)^{-1}
\mathbf{S}_{\underline m}
\vect\bigl(\widetilde{\boldsymbol{\Gamma}}\bigr)
=
\bigl(
\widetilde{\mathbf{W}}_{\underline n,\underline m}^{*}
\bigr)^{-1}
\vect\bigl(\widetilde{\boldsymbol{\Gamma}}_{\underline m}\bigr).
\label{eq:bias_corrected_estimator_const}
\end{equation}
Interestingly, although the two bias corrections start from estimators with different prefactors and different (biased) expectations, the correction removes this discrepancy entirely: it turns out that both bias-corrected estimators coincide.
 
\begin{corollary}[Both bias-corrected estimators coincide]\label{cor:coincide}
For every cutoff $\underline{m}=(m_1,m_2)$ that implies invertibility of  $\mathbf{W}_{\underline n,\underline m}^{*}$,
the bias-corrected estimators $\widehat{\underline{C}}_{\underline m}^{*}$ (standard) from \eqref{eq:bias_corrected_estimator} 
and $\widetilde{\underline{C}}_{\underline m}^{*}$ (constant prefactor) from \eqref{eq:bias_corrected_estimator_const} coincide. That is, we have
\begin{equation}
\widetilde{\underline{C}}_{\underline{m}}^{*}=\widehat{\underline{C}}_{\underline{m}}^{*}.
\end{equation}
\end{corollary}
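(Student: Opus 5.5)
The plan is to make the pathwise relation between the two estimators explicit and then push it through the reduced system. From the definitions \eqref{eq:sampleAutocov} and \eqref{eq:sampleAutocov_const}, for every realization and every lag $\hu\in\lagset$ we have $\widetilde C(\hu)=(N_{\hu}/N)\,\widehat C(\hu)$. This is an almost-sure identity, not merely one in expectation. Since $N_{\hu}/N=[\boldsymbol{\Delta}_1]_{p,p}[\boldsymbol{\Delta}_2]_{q,q}$ for $(h_1,h_2)=(p-n_1,q-n_2)$, it lifts to the matrix identity $\widetilde{\boldsymbol{\Gamma}}=\boldsymbol{\Delta}_1\widehat{\boldsymbol{\Gamma}}\boldsymbol{\Delta}_2$. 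The footnote's Kronecker rule then gives
\[
\vect(\widetilde{\boldsymbol{\Gamma}})=(\boldsymbol{\Delta}_2\otimes\boldsymbol{\Delta}_1)\vect(\widehat{\boldsymbol{\Gamma}}).
\]

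Next I apply the selection matrix and use the commutation relation \eqref{eq:bartlett_commuting}:
\[
\vect(\widetilde{\boldsymbol{\Gamma}}_{\un m})=\mathbf S_{\un m}(\boldsymbol{\Delta}_2\otimes\boldsymbol{\Delta}_1)\vect(\widehat{\boldsymbol{\Gamma}})=(\boldsymbol{\Delta}_{2,m_2}\otimes\boldsymbol{\Delta}_{1,m_1})\,\widehat{\un C}_{\un m}.
\]
By \eqref{eq:Wnm_tilde}, $\widetilde{\mathbf W}^*_{\un n,\un m}=(\boldsymbol{\Delta}_{2,m_2}\otimes\boldsymbol{\Delta}_{1,m_1})\mathbf W^*_{\un n,\un m}$, so its inverse is $(\mathbf W^*_{\un n,\un m})^{-1}(\boldsymbol{\Delta}_{2,m_2}\otimes\boldsymbol{\Delta}_{1,m_1})^{-1}$. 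Substituting into \eqref{eq:bias_corrected_estimator_const}, the diagonal factors cancel and we get
\[
\widetilde{\un C}^*_{\un m}=(\mathbf W^*_{\un n,\un m})^{-1}\widehat{\un C}_{\un m}=\widehat{\un C}^*_{\un m},
\]
which is \eqref{eq:bias_corrected_estimator}.

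The only genuine point to check is that the diagonal matrix $\boldsymbol{\Delta}_{2,m_2}\otimes\boldsymbol{\Delta}_{1,m_1}$ is invertible. This needs every selected Bartlett weight $1-|h_i|/n_i$ with $|h_i|\le m_i$ to be positive. That holds because $m_i\le n_i-2<n_i$, so each diagonal entry is at least $2/n_i>0$. The remaining ingredient is that the relation between the estimators holds pathwise, not only in mean, as the text preceding Corollary~\ref{cor:cov_const} already notes. Everything else is the algebra above, so I expect no real obstacle.
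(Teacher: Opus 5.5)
Your proposal is correct and follows essentially the same route as the paper. Both arguments lift the pathwise scaling $\widetilde C(\hu)=(N_{\hu}/N)\widehat C(\hu)$ to $\vect(\widetilde{\boldsymbol{\Gamma}})=(\boldsymbol{\Delta}_2\otimes\boldsymbol{\Delta}_1)\vect(\widehat{\boldsymbol{\Gamma}})$, pass to the central block via \eqref{eq:bartlett_commuting}, combine with \eqref{eq:Wnm_tilde}, and cancel the invertible diagonal Bartlett factor, whose positivity check ($1-|h_i|/n_i>0$ for $|h_i|\le m_i<n_i$) is the same as yours.
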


The choice of the prefactor $1/N_h$ or $1/N$, which matters considerably for the standard and constant prefactor estimators \eqref{eq:sampleAutocov} and\eqref{eq:sampleAutocov_const}, as discussed in Section~\ref{subsec:sample_covariances}, is thus irrelevant after the bias correction has been applied and the Bartlett taper is removed as a by-product. However, the main advantage of the constant prefactor estimator is also lost: the bias-corrected estimator does not necessarily lead to a positive semidefinite covariance. We address this issue in Section~\ref{subsec:plug_in_Sigma}.

\subsection{Mean-squared error analysis}\label{subsec:MSE_analysis}
    
For assessing the overall estimation performance of the bias-corrected covariance estimator $\widehat{\underline{C}}_{\underline{m}}^{*}$ beyond its bias properties, its mean-squared error (MSE) is typically considered.
For this purpose, first, we define the covariance matrix of all sample covariances
\begin{equation}\label{eq:Omega_full_def}
\boldsymbol{\Omega}:= \Cov\bigl(\vect(\widehat{\boldsymbol{\Gamma}})\bigr)\in \mathbb{R}^{L_{\un n} \times L_{\un n}}
\end{equation}
and using $\un{\widehat{C}}_{\un m}= \mathbf{S}_{\un m}\vect(\widehat{\boldsymbol{\Gamma}})$, we have 
\begin{align}\label{eq:Omega_m_def}
\mathbf{\Omega}^*_{\underline{m}} := \Cov\left(\widehat{\underline{C}}_{\underline{m}}^{*}\right) = (\mathbf{W}_{\underline{n}, \underline{m}}^*)^{-1}\mathbf{S}_{\underline{m}}\,\boldsymbol{\Omega}\,\mathbf{S}_{\underline{m}}^\top(\mathbf{W}_{\underline{n}, \underline{m}}^{*\top})^{-1}\in \mathbb{R}^{M_{\un m}\times M_{\un m}},    
\end{align}
where each entry is given by
\[
\bigl[\boldsymbol{\Omega}_{\un m}^*\bigr]_{k,l}
=\Cov\bigl(\widehat{C}^*(\un{\ell}_{W_{\un m}}(k)),\,\widehat{C}^*(\un{\ell}_{W_{\un m}}(l))\bigr),
\qquad k,l = 1,\dots, M_{\un m},
\]
with $\un{\ell}_{W_{\un m}}$ the enumeration \eqref{eq:mapping_vec} of the within-window
lag set $W_{\un m}=I(-m_1:m_1)\times I(-m_2:m_2)$, for which
$a_i=-m_i$, $c_i=2m_i+1$ and $|W_{\un m}|=M_{\un m}$. That is, we have 
\[
\un{\ell}_{W_{\un m}}(k)=\bigl(((k-1)\bmod(2m_1+1))-m_1,\;
\lfloor(k-1)/(2m_1+1)\rfloor-m_2\bigr).
\]
Hence, in the setup considered here, for estimating $\underline{C}_{\underline{m}}$, the MSE of the bias-corrected vector of (central) sample covariance estimators $\widehat{\underline{C}}_{\underline{m}}^{*}$ becomes
\begin{align}\label{eq:mse_vector_bc}
\textup{MSE}(\widehat{\underline{C}}_{\underline{m}}^{*}):=\E\left(\left\|\widehat{\underline{C}}_{\underline{m}}^{*}-\underline{C}_{\underline{m}}\right\|_2^2\right) = \E\left(\left(\widehat{\underline{C}}_{\underline{m}}^{*}-\underline{C}_{\underline{m}}\right)^\top\left(\widehat{\underline{C}}_{
    \underline{m}}^{*}-\underline{C}_{\underline{m}}\right)\right) = \trace\left(\boldsymbol{\Omega}_{\un m}^*\right)+\|\underline{b}^*_{\underline{m}}\|_2^2, 
    \end{align}
where $\|\cdot\|_2$ denotes the Euclidean (vector) norm, $\trace(\cdot)$ the trace of a (square) matrix. That is, in the above, $\underline{b}^*_{\underline{m}}$ is
the $M_{\un m}$-dimensional bias vector of $\widehat{\underline{C}}_{\underline{m}}^{*}$ from Theorem~\ref{prop:bias_of_bce}, and 
\begin{align*}
\trace(\mathbf{\Omega}_{\un m}^*)=\trace\left((\mathbf{W}_{\underline{n}, \underline{m}}^*)^{-1}\mathbf{S}_{\underline{m}}\,\boldsymbol{\Omega}\,\mathbf{S}_{\underline{m}}^\top(\mathbf{W}_{\underline{n}, \underline{m}}^{*\top})^{-1}\right)=\sum_{i=1}^{M_{\un m}} \Var(\widehat{C}_{\underline{m},i}^*).    
\end{align*}

\subsection{Plug-in estimator of the variance covariance matrix and its mean squared error}\label{subsec:plug_in_Sigma}
The MSE in \eqref{eq:mse_vector_bc} holds for the vector of bias-corrected estimators $\widehat{C}^*_{\un m}$ only. In practical application, one is often interested in working with the proper variance covariance matrix of the vectorized spatial lattice
\begin{equation}\label{eq:real_sigma}
    \boldsymbol{\Sigma}:= \Cov\bigl(\vect(\mathbf{X}_{\un n})\bigr)\in \mathbb{R}^{N \times N}.
\end{equation}
We use the column-wise enumeration $G_{\un n}=\{\un{\ell}_{G_{\un n}}(1),\ldots,\un{\ell}_{G_{\un n}}(N)\}$
given by \eqref{eq:mapping_vec} applied to the grid
$G_{\un n}=I(1:n_1)\times I(1:n_2)$, for which $a_i=1$, $c_i=n_i$ and
$|G_{\un n}|=N$, 
\[
\un{\ell}_{G_{\un n}}(k)
=\Biggl(\bigl((k-1)\bmod n_1\bigr)+1,\;
\Biggl\lfloor\frac{k-1}{n_1}\Biggr\rfloor+1\Biggr),
\qquad k=1,\ldots,N,
\]
to define the entries as
\[
[\boldsymbol{\Sigma}]_{k,l}=C\bigl(\un{\ell}_{G_{\un n}}(k)-\un{\ell}_{G_{\un n}}(l)\bigr),
\qquad k,l=1,\ldots,N .
\]
Then, we build a simple plug-in estimator by replacing $C$ by $\widehat{C}^{*}_{\underline{m}}$ and tapering to zero the entries where the lag lies outside the within-window lag
\begin{equation}
\label{eq:Sigma_hat_m_corr}
\bigl[\widehat{\boldsymbol{\Sigma}}^*_{\un m}\bigr]_{k,l}
=
\begin{cases}
\widehat{C}^*_{\un m}\bigl(\un{\ell}_{G_{\un n}}(k)-\un{\ell}_{G_{\un n}}(l)\bigr),
& \bigl(\un{\ell}_{G_{\un n}}(k)-\un{\ell}_{G_{\un n}}(l)\bigr)\in W_{\un m},\\
0, & \text{otherwise}.
\end{cases}
\qquad k,l=1,\ldots,N.
\end{equation}
We define the full matrix containing the effective sample sizes for all the lags $\un h \in H_{\un n}$ using the Bartlett weight matrices from \eqref{eq:bartlett_matrices}
\begin{equation*}
    \mathbf{N}_{\un n} = N (\boldsymbol{\Delta}_2 \otimes \boldsymbol{\Delta}_1) = \textup{diag}(N_h) \in \mathbb R^{L_{\un n}\times L_{\un n}}
\end{equation*}
Then, under Assumption \eqref{ass:stationarity}, the exact mean squared error of \eqref{eq:Sigma_hat_m_corr} is given by the MSE($\widehat{\un{C}}^*_{\un m}(\un h)$) plus the bias produced by the sum of the squared tapered covariances
\begin{align}\label{eq:mse_sigma_m_hat}
\textup{MSE}\left(\widehat{\boldsymbol{\Sigma}}^*_{\underline{m}}\right)
:=\E\left(\left\|\widehat{\boldsymbol{\Sigma}}^*_{\underline{m}}-\boldsymbol{\Sigma}\right\|_F^2\right)
=\trace\left(\mathbf{S}_{\underline{m}}\mathbf{N}_{\underline{n}}\mathbf{S}_{\underline{m}}^{\top}
\left(\boldsymbol{\Omega}_{\underline{m}}^*+\underline{b}_{\underline{m}}^*\underline{b}_{\underline{m}}^{*\top}\right)\right)
+\trace\left(\mathbf{S}_{\underline{m}}^{c}\mathbf{N}_{\underline{n}}\mathbf{S}_{\underline{m}}^{c\top}
\mathbf{S}_{\underline{m}}^{c}\un C\un C^{\top}\mathbf{S}_{\underline{m}}^{c\top}\right),
\end{align}
where $\|\cdot\|_F$ denotes the Frobenius (matrix) norm.

\begin{remark}[On the positive semidefiniteness of  $\widehat{\boldsymbol{\Sigma}}_{\underline{m}}^*$]\ 
\begin{enumerate}[label=(\roman*)]
    \item The tapering does not necessarily produce a positive semidefinite matrix, thus not a valid covariance matrix. However, by construction, it is a real symmetric matrix, which always admits the spectral decomposition $\widehat{\boldsymbol{\Sigma}}_{\underline{m}}^*=\mathbf{Q}
    \boldsymbol{\Lambda}\mathbf{Q}^{\top}$ with orthogonal $\mathbf{Q}$ and
    $\mathbf{\Lambda}=\operatorname{diag}(\lambda_1,\ldots,\lambda_N)$. We apply the spectral projections used, e.g., in \citet{jentsch2015covariance}, and set all negative eigenvalues to zero and define the projected estimator
\begin{equation}
\label{eq:psd_proj}
\widehat{\boldsymbol{\Sigma}}_{\underline{m}}^{*+}
=\mathbf{Q}\mathbf{\Lambda}^{+}\mathbf{Q}^{\top},\quad \text{where}\quad
\mathbf{\Lambda}^{+}=\operatorname{diag}\bigl(\max(\lambda_1,0),\ldots,\max(\lambda_N,0)\bigr),
\end{equation}
which is assured to be positive semidefinite. Similarly, by setting
\begin{equation*}
\mathbf{\widetilde \Lambda}^{+}=\operatorname{diag}\bigl(\max(\lambda_1,\epsilon),\ldots,\max(\lambda_N,\epsilon)\bigr),    \quad \text{where}  \quad   \epsilon>0
\end{equation*}
and using $\mathbf{\widetilde \Lambda}^{+}$ instead of $\mathbf{\Lambda}^{+}$, we get a positive definite estimator.
\item For fixed $\un m$, When $n_1,n_2\rightarrow \infty$, both estimators $\widehat{\boldsymbol{\Sigma}}_{\underline{m}}^{*}$ and $\widehat{\boldsymbol{\Sigma}}_{\underline{m}}^{*+}$ converge to the same limit that is always positive semidefinite.
\end{enumerate}
\end{remark}

\subsection{Asymptotic theory}\label{subsec:asymptotics}

In this section, we provide asymptotic theory for the sample covariance estimator with and without bias correction. For this purpose, following \cite{vogelsang2016exactly} and \cite{li2024correcting} for the time series case, we suppose that the spatial process $\mathbf{X}$ is linear.
\begin{assumption}[Linear spatial process]\label{ass:linear} The spatial lattice process $\mathbf{X}=\{X_{\tu}, \tu \in \mathbb{Z}^2\}$ is linear,
\begin{equation}
    X_{\tu} = \mu + \sum_{\un j\in \mathbb{Z}^2} \psi_{\un j} \epsilon_{\tu - \un j}, \quad \tu \in \mathbb{Z}^2,
\end{equation}
where $\{\epsilon_{\tu}, \tu\in\mathbb{Z}^2\}$ is an i.i.d. white noise process with $\E(\epsilon_{\tu})=0$, $\Var(\epsilon_{\tu})=\sigma_\epsilon^2\in(0,\infty)$ and with finite fourth moments $\E(\epsilon_{\tu}^4)=\eta\sigma_\epsilon^4<\infty$. Further, suppose that the sequence $\{\psi_{\un j}, \un j\in\mathbb{Z}^2\}$ is absolutely summable, $\sum_{\un j \in \mathbb{Z}^2 }|\psi_{\un j}|<\infty$.    
\end{assumption}

However, other types of dependence assumptions may be employed such as mixing-type conditions \citep{Bolthausen1982, Politis1999, Bradley2007intro}, weak dependence concepts \citep{DedeckerDoukhan2007}, or physical/functional dependence concepts \citep{ElMachkouri2013}.
Irrespective of the dependency concept used, it turns out that the bias-corrected sample covariance estimator (for both considered prefactors) is asymptotically equivalent to the uncorrected standard one.

\begin{proposition}[Asymptotics for $\widehat{\underline{C}}_{\un m}$ and $\widehat{\underline{C}}_{\un m}^*$]\label{prop:asymptotics_without_and_with_bias_correction}
Let $\un{m}\in\mathbb{N}^2$ be fixed and let
Assumptions~\ref{ass:stationarity} and \ref{ass:linear} hold. Then, as $n_1,n_2\rightarrow \infty$, we have
\begin{align*}
\sqrt{N}\left(\widehat{\underline{C}}_{\un m}-\underline{C}_{\un m}\right)\overset{d}{\rightarrow}\mathcal{N}\left(\underline{0},\mathbf{V}_{\un m}\right)    \quad\text{and} \quad   \sqrt{N}\left(\widehat{\underline{C}}_{\un m}^*-\underline{C}_{\un m}\right)\overset{d}{\rightarrow}\mathcal{N}\left(\underline{0},\mathbf{V}_{\un m}\right),   
\end{align*}
with
\begin{equation}
    \mathbf{V}_{\un m}:= \lim_{N\to\infty} N\,\boldsymbol{\Omega}^*_{\un m},
    \qquad
    \bigl[\mathbf{V}_{\un m}\bigr]_{k,l}
    = V\bigl(\un{\ell}_{W_{\un m}}(k),\,\un{\ell}_{W_{\un m}}(l)\bigr),
    \quad k,l=1,\ldots,M_{\un m},
\end{equation}
with $\un{\ell}_{W_{\un m}}$ the enumeration \eqref{eq:mapping_vec} of $W_{\un m}$ and where the entries of $\mathbf{V}_{\un m}$ are of the form
\begin{equation}
    V(\hu,\gu)=(\eta-3)C(\hu)C(\gu)
+\sum_{\du\in\Z^{2}}\Bigl[C(\du)C(\du+\gu-\hu)
+C(\du+\gu)C(\du-\hu)\Bigr],\quad \hu,\gu\in W_{\un m}.
\end{equation}
\end{proposition}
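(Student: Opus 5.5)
The proof has three steps. First we show that the infeasible estimator is asymptotically normal. Then we show that replacing $\mu$ by $\overline X$ does not change the limit. Finally we show that the bias correction does not change it either.

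\textbf{Step 1: the infeasible estimator.} Let $\widehat{\underline C}_{\un m,\mu}$ be the vector of infeasible estimators $\widehat C_\mu(\hu)$, $\hu\in W_{\un m}$, from \eqref{eq:sampleAutocov_infeasible}. This vector is exactly unbiased. For linear fields under Assumption~\ref{ass:linear}, asymptotic normality of $\sqrt N(\widehat{\underline C}_{\un m,\mu}-\underline C_{\un m})$ is the lattice analogue of the classical Bartlett result for time series, i.e.\ Brockwell--Davis Prop.~7.3.4 in dimension two. We prove it in two parts.

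First, we truncate the filter to $\psi_{\un j}\mathbbm 1\{|\un j|_\infty\le K\}$. The truncated field is $K$-dependent, so a CLT for $m$-dependent random fields on $\Z^2$ applies to the products $(X_{\tu+\hu}-\mu)(X_{\tu}-\mu)$. Combined with the Cramér--Wold device, this gives the joint normal limit for the truncated field.

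Second, we let $K\to\infty$. Absolute summability of $\psi$ shows that the $L^2$ distance between the truncated and untruncated normalized statistics vanishes uniformly in $\un n$ as $K\to\infty$. The standard approximation lemma (Brockwell--Davis Prop.~6.3.9) then transfers the limit to the original field.

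For the limiting variance, we compute $N\Cov(\widehat C_\mu(\hu),\widehat C_\mu(\gu))$ with the fourth-cumulant formula for linear processes. The fourth cumulant of $X$ equals $(\eta-3)\sigma_\epsilon^4\sum_{\un j}\psi_{\un j}\psi_{\un j+\cdot}\cdots$; summed over the grid it contributes $(\eta-3)C(\hu)C(\gu)$. The Gaussian part is the Bartlett-type term \eqref{eq:bartlett_cov-of-cov} multiplied by $N$. Because $N/N_{\hu}\to1$ and $N_{\un h}^{-1}|\{\tu\in\nhset:\tu+\du\in\ngset\}|\to1$ for fixed $\hu,\gu,\du$, dominated convergence with $\|C\|_1<\infty$ yields $V(\hu,\gu)$.

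\textbf{Step 2: the sample mean is negligible.} We expand
\[
\widehat C(\hu)-\widehat C_\mu(\hu)=-(\overline X-\mu)\,\frac{1}{N_{\hu}}\sum_{\tu\in\nhset}\bigl[(X_{\tu+\hu}-\mu)+(X_{\tu}-\mu)\bigr]+(\overline X-\mu)^2 .
\]
We have $\Var(\overline X)=D\le\|C\|_1/N$. Each inner average has the same bound up to a factor $N/N_{\hu}\to1$. Every term is therefore $O_P(1/N)$, so $\sqrt N(\widehat{\underline C}_{\un m}-\widehat{\underline C}_{\un m,\mu})=o_P(1)$. This proves the first claim by Slutsky.

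\textbf{Step 3: the bias correction is negligible.} By Lemma~\ref{lemma:invertibility_W_nm}, for fixed $\un m$ and $N$ large,
\[
\bigl\|(\mathbf W^*_{\un n,\un m})^{-1}-\mathbf I_{M_{\un m}}\bigr\|_\infty\le \frac{M_{\un m}}{N-M_{\un m}}=O(1/N).
\]
We write
\[
\sqrt N(\widehat{\underline C}^*_{\un m}-\underline C_{\un m})=\sqrt N(\widehat{\underline C}_{\un m}-\underline C_{\un m})+\bigl((\mathbf W^*_{\un n,\un m})^{-1}-\mathbf I\bigr)\sqrt N\,\widehat{\underline C}_{\un m}.
\]
Here $\widehat{\underline C}_{\un m}=O_P(1)$, since it converges to $\underline C_{\un m}$, so the second term is $O_P(N^{-1/2})=o_P(1)$. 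Slutsky gives the same limit $\mathcal N(\underline 0,\mathbf V_{\un m})$.

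We also need $\mathbf V_{\un m}=\lim N\boldsymbol\Omega^*_{\un m}$. This follows from $N\,\mathbf S_{\un m}\boldsymbol\Omega\mathbf S_{\un m}^\top\to\mathbf V_{\un m}$, which holds because Step 2's remainder also has second moments of order $N^{-2}$ under finite fourth moments, combined with $(\mathbf W^*_{\un n,\un m})^{-1}\to\mathbf I$.

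The main obstacle is Step 1. Two points need care there: the variance computation, which must keep track of the fourth-cumulant sum over a growing rectangle when $n_1,n_2\to\infty$ at possibly different rates, and the uniform-in-$\un n$ truncation bound. Both rest only on absolute summability of $\psi$ and of $C$.
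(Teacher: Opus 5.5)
Your proposal is correct and follows the same route as the paper. The paper truncates the linear filter to an $m$-dependent field, applies a CLT and an approximation argument to the uncorrected vector, and then uses the decomposition $\sqrt N(\widehat{\underline C}^*_{\un m}-\underline C_{\un m})=\sqrt N(\widehat{\underline C}_{\un m}-\underline C_{\un m})+\bigl((\mathbf W^*_{\un n,\un m})^{-1}-\mathbf I\bigr)\sqrt N\,\widehat{\underline C}_{\un m}$ together with $(\mathbf W^*_{\un n,\un m})^{-1}-\mathbf I=O(1/N)$. You are more explicit than the paper in three places: you isolate the sample-mean step, you take the $O(1/N)$ rate directly from Lemma~\ref{lemma:invertibility_W_nm} instead of from the entrywise $O(1/n_i)$ orders of the counting matrices, and you justify $\mathbf V_{\un m}=\lim N\boldsymbol{\Omega}^*_{\un m}$, which the paper leaves implicit.
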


Some remarks on 
the limiting results in Proposition \ref{prop:asymptotics_without_and_with_bias_correction} are in order.

\begin{remark}[On Proposition \ref{prop:asymptotics_without_and_with_bias_correction}  and possible extensions]\ 
\begin{itemize}
    \item[(i)] The result in Proposition \ref{prop:asymptotics_without_and_with_bias_correction} extends Theorem 1 in \citet{vogelsang2016exactly} to linear lattice processes in the plane.
    \item[(ii)] The same limiting distribution does also hold for $\widetilde{\underline{C}}_m=\textup{vec}\bigl(\widetilde{\mathbf{\Gamma}}_m\bigr)$, which relies on the constant prefactor covariance estimator. However, according to Corollary \ref{cor:coincide}, its bias corrected version $\widetilde{\underline{C}}_m=\textup{vec}\bigl(\widetilde{\mathbf{\Gamma}}_m\bigr)$ coincides with $\widehat{\underline{C}}_m=\textup{vec}\bigl(\widehat{\mathbf{\Gamma}}_m\bigr)$ and is trivially also covered by Proposition \ref{prop:asymptotics_without_and_with_bias_correction}.
    \item[(iii)] Without relying on the Assumption~\ref{ass:linear} of linearity of the underlying process $\mathbf{X}$, under suitable mixing conditions (that guarantee sufficient summability conditions for the cumulants), asymptotic normality of $\sqrt{N}(\widehat{\underline{C}}_m-\underline{C}_m)$ can still be obtained. However, by dropping Assumption~\ref{ass:linear} and using summability conditions for cumulants, the entries of the limiting variance change to become
\begin{align*}
V_{general}(\widehat{\underline{C}}_m)\big(\un h,\un g\big)  =& \sum_{\un  \delta\in \mathbb{Z}^2} \textup{cum}\left(X_{\hu}-\mu,X_{\un 0}-\mu, X_{\du+\gu}-\mu,X_{\du}-\mu\right)   \\
& +\sum_{\du\in\Z^{2}}\Bigl[C(\du)C(\du+\gu-\hu)
+C(\du+\gu)C(\du-\hu)\Bigr],
\end{align*}
where $\textup{cum}(W,X,Y,Z)$ denotes the joint cumulant of random variables $W,X,Y$ and $Z$.
\end{itemize}    
\end{remark}

\subsection{Selection of the cutoff parameter}\label{subsec:cutoff_selection}
The cutoff $\un m$ plays here the role that commonly is taken by the bandwidth in classical nonparametric estimation methods. In particular, it is not identified by the model, it governs a bias-variance trade-off and it has to be fixed before the estimator can be computed. Hyperparameter selection is a vast topic of interest in nonparametric methods, which have a need of defining the optimal hyperparameter to minimize the bias-variance trade-off. In our method, on the one hand, we have that a small $\un m$ corrects the bias minimally at the cost of setting to $0$ all non selected covariances with lags $\un h\notin W_{\un m}$, on the other hand, a higher $\un m$ results in a larger bias correction, but it inflates the variance of the estimator, resulting in poorer results according to MSE criteria. As usual for hyperparameter selection of this kind, practitioners often rely on data-driven methods to find parameters that perform well given the task at hand. We propose here two methods for the selection of the cutoff parameters $\un m$.

First, we suggest a \emph{plug-in rule} from a fitted correlogram.
\begin{enumerate}
\item  For $i =1,2$, with $\un e_1 =(1,0)$ and $\un e_2 =(0,1)$ and evaluate the correlogram by $\widehat\rho_i(h):=\widehat C(h\,\un{e}_i)/\widehat C(\un{0})$ for $h = 1, \dots, K_i$, where one can choose $K_i=\lfloor\min(n_1,n_2)/2\rfloor$. 
    \item For a correlogram model given by $\rho(h)=\exp(-h/\lambda)$, we fit a weighted least squares model
through the origin on the log scale, with weights given by the lags similar to those in \citet{Cressie1985}, to obtain an estimate of the inverse of the dependence parameter,
\begin{equation}\label{eq:wls_range}
\widehat\lambda_i^{-1}
=-\frac{\sum_{h=1}^{K_i}w_{i,h}h\log\widehat\rho_i(h)}
        {\sum_{h=1}^{K_i}w_{i,h}h^{2}},\quad w_{i,h}=n_i-h.
\end{equation}
\item We scale the fitted dependence parameter by the axis length and we select the cutoff parameter by $\widehat{m}^{\text{rule}}_i := \lfloor \kappa\,\bigl(n_i\widehat\lambda_i\bigr)^{\eta}\rceil$, with $\lfloor\cdot\rceil$ denoting rounding to the nearest integer and with $\kappa=0.445$ and $\eta=0.5$, which are not implied by the model and were calibrated by numerical simulations.
\end{enumerate}

Second, we suggest applying a \emph{block cross-validation} method. 
\begin{enumerate}
\item Fix a grid of cutoffs, where we set $1\leq m\leq\lfloor\min(n_1,n_2)/2\rfloor$, for $m_1=m_2=m$.
\item For each candidate $\un{m}$, compute $\widehat{\boldsymbol{\Sigma}}_{\underline{m}}^{*}$ on the whole lattice. Then, compute its inverse $(\boldsymbol{\Sigma}_{\un m}^{*+})^{-1}$ with the spectral projection method from \eqref{eq:psd_proj}.
\item Determine $B$ square blocks $P_1,\ldots,P_B$ of size $f\times f$, all kept inside the lattice, so that the criterion does not depend on one particular position of the block.
\item For each block, predict the held-out block from the
remaining sites $O_b$ by ordinary kriging, 
\begin{equation}\label{eq:ok_predictor}
\widehat{y}_{P_b}
=\widehat\mu\eones
+\widehat{\boldsymbol\Sigma}_{P_bO_b}(\widehat{\boldsymbol\Sigma}^{*+}_{O_bO_b})^{-1}
 \bigl(y_{O_b}-\widehat\mu\eones\bigr),
\quad
\widehat{\mu}=(\eones^{\top}(\widehat{\boldsymbol{\Sigma}}_{OO}^{*+})^{-1}\eones)^{-1}\eones^{\top}(\widehat{\boldsymbol{\Sigma}}_{OO}^{*+})^{-1}\underline{y}_{O},
\end{equation}
\item Average the root mean squared prediction error over the blocks and select the $\un m$, which minimizes the criterion,
\begin{equation}\label{eq:cv_score}
\mathrm{CV}(\un{m})=\frac{1}{B}\sum_{b=1}^{B}
\Bigl(\frac{1}{|P_b|}\sum_{s\in P_b}
\bigl(\widehat{y}_s-y_s\bigr)^{2}\Bigr)^{1/2},
\qquad
\widehat{\un{m}}^{\text{cv}}=\arg\min_{\un{m}}\mathrm{CV}(\un{m}).
\end{equation}
\end{enumerate}

Unlike the plug-in rule, the block cross-validation does not assume a shape of the correlogram, however it is evidently more expensive computationally. 
\FloatBarrier
\section{Simulation Study}\label{sec:simulations}

In this section, we perform a Monte Carlo simulation study to assess the finite-sample performances of the naive (uncorrected) and the bias-corrected covariance estimators under varying degrees of spatial dependence. We consider spatial Gaussian lattice processes as introduced in Example \ref{example_Gaussian} with covariance function defined in \eqref{eq:trueautocov}. That is, we resume the same simulation setup as described in Example~\ref{example_Gaussian} and generate a true variance covariance matrix $\boldsymbol{\Sigma}_{\lambda,N} \in \mathbb{R}^{N \times N}$ by plugging the entries

\begin{equation}\label{eq:Sigma_true}
\bigl[\boldsymbol{\Sigma}_{\lambda,N}\bigr]_{k,l}
=C_{\lambda,N}\bigl(\un \ell_{G_{\un n}}(k)-\un \ell_{G_{\un n}}(l)\bigr),
\quad k,l=1,\ldots,N.
\end{equation}
with $
\un \ell_{G_{\un n}}(k)=\bigl(((k-1)\bmod n_1)+1,\; \lceil k/n_1\rceil\bigr)$.

\begin{figure}[htb]
\centering
\captionsetup[subfigure]{skip=6pt}
\begin{minipage}[c]{0.06\textwidth}
  \centering
  \rotatebox{90}{$\beta = 0$}
\end{minipage}\hfill
\begin{minipage}[c]{0.92\textwidth}
  \centering
  \begin{subfigure}[t]{0.24\textwidth}
    \centering
    \includegraphics[width=\linewidth,height=0.18\textheight,keepaspectratio]{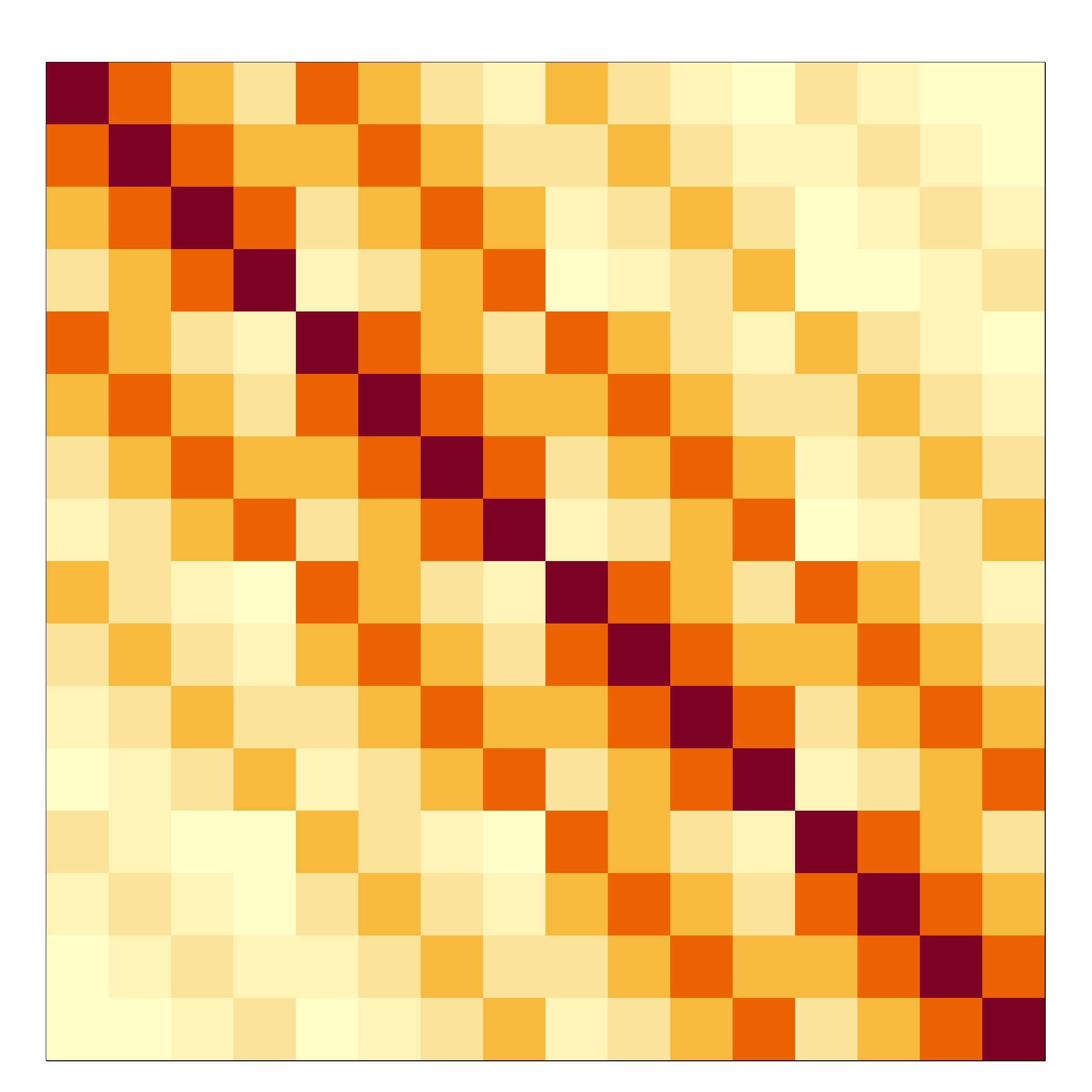}
    \caption{$\lambda=2$}
  \end{subfigure}\hfill
  \begin{subfigure}[t]{0.24\textwidth}
    \centering
    \includegraphics[width=\linewidth,height=0.18\textheight,keepaspectratio]{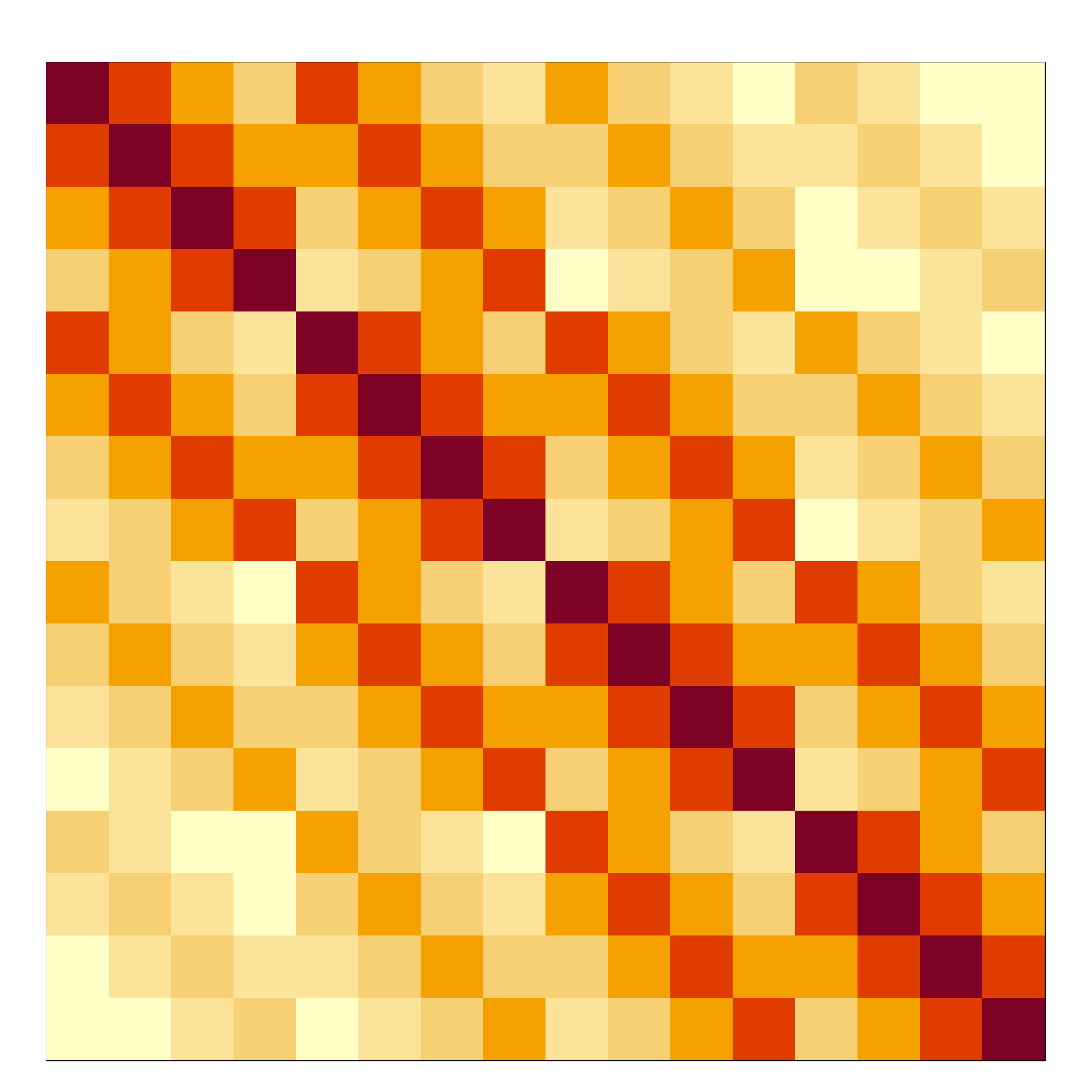}
    \caption{$\lambda=4$}
  \end{subfigure}\hfill
  \begin{subfigure}[t]{0.24\textwidth}
    \centering
    \includegraphics[width=\linewidth,height=0.18\textheight,keepaspectratio]{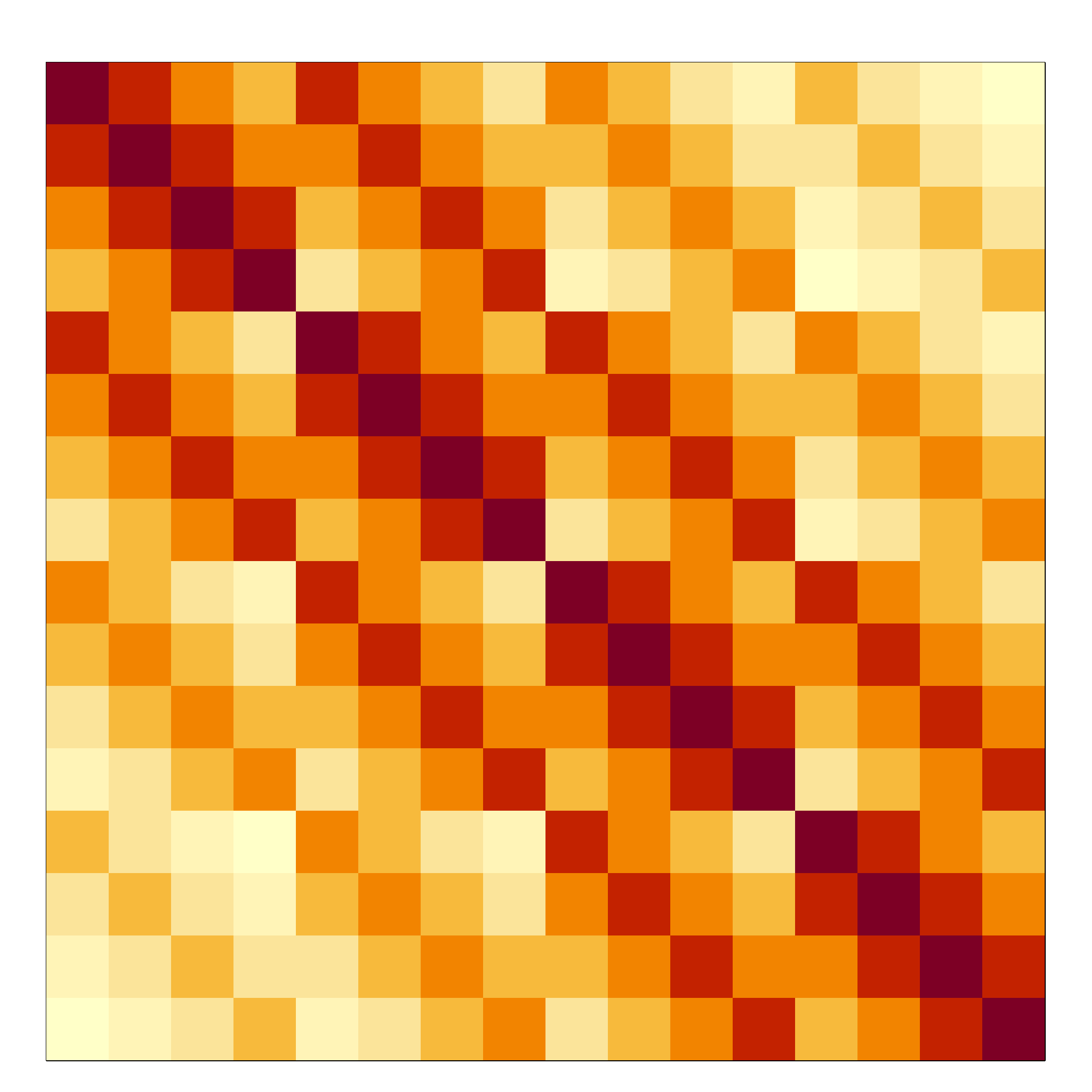}
    \caption{$\lambda=6$}
  \end{subfigure}\hfill
  \begin{subfigure}[t]{0.24\textwidth}
    \centering
    \includegraphics[width=\linewidth,height=0.18\textheight,keepaspectratio]{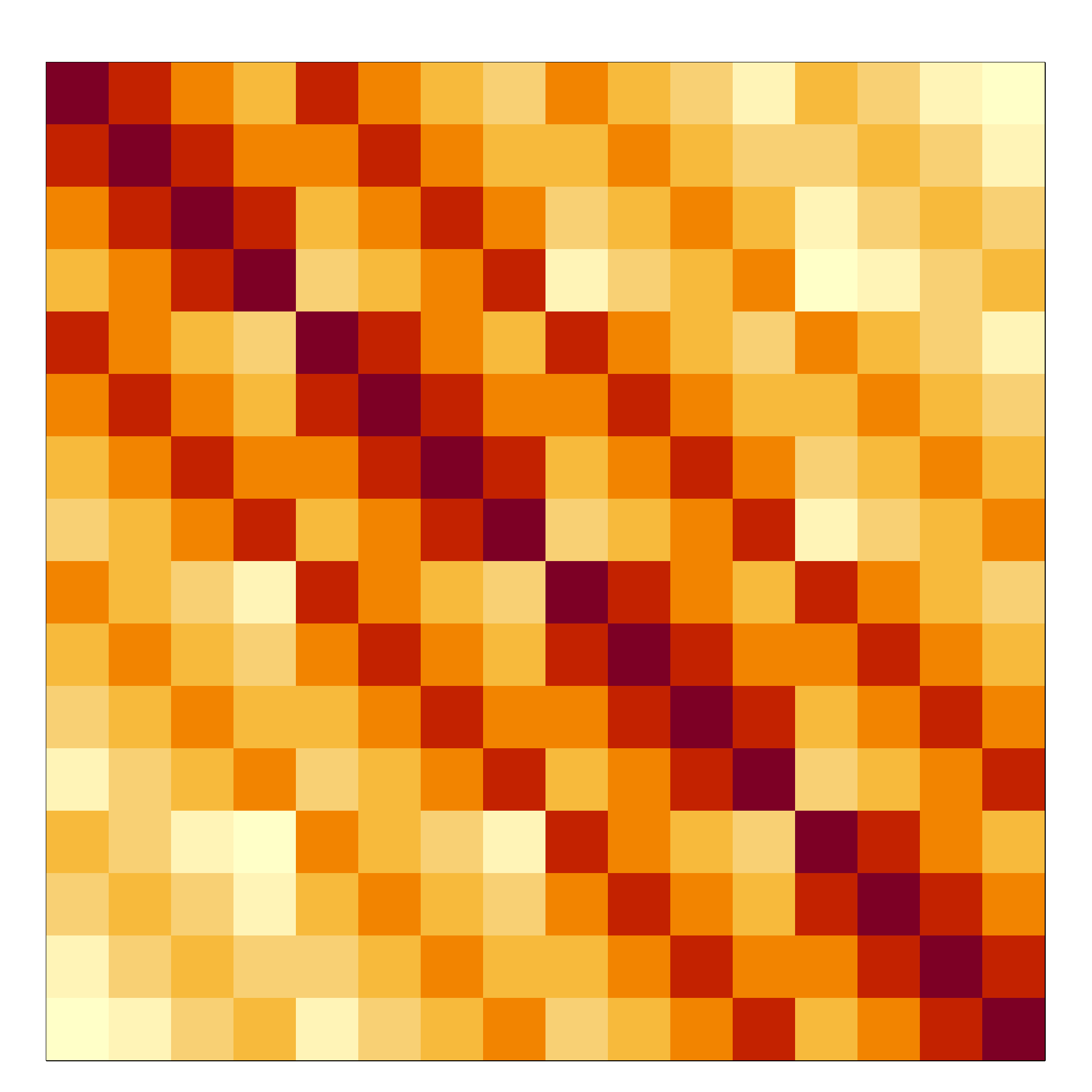}
    \caption{$\lambda=8$}
  \end{subfigure}
\end{minipage}
\vspace{1em}
\begin{minipage}[c]{0.06\textwidth}
  \centering
  \rotatebox{90}{$\beta = 1$}
\end{minipage}\hfill
\begin{minipage}[c]{0.92\textwidth}
  \centering
  \begin{subfigure}[t]{0.24\textwidth}
    \centering
    \includegraphics[width=\linewidth,height=0.18\textheight,keepaspectratio]{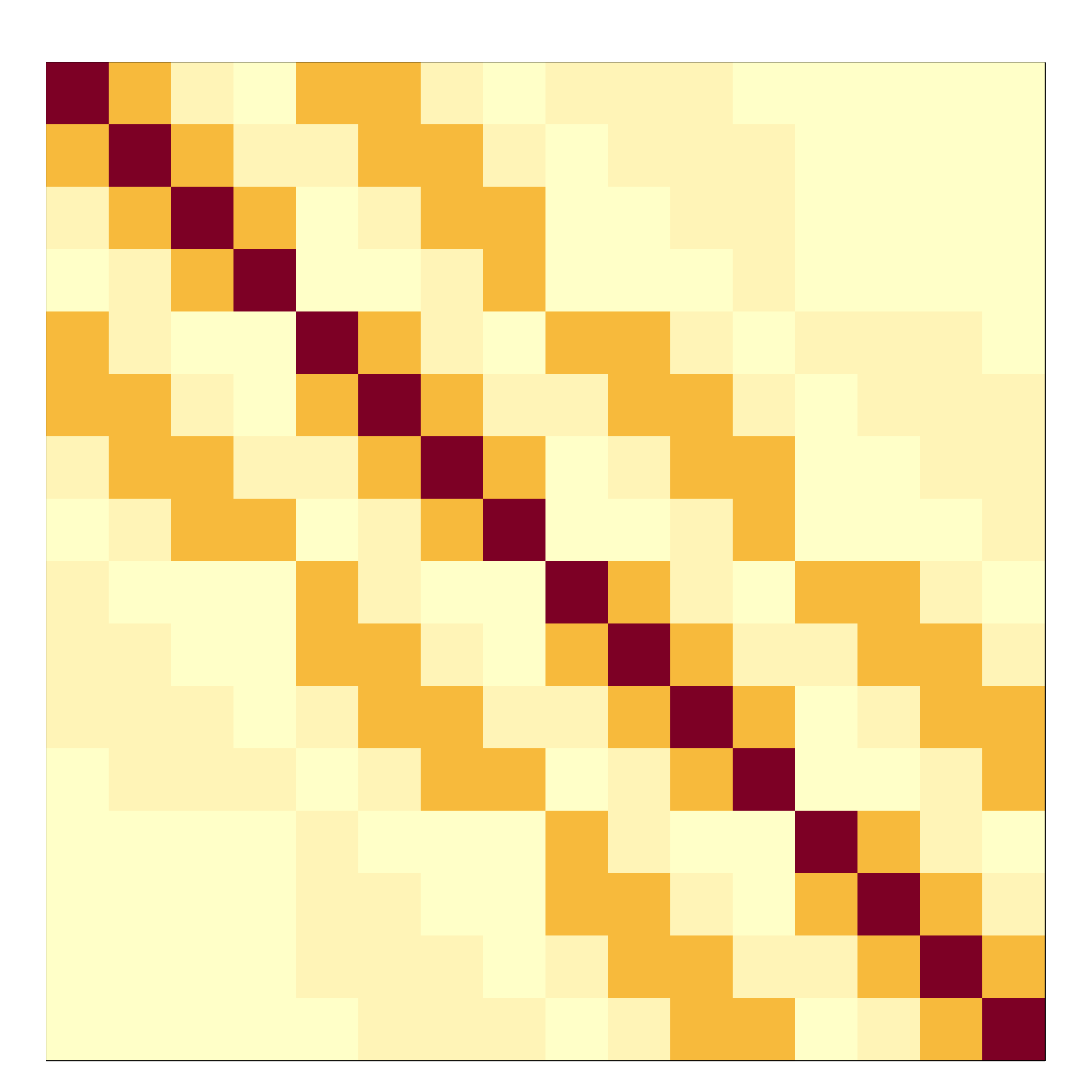}
    \caption{$\lambda=2$}
  \end{subfigure}\hfill
  \begin{subfigure}[t]{0.24\textwidth}
    \centering
    \includegraphics[width=\linewidth,height=0.18\textheight,keepaspectratio]{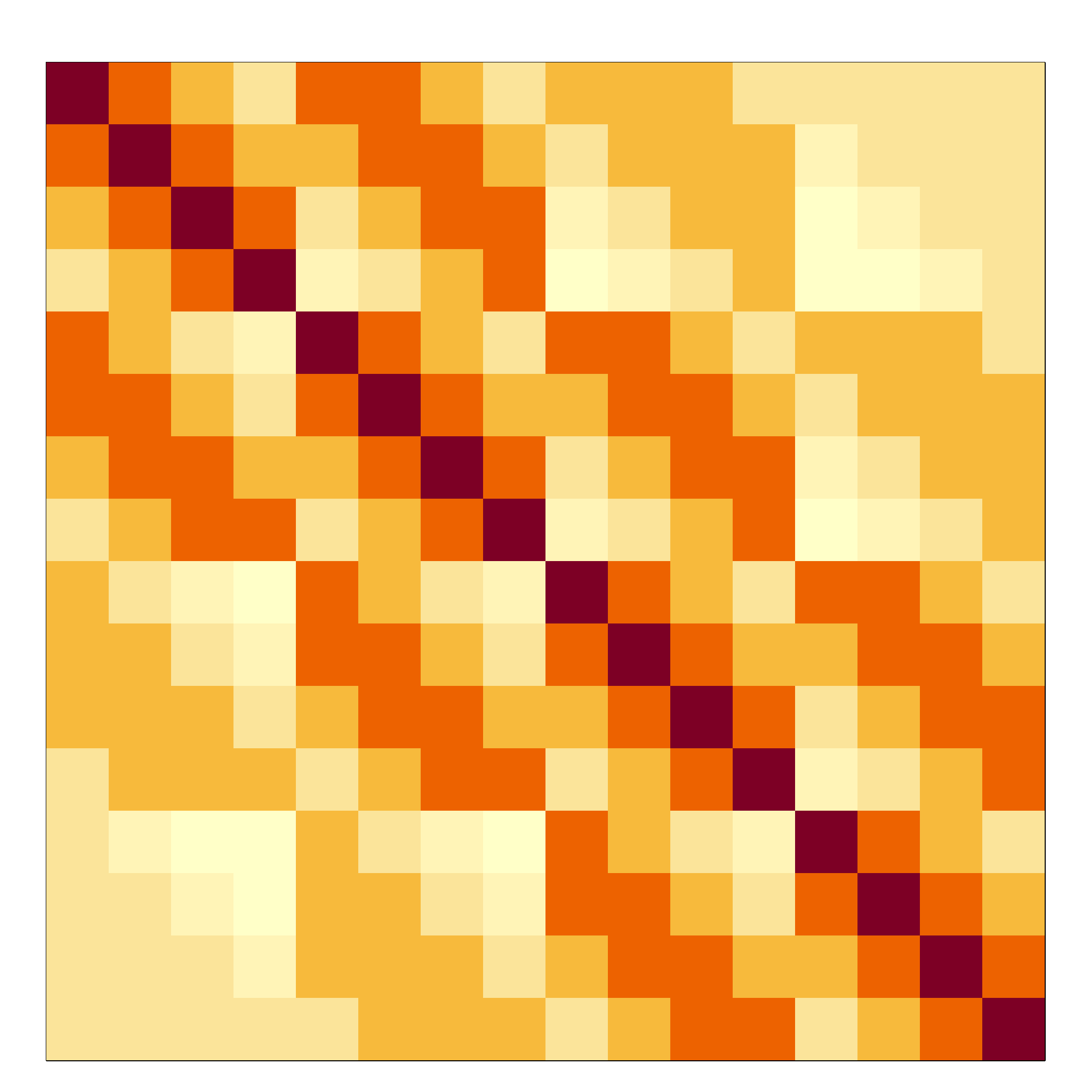}
    \caption{$\lambda=4$}
  \end{subfigure}\hfill
  \begin{subfigure}[t]{0.24\textwidth}
    \centering
    \includegraphics[width=\linewidth,height=0.18\textheight,keepaspectratio]{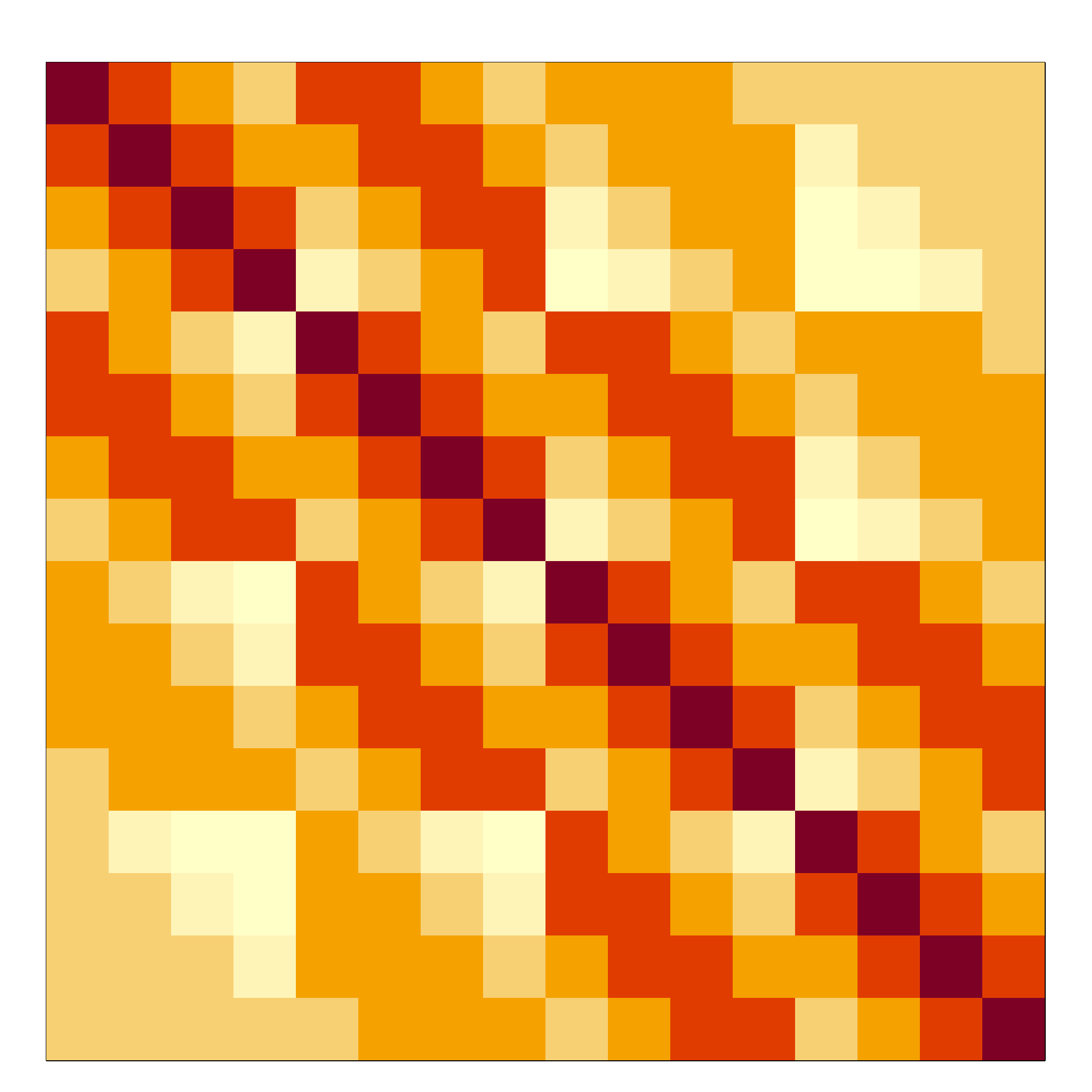}
    \caption{$\lambda=6$}
  \end{subfigure}\hfill
  \begin{subfigure}[t]{0.24\textwidth}
    \centering
    \includegraphics[width=\linewidth,height=0.18\textheight,keepaspectratio]{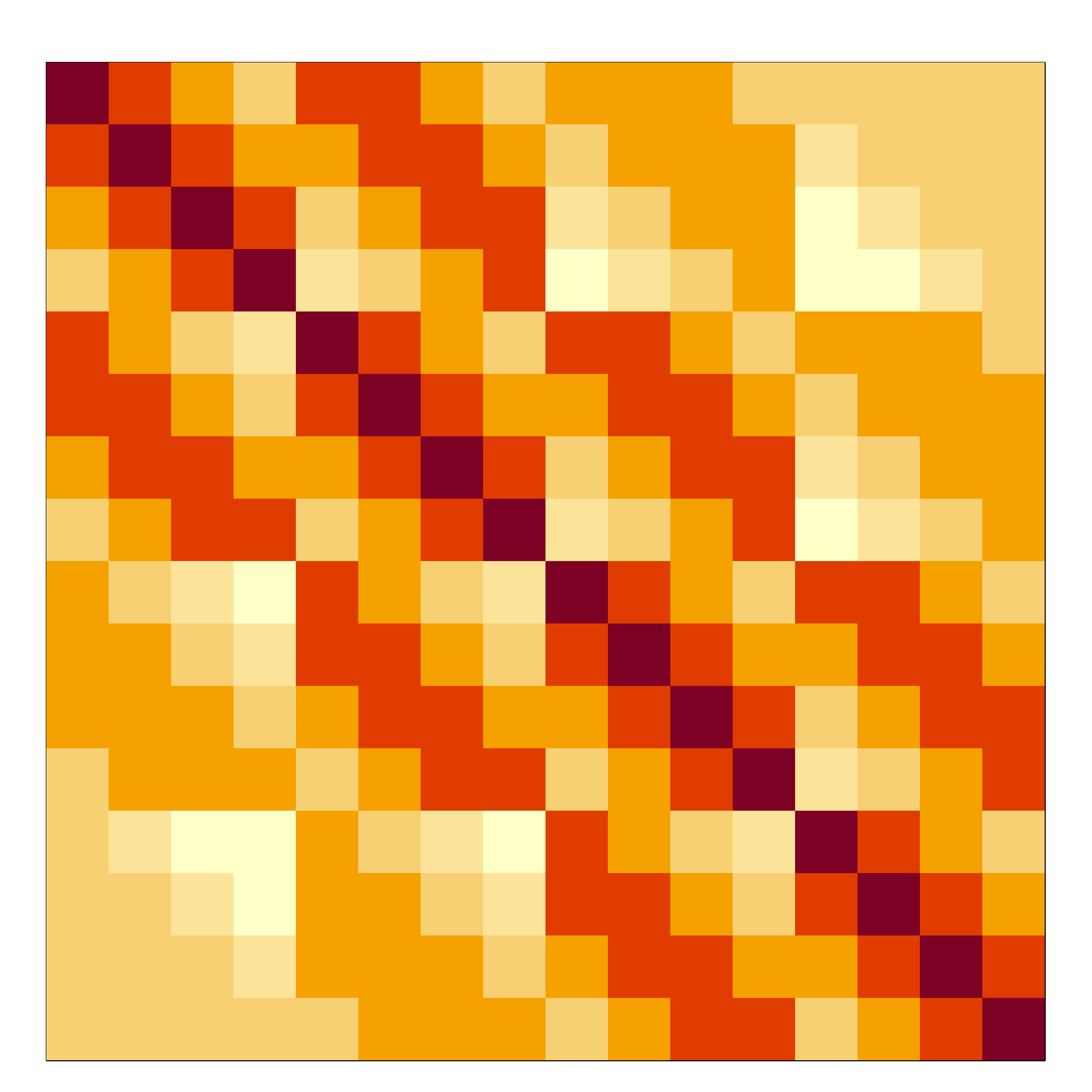}
    \caption{$\lambda=8$}
  \end{subfigure}
\end{minipage}
\caption{True $(16\times 16))$ covariance matrices $\mathbf{\Sigma}_{\lambda,16}$ for separable ($\beta=0$; first row) and non-separable ($\beta=1$; second row) dependence across increasing spatial dependence $\lambda$. Grid size is $(4\times 4)$.}
\label{fig:truecov_all16}
\end{figure}

We then draw the realizations of the lattice process from a multivariate normal distribution
$\vect(\mathbf{X}_{\underline{n}}) \sim \mathcal{N}\bigl(0,\, \boldsymbol{\Sigma}_{\lambda,N}\bigr)$
and we use these realizations to estimate the covariances. We draw \(10,000\) Monte Carlo replications for each setting that we examine, in order to evaluate the bias, variance and MSE of the estimators under each combinations of $\lambda\in \{ 2,4,6,8\}$ and sample sizes $N=n_1n_2 \in \{16,100,400,900\} $, for convenience we have restricted the simulations to squared grids with $n_1=n_2$. In Figures \ref{fig:truecov_all16}, we show the true covariance matrices for $N=16$ and for $\beta=0$ and $\beta=1$, respectively.  

Additionally, we compare the naive (uncorrected) and the bias-corrected estimates with a competitor variogram-based covariance estimator for the covariance function that is given by
\[
\widehat{C}_{CH}(\underline{h})=\widehat{C}(\underline{0})-\widehat{\gamma}(\underline{h}),
\]
where we use the popular variogram estimator proposed by \citet{cressie1980robust}.
\begin{equation}
2 \widehat{\gamma}(\underline{h})
:=\left(\frac{1}{N_h} 
\sum_{i=1}^{N_h}
\left|X\left(\underline{t}_{i+\underline{h}}\right)-X\left(\underline{t}_i\right)\right|^{1/2}\right)^4 
\bigg/ (0.457+0.494 / N_h)
\label{eq:cressiehawkins}
\end{equation}

\subsection{Separable case} 
 We analyse the results of the case in which the true covariance $C_{\lambda, N}(\un h)$ from Equation~(\ref{eq:trueautocov}) has a separable parametrization $(\beta = 0)$. In order to aggregate the results of two dimensional lattices, we often use the Euclidean norm applied to the lag vector
 $\|\underline{h}\|_2$.
We compare the performance of the vector of standard (or as we call below, naive) estimator $\widehat{\un C}$, with entries given by \eqref{eq:sampleAutocov}, and the vector of the Cressie-Hawkins $\widehat{\un C}_{CH}$, with entries given by \eqref{eq:cressiehawkins}, with the proposed bias-corrected estimator $\un{C}_{\un m}^*$ from \eqref{eq:bias_corrected_estimator}.
In Figure~\ref{fig:10x10_bias_MSE_sep}, in the top panel, we show the bias of the three estimators, where the bias-corrected estimator is evaluated with different cutoff parameters $\un m$, where (for simplicity) we take the same cutoff value for each axis dimension, i.e., $m:=m_1=m_2$. The plot shows Monte Carlo averages of the bias as a function of the norm of the lag vector, for the grid of size $n_1=n_2=10$ and for the four levels of spatial dependence determined by $\lambda\in\{2,4,6,8\}$.

\begin{figure}[t]
  \centering
\includegraphics[width=1\linewidth]{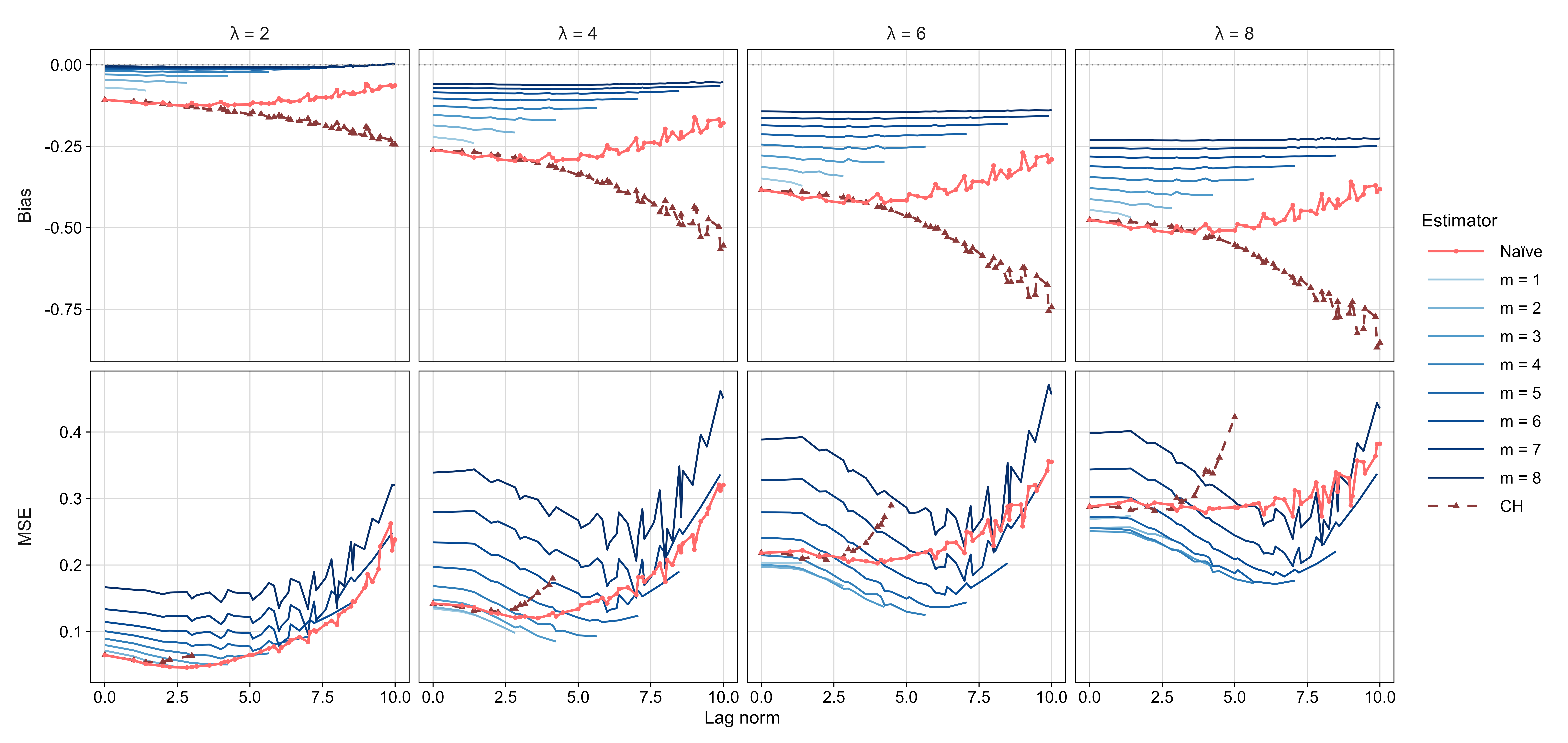}
  \caption{Bias and MSE performance with increasing lag norm $\|\un h\|_2$ for the naive estimator $\widehat{\underline{C}}$ (solid light red line), Cressie-Hawkins $\widehat{\underline{C}}_{CH}$ estimator (dashed dark red line) and the bias-corrected estimator $\widehat{\underline{C}}_{m}^{*}$ for different $m$ values (several blue shades) for four spatial dependence decay parameters $\lambda \in \{2,4,6,8\}$ and for grid size $n_1=n_2=10$. Separable model $\beta =0$.}
  \label{fig:10x10_bias_MSE_sep}
\end{figure}

Across all settings, a reduction in the bias level is already achieved for $m=1$, with increasing improvements as the cutoff parameter $m$ increases.
The naive estimator, shown as the light red solid curve, exhibits larger bias than all bias-corrected estimators over the entire range of lag norms.
The Cressie--Hawkins estimator, shown as the dashed dark red line,  performs slightly better than the naive estimator at small lags, but its bias increases rapidly at higher lags. Similar results are obtained for the other not reported grid sizes.

However, due to the common bias-variance trade-off effect, correcting the bias in the estimation may come at the cost of an increased variance. Hence, a more comprehensive measure for the quality of the estimator is given by the Mean Squared Error (MSE) studied in Section \ref{subsec:MSE_analysis}, which sums the squared biases and the variances of the estimator. When looking at this criterion, the benefits of the bias correction are less immediate. In Figure~\ref{fig:10x10_bias_MSE_sep}, in the bottom panel, we report the MSE results of the same setting. We see that at short range lags, all estimators exhibit similar MSE levels, but (as expected) higher-order bias corrections (corresponding to large values of $ m$) lead to worse performances than simply using the naive estimation. At moderate to high lags, the MSE of the Cressie-Hawkins estimator blows up, reflecting the increasing inappropriateness of the bias correction it introduces. Nevertheless, for strong spatial dependence ($\lambda$ large), the bias-corrected estimator shows to be not only competitive, but superior in comparison to the naive and the Cressie-Hawkins benchmarks also in terms of MSE for suitable choices of $\un m$, reporting lower MSE-levels in the simulations.

We also examine the MSE of each lag pair $\un h = (h_1,h_2)$ separately. For this purpose, Table~\ref{tab:mse_ratio} reports the ratios of the MSE for the bias-corrected estimator $\un C_{\un m}^*$ and the Cressie-Hawkins $\un C_{CH}$ estimator over the standard (bias-uncorrected) one $\un{\widehat{C}}$. Values that are smaller than 1 indicate those instances, where the competitor estimator works better than the standard one. We show that indeed the gain in MSE becomes quite substantial in the strong dependency case, while still some valuable gains are already achieved for short lags with cutoff value $m\in\{1,2\}$.

\begin{table}[t]
\centering
\caption{MSE Ratio $\widehat{C}_m^* / \widehat{C}$ and $\widehat{C}_{CH} / \widehat{C}$ for selected $(h_1, h_2)$ and cutoff parameter $m$.}
\label{tab:mse_ratio}
\tiny
\begin{tabular}{l|l|l|rrrrrrrrrr|r}
\hline
$\lambda$ & $N$ & $(h_1, h_2)$ & $m=1$ & $m=2$ & $m=3$ & $m=4$ & $m=5$ & $m=6$ & $m=7$ & $m=8$ & $m=9$ & $m=10$ & CH \\
\hline
2 & 4 & (0,0) & \cellcolor{gray!20}1.041 & \cellcolor{gray!20}1.605 & -- & -- & -- & -- & -- & -- & -- & -- & 1.000 \\
 &  & (1,0) & 0.956 & \cellcolor{gray!20}1.619 & -- & -- & -- & -- & -- & -- & -- & -- & \cellcolor{gray!20}1.130 \\
 &  & (1,1) & 0.821 & \cellcolor{gray!20}1.538 & -- & -- & -- & -- & -- & -- & -- & -- & \cellcolor{gray!20}1.285 \\
 &  & (2,0) & -- & \cellcolor{gray!20}1.140 & -- & -- & -- & -- & -- & -- & -- & -- & \cellcolor{gray!20}1.663 \\
 &  & (2,2) & -- & 0.900 & -- & -- & -- & -- & -- & -- & -- & -- & \cellcolor{gray!20}3.541 \\
\hline
 & 10 & (0,0) & \cellcolor{gray!20}1.011 & \cellcolor{gray!20}1.105 & \cellcolor{gray!20}1.235 & \cellcolor{gray!20}1.384 & \cellcolor{gray!20}1.558 & \cellcolor{gray!20}1.775 & \cellcolor{gray!20}2.072 & \cellcolor{gray!20}2.583 & -- & -- & 1.000 \\
 &  & (1,0) & 0.994 & \cellcolor{gray!20}1.105 & \cellcolor{gray!20}1.264 & \cellcolor{gray!20}1.448 & \cellcolor{gray!20}1.659 & \cellcolor{gray!20}1.923 & \cellcolor{gray!20}2.280 & \cellcolor{gray!20}2.875 & -- & -- & \cellcolor{gray!20}1.007 \\
 &  & (1,1) & 0.976 & \cellcolor{gray!20}1.104 & \cellcolor{gray!20}1.295 & \cellcolor{gray!20}1.517 & \cellcolor{gray!20}1.772 & \cellcolor{gray!20}2.083 & \cellcolor{gray!20}2.499 & \cellcolor{gray!20}3.174 & -- & -- & \cellcolor{gray!20}1.043 \\
 &  & (2,0) & -- & \cellcolor{gray!20}1.058 & \cellcolor{gray!20}1.249 & \cellcolor{gray!20}1.479 & \cellcolor{gray!20}1.744 & \cellcolor{gray!20}2.071 & \cellcolor{gray!20}2.511 & \cellcolor{gray!20}3.223 & -- & -- & \cellcolor{gray!20}1.130 \\
 &  & (2,2) & -- & 0.966 & \cellcolor{gray!20}1.185 & \cellcolor{gray!20}1.470 & \cellcolor{gray!20}1.807 & \cellcolor{gray!20}2.203 & \cellcolor{gray!20}2.718 & \cellcolor{gray!20}3.498 & -- & -- & \cellcolor{gray!20}1.526 \\
\hline
 & 20 & (0,0) & \cellcolor{gray!20}1.005 & \cellcolor{gray!20}1.034 & \cellcolor{gray!20}1.070 & \cellcolor{gray!20}1.105 & \cellcolor{gray!20}1.136 & \cellcolor{gray!20}1.163 & \cellcolor{gray!20}1.188 & \cellcolor{gray!20}1.213 & \cellcolor{gray!20}1.240 & \cellcolor{gray!20}1.270 & 1.000 \\
 &  & (1,0) & \cellcolor{gray!20}1.003 & \cellcolor{gray!20}1.037 & \cellcolor{gray!20}1.081 & \cellcolor{gray!20}1.124 & \cellcolor{gray!20}1.163 & \cellcolor{gray!20}1.197 & \cellcolor{gray!20}1.228 & \cellcolor{gray!20}1.259 & \cellcolor{gray!20}1.290 & \cellcolor{gray!20}1.327 & 0.996 \\
 &  & (1,1) & 0.999 & \cellcolor{gray!20}1.035 & \cellcolor{gray!20}1.087 & \cellcolor{gray!20}1.138 & \cellcolor{gray!20}1.184 & \cellcolor{gray!20}1.224 & \cellcolor{gray!20}1.261 & \cellcolor{gray!20}1.297 & \cellcolor{gray!20}1.336 & \cellcolor{gray!20}1.380 & \cellcolor{gray!20}1.009 \\
 &  & (2,0) & -- & \cellcolor{gray!20}1.029 & \cellcolor{gray!20}1.084 & \cellcolor{gray!20}1.140 & \cellcolor{gray!20}1.192 & \cellcolor{gray!20}1.238 & \cellcolor{gray!20}1.280 & \cellcolor{gray!20}1.320 & \cellcolor{gray!20}1.360 & \cellcolor{gray!20}1.407 & \cellcolor{gray!20}1.058 \\
 &  & (2,2) & -- & \cellcolor{gray!20}1.008 & \cellcolor{gray!20}1.071 & \cellcolor{gray!20}1.142 & \cellcolor{gray!20}1.210 & \cellcolor{gray!20}1.270 & \cellcolor{gray!20}1.325 & \cellcolor{gray!20}1.377 & \cellcolor{gray!20}1.431 & \cellcolor{gray!20}1.492 & \cellcolor{gray!20}1.208 \\
\hline
 & 30 & (0,0) & \cellcolor{gray!20}1.003 & \cellcolor{gray!20}1.016 & \cellcolor{gray!20}1.033 & \cellcolor{gray!20}1.049 & \cellcolor{gray!20}1.063 & \cellcolor{gray!20}1.073 & \cellcolor{gray!20}1.081 & \cellcolor{gray!20}1.088 & \cellcolor{gray!20}1.094 & \cellcolor{gray!20}1.100 & 1.000 \\
 &  & (1,0) & \cellcolor{gray!20}1.001 & \cellcolor{gray!20}1.016 & \cellcolor{gray!20}1.037 & \cellcolor{gray!20}1.056 & \cellcolor{gray!20}1.072 & \cellcolor{gray!20}1.085 & \cellcolor{gray!20}1.096 & \cellcolor{gray!20}1.104 & \cellcolor{gray!20}1.111 & \cellcolor{gray!20}1.118 & 1.000 \\
 &  & (1,1) & 1.000 & \cellcolor{gray!20}1.017 & \cellcolor{gray!20}1.040 & \cellcolor{gray!20}1.063 & \cellcolor{gray!20}1.082 & \cellcolor{gray!20}1.097 & \cellcolor{gray!20}1.110 & \cellcolor{gray!20}1.120 & \cellcolor{gray!20}1.129 & \cellcolor{gray!20}1.137 & \cellcolor{gray!20}1.014 \\
 &  & (2,0) & -- & \cellcolor{gray!20}1.014 & \cellcolor{gray!20}1.039 & \cellcolor{gray!20}1.064 & \cellcolor{gray!20}1.086 & \cellcolor{gray!20}1.103 & \cellcolor{gray!20}1.117 & \cellcolor{gray!20}1.129 & \cellcolor{gray!20}1.139 & \cellcolor{gray!20}1.148 & \cellcolor{gray!20}1.042 \\
 &  & (2,2) & -- & \cellcolor{gray!20}1.007 & \cellcolor{gray!20}1.036 & \cellcolor{gray!20}1.067 & \cellcolor{gray!20}1.096 & \cellcolor{gray!20}1.120 & \cellcolor{gray!20}1.139 & \cellcolor{gray!20}1.156 & \cellcolor{gray!20}1.170 & \cellcolor{gray!20}1.182 & \cellcolor{gray!20}1.165 \\
\hline\hline
4 & 4 & (0,0) & 0.862 & 0.993 & -- & -- & -- & -- & -- & -- & -- & -- & 1.000 \\
 &  & (1,0) & 0.807 & 0.921 & -- & -- & -- & -- & -- & -- & -- & -- & \cellcolor{gray!20}1.034 \\
 &  & (1,1) & 0.743 & 0.843 & -- & -- & -- & -- & -- & -- & -- & -- & \cellcolor{gray!20}1.031 \\
 &  & (2,0) & -- & 0.766 & -- & -- & -- & -- & -- & -- & -- & -- & \cellcolor{gray!20}1.474 \\
 &  & (2,2) & -- & 0.658 & -- & -- & -- & -- & -- & -- & -- & -- & \cellcolor{gray!20}2.583 \\
\hline
 & 10 & (0,0) & 0.947 & 0.964 & \cellcolor{gray!20}1.044 & \cellcolor{gray!20}1.185 & \cellcolor{gray!20}1.387 & \cellcolor{gray!20}1.647 & \cellcolor{gray!20}1.967 & \cellcolor{gray!20}2.385 & -- & -- & 1.000 \\
 &  & (1,0) & 0.932 & 0.943 & \cellcolor{gray!20}1.027 & \cellcolor{gray!20}1.179 & \cellcolor{gray!20}1.398 & \cellcolor{gray!20}1.679 & \cellcolor{gray!20}2.021 & \cellcolor{gray!20}2.459 & -- & -- & 0.983 \\
 &  & (1,1) & 0.916 & 0.920 & \cellcolor{gray!20}1.006 & \cellcolor{gray!20}1.168 & \cellcolor{gray!20}1.403 & \cellcolor{gray!20}1.701 & \cellcolor{gray!20}2.064 & \cellcolor{gray!20}2.520 & -- & -- & 0.959 \\
 &  & (2,0) & -- & 0.896 & 0.975 & \cellcolor{gray!20}1.136 & \cellcolor{gray!20}1.376 & \cellcolor{gray!20}1.685 & \cellcolor{gray!20}2.063 & \cellcolor{gray!20}2.537 & -- & -- & \cellcolor{gray!20}1.034 \\
 &  & (2,2) & -- & 0.812 & 0.879 & \cellcolor{gray!20}1.052 & \cellcolor{gray!20}1.323 & \cellcolor{gray!20}1.675 & \cellcolor{gray!20}2.103 & \cellcolor{gray!20}2.626 & -- & -- & \cellcolor{gray!20}1.105 \\
\hline
 & 20 & (0,0) & 0.987 & 0.995 & \cellcolor{gray!20}1.026 & \cellcolor{gray!20}1.074 & \cellcolor{gray!20}1.134 & \cellcolor{gray!20}1.200 & \cellcolor{gray!20}1.271 & \cellcolor{gray!20}1.347 & \cellcolor{gray!20}1.429 & \cellcolor{gray!20}1.516 & 1.000 \\
 &  & (1,0) & 0.983 & 0.991 & \cellcolor{gray!20}1.023 & \cellcolor{gray!20}1.074 & \cellcolor{gray!20}1.138 & \cellcolor{gray!20}1.210 & \cellcolor{gray!20}1.286 & \cellcolor{gray!20}1.368 & \cellcolor{gray!20}1.455 & \cellcolor{gray!20}1.548 & 0.976 \\
 &  & (1,1) & 0.980 & 0.986 & \cellcolor{gray!20}1.019 & \cellcolor{gray!20}1.075 & \cellcolor{gray!20}1.144 & \cellcolor{gray!20}1.222 & \cellcolor{gray!20}1.305 & \cellcolor{gray!20}1.394 & \cellcolor{gray!20}1.487 & \cellcolor{gray!20}1.587 & 0.958 \\
 &  & (2,0) & -- & 0.979 & \cellcolor{gray!20}1.011 & \cellcolor{gray!20}1.066 & \cellcolor{gray!20}1.137 & \cellcolor{gray!20}1.217 & \cellcolor{gray!20}1.304 & \cellcolor{gray!20}1.396 & \cellcolor{gray!20}1.494 & \cellcolor{gray!20}1.598 & 0.975 \\
 &  & (2,2) & -- & 0.960 & 0.991 & \cellcolor{gray!20}1.052 & \cellcolor{gray!20}1.135 & \cellcolor{gray!20}1.231 & \cellcolor{gray!20}1.335 & \cellcolor{gray!20}1.446 & \cellcolor{gray!20}1.564 & \cellcolor{gray!20}1.689 & 0.978 \\
\hline
 & 30 & (0,0) & 0.995 & 1.000 & \cellcolor{gray!20}1.016 & \cellcolor{gray!20}1.040 & \cellcolor{gray!20}1.070 & \cellcolor{gray!20}1.103 & \cellcolor{gray!20}1.136 & \cellcolor{gray!20}1.169 & \cellcolor{gray!20}1.201 & \cellcolor{gray!20}1.232 & 1.000 \\
 &  & (1,0) & 0.994 & 0.998 & \cellcolor{gray!20}1.015 & \cellcolor{gray!20}1.041 & \cellcolor{gray!20}1.073 & \cellcolor{gray!20}1.108 & \cellcolor{gray!20}1.144 & \cellcolor{gray!20}1.179 & \cellcolor{gray!20}1.214 & \cellcolor{gray!20}1.247 & 0.982 \\
 &  & (1,1) & 0.992 & 0.996 & \cellcolor{gray!20}1.013 & \cellcolor{gray!20}1.041 & \cellcolor{gray!20}1.075 & \cellcolor{gray!20}1.113 & \cellcolor{gray!20}1.151 & \cellcolor{gray!20}1.189 & \cellcolor{gray!20}1.226 & \cellcolor{gray!20}1.262 & 0.965 \\
 &  & (2,0) & -- & 0.994 & \cellcolor{gray!20}1.011 & \cellcolor{gray!20}1.040 & \cellcolor{gray!20}1.076 & \cellcolor{gray!20}1.115 & \cellcolor{gray!20}1.155 & \cellcolor{gray!20}1.196 & \cellcolor{gray!20}1.235 & \cellcolor{gray!20}1.272 & 0.977 \\
 &  & (2,2) & -- & 0.987 & \cellcolor{gray!20}1.004 & \cellcolor{gray!20}1.035 & \cellcolor{gray!20}1.076 & \cellcolor{gray!20}1.123 & \cellcolor{gray!20}1.171 & \cellcolor{gray!20}1.219 & \cellcolor{gray!20}1.266 & \cellcolor{gray!20}1.312 & 0.974 \\
\hline\hline
6 & 4 & (0,0) & 0.855 & 0.852 & -- & -- & -- & -- & -- & -- & -- & -- & 1.000 \\
 &  & (1,0) & 0.822 & 0.803 & -- & -- & -- & -- & -- & -- & -- & -- & \cellcolor{gray!20}1.021 \\
 &  & (1,1) & 0.784 & 0.750 & -- & -- & -- & -- & -- & -- & -- & -- & 0.997 \\
 &  & (2,0) & -- & 0.734 & -- & -- & -- & -- & -- & -- & -- & -- & \cellcolor{gray!20}1.337 \\
 &  & (2,2) & -- & 0.670 & -- & -- & -- & -- & -- & -- & -- & -- & \cellcolor{gray!20}1.982 \\
\hline
 & 10 & (0,0) & 0.933 & 0.904 & 0.919 & 0.984 & \cellcolor{gray!20}1.105 & \cellcolor{gray!20}1.280 & \cellcolor{gray!20}1.500 & \cellcolor{gray!20}1.781 & -- & -- & 1.000 \\
 &  & (1,0) & 0.923 & 0.887 & 0.898 & 0.964 & \cellcolor{gray!20}1.088 & \cellcolor{gray!20}1.268 & \cellcolor{gray!20}1.494 & \cellcolor{gray!20}1.776 & -- & -- & 0.983 \\
 &  & (1,1) & 0.912 & 0.868 & 0.875 & 0.940 & \cellcolor{gray!20}1.067 & \cellcolor{gray!20}1.252 & \cellcolor{gray!20}1.484 & \cellcolor{gray!20}1.769 & -- & -- & 0.946 \\
 &  & (2,0) & -- & 0.858 & 0.860 & 0.922 & \cellcolor{gray!20}1.048 & \cellcolor{gray!20}1.234 & \cellcolor{gray!20}1.468 & \cellcolor{gray!20}1.761 & -- & -- & \cellcolor{gray!20}1.008 \\
 &  & (2,2) & -- & 0.802 & 0.785 & 0.838 & 0.965 & \cellcolor{gray!20}1.160 & \cellcolor{gray!20}1.407 & \cellcolor{gray!20}1.710 & -- & -- & \cellcolor{gray!20}1.006 \\
\hline
 & 20 & (0,0) & 0.978 & 0.967 & 0.975 & \cellcolor{gray!20}1.001 & \cellcolor{gray!20}1.046 & \cellcolor{gray!20}1.105 & \cellcolor{gray!20}1.177 & \cellcolor{gray!20}1.262 & \cellcolor{gray!20}1.358 & \cellcolor{gray!20}1.466 & 1.000 \\
 &  & (1,0) & 0.975 & 0.962 & 0.969 & 0.995 & \cellcolor{gray!20}1.041 & \cellcolor{gray!20}1.103 & \cellcolor{gray!20}1.178 & \cellcolor{gray!20}1.267 & \cellcolor{gray!20}1.367 & \cellcolor{gray!20}1.480 & 0.979 \\
 &  & (1,1) & 0.972 & 0.957 & 0.962 & 0.989 & \cellcolor{gray!20}1.037 & \cellcolor{gray!20}1.101 & \cellcolor{gray!20}1.180 & \cellcolor{gray!20}1.272 & \cellcolor{gray!20}1.377 & \cellcolor{gray!20}1.495 & 0.953 \\
 &  & (2,0) & -- & 0.953 & 0.956 & 0.982 & \cellcolor{gray!20}1.030 & \cellcolor{gray!20}1.095 & \cellcolor{gray!20}1.175 & \cellcolor{gray!20}1.269 & \cellcolor{gray!20}1.377 & \cellcolor{gray!20}1.497 & 0.972 \\
 &  & (2,2) & -- & 0.936 & 0.934 & 0.959 & \cellcolor{gray!20}1.008 & \cellcolor{gray!20}1.079 & \cellcolor{gray!20}1.168 & \cellcolor{gray!20}1.272 & \cellcolor{gray!20}1.392 & \cellcolor{gray!20}1.526 & 0.945 \\
\hline
 & 30 & (0,0) & 0.990 & 0.987 & 0.992 & \cellcolor{gray!20}1.007 & \cellcolor{gray!20}1.032 & \cellcolor{gray!20}1.064 & \cellcolor{gray!20}1.102 & \cellcolor{gray!20}1.144 & \cellcolor{gray!20}1.189 & \cellcolor{gray!20}1.236 & 1.000 \\
 &  & (1,0) & 0.989 & 0.985 & 0.990 & \cellcolor{gray!20}1.005 & \cellcolor{gray!20}1.031 & \cellcolor{gray!20}1.064 & \cellcolor{gray!20}1.103 & \cellcolor{gray!20}1.147 & \cellcolor{gray!20}1.194 & \cellcolor{gray!20}1.243 & 0.983 \\
 &  & (1,1) & 0.988 & 0.983 & 0.987 & \cellcolor{gray!20}1.003 & \cellcolor{gray!20}1.030 & \cellcolor{gray!20}1.064 & \cellcolor{gray!20}1.105 & \cellcolor{gray!20}1.151 & \cellcolor{gray!20}1.200 & \cellcolor{gray!20}1.251 & 0.961 \\
 &  & (2,0) & -- & 0.981 & 0.985 & \cellcolor{gray!20}1.001 & \cellcolor{gray!20}1.027 & \cellcolor{gray!20}1.063 & \cellcolor{gray!20}1.105 & \cellcolor{gray!20}1.151 & \cellcolor{gray!20}1.202 & \cellcolor{gray!20}1.255 & 0.974 \\
 &  & (2,2) & -- & 0.975 & 0.976 & 0.992 & \cellcolor{gray!20}1.020 & \cellcolor{gray!20}1.058 & \cellcolor{gray!20}1.105 & \cellcolor{gray!20}1.157 & \cellcolor{gray!20}1.213 & \cellcolor{gray!20}1.272 & 0.949 \\
\hline\hline
8 & 4 & (0,0) & 0.870 & 0.826 & -- & -- & -- & -- & -- & -- & -- & -- & 1.000 \\
 &  & (1,0) & 0.847 & 0.790 & -- & -- & -- & -- & -- & -- & -- & -- & \cellcolor{gray!20}1.019 \\
 &  & (1,1) & 0.821 & 0.749 & -- & -- & -- & -- & -- & -- & -- & -- & 0.984 \\
 &  & (2,0) & -- & 0.755 & -- & -- & -- & -- & -- & -- & -- & -- & \cellcolor{gray!20}1.255 \\
 &  & (2,2) & -- & 0.712 & -- & -- & -- & -- & -- & -- & -- & -- & \cellcolor{gray!20}1.649 \\
\hline
 & 10 & (0,0) & 0.935 & 0.890 & 0.871 & 0.888 & 0.947 & \cellcolor{gray!20}1.050 & \cellcolor{gray!20}1.194 & \cellcolor{gray!20}1.384 & -- & -- & 1.000 \\
 &  & (1,0) & 0.927 & 0.876 & 0.853 & 0.868 & 0.927 & \cellcolor{gray!20}1.032 & \cellcolor{gray!20}1.179 & \cellcolor{gray!20}1.370 & -- & -- & 0.983 \\
 &  & (1,1) & 0.920 & 0.861 & 0.834 & 0.846 & 0.905 & \cellcolor{gray!20}1.011 & \cellcolor{gray!20}1.158 & \cellcolor{gray!20}1.347 & -- & -- & 0.948 \\
 &  & (2,0) & -- & 0.856 & 0.825 & 0.833 & 0.889 & 0.993 & \cellcolor{gray!20}1.140 & \cellcolor{gray!20}1.331 & -- & -- & 0.999 \\
 &  & (2,2) & -- & 0.818 & 0.770 & 0.767 & 0.819 & 0.923 & \cellcolor{gray!20}1.072 & \cellcolor{gray!20}1.264 & -- & -- & 0.980 \\
\hline
 & 20 & (0,0) & 0.974 & 0.954 & 0.945 & 0.951 & 0.974 & \cellcolor{gray!20}1.012 & \cellcolor{gray!20}1.065 & \cellcolor{gray!20}1.133 & \cellcolor{gray!20}1.217 & \cellcolor{gray!20}1.315 & 1.000 \\
 &  & (1,0) & 0.972 & 0.949 & 0.939 & 0.944 & 0.966 & \cellcolor{gray!20}1.005 & \cellcolor{gray!20}1.060 & \cellcolor{gray!20}1.130 & \cellcolor{gray!20}1.216 & \cellcolor{gray!20}1.317 & 0.980 \\
 &  & (1,1) & 0.970 & 0.945 & 0.932 & 0.937 & 0.959 & 0.999 & \cellcolor{gray!20}1.055 & \cellcolor{gray!20}1.127 & \cellcolor{gray!20}1.215 & \cellcolor{gray!20}1.319 & 0.956 \\
 &  & (2,0) & -- & 0.943 & 0.928 & 0.932 & 0.953 & 0.993 & \cellcolor{gray!20}1.049 & \cellcolor{gray!20}1.121 & \cellcolor{gray!20}1.210 & \cellcolor{gray!20}1.316 & 0.975 \\
 &  & (2,2) & -- & 0.930 & 0.909 & 0.908 & 0.928 & 0.968 & \cellcolor{gray!20}1.027 & \cellcolor{gray!20}1.104 & \cellcolor{gray!20}1.199 & \cellcolor{gray!20}1.311 & 0.945 \\
\hline
 & 30 & (0,0) & 0.988 & 0.978 & 0.975 & 0.980 & 0.994 & \cellcolor{gray!20}1.016 & \cellcolor{gray!20}1.047 & \cellcolor{gray!20}1.085 & \cellcolor{gray!20}1.128 & \cellcolor{gray!20}1.177 & 1.000 \\
 &  & (1,0) & 0.987 & 0.976 & 0.972 & 0.976 & 0.990 & \cellcolor{gray!20}1.014 & \cellcolor{gray!20}1.045 & \cellcolor{gray!20}1.084 & \cellcolor{gray!20}1.128 & \cellcolor{gray!20}1.178 & 0.982 \\
 &  & (1,1) & 0.986 & 0.975 & 0.969 & 0.973 & 0.987 & \cellcolor{gray!20}1.011 & \cellcolor{gray!20}1.043 & \cellcolor{gray!20}1.083 & \cellcolor{gray!20}1.129 & \cellcolor{gray!20}1.180 & 0.963 \\
 &  & (2,0) & -- & 0.973 & 0.967 & 0.971 & 0.984 & \cellcolor{gray!20}1.008 & \cellcolor{gray!20}1.041 & \cellcolor{gray!20}1.081 & \cellcolor{gray!20}1.127 & \cellcolor{gray!20}1.180 & 0.974 \\
 &  & (2,2) & -- & 0.968 & 0.959 & 0.961 & 0.975 & 0.999 & \cellcolor{gray!20}1.033 & \cellcolor{gray!20}1.076 & \cellcolor{gray!20}1.126 & \cellcolor{gray!20}1.182 & 0.944 \\
\hline
\end{tabular}
\end{table}

Additionally, to visualize the MSE results from Table \ref{tab:mse_ratio}, we show heatmap plots in Figures \ref{fig:masks10} (for $n_1=n_2=10$) and \ref{fig:masks20} (for $n_1=n_2=20$) in the appendix. There, for each lag pair $(h_1,h_2)$ shown in each plot axis, it is indicated whether the bias-corrected estimator $\un C_{\un m}^*$ can achieve a better MSE than the uncorrected estimator $\widehat{\un C}$ and whether simple tapering the covariance estimator to zero leads to the best MSE. We refer to \ref{sec:additional_plots} for details.

\begin{figure}[t]
  \centering
\includegraphics[width=1\linewidth]{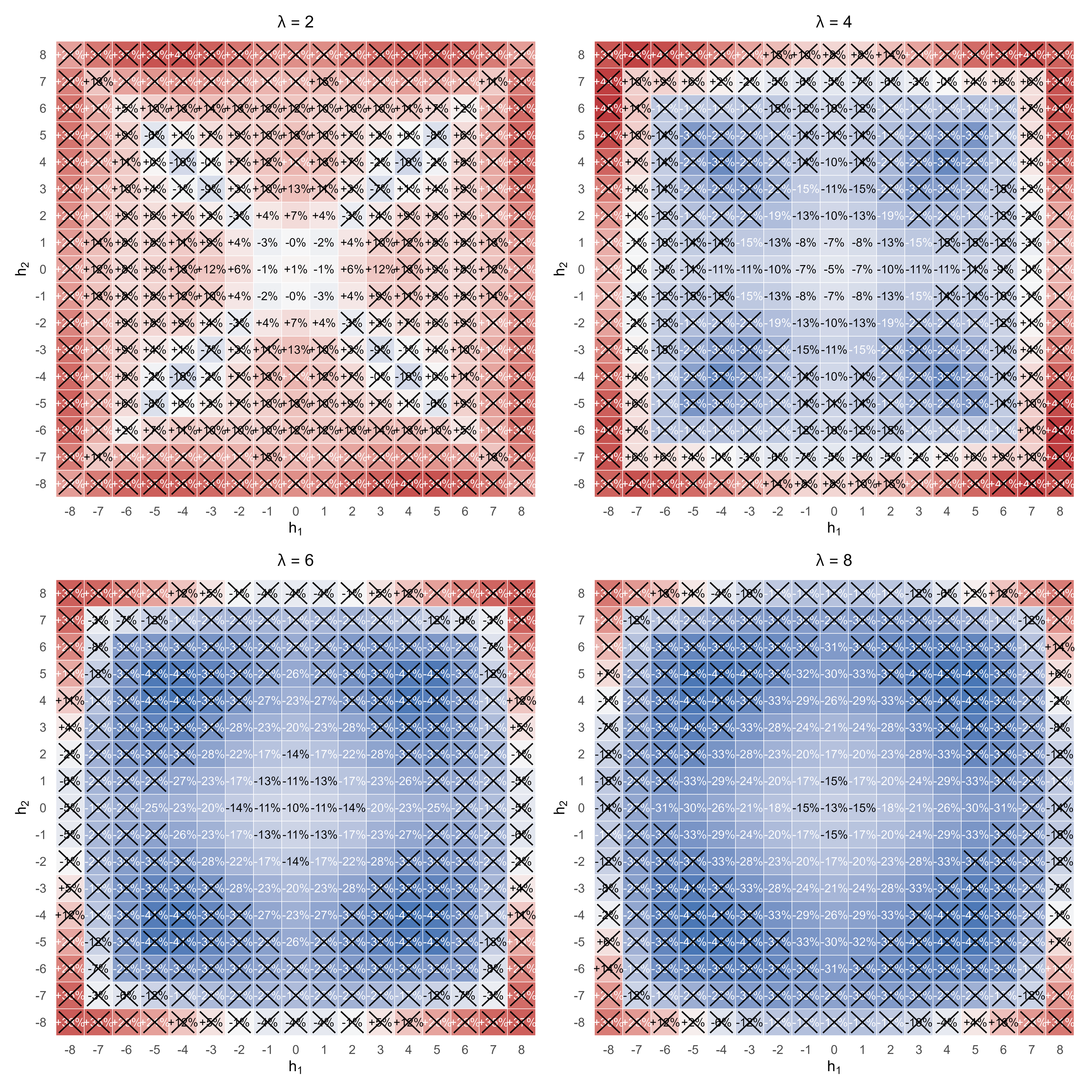}
  \caption{MSE ratio of the estimator with the optimal cutoff $m$ for each lag pair $(h_1,h_2)$ and naive estimator. Negative percentages indicate a reduction in MSE while positive percentages indicate an increase in MSE for the specific lag pair. Equivalently, blue shades indicate a reduction in MSE while red shades indicate a worsening of the MSE when compared to the naive estimator. Grid size $(10\times 10)$. Separable case $\beta = 0$.}
  \label{fig:mse_heatmap_10}
\end{figure}

We now investigate the maximal MSE reduction attainable by selecting, for each lag pair, the optimal cutoff parameter $m$, namely we report the oracle choice for the cutoff parameter that minimizes the MSE of the bias-corrected estimator and compare this with the MSE of the naive estimator. This is indicated in Figure~\ref{fig:mse_heatmap_10}, which shows the relative increase or decrease in MSE compared to the naive estimator, when each lag pair $(h_1,h_2)$, appearing in the plots' axes, is evaluated at its optimal $m$ for the $(10\times 10)$ grid and across the four dependence levels $\lambda$. Negative percentages indicate a reduction in MSE while positive percentages indicate an increase in MSE.
We notice that there are some slight gains in MSE estimation at short lags like $\underline h=(0,1)$ for the weak dependence case ($\lambda$ small), while the gains in MSE also become more substantial  as the dependency structure increases ($\lambda$ large). In these plots, a crossed-out value indicates that tapering the covariance estimator to zero results in a better MSE than using the naive or the bias-corrected estimator. That is, in these cases, one should not (try to) estimate them at all from the data.

\FloatBarrier
Finally, we are interested in reporting the properties of the estimations of the variance covariance matrix $\boldsymbol{\Sigma}$. We consider the naive estimator $\widehat{\boldsymbol{\Sigma}}$, whose entries are given by the standard sample estimator $\widehat{C}(\un h)$ from \eqref{eq:sampleAutocov}, the bias-corrected estimator $\widehat{\boldsymbol{\Sigma}}^*_{\un m}$ defined in \eqref{eq:Sigma_hat_m_corr} and a tapered estimator $\widehat{\boldsymbol{\Sigma}}_{\un m}$, where the entries not selected by the $m$ parameter are tapered by zero, just like in \eqref{eq:Sigma_hat_m_corr}, but the selected entries are given by with the standard estimator $\widehat{C}(\un h)$ from \eqref{eq:sampleAutocov}, such that only the tapering is applied but not the bias correction. Following \citet{Cai2010}, we evaluate in
Figure~\ref{fig:frob2_p_sep} the normalized squared Frobenius norm
\begin{equation*}
    \frac{1}{N}\E\left(\left\|\widehat{\boldsymbol{\Sigma}}^*_{\underline{m}}-\boldsymbol{\Sigma}\right\|_F^2\right).
\end{equation*}
We observe that for suitably selected cutoff values both the bias-corrected and the tapered estimator obtain substantially lower errors than the simple naive estimator. This behavior is well-known and is due to the regularization induced by the tapering present in both $\widehat{\boldsymbol{\Sigma}}^*_{\un m}$ and $\widehat{\boldsymbol{\Sigma}}_{\un m}$, see among others \citep{Ledoit2004, Furrer2006, Bickel2008, Bickel2008thresh}. The additional benefit of the bias-correction occurs for small grids and strong dependence, where $\widehat{\boldsymbol{\Sigma}}^*_{\un m}$ attains lower values than tapering alone done by $\widehat{\boldsymbol{\Sigma}}_{\un m}$. In the other settings, we see that they attain similar levels.

\begin{figure}[ht]
\centering    \includegraphics[width=1\linewidth]{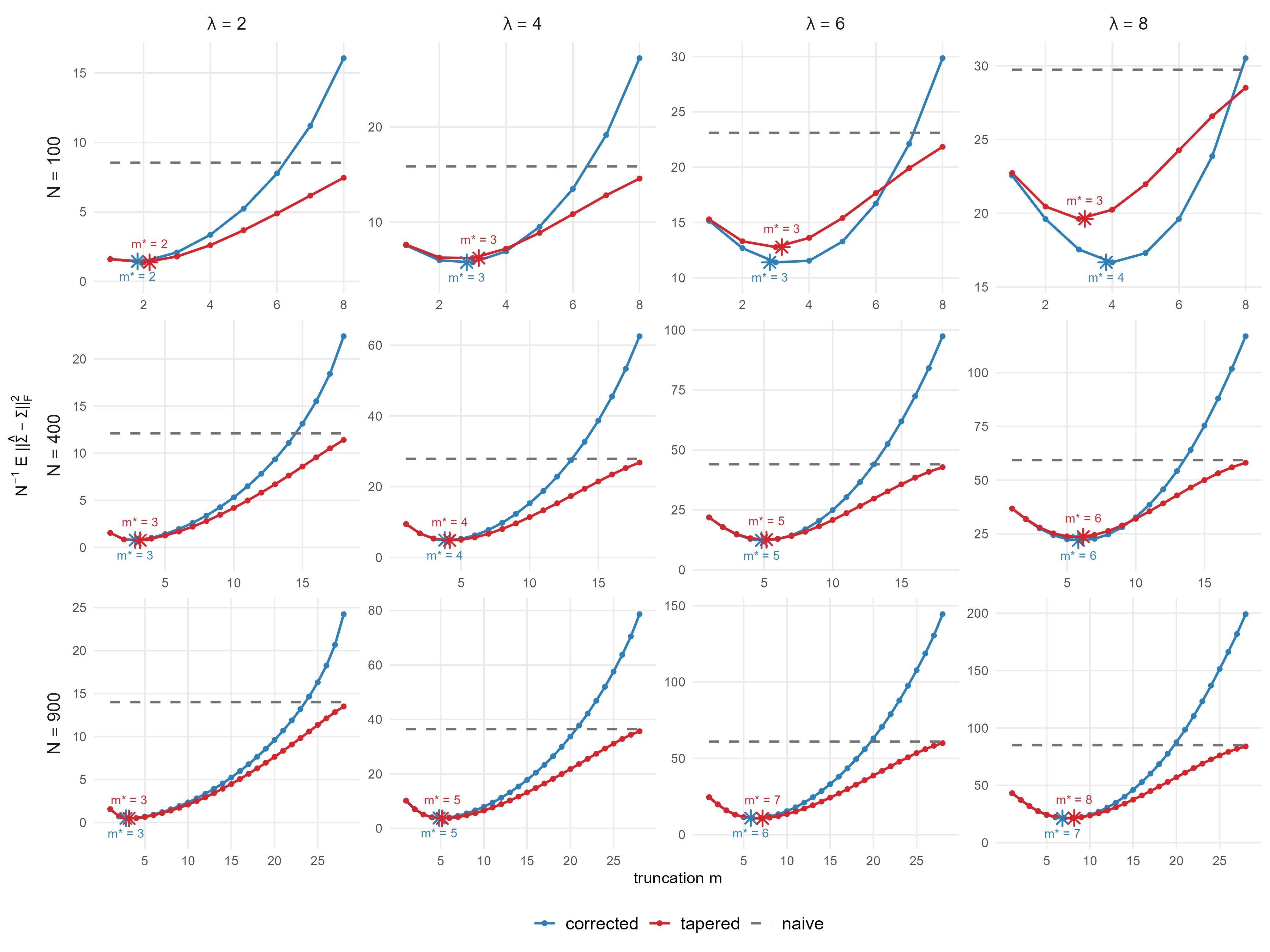}
  \caption{Average Normalized squared Frobenius norm difference of 10,000 Monte Carlo simulations vs different cutoff parameter $m$ comparing the naive estimator  $\boldsymbol{\widehat{\Sigma}}$ (gray), the bias-corrected estimator $\widehat{\boldsymbol{\Sigma}}^*_{\un m}$ (blue) and the tapered estimator $\widehat{\boldsymbol{\Sigma}}_{\un m}$ (red)}
  \label{fig:frob2_p_sep}
\end{figure}

To complement the aggregate error measure given by the Frobenius criterion, we also consider the spectral or operator norm, which for symmetric matrices, it corresponds to the absolute largest eigenvalue. Spectral norm consistency is often investigated in the literature, see e.g. \citet{Bickel2008}, thus we report in Figure~\ref{fig:spectral_sep} the spectral norm
\begin{equation*}
\E\left(\left\|\widehat{\boldsymbol{\Sigma}}^*_{\underline{m}}-\boldsymbol{\Sigma}\right\|_2\right).
\end{equation*}
The results of the simulations broadly confirm that the evidence presented in Figure~\ref{fig:frob2_p_sep}, where the bias-correction estimator achieves visible improvements in settings with small grid sizes and strong dependence. However, we also see improvements from simple tapering also in the moderate and large grid sizes and when the dependence is moderate or strong. 
\begin{figure}[ht]
\centering    \includegraphics[width=1\linewidth]{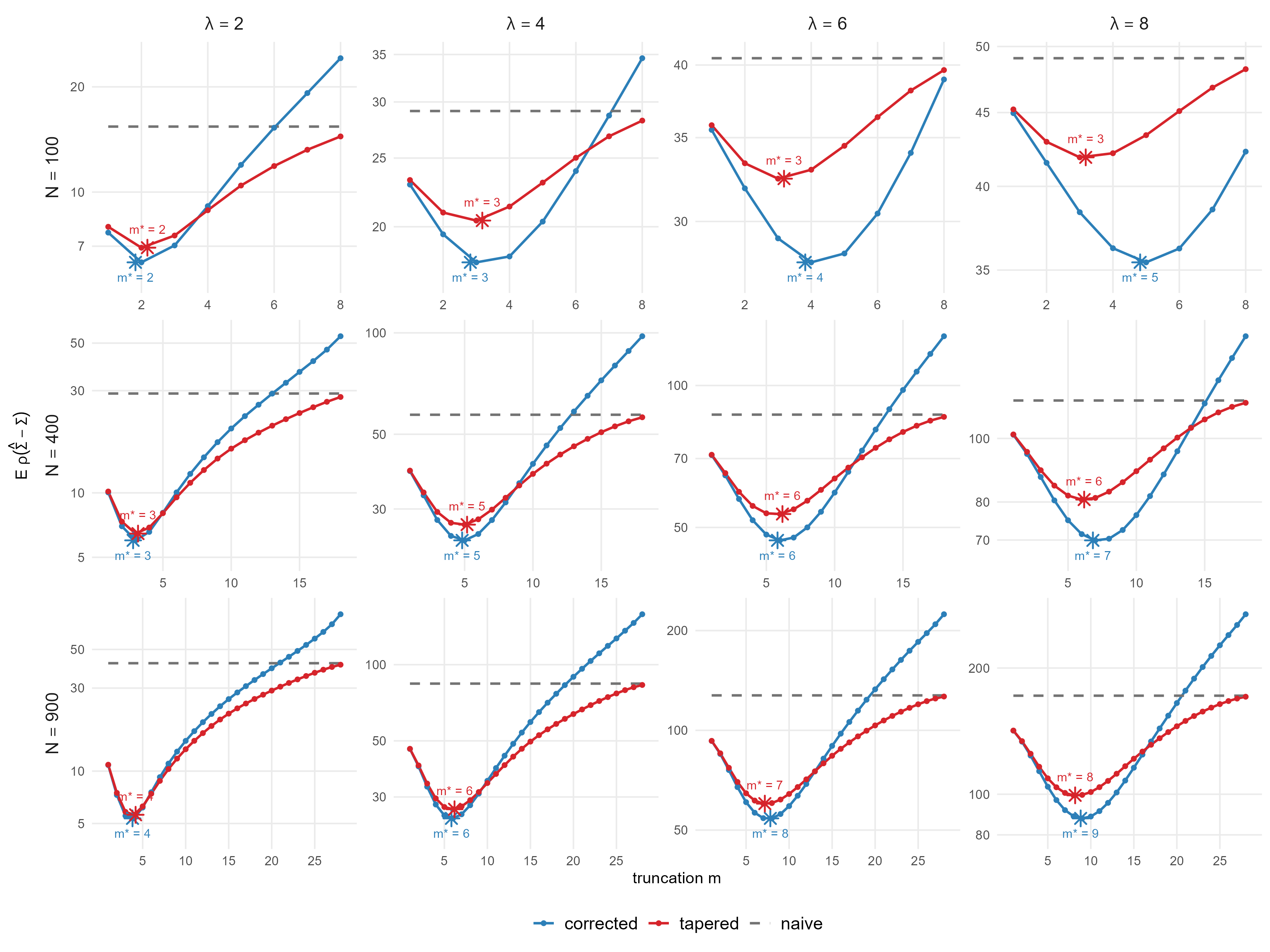}
  \caption{Average spectral norm difference of 10,000 Monte Carlo simulations vs different cutoff parameter $m$ comparing the naive estimator  $\boldsymbol{\widehat{\Sigma}}$ (gray), the bias-corrected estimator $\widehat{\boldsymbol{\Sigma}}^*_{\un m}$ (blue) and the tapered estimator $\widehat{\boldsymbol{\Sigma}}_{\un m}$ (red)}
  \label{fig:spectral_sep}
\end{figure}

\FloatBarrier
\subsection{Nonseparable case} 

Now, we examine the case where our true covariance from Equation~\ref{eq:trueautocov}, where we choose the parameter $\beta=1$ leading to nonseparability of the covariance. 

\begin{figure}[ht]
\centering    \includegraphics[width=1\linewidth]{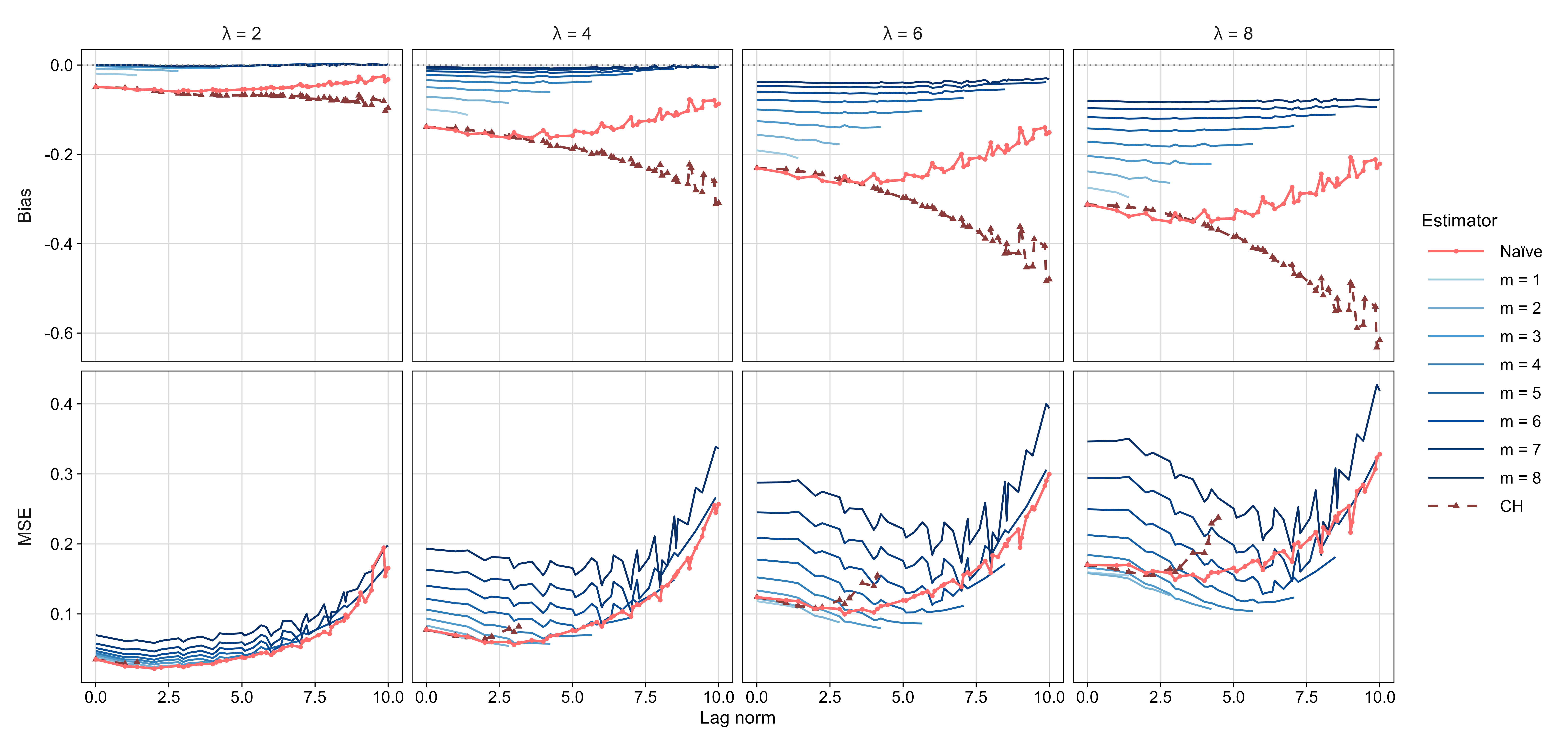}
  \caption{Bias and MSE performance with increasing lag norm $\|\un h\|_2$ for the naive estimator $\widehat{\underline{C}}$ (solid light red line), Cressie-Hawkins $\widehat{\underline{C}}_{CH}$ estimator (dashed dark red line) and the bias-corrected estimator $\widehat{\underline{C}}_{m}^{*}$ for different $m$ values (several blue shades) for four spatial dependence decay parameters $\lambda \in \{2,4,6,8\}$ and for grid size $n_1=n_2=10$. Nonseparable model $\beta =1$}
  \label{fig:Nh_MSE_v_norm}
\end{figure}

Figure~\ref{fig:Nh_MSE_v_norm} depicts the bias and the MSE for the bias-corrected estimator with standard prefactor with the competitor Cressie-Hawkins estimator and the naive estimator. Also here the bias correction is confirmed for almost all $m$ cutoff values. The MSE level has a similar behavior to the separable case: at small levels of spatial dependence, already small values of $m$ cause an inflation in MSE, which increases together with $m$, while as the spatial dependence increases low values of $m$ obtain both a reduction of the bias and in the MSE. The effect of nonseparability becomes visible in Figure~\ref{fig:mse_heatmap_10_nonsep}, where we lose the symmetry from the respective results in Figure~\ref{fig:mse_heatmap_10}, and a selection of lag pairs achieves a better MSE performance if we taper the estimation for that lag to zero. The bias correction method leads to better results with increasing strength of the spatial dependence. Note that, by the nature of the true covariance model \eqref{eq:trueautocov}, the nonseparable case generates overall a weaker dependence structure, because, for $\beta>0$, we introduce a term that additionally supports the decay of the exponential function for increasing lags. This is also seen in Figure~\ref{fig:truecov_all16}, where the nonseparable covariances (second row) appear in lighter shades compared to the respective separable ones (first row). Therefore, bearing this in mind, the results from the separable and non-separable case should be compared carefully. Nonetheless, we can see that the bias-correction achieves valuable results in the MSE sense, as reported in Figure~\ref{fig:mse_heatmap_10_nonsep}. There, for the weakest dependence ($\lambda=2$), the naive estimator is actually better than the bias-corrected one, but in the remaining three settings we see that the bias correction achieves MSE improvements for short lags, while covariances at large lags are indeed better tapered to zero. The non-separable nature of the covariance is manifested by the non-symmetric diagonal slant in the plot, where lag pairs with both positive or both negative lags can be estimated well with either the naive or bias corrected methods, while the cross-lags are better tapered early to zero as the lags increase.

\begin{figure}[t]
  \centering
\includegraphics[width=1\linewidth]{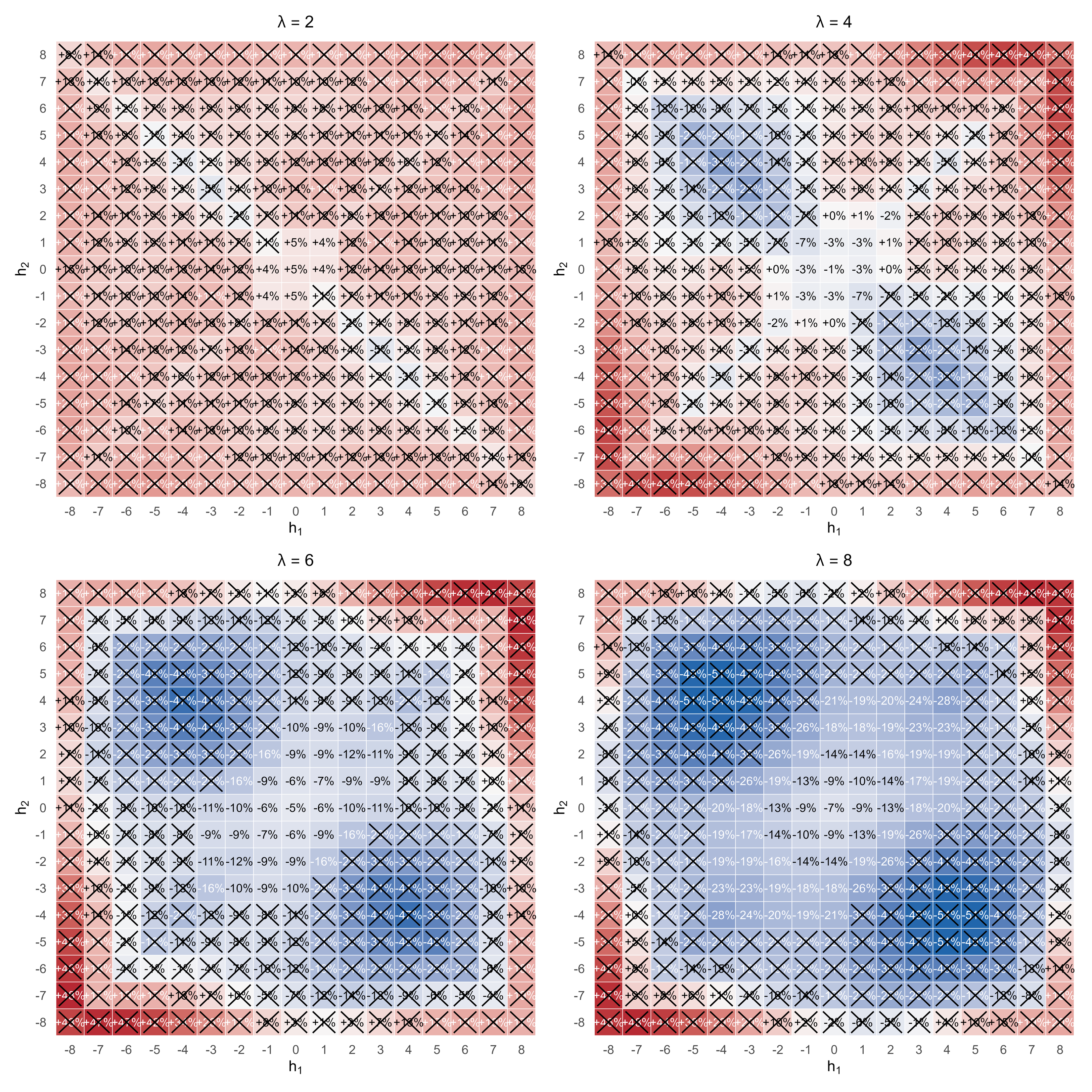}
  \caption{MSE ratio of the estimator with the optimal cutoff $m$ for each lag pair $(h_1,h_2)$ and naive estimator. Negative percentages indicate a reduction in MSE while positive percentages indicate an increase in MSE for the specific lag pair. Equivalently, blue shades indicate a reduction in MSE while red shades indicate a worsening of the MSE when compared to the naive estimator. Grid size $(10\times 10)$. Nonseparable case $\beta=1$.}
  \label{fig:mse_heatmap_10_nonsep}
\end{figure}

\FloatBarrier
\section{Real data illustration: kriging on the topography of the Maungawhau volcano}\label{sec:dataapp}
We illustrate the proposed bias corrected covariance estimation from Equation~\eqref{eq:bias_corrected_estimator} on a real data example, using it to drive a kriging-based prediction task on image data, a common type of lattice realization. The point of the exercise is not to claim a state-of-the-art interpolation
method, but to show that correcting the finite-sample bias of the sample
covariance on a lattice translates into a measurable improvement in a
downstream task that depends entirely on the estimated covariance structure.

\begin{figure}[H]
\centering
\includegraphics[width=\textwidth]{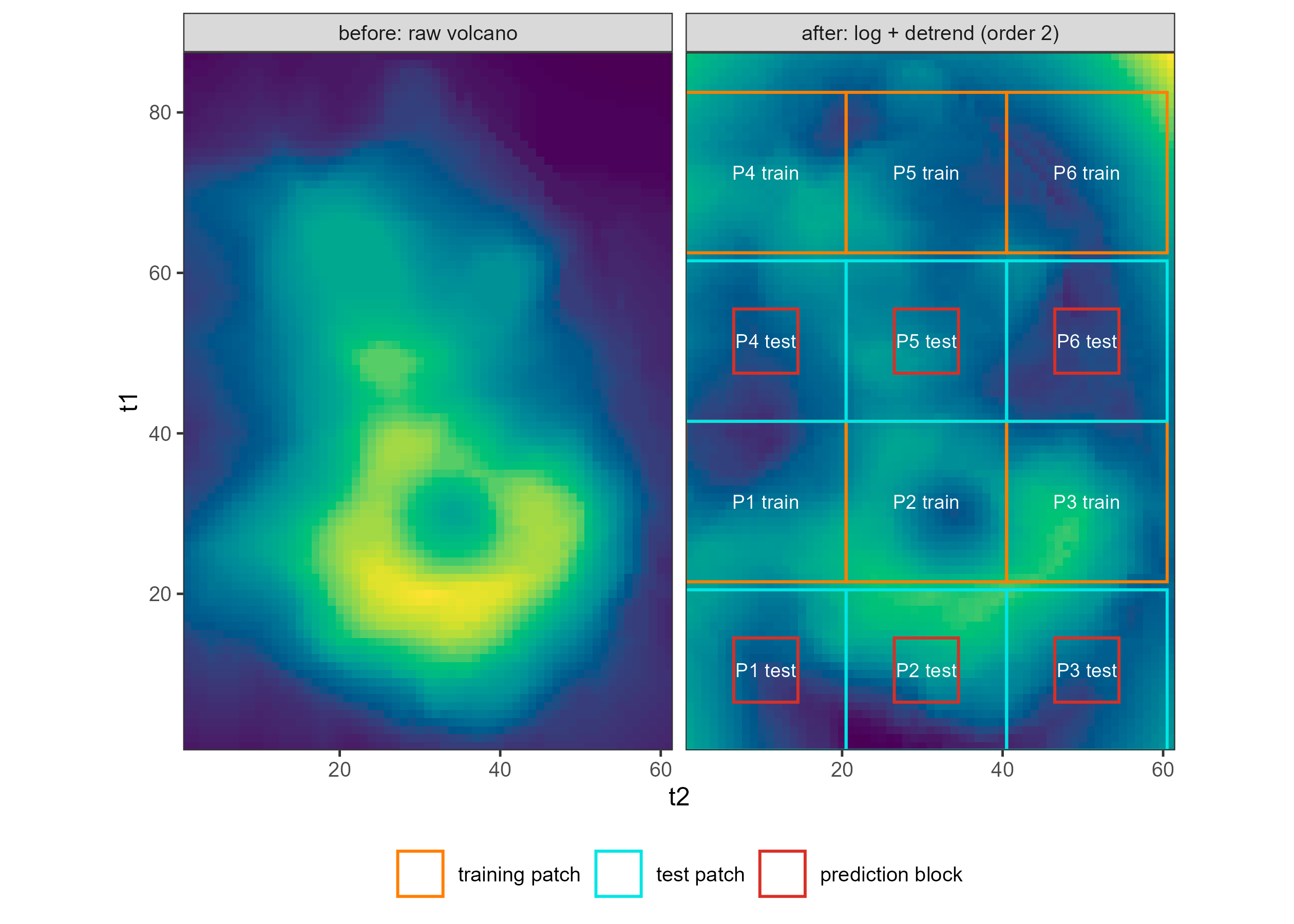}
    \caption{The \texttt{volcano} lattice before (left panel) and after (right panel) the log transform and the subtraction of the fitted order two polynomial surface.}
    \label{fig:volcano_field}
\end{figure}

We consider the \texttt{volcano} dataset, distributed with the \texttt{datasets} package in \textsf{R} \citep{dataset}. It records the topography of Maungawhau (Mount Eden), in Auckland, on a $87\times61$ lattice grid. As can be seen in the left panel of Figure~\ref{fig:volcano_field}, the elevation of the volcano is a cone, such that it does not meet the weak stationarity imposed by Assumption~\ref{ass:stationarity}. Hence, we apply two transformations on the field, namely a log-transform and the detrending described in the following.

With the coordinates $\un t=(t_1,t_2)$, $t_i =1,\dots, n_i$ of the lattice $G_{\un n}$, being the process a record of the elevation, it is always positive $X_{\un t}>0$, thus it admits a log-transform $Y_{\un t}=\log(X_{\tu})$. This transformation is needed because the variability of the elevations grows with their level. Then, an order two polynomial with the coordinates as variables, that is,
\[
\mu_{\un{\beta}}(\un{t}):=\beta_0+\beta_1t_1+\beta_2t_2
+\beta_3t_1^{2}+\beta_4t_2^{2}+\beta_5t_1t_2,
\]
is used and fitted by ordinary least squares for detrending the log-transformed data, such that we work with the residuals $\{y_{\un{t}}\}$ obtained by 
\[\widehat{\un{\beta}}:=\arg\min_{\un{\beta}}
\sum_{\un{t}\in G_{\un{n}}}\bigl(Y_{\un{t}}-\mu_{\un{\beta}}(\un{t})\bigr)^{2}, \qquad y_{\un{t}}:=Y_{\un{t}}-\mu_{\widehat{\un{\beta}}}(\un{t})\]
in the following.

Figure~\ref{fig:volcano_field} shows the raw lattice data before (left panel) and the residuals after the two transformations (right panel). On the right panel, the lattice has been split into 6 training and 6 respective testing patches. We perform ordinary kriging for each patch pair. First, we estimate the variance covariance matrix $\boldsymbol{\Sigma}$ on each of the $(20\times 20)$ training patches. We build four different estimators: two bias-corrected estimators $\widehat{\boldsymbol{\Sigma}}_{\un m}$ from \eqref{eq:Sigma_hat_m_corr}, one with cutoff value $\widehat{m}^{\text{rule}}$ obtained from the plug-in rule procedure and one with $\widehat{m}^{\text{cv}}$ obtained from the cross-validation procedure, both described in Section~\ref{subsec:cutoff_selection}. Further, a naive estimator $\widehat{\boldsymbol{\Sigma}}_N$ obtained by plugging-in the standard estimator $\widehat{C}(\un h)$ from \eqref{eq:sampleAutocov}  in its entries, and $\widehat{\boldsymbol{\Sigma}}_{CH}$, which is obtained by plugging-in the Cressie-Hawkins estimator \eqref{eq:cressiehawkins}. All four estimators above undergo the spectral projection technique from \eqref{eq:psd_proj} to secure a positive semidefinite covariance matrix, defined generally with $\widehat{\boldsymbol{\Sigma}}^+$ for the description of the ordinary kriging below.

In each testing patch, we split two disjoint sets: the set $O$ of observed locations and the set $P$ of locations to be predicted, indicated by the red square in Figure~\ref{fig:volcano_field}. The observed values are collected in the vector $\underline{y}_{O}$, while the central block values are collected in vector $\underline{y}_{P}$, and used only for the evaluation of the prediction. We reconfigure each estimator by the index sets $O$ and $P$. Concretely, for all four estimators, $\widehat{\boldsymbol{\Sigma}}_{OO}^+$ represents the submatrix of rows and columns indexed by $O$ and $\widehat{\boldsymbol{\Sigma}}_{PO}^+$ the submatrix of rows indexed by $P$ and columns indexed by $O$.
Ordinary kriging treats the process mean as an unknown constant, estimated by generalized least squares
\begin{equation}
\label{eq:ok_mean}
\widehat{\mu}=(\eones^{\top}(\widehat{\boldsymbol{\Sigma}}_{OO}^+)^{-1}\eones)^{-1}\eones^{\top}(\widehat{\boldsymbol{\Sigma}}_{OO}^+)^{-1}\underline{y}_{O}
\end{equation}
and then used in the prediction of the held-out central block by ordinary kriging
\begin{equation}\label{eq:ok_pred}
\widehat{\underline{y}}_{P}=\widehat{\mu}\eones+\widehat{\boldsymbol{\Sigma}}^+_{PO}(\widehat{\boldsymbol{\Sigma}}_{OO}^+)^{-1}
\bigl(\underline{y}_{O}-\widehat{\mu}\eones\bigr),
\end{equation}
for further details on the ordinary kriging methodology, see for example \citet{sacks1989designs, Cressie1993, sherman2011spatial}.
The prediction accuracy is measured for each estimator $\widehat{\boldsymbol{\Sigma}}^+$ by the root mean squared prediction error (RMSPE)
\begin{equation}
\label{eq:rmspe}
\text{RMSPE}(\widehat{\boldsymbol{\Sigma}}^+) = \sqrt{\frac{1}{|P|}\sum_{t\in P}(\hat{y}_t-y_t)^2}.
\end{equation}
The results are shown in Table~\ref{tab:PRMSE_results}. It appears that, for most patches, the bias-corrected estimator $\widehat{\boldsymbol{\Sigma}}_{\un m}^*$ with cutoff value selected by the plug-in rule obtains the lowest prediction error, while in general both bias-corrected estimators compete well against the naive benchmark.

\begin{table}[H]
\centering
\begin{tabular}{lcccccc}
\toprule
& \multicolumn{2}{c}{full lag grid}
& \multicolumn{2}{c}{cross-validated}
& \multicolumn{2}{c}{plug-in rule}\\
\cmidrule(lr){2-3}\cmidrule(lr){4-5}\cmidrule(lr){6-7}
patch & $\widehat{\boldsymbol{\Sigma}}^+_N$ & $\widehat{\boldsymbol{\Sigma}}^+_{\mathrm{CH}}$
& $\widehat{m}^{\mathrm{cv}}$ & $\widehat{\boldsymbol{\Sigma}}^{*+}_{\un m}$
& $\widehat{m}^{\mathrm{rule}}$ & $\widehat{\boldsymbol{\Sigma}}^{*+}_{\un m}$\\
\midrule
$\mathrm{P}_1$ & \cellcolor{gray!20}0.031 & \cellcolor{gray!20}0.030
& \cellcolor{gray!20}8 & \cellcolor{gray!20}0.020 & 5 & 0.019\\
$\mathrm{P}_2$ & \cellcolor{gray!20}0.044 & \cellcolor{gray!20}0.064
& 8 & 0.025 & \cellcolor{gray!20}4 & \cellcolor{gray!20}0.030\\
$\mathrm{P}_3$ & \cellcolor{gray!20}0.044 & 0.011
& \cellcolor{gray!20}9 & \cellcolor{gray!20}0.028
& \cellcolor{gray!20}5 & \cellcolor{gray!20}0.018\\
$\mathrm{P}_4$ & \cellcolor{gray!20}0.031 & \cellcolor{gray!20}0.045
& \cellcolor{gray!20}5 & \cellcolor{gray!20}0.023 & 4 & 0.022\\
$\mathrm{P}_5$ & \cellcolor{gray!20}0.072 & \cellcolor{gray!20}0.034
& \cellcolor{gray!20}9 & \cellcolor{gray!20}0.031 & 4 & 0.020\\
$\mathrm{P}_6$ & \cellcolor{gray!20}0.125 & \cellcolor{gray!20}0.081
& \cellcolor{gray!20}8 & \cellcolor{gray!20}0.037 & 5 & 0.017\\
\midrule
avg & 0.058 & 0.044
& --- & 0.027
& --- & 0.021\\
\bottomrule
\end{tabular}
\caption{RMSPE results of the ordinary kriging for each estimator considered. $\widehat{\boldsymbol{\Sigma}}_N^+$ standard estimator in its entries; $\widehat{\boldsymbol{\Sigma}}_{\mathrm{CH}}^+$ Cressie-Hawkins estimator in its entries; $\widehat{\boldsymbol{\Sigma}}^{*,+}_{\un m}$ bias-corrected estimator with cutoff selected by either plug-in rule or cross-validation.}
\label{tab:PRMSE_results}
\end{table}
The bias-corrected estimator is implemented in the \texttt{bclc} package that is available on Github \citep{Fioravanti2026bclc}.

\FloatBarrier
\section{Conclusion}\label{sec:conclusion}
In this paper, we developed a framework for bias correction of sample covariance estimators for stationary lattice processes indexed in $\mathbb{Z}^2$. The starting point was the derivation of the exact finite-sample bias for the covariance estimator $\widehat{C}(h_1,h_2)$ at any single spatial lag $(h_1,h_2)$ computed on a rectangular $(n_1\times n_2)$ grid. When considering a subset of spatial sample covariances jointly, we derived how to construct a deterministic matrix-correction factor $\mathbf{W}_{\underline{n}}^*$ that is known and depends only on the grid dimensions. This weight matrix $\mathbf{W}_{\underline{n}}^*$ is shown to be singular, but the reduced weight matrix $\mathbf{W}_{\underline{n}, m}^*$ obtained by extracting the central submatrix is argued to be invertible and the resulting bias-corrected estimator $\underline{\widehat{C}}_m^* = (\mathbf{W}_{\underline{n},m}^*)^{-1}\vect(\mathbf{\widehat{\Gamma}}_m)$ becomes nearly unbiased for jointly estimating the corresponding covariances. We show that the correction applied to the standard estimator with prefactor $1/N_h$ and to the constant prefactor estimator with $1/N$ leads to exactly the same bias-corrected estimator. We also derive the finite sample formulas for the covariances of the estimators and establish asymptotic normality of our bias-corrected estimator, which is asymptotically equivalent to its uncorrected version. Also, we propose two data-driven methods to select the cutoff parameter $m$.

In Monte Carlo simulations, we study the bias-corrected estimator and compare it to the naive one and a benchmark variogram-based estimator proposed by \citet{cressie1980robust}, which also attempts to mitigate the bias introduced by the sample covariance estimation. Our simulation focuses on the study of Gaussian lattice processes with an exponential covariance function with varying degrees of spatial dependence, covering both separable and non-separable structures. The bias-corrected estimator effectively manages to reduce the bias in many setups, especially when the spatial dependence is strong, which may be the practically more important case. In such setups, we also see that the reduction of the bias results also in a reduction of the MSE, when a conservative choice of the parameter $m$ made. Finally, we applied the bias-corrected estimator in a ordinary kriging task, which showed good results in prediction compared to the naive counterpart.

\section*{Acknowledgments} 

This research was funded in the course of TRR 391 Spatio-temporal Statistics for the Transition of Energy and Transport (520388526) by the Deutsche Forschungsgemeinschaft (DFG,
German Research Foundation). In addition, the authors gratefully acknowledge the computing time provided on the Linux HPC cluster at TU Dortmund University (LiDO3), partially funded in the course of the Large-Scale Equipment Initiative by the German Research Foundation (DFG) as project 271512359. 

\section*{Declaration of generative AI and AI-assisted technologies in the writing process}
During the preparation of this work the authors used Claude (Anthropic) and ChatGPT (OpenAI) in order to improve language and readability, check consistency of notation, assist with the review and debugging of the simulation code. After using these tools, the authors reviewed, verified and edited all content as needed and take full responsibility for the content of the published article.

\section*{Declaration of competing interest}
The authors declare that they have no known competing financial interests or personal relationships that could have appeared to influence the work reported in this paper.

\bibliographystyle{elsarticle-harv}
\bibliography{pipbias}

@article{vogelsang2016exactly,
  title = {Exactly/Nearly Unbiased Estimation of Autocovariances of a Univariate Time Series With Unknown Mean},
  volume = {37},
  ISSN = {1467-9892},
  url = {http://dx.doi.org/10.1111/jtsa.12184},
  DOI = {10.1111/jtsa.12184},
  number = {6},
  journal = {Journal of Time Series Analysis},
  publisher = {Wiley},
  author = {Vogelsang,  Timothy J. and Yang,  Jingjing},
  year = {2016},
  month = Feb,
  pages = {723–740}
}

@article{li2024correcting,
  title = {Correcting the bias of the sample cross‐covariance estimator},
  volume = {45},
  ISSN = {1467-9892},
  url = {http://dx.doi.org/10.1111/jtsa.12701},
  DOI = {10.1111/jtsa.12701},
  number = {2},
  journal = {Journal of Time Series Analysis},
  publisher = {Wiley},
  author = {Li,  Yifan},
  year = {2023},
  month = May,
  pages = {214–247}
}

@article{guillaumin2022debiased,
  title = {The Debiased Spatial Whittle Likelihood},
  volume = {84},
  ISSN = {1467-9868},
  url = {http://dx.doi.org/10.1111/rssb.12539},
  DOI = {10.1111/rssb.12539},
  number = {4},
  journal = {Journal of the Royal Statistical Society Series B: Statistical Methodology},
  publisher = {Oxford University Press (OUP)},
  author = {Guillaumin,  Arthur P. and Sykulski,  Adam M. and Olhede,  Sofia C. and Simons,  Frederik J.},
  year = {2022},
  month = July,
  pages = {1526–1557}
}

@article{Bilchouris2025,
  title = {On Nonparametric Estimation of Covariogram},
  volume = {54},
  ISSN = {1026-597X},
  url = {http://dx.doi.org/10.17713/ajs.v54i1.1975},
  DOI = {10.17713/ajs.v54i1.1975},
  number = {1},
  journal = {Austrian Journal of Statistics},
  publisher = {Austrian Statistical Society},
  author = {Bilchouris,  Adam and Olenko,  Andriy},
  year = {2025},
  month = Jan,
  pages = {112–137}
}

@article{cressie1980robust,
  title = {Robust estimation of the variogram: I},
  volume = {12},
  ISSN = {1573-8868},
  url = {http://dx.doi.org/10.1007/BF01035243},
  DOI = {10.1007/bf01035243},
  number = {2},
  journal = {Journal of the International Association for Mathematical Geology},
  publisher = {Springer Science and Business Media LLC},
  author = {Cressie,  Noel and Hawkins,  Douglas M.},
  year = {1980},
  month = Apr,
  pages = {115–125}
}

@article{Bartlett1946,
  title = {On the Theoretical Specification and Sampling Properties of Autocorrelated Time-Series},
  volume = {8},
  ISSN = {1467-9868},
  url = {http://dx.doi.org/10.2307/2983611},
  DOI = {10.2307/2983611},
  number = {1},
  journal = {Journal of the Royal Statistical Society Series B: Statistical Methodology},
  publisher = {Oxford University Press (OUP)},
  author = {Bartlett,  M. S.},
  year = {1946},
  month = Jan,
  pages = {27–41}
}

@article{bartlett1950periodogram,
  title={Periodogram analysis and continuous spectra},
  volume = {37},
  ISSN = {1464-3510},
  url = {http://dx.doi.org/10.1093/biomet/37.1-2.1},
  DOI = {10.1093/biomet/37.1-2.1},
  number = {1-2},
  journal = {Biometrika},
  publisher = {Oxford University Press (OUP)},
  author = {Bartlett,  M. S.},
  year = {1950},
  pages = {1–16}
}

@article{guyon1982parameter,
  title = {Parameter estimation for a stationary process on a <i>d</i>-dimensional lattice},
  volume = {69},
  ISSN = {1464-3510},
  url = {http://dx.doi.org/10.1093/biomet/69.1.95},
  DOI = {10.1093/biomet/69.1.95},
  number = {1},
  journal = {Biometrika},
  publisher = {Oxford University Press (OUP)},
  author = {Guyon,  Xavier},
  year = {1982},
  pages = {95–105}
}

@article{dahlhaus1987edge,
  title={Edge effects and efficient parameter estimation for stationary random fields},
  author={Dahlhaus, Rainer and K{\"u}nsch, Hansruedi},
  journal={Biometrika},
  volume={74},
  number={4},
  pages={877--882},
  year={1987},
  publisher={Oxford University Press},
    ISSN = {1464-3510},
    url = {http://dx.doi.org/10.1093/biomet/74.4.877},
  DOI = {10.1093/biomet/74.4.877}
}

@article{ripley1984spatial,
 ISSN = {03067734, 17515823},
 URL = {http://www.jstor.org/stable/1403097},
 author = {B. D. Ripley},
 journal = {International Statistical Review / Revue Internationale de Statistique},
 number = {2},
 pages = {141--150},
 publisher = {[Wiley, International Statistical Institute (ISI)]},
 title = {Spatial Statistics: Developments 1980-3, Correspondent Paper},
 urldate = {2026-10-02},
 volume = {52},
 year = {1984}
}

@article{Isserlis1918,
  title = {On a Formula for the Product-Moment Coefficient of any Order of a Normal Frequency Distribution in any Number of Variables},
  volume = {12},
  ISSN = {0006-3444},
  url = {http://dx.doi.org/10.2307/2331932},
  DOI = {10.2307/2331932},
  number = {1/2},
  journal = {Biometrika},
  publisher = {JSTOR},
  author = {Isserlis,  L.},
  year = {1918},
  month = Nov,
  pages = {134}
}

@article{ElMachkouri2013,
  title = {A central limit theorem for stationary random fields},
  volume = {123},
  ISSN = {0304-4149},
  url = {http://dx.doi.org/10.1016/j.spa.2012.08.014},
  DOI = {10.1016/j.spa.2012.08.014},
  number = {1},
  journal = {Stochastic Processes and their Applications},
  publisher = {Elsevier BV},
  author = {El Machkouri,  Mohamed and Volný,  Dalibor and Wu,  Wei Biao},
  year = {2013},
  month = Jan,
  pages = {1–14}
}

@article{hall1994nonparametric,
  ISSN = {00905364, 21688966},
 URL = {http://www.jstor.org/stable/2242501},
 author = {Peter Hall and Nicholas I. Fisher and Branka Hoffmann},
 journal = {The Annals of Statistics},
 number = {4},
 pages = {2115--2134},
 publisher = {Institute of Mathematical Statistics},
 title = {On the Nonparametric Estimation of Covariance Functions},
 urldate = {2026-10-02},
 volume = {22},
 year = {1994}
}

@article{wang2023nonparametric,
author = {Yiming Wang and Sujit K. Ghosh},
title = {Nonparametric estimation of isotropic covariance function},
journal = {Journal of Nonparametric Statistics},
volume = {35},
number = {1},
pages = {198--237},
year = {2023},
publisher = {Taylor \& Francis},
doi = {10.1080/10485252.2022.2146111},
URL = {https://doi.org/10.1080/10485252.2022.2146111}
}

@article{Bjrnstad2001,
  title = {Nonparametric spatial covariance functions: Estimation and testing},
  volume = {8},
  ISSN = {1573-3009},
  url = {http://dx.doi.org/10.1023/A:1009601932481},
  DOI = {10.1023/a:1009601932481},
  number = {1},
  journal = {Environmental and Ecological Statistics},
  publisher = {Springer Science and Business Media LLC},
  author = {BjØrnstad,  Ottar N. and Falck,  Wilhelm},
  year = {2001},
  month = Mar,
  pages = {53–70}
}

@article{jentsch2015covariance,
  title = {Covariance matrix estimation and linear process bootstrap for multivariate time series of possibly increasing dimension},
  volume = {43},
  ISSN = {0090-5364},
  url = {http://dx.doi.org/10.1214/14-AOS1301},
  DOI = {10.1214/14-aos1301},
  number = {3},
  journal = {The Annals of Statistics},
  publisher = {Institute of Mathematical Statistics},
  author = {Jentsch,  Carsten and Politis,  Dimitris N.},
  year = {2015},
  month = June 
}

@article{mardia1984,
title = {Maximum likelihood estimation of models for residual covariance in spatial regression},
  volume = {71},
  ISSN = {1464-3510},
  url = {http://dx.doi.org/10.1093/biomet/71.1.135},
  DOI = {10.1093/biomet/71.1.135},
  number = {1},
  journal = {Biometrika},
  publisher = {Oxford University Press (OUP)},
  author = {Mardia,  K. V. and Marshall,  R. J.},
  year = {1984},
  pages = {135–146}
}

@article{Cai2010,
  title = {Optimal rates of convergence for covariance matrix estimation},
  volume = {38},
  ISSN = {0090-5364},
  url = {https://www.jstor.org/stable/20744485},
  DOI = {10.1214/09-aos752},
  number = {4},
  journal = {The Annals of Statistics},
  publisher = {Institute of Mathematical Statistics},
  author = {Cai,  T. Tony and Zhang,  Cun-Hui and Zhou,  Harrison H.},
  year = {2010},
  month = Aug 
}

@article{Bickel2008,
  title = {Regularized estimation of large covariance matrices},
  volume = {36},
  ISSN = {0090-5364},
  url = {http://dx.doi.org/10.1214/009053607000000758},
  DOI = {10.1214/009053607000000758},
  number = {1},
  journal = {The Annals of Statistics},
  publisher = {Institute of Mathematical Statistics},
  author = {Bickel,  Peter J. and Levina,  Elizaveta},
  year = {2008},
  month = Feb 
}

@article{Bickel2008thresh,
  title = {Covariance regularization by thresholding},
  volume = {36},
  ISSN = {0090-5364},
  url = {http://dx.doi.org/10.1214/08-AOS600},
  DOI = {10.1214/08-aos600},
  number = {6},
  journal = {The Annals of Statistics},
  publisher = {Institute of Mathematical Statistics},
  author = {Bickel,  Peter J. and Levina,  Elizaveta},
  year = {2008},
  month = Dec 
}

@article{Furrer2006,
  title = {Covariance Tapering for Interpolation of Large Spatial Datasets},
  volume = {15},
  ISSN = {1537-2715},
  url = {http://dx.doi.org/10.1198/106186006X132178},
  DOI = {10.1198/106186006x132178},
  number = {3},
  journal = {Journal of Computational and Graphical Statistics},
  publisher = {Informa UK Limited},
  author = {Furrer,  Reinhard and Genton,  Marc G and Nychka,  Douglas},
  year = {2006},
  month = Sept,
  pages = {502–523}
}

@article{Ledoit2004,
  title = {Honey,  {I} Shrunk the Sample Covariance Matrix},
  volume = {30},
  ISSN = {2168-8656},
  url = {http://dx.doi.org/10.3905/jpm.2004.110},
  DOI = {10.3905/jpm.2004.110},
  number = {4},
  journal = {The Journal of Portfolio Management},
  publisher = {With Intelligence LLC},
  author = {Ledoit,  Olivier and Wolf,  Michael},
  year = {2004},
  month = July,
  pages = {110–119}
}

@article{Zhang2004,
   title = {Inconsistent Estimation and Asymptotically Equal Interpolations in Model-Based Geostatistics},
  volume = {99},
  ISSN = {1537-274X},
  url = {http://dx.doi.org/10.1198/016214504000000241},
  DOI = {10.1198/016214504000000241},
  number = {465},
  journal = {Journal of the American Statistical Association},
  publisher = {Informa UK Limited},
  author = {Zhang,  Hao},
  year = {2004},
  month = Mar,
  pages = {250–261}
}

@article{Stein1989,
  title = {Some asymptotic properties of kriging when the covariance function is misspecified},
  volume = {21},
  ISSN = {1573-8868},
  url = {http://dx.doi.org/10.1007/BF00893213},
  DOI = {10.1007/bf00893213},
  number = {2},
  journal = {Mathematical Geology},
  publisher = {Springer Science and Business Media LLC},
  author = {Stein,  Michael L. and Handcock,  Mark S.},
  year = {1989},
  month = Feb,
  pages = {171–190}
}

@article{Cressie1985,
  title = {Fitting variogram models by weighted least squares},
  volume = {17},
  ISSN = {1573-8868},
  url = {http://dx.doi.org/10.1007/BF01032109},
  DOI = {10.1007/bf01032109},
  number = {5},
  journal = {Journal of the International Association for Mathematical Geology},
  publisher = {Springer Science and Business Media LLC},
  author = {Cressie,  Noel},
  year = {1985},
  month = July,
  pages = {563–586}
}

@article{Guttorp2006,
  title = {Studies in the history of probability and statistics XLIX On the Matérn correlation family},
  volume = {93},
  ISSN = {0006-3444},
  url = {http://dx.doi.org/10.1093/biomet/93.4.989},
  DOI = {10.1093/biomet/93.4.989},
  number = {4},
  journal = {Biometrika},
  publisher = {Oxford University Press (OUP)},
  author = {Guttorp,  Peter and Gneiting,  Tilmann},
  year = {2006},
  month = Dec,
  pages = {989–995}
}

@book{Cressie1993,
  title = {Statistics for Spatial Data},
  ISBN = {9781119115151},
  ISSN = {1940-6347},
  url = {http://dx.doi.org/10.1002/9781119115151},
  DOI = {10.1002/9781119115151},
  journal = {Wiley Series in Probability and Statistics},
  publisher = {Wiley},
  author = {Cressie,  Noel A. C.},
  year = {1993},
  month = Sept 
}

@book{schabenberger2017statistical,
  title = {Statistical Methods for Spatial Data Analysis: Texts in Statistical Science},
  ISBN = {9781315275086},
  url = {http://dx.doi.org/10.1201/9781315275086},
  DOI = {10.1201/9781315275086},
  publisher = {Chapman and Hall/CRC},
  author = {Schabenberger,  Oliver and Gotway,  Carol A.},
  year = {2017},
  month = Jan 
}

@article{sacks1989designs,
  title = {Designs for Computer Experiments},
  volume = {31},
  ISSN = {1537-2723},
  url = {http://dx.doi.org/10.1080/00401706.1989.10488474},
  DOI = {10.1080/00401706.1989.10488474},
  number = {1},
  journal = {Technometrics},
  publisher = {Informa UK Limited},
  author = {Sacks,  Jerome and Schiller,  Susannah B. and Welch,  William J.},
  year = {1989},
  month = Feb,
  pages = {41–47}
}

@book{sherman2011spatial,
  title = {Spatial Statistics and Spatio‐Temporal Data: Covariance Functions and Directional Properties},
  ISBN = {9780470974391},
  ISSN = {1940-6347},
  url = {http://dx.doi.org/10.1002/9780470974391},
  DOI = {10.1002/9780470974391},
  journal = {Wiley Series in Probability and Statistics},
  publisher = {Wiley},
  author = {Sherman,  Michael},
  year = {2010},
  month = Nov 
}

@book{matheron1962traite,
  title={Trait{\'e} de g{\'e}ostatistique appliqu{\'e}e},
  author={Matheron, Georges},
  volume={14},
  year={1962},
  publisher={Editions Technip}
}

@book{golub2013matrix,
  author    = {Golub, Gene H. and {Van Loan}, Charles F.},
  title     = {Matrix Computations},
  edition   = {4},
  series    = {Johns Hopkins Studies in the Mathematical Sciences},
  publisher = {Johns Hopkins University Press},
  address   = {Baltimore, MD},
  year      = {2013},
  isbn      = {978-1-4214-0794-4},
  doi       = {10.56021/9781421407944},
  url       = {https://doi.org/10.56021/9781421407944}
}

@book{brockwell1991,
  title = {Time Series: Theory and Methods},
  ISBN = {9781441903204},
  ISSN = {0172-7397},
  url = {http://dx.doi.org/10.1007/978-1-4419-0320-4},
  DOI = {10.1007/978-1-4419-0320-4},
  journal = {Springer Series in Statistics},
  publisher = {Springer New York},
  author = {Brockwell,  Peter J. and Davis,  Richard A.},
  year = {1991}
}

@book{kreiss2006einfuhrung,
  title={Einf{\"u}hrung in die Zeitreihenanalyse},
  author={Krei{\ss}, Jens-Peter and Neuhaus, Georg},
  year={2006},
  publisher={Springer-Verlag},
  url = {http://dx.doi.org/10.1007/3-540-33571-4},
  DOI = {10.1007/3-540-33571-4},
}

@book{Politis1999,
  title = {Subsampling},
  ISBN = {9781461215547},
  ISSN = {0172-7397},
  url = {http://dx.doi.org/10.1007/978-1-4612-1554-7},
  DOI = {10.1007/978-1-4612-1554-7},
  journal = {Springer Series in Statistics},
  publisher = {Springer New York},
  author = {Politis,  Dimitris N. and Romano,  Joseph P. and Wolf,  Michael},
  year = {1999}
}

@book{Bradley2007intro,
  author    = {Richard C. Bradley},
  title     = {Introduction to Strong Mixing Conditions},
  year      = {2007},
  publisher = {Kendrick Press},
  address   = {Heber City, UT},
  note      = {Volumes 1--3},
}

@article{Bolthausen1982,
  title = {On the Central Limit Theorem for Stationary Mixing Random Fields},
  volume = {10},
  ISSN = {0091-1798},
  url = {http://dx.doi.org/10.1214/aop/1176993726},
  DOI = {10.1214/aop/1176993726},
  number = {4},
  journal = {The Annals of Probability},
  publisher = {Institute of Mathematical Statistics},
  author = {Bolthausen,  E.},
  year = {1982},
  month = Nov 
}

@book{DedeckerDoukhan2007,
  author    = {J\'{e}r\^{o}me Dedecker and
               Paul Doukhan and
               Gabriel Lang and
               Sana Louhichi and
               Cl{\'e}mentine Prieur},
  title     = {Weak Dependence: With Examples and Applications},
  series    = {Lecture Notes in Statistics},
  volume    = {190},
  publisher = {Springer},
  address   = {New York},
  year      = {2007},
  isbn      = {978-0-387-69951-6},
    url = {http://dx.doi.org/10.1007/978-0-387-69952-3},
  DOI = {10.1007/978-0-387-69952-3}
}

@Manual{dataset,
    title = {R: A Language and Environment for Statistical Computing},
    author = {{R Core Team}},
    organization = {R Foundation for Statistical Computing},
    address = {Vienna, Austria},
    year = {2024},
    url = {https://www.R-project.org/},
  }

@software{Fioravanti2026bclc,
  author  = {Fioravanti, Antonio},
  title   = {bclc: Bias-Corrected Lattice Covariance Estimation},
  year    = {2026},
  version = {0.5.0},
  url     = {https://github.com/antfioravanti/bclc},
  note    = {{R} code}
}

\newpage
\appendix

\section{Proof of Theorem~\ref{prop:single_bias}}
We adopt the following set notation for arbitrary integers $a,b\in\mathbb{Z}$ with $a\leq b$, we write
\begin{align}\label{eq:I_set_definition_proof}
I(a:b):= \{a,\, a+1,\, a+2,\, \ldots,\,b-2,\,b-1,\,b\}
\end{align}
for the set of all integers from $a$ to $b$. We let $I(a:b)= \emptyset $ when $a>b$. Hence, for all $a,b\in\mathbb{Z}$, the cardinality $|I(a:b)|$ of $I(a:b)$ becomes
\begin{equation*}\label{eq:intersetion_set_def}
    \bigl|I(a:b)\bigr| = (b-a+1)^+, \quad \textup{with} \quad x^+:=\max(x,0), \quad \forall x\in \mathbb Z.
\end{equation*}
For integers $a_j\leq b_j$, $j = 1, \dots J$  and for $x\in \mathbb{Z}$, we have that
\begin{align}\label{eq:intersection_id}
    x \in \bigcap_{j = 1}^J I(a_j:b_j)
    \iff \max_{1\leq j\leq J}a_j\leq x\leq \min_{1\leq j\leq J}b_j
    \iff x \in I\Bigl(\max_{j} a_j: \min_{j} b_j\Bigr) .
\end{align}
Then, the cardinality of the intersection in \eqref{eq:intersection_id} is given by
\begin{equation}    \label{eq:cardinality_intervals_joint}
\Bigl| \bigcap_{j = 1}^J I(a_j:b_j)\Bigr|
=
\begin{cases}
\min_{j} b_j-\max_{j} a_j+1, & \max_{j} a_j\leq \min_{j} b_j,\\[2pt]
0, & \max_{j} a_j > \min_{j} b_j,
\end{cases}
= \Bigl(\min_{j} b_j-\max_{j} a_j+1\Bigr)^+ .
\end{equation}

\begin{lemma}[Window count]\label{lemma:count_nu} Let $n \geq 1$, let $J\ge1$ and let $d_1,\dots,d_J\in\mathbb{Z}$ be arbitrary shifts. Then, we have
\begin{equation}
\label{eq:nu_count_lemma}
\nu_n\{d_1, \dots, d_J\}:=\Big|\bigcap_{j =1}^J\bigl\{t\in\mathbb{Z}:\ t+d_j\in I(1:n)\bigr\}\Big|
=\Bigl(n-\max_{j,j'\le J}\bigl|d_j-d_{j'}\bigr|\Bigr)^{+}.
\end{equation}
\end{lemma}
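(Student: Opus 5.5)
Each set $\{t\in\mathbb{Z}: t+d_j\in I(1:n)\}$ is itself an integer interval, so the count follows from the intersection-cardinality formula \eqref{eq:cardinality_intervals_joint}. Concretely, $t+d_j\in I(1:n)$ holds if and only if $1-d_j\le t\le n-d_j$. Hence the $j$-th set equals $I(1-d_j:n-d_j)$, with $a_j=1-d_j$ and $b_j=n-d_j$.

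Applying \eqref{eq:intersection_id} and \eqref{eq:cardinality_intervals_joint}, the intersection is $I\bigl(\max_j(1-d_j):\min_j(n-d_j)\bigr)$, whose cardinality is
\begin{equation*}
\Bigl(\min_j(n-d_j)-\max_j(1-d_j)+1\Bigr)^+ = \Bigl(n-\max_j d_j+\min_j d_j\Bigr)^+ .
\end{equation*}
Here we used $\min_j(n-d_j)=n-\max_j d_j$ and $\max_j(1-d_j)=1-\min_j d_j$.

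It remains to identify $\max_j d_j-\min_j d_j$ with $\max_{j,j'}|d_j-d_{j'}|$. For any $j,j'$ we have $|d_j-d_{j'}|\le \max_j d_j-\min_j d_j$, and equality is attained by taking $j$ at the maximum and $j'$ at the minimum. Substituting gives exactly \eqref{eq:nu_count_lemma}.

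The only point needing care is the empty-intersection case $\max_j a_j>\min_j b_j$. It is already covered by the $(\cdot)^+$ in \eqref{eq:cardinality_intervals_joint}, which was established for arbitrary $a_j\le b_j$ (here $b_j-a_j=n-1\ge0$, so that hypothesis holds). The case $J=1$ is consistent, giving $n$. No step is a real obstacle; the main thing to get right is the sign bookkeeping in $a_j$ and $b_j$.
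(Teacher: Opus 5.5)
Your proof is correct and follows the paper's argument essentially line by line. Both write each constraint set as $I(1-d_j:n-d_j)$, intersect via \eqref{eq:intersection_id}, count with \eqref{eq:cardinality_intervals_joint}, and identify $\max_j d_j-\min_j d_j$ with $\max_{j,j'}|d_j-d_{j'}|$; you also justify that last identity and the hypothesis $a_j\le b_j$ (from $n\ge1$), which the paper leaves implicit.
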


\begin{proof}
    For the fixed $j$-th constraint given by the $j$th shift $d_j$, we have that 
    \[
t+d_j \in I(1:n)\iff1\leq t+d_j \leq n \iff 1-d_j\leq t \leq n-d_j \iff t \in I(1-d_j: n-d_j)
    \]
so that the $j$-th constraint set, is itself a set of the form \eqref{eq:I_set_definition_proof}. The admissible $t$ are therefore the elements that satisfy all $J$ constraints simultaneously, thus we have an intersection of sets such as in \eqref{eq:intersection_id} with $a_j = 1-d_j$ and $b_j = n-d_j$,

\begin{align}\label{eq:window_is_interval}
\bigcap_{j=1}^{J}\bigl\{t\in\mathbb{Z}: t+d_j\in I(1:n)\bigr\}
&=\bigcap_{j=1}^{J}I(1-d_j:n-d_j)=I\Bigl(\max_{j\leq J}(1-d_j): \min_{j\leq J}(n-d_j)\Bigr)\notag\\
&=I\Bigl(1-\min_{j\leq J}d_j: n-\max_{j\leq J}d_j\Bigr).
\end{align}

Then, by \eqref{eq:cardinality_intervals_joint} the cardinality of \eqref{eq:window_is_interval} is
\begin{equation}
\label{eq:window_cardinality}
\begin{aligned}
\Bigl|\bigcap_{j=1}^{J}\bigl\{t\in\mathbb{Z}:t+d_j\in I(1:n)\bigr\}\Bigr|&=\Bigl(\bigl(n-\max_{j\leq J}d_j\bigr)-\bigl(1-\min_{j\leq J}d_j\bigr)+1\Bigr)^{+}\\
&=\Bigl(n-\bigl(\max_{j\leq J}d_j-\min_{j\leq J}d_j\bigr)\Bigr)^{+}= \Bigl(n-\max_{j,j'\leq J}\bigl|d_j-d_{j'}\bigr|\Bigr)^+
\end{aligned}
\end{equation}
\end{proof}

\begin{remark}[Window count on a $2$-dimensional lattice]\label{remark:2dim_count}

For shifts $\un d_1,\dots,\un d_J\in\mathbb{Z}^2$, with $\un d_j=(d_{j,1},d_{j,2})$, let
\begin{equation}
\label{eq:nu_lattice}
\nu_n\bigl\{\un d_1,\dots,\un d_J\bigr\}
:=\Bigl|\bigcap_{j=1}^{J}\bigl\{\tu\in\mathbb{Z}^2:\ \tu+\un d_j\in\nset\bigr\}\Bigr| .
\end{equation}
where $G_{\un n}= I(1:n_1)\times I(1:n_2)$. Since the membership of $\tu+\un d_j\in\nset$ means that we impose one constraint per axis, and the constraints of the two
axes involve disjoint coordinates $\{1, \dots, n_1\}\times \{1, \dots, n_2\}$. The intersection in \eqref{eq:nu_lattice} therefore
factorizes into a Cartesian product, 
\begin{equation}
\label{eq:nu_factorises}
\nu_n\bigl\{\un d_1,\dots,\un d_J\bigr\}
=\prod_{i=1}^{2}\nu_{n_i}\bigl\{d_{1,i},\dots,d_{J,i}\bigr\}
=\prod_{i=1}^{2}\Bigl(n_i-\max_{j,j'\leq J}\bigl|d_{j,i}-d_{j',i}\bigr|\Bigr)^{+} .
\end{equation}
\end{remark}

    

\begin{proof}[Proof of Theorem \ref{prop:single_bias}]
From Assumption~\ref{ass:stationarity}, $\{X_{\un t}:  \tu \in \mathbb{Z}^2\}$ is a second-order stationary random field
with mean $\mu=\E[X_{\un t}]$ and covariance function $C(\un h)=\Cov(X_{\un t+ \un h},\,X_{\un t})$, which only depends on the lag/shift and not on the coordinates on the $(n_1\times n_2)$ grid $G_{\un n}$, where
\begin{equation*}
    G_{\un n} = \{\un t\in \mathbb{Z}^2: 1\leq t_i\leq n_1, \; i=\{1,2\} \} = I(1:n_1)\times I(1:n_2), \quad  \text{with}  |G_{\un n}| = N = n_1n_2, 
\end{equation*}
while the set of all admissible lags is $H_{\un n}$ defined by
\begin{equation*}
    H_{\un n} = \{\un h\in \mathbb{Z}^2: |h_i|\leq n_i-1, \; i=\{1,2\}\}.
\end{equation*}
The subset $G_{\un n}(\un h)$ represents the set of coordinates $\un t$ that admit the shift by a lag $\un h \in H_{\un n}$, that is,
\begin{equation*}
\nhset:=\bigl\{\un t\in \nset : \un t+\un h\in \nset\bigr\}\subseteq \nset,
 \quad |\nhset|= N_h= \prod_{i=1}^{2}(n_i-|h_i|).
\end{equation*}
More generally, for an arbitrary shift $\un k$, we write $G_{\un n}(\un k) := \{\un t\in G_{\un n}: \un t + \un k\in G_{\un n}\}$, with the case that $G_{\un n}(\un 0)= G_{\un n}$ and by Remark~\ref{remark:2dim_count}, $|G_{\un n}(\un k)|=\nu_{\un n}\{\un 0, \un k\}$.

The standard sample covariance estimator at lag $\underline{h}\in H_{\un n}$ by
\begin{equation*}
\widehat{C}(\underline{h}) = \frac{1}{N_h}\sum_{\tu \in \nhset}(X_{\tu +\hu}- \Xbar)(X_{\tu}- \Xbar)
\label{eq:prop_displayChat}
\end{equation*}
with the sample mean defined 
\begin{equation*}
\Xbar=\frac{1}{N}\sum_{\uu \in \nset}X_{\un u}=\frac{1}{N}\sum_{u_1=1}^{n_1}\sum_{u_2=1}^{n_2}X_{u_1,u_2}
\end{equation*}
Now we expand the product and take expectations
\begin{eqnarray}
\E\bigl[\widehat{C}(\underline{h})\bigr]
&=& 
  \underbrace{\frac{1}{N_h}
\sum_{\tu \in \nhset}\E\bigr[X_{\tu +\hu}X_{\tu}\bigl]
  }_{\text{=:C}}
- \underbrace{\frac{1}{NN_h}\sum_{\un t \in \nhset}\sum_{\uu \in \nset}
\E\bigr[X_{\tu +\hu}X_{\uu}\bigl]
  }_{\text{=:A}}\notag\\
&&- \underbrace{
\frac{1}{N_hN}
\sum_{\tu \in \nhset}\sum_{\uu \in \nset}\E\bigr[X_{\tu}X_{\uu}\bigl]
  }_{\text{=:B}}
  + \underbrace{\frac{1}{N^2}
\sum_{\uu, \vu \in  \nset}\E\bigr[X_{\uu}X_{\vu}\bigl]
  }_{\text{=:D}}.
\label{eq:ABCD}
\end{eqnarray}
Term C is a double summation and is immediate: since the summand $\E[X_{\un t+\un h}\,X_{\un t}]$ is constant in $\un t$, there are exactly $N_h$ terms, which cancel with the prefactor $1/N_h$ of the summation and we obtain
\begin{equation*}
\text{C} = C(\un h) + \mu^2.
\label{eq:termI_result}
\end{equation*}
The inner sum over $\uu$ is $N\Xbar$, so
\begin{equation}\label{eq:ABD_as_Xbar}
\text{A}=\frac{1}{N_h}\sum_{\tu\in\nhset}\E\bigl[X_{\tu+\hu}\Xbar\bigr],\quad
\text{B}=\frac{1}{N_h}\sum_{\tu\in\nhset}\E\bigl[X_{\tu}\Xbar\bigr],\quad
\text{D}=\E\bigl[\Xbar^{\,2}\bigr].
\end{equation}
Using $\E[X_{\pu}\Xbar]=\Cov(X_{\pu},\Xbar)+\mu^{2}$ and $\E[\Xbar^{2}]=\Var(\Xbar)+\mu^{2}$, we have that each of the four terms in \eqref{eq:ABCD}
has exactly one $\mu^{2}$ such that they cancel. Hence, $\E[\widehat{C}(\underline{h})]$ does not depend on $\mu$.
Then, by subtracting $C(\hu)$, term C cancels and we obtain 
\begin{align}
b(\hu)=\E\bigl[\widehat C(\hu)\bigr]-C(\hu)
= -\frac{1}{N_h}\sum_{\tu \in \nhset}
   \bigl[\Cov(X_{\tu+\hu}, \Xbar)+ \Cov(X_{\tu}, \Xbar)\bigr]+\Var(\Xbar),
\label{eq:bias_cov_form1}
\end{align}
which is the bias representation \eqref{eq:biasautocov_explicit_mainprop}.

To get also the reduced form, note that terms A, B and D are fourfold summations. We want to exploit the stationarity property to reduce them to a double summation over a new lag index $\un r \in H_{\un n}$. However, in practice, we need to count how many times the new lag $\un r$ appears in the new double summation. We proceed term by term.

\paragraph{Term A} By stationarity, 
\begin{equation}\label{eq:A_opened}
\textup{A}-\mu^2=\frac{1}{NN_h}\sum_{\un t\in\nhset}\sum_{\uu\in\nset}\E\bigl[X_{\tu+\hu}X_{\uu}\bigr]=\frac{1}{NN_h}\sum_{\tu\in\nhset}\sum_{\uu\in\nset}C\bigl(\uu-\tu-\hu\bigr).
\end{equation}
We apply a change of argument by fixing $\un t$ and setting
 \begin{equation*}
\ru:=\uu-\tu-\hu
\quad\Longleftrightarrow\quad
\uu=\tu+\hu+\ru
\quad\Longrightarrow\quad
\bigl(\uu\in\nset\iff \tu+\hu+\ru\in\nset\bigr).
\end{equation*}
The site $\tu+\hu+\ru$ is $\uu$ rewritten, namely $\uu\mapsto\ru$ is a bijection of $\nset$
onto $\{\ru:\tu+\hu+\ru\in\nset\}$, so the inner sum of \eqref{eq:A_opened} is
\begin{equation*}\label{eq:A_inner}
\sum_{\uu\in\nset}C\bigl(\uu-\tu-\hu\bigr)
=\sum_{\ru\in\mathbb{Z}^{2}}C(\ru)\mathbbm{1}\{\tu+\hu+\ru\in\nset\}.
\end{equation*}
Exchanging the two summations in \eqref{eq:A_opened}, for one with $\un r$ as index,
\begin{equation}\label{eq:A_counted}
\sum_{\tu\in\nhset}\sum_{\uu\in\nset}C\bigl(\uu-\tu-\hu\bigr)
=\sum_{\ru\in\mathbb{Z}^{2}}C(\ru)
\bigl|\bigl\{\tu: \underbrace{\tu\in\nset, \tu+\hu\in\nset}_{\tu\,\in\,\nhset}, \underbrace{\tu+\hu+\ru\in\nset}_{\uu\in\nset}\bigr\}\bigr|.
\end{equation}
The three constraints in $\eqref{eq:A_counted}$ are the shifts $\un 0, \un h, \un h+\un r$ of $\un t$ and we apply the count function from Lemma~\ref{lemma:count_nu}.
By Remark~\ref{remark:2dim_count}, the cardinality factorizes on the axis dimensions and on axis $i$ the shifts $0, h_i, h_i+r_i$ have pairwise absolute distance $|h_i|, |h_i+r_i|, |r_i|$, such that Lemma~\ref{lemma:count_nu} gives

\begin{equation}\label{eq:A_weight}
\nu_{\un n}\{\un 0, \un h, \un h +\un r\}
=\prod_{i=1}^{2}\bigl(n_i-\max(|r_i|,|h_i|,|r_i+h_i|)\bigr)^{+}
=:\prod_{i=1}^{2}\nu_{n_i}(r_i,h_i),
\end{equation}
which vanishes as soon as $|r_i|\ge n_i$; the sum over $\ru$ is therefore a sum over
$\lagset$. Combining \eqref{eq:A_opened}--\eqref{eq:A_weight}
\begin{equation}\label{eq:termA_final}
\frac{1}{N_h}\sum_{\tu\in\nhset}\Cov\bigl(X_{\tu+\hu},\Xbar\bigr)
=\frac{1}{NN_h}\sum_{\ru\in\lagset}C(\ru)\prod_{i=1}^{2}\nu_{n_i}(r_i,h_i)
=\frac{S(\hu)}{NN_h}.
\end{equation}

\paragraph{Term B}
By stationarity,
\begin{equation}\label{eq:B_opened}
\textup{B}-\mu^2=\frac{1}{N_h}\sum_{\tu\in\nhset}\Cov\bigl(X_{\tu},\Xbar\bigr)
=\frac{1}{NN_h}\sum_{\tu\in\nhset}\sum_{\uu\in\nset}C\bigl(\uu-\tu\bigr).
\end{equation}
The term differs from \eqref{eq:A_opened} only in being indexed by $\tu$ instead of $\tu+\hu$. 
Thus, with $\ru = \uu -\tu$, it follows in the same steps as before, that

\begin{equation}\label{eq:B_weight}
\nu_{\un n}\{\un 0,\hu,\ru\}
=\prod_{i=1}^{2}\bigl(n_i-\max(|r_i|,|h_i|,|r_i-h_i|)\bigr)^{+}
=\prod_{i=1}^{2}\nu_{n_i}(r_i,-h_i),
\end{equation}
where the last equality is the definition \eqref{eq:A_weight} of $\nu_{n_i}$ evaluated at
$-h_i$.
The weight again vanishes as soon as $|r_i|\ge n_i$, so the sum over $\ru$ is a sum
over $\lagset$, thus we get
\begin{equation}\label{eq:termB_final}
\frac{1}{N_h}\sum_{\tu\in\nhset}\Cov\bigl(X_{\tu},\Xbar\bigr)
=\frac{1}{NN_h}\sum_{\ru\in\lagset}C(\ru)\prod_{i=1}^{2}\nu_{n_i}(r_i,-h_i)
=\frac{S(-\hu)}{NN_h}.
\end{equation}

\paragraph{Term D}
By
stationarity,
\begin{equation}\label{eq:D_opened}
\textup{D}-\mu^2=\Var(\Xbar)
=\frac{1}{N^{2}}\sum_{\vu\in\nset}\sum_{\uu\in\nset}C\bigl(\uu-\vu\bigr).
\end{equation}
Here both summation sites are free in $\nset$ and there is no lag $\hu$ to respect. In a similar fashion, with $\ru = \uu-\vu$, we obtain

\begin{equation}\label{eq:D_weight}
\nu_{\un n}\{\un 0,\ru\}
=\prod_{i=1}^{2}\bigl(n_i-|r_i|\bigr)^{+}
=\prod_{i=1}^{2}\nu_{n_i}(r_i,0),
\end{equation}
which is $N_{\ru}$ for  $\ru\in\lagset$,
the middle equality being the definition \eqref{eq:A_weight} at $h_i=0$.
Hence, combining
\eqref{eq:D_opened}--\eqref{eq:D_weight}, leads to
\begin{equation}\label{eq:termD_final}
\Var(\Xbar)
=\frac{1}{N^{2}}\sum_{\ru\in\lagset}C(\ru)\prod_{i=1}^{2}\nu_{n_i}(r_i,0)
=\frac{S(\un 0)}{N^{2}} .
\end{equation}

The three weights obtained in \eqref{eq:A_weight}, \eqref{eq:B_weight} and
\eqref{eq:D_weight} are the same counting factor evaluated at $h_i$, at $-h_i$ and at
$0$, so we define the counting factor with the arguments the new index $r_i$ and the lag $h_i$
\begin{align}
\nu_{n_i}(r_i,h_i) := \bigl(n_i-\max(|r_i|,|h_i|,|r_i+h_i|)\bigr)^{+},\quad i\in \{1,2\},
\label{eq:counting_factor_proof}
\end{align}
which is $\eqref{eq:counting_factor}$.
\eqref{eq:termA_final}, \eqref{eq:termB_final} and
\eqref{eq:termD_final} are $S(\hu)$, $S(-\hu)$ and $S(\un 0)$ as defined in
\eqref{eq:S-def}. Inserting them into the bias representation
\eqref{eq:bias_cov_form1} gives
\begin{equation}\label{eq:bias_reduced}
b(\hu)=-\left[\frac{S(\hu)}{NN_h}+\frac{S(-\hu)}{NN_h}\right]+\frac{S(\un 0)}{N^{2}}
=-\frac{S(\hu)+S(-\hu)}{NN_h}+\frac{S(\un 0)}{N^{2}},
\end{equation}
which is \eqref{eq:biasautocov} and completes the proof.
\end{proof}

\begin{proof}[Proof of Corollary~\ref{cor:single_bias_const}]
The proof follows by applying a simple prefactor change $N_h/N$ to connect the expectations of the two estimators and the definition of the bias of the standard estimator in Proposition~\ref{prop:single_bias}, such that 
\begin{equation}
\label{eq:connection_standard_constant}
   \frac{N_h}{N}\E\left(\widehat{C}(\hu)\right) =  \E\left(\widetilde{C}(\hu)\right)=\frac{N_h}{N}\left(C(\hu) +  b(\hu)\right)
\end{equation}
Subtracting $C(\hu)$ on both sides of the equation gives
\eqref{eq:biasautocov_constant}.
\end{proof}

\section{Proof of Theorem \ref{prop:joint_bias}}
Before proving Theorem \ref{prop:joint_bias}, we need the following lemma.

\begin{lemma}[Entrywise bounds on the bias weights]\label{lemma:entrywise}
For all $\un{h},\un{r}\in H_{\un{n}}$,
\begin{equation}\label{eq:entrywise}
\bigl|\bigl[\mathbf{W}^{*}_{\un{n}}-\mathbf{I}_{L_{\un{n}}}\bigr]_{\un{h},\un{r}}\bigr|
\leq\frac{1}{N}.
\end{equation}
The bound is attained at $\un r = \un 0$ where $\bigl[\mathbf{W}_{\un n}^*- \mathbf{I}_{L_{\un{n}}}\bigr]_{\un{h},\un{0}}=-1/N$ for every
$\un{h}\in H_{\un{n}}$, so that
\begin{equation}\label{eq:maxnorm_weight}
\bigl\lVert\mathbf{W}^{*}_{\un{n}}-\mathbf{I}_{L_{\un{n}}}\bigr\rVert_{\max}
=\frac{1}{N}.
\end{equation}
\end{lemma}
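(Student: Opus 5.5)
From the Kronecker structure in Theorem~\ref{prop:joint_bias}, the entry of $\mathbf{W}^*_{\un n}-\mathbf{I}_{L_{\un n}}$ in row $\un h$ and column $\un r$ is
\[
\bigl[\mathbf{W}^{*}_{\un n}-\mathbf{I}_{L_{\un n}}\bigr]_{\un h,\un r}
=-\prod_{i=1}^2\frac{\nu_{n_i}(r_i,h_i)}{n_i(n_i-|h_i|)}-\prod_{i=1}^2\frac{\nu_{n_i}(r_i,-h_i)}{n_i(n_i-|h_i|)}+\prod_{i=1}^2\frac{\nu_{n_i}(r_i,0)}{n_i^2}.
\]
I will write this as $-a-b+d$ with $a,b,d\ge 0$, and reduce everything to one-dimensional counting inequalities for $\nu_{n}(r,h)$.

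\textbf{Step 1 (elementary bounds).} Since $\max(|r|,|h|,|r+h|)\ge |h|$, we have $\nu_n(r,\pm h)\le n-|h|$. Hence each factor of $a$ and $b$ is at most $1/n_i$, so $a,b\le 1/N$. Also $\nu_n(r,0)=(n-|r|)^+\le n$, so $d\le 1/N$. This already gives $-2/N\le -a-b+d\le 1/N$. The upper bound is therefore done, and the whole difficulty lies in the lower bound $a+b-d\le 1/N$.

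\textbf{Step 2 (key inequality, the main obstacle).} I claim that in each axis
\[
\nu_n(r,h)+\nu_n(r,-h)\le (n-|h|)+\frac{(n-|h|)(n-|r|)^+}{n}.
\]
The plan is to assume without loss of generality that $h\ge0$ and $r\ge0$, using the symmetry $\nu_n(-r,-h)=\nu_n(r,h)$. In that case $\nu_n(r,h)=(n-r-h)^+$ and $\nu_n(r,-h)=(n-\max(r,h))^+$. I would then split into the cases $r\le h$ and $r>h$ and verify the inequality as a piecewise-linear check.

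This alone does not give the two-dimensional statement, because $a$ and $b$ are products. Writing $\alpha_i,\beta_i,\delta_i\in[0,1/n_i]$ for the normalized one-dimensional factors, the claim becomes
\[
\alpha_1\alpha_2+\beta_1\beta_2-\delta_1\delta_2\le \frac{1}{n_1n_2}.
\]
Here $\delta_i=(n_i-|r_i|)^+/n_i^2$. I would try to prove a sharper one-dimensional fact, namely that $\nu_n(r,h)\,\nu_n(r,-h)$ relates to $\nu_n(r,0)$ suitably. In particular I expect $\min(\alpha_i,\beta_i)\le \delta_i$, which I would verify from the explicit formulas above. If instead $r$ and $h$ have opposite signs, the roles of $\alpha_i$ and $\beta_i$ simply swap.

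Granting $\min(\alpha_i,\beta_i)\le\delta_i$ together with $\max(\alpha_i,\beta_i)\le 1/n_i$, the two-dimensional bound follows by cases. Suppose $\alpha_1\le\delta_1$. If also $\beta_2\le \delta_2$, then
\[
\alpha_1\alpha_2+\beta_1\beta_2\le \frac{\delta_1}{n_2}+\frac{\delta_2}{n_1},
\]
and $\frac{\delta_1}{n_2}+\frac{\delta_2}{n_1}-\delta_1\delta_2\le \frac{1}{n_1n_2}$ holds because $\bigl(\tfrac1{n_1}-\delta_1\bigr)\bigl(\tfrac1{n_2}-\delta_2\bigr)\ge0$. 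If instead $\alpha_2\le \delta_2$, the same trick is used with the roles of the axes rearranged. If the per-axis minimizers do not align favorably, I will check that case by a direct rectangle-inequality argument of the same product type.

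\textbf{Step 3 (attainment).} At $\un r=\un 0$, every $\nu_{n_i}(0,\pm h_i)=n_i-|h_i|$ and $\nu_{n_i}(0,0)=n_i$. This gives $a=b=d=1/N$, so the entry equals $-1/N$ for every $\un h\in H_{\un n}$. Combined with Step 2, this yields $\lVert\mathbf{W}^*_{\un n}-\mathbf{I}_{L_{\un n}}\rVert_{\max}=1/N$.
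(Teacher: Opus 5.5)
Your proposal is essentially the paper's proof. It uses the same per-axis factors with $\max(\alpha_i,\beta_i,\delta_i)\le 1/n_i$ and $\min(\alpha_i,\beta_i)\le\delta_i$, the upper bound $-a-b+d\le d\le 1/N$, the same two-case lower bound via $(1/n_1-\delta_1)(1/n_2-\delta_2)\ge 0$, and the same attainment at $\un r=\un 0$.

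You should write out the crux you only ``expect'': $\min(\nu_n(r,h),\nu_n(r,-h))=(n-|r|-|h|)^+$, and $n(n-|r|-|h|)\le (n-|r|)(n-|h|)$ because the difference is $|r||h|\ge 0$. With that in place your case split is exhaustive; the aligned case is simply $\alpha_1\alpha_2\le\delta_1\delta_2$ and $\beta_1\beta_2\le 1/N$. The sum inequality and the hedge about misaligned minimizers can therefore be dropped.
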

\begin{proof}    
Fix $\un{h},\un{r}\in H_{\un{n}}$ so that
$|h_i|,|r_i|\leq n_i-1$ for $i \in \{1,2\}$. In particular, as $n_i-|h_i|\geq 1$, all
denominators below are positive, and $\bigl(n_i-|r_i|\bigr)^{+}=n_i-|r_i|$. For $i\in\{1,2\}$,
we define the entries of the normalized counting matrices as
\begin{equation}\label{eq:entries_labeled_countingmatrices}
a_i:=[\mathbf{A}^{*}_i]_{h_i,r_i}
=\frac{\nu_{n_i}(r_i,h_i)}{n_i(n_i-|h_i|)},\quad
b_i:=[\mathbf{B}^{*}_i]_{h_i,r_i}
=\frac{\nu_{n_i}(r_i,-h_i)}{n_i(n_i-|h_i|)},\quad
d_i:=[\mathbf{D}^{*}_i]_{h_i,r_i}
=\frac{\nu_{n_i}(r_i,0)}{n_i^2},
\end{equation}
where
$\nu_{n_i}(r_i,h_i)=\bigl(n_i-\max(|r_i|,|h_i|,|r_i+h_i|)\bigr)^{+}$
by \eqref{eq:counting_factor}. By the definition of the weight matrix $\mathbf{W}_{\un n}^*$ from \eqref{eq:W_n_star}, the entries of interest are
\begin{equation}\label{eq:entry_formula}
\bigl[\mathbf{W}^{*}_{\un{n}}-\mathbf{I}_{L_{\un{n}}}\bigr]_{\un{h},\un{r}}
=-a_1a_2-b_1b_2+d_1d_2.
\end{equation}
We use two elementary properties of $a_i$, $b_i$, $d_i$, namely
\begin{equation}\label{eq:properties_entries}
0\leq a_i,\,b_i,\,d_i\leq\frac{1}{n_i}
\quad\text{and}\quad
\min(a_i,b_i)\leq d_i
\end{equation}
that are argued in the following. By construction, each element is a normalized count thus it is nonnegative and also $\nu_{n_i}(r_i,\pm h_i)\leq n_i-|h_i|$, which gives $a_i,b_i\leq 1/n_i$. While, $\nu_{n_i}(r_i,0)=n_i-|r_i|\leq n_i$ gives $d_i\leq 1/n_i$. This
shows the first part of \eqref{eq:properties_entries}.

For the second part note that 
for any $r, h \in \mathbb Z$,
\begin{equation}\label{eq:shiftmax}
    \max\bigl(|r+h|, |r-h|\bigr)= |r|+|h|\implies \max\bigl(|r|, |h|, |r+h|, |r-h|\bigr) = |r|+|h|.
\end{equation}
Since $\min(x^+,y^+)=(\min(x,y))^+$, the minimum between the two counting factors is 
\begin{align*}
\min\bigl(\nu_{n_i}(r_i,h_i),\nu_{n_i}(r_i,-h_i)\bigr)
&=\min\Bigl(
\bigl(n_i-\max\{|r_i|,|h_i|,|r_i+h_i|\}\bigr)^{+},
\bigl(n_i-\max\{|r_i|,|h_i|,|r_i-h_i|\}\bigr)^{+}
\Bigr)\\
&=\bigl(n_i-\max\{|r_i|,|h_i|,|r_i+h_i|,|r_i-h_i|\}\bigr)^{+}\\
&=\bigl(n_i-|r_i|-|h_i|\bigr)^{+}.
\end{align*}
using \eqref{eq:shiftmax} in the last step.
We want to show that 
\begin{equation}\label{eq:min_ab_d}
\min(a_i,b_i)
=\frac{(n_i-|r_i|-|h_i|)^{+}}{n_i(n_i-|h_i|)}
\le\frac{n_i-|r_i|}{n_i^2}
=d_i.
\end{equation}
If $n_i-|r_i|-|h_i|<0$ the left-hand side vanishes and the inequality is trivial.
Otherwise, multiplying both sides by $n_i^{2}(n_i-|h_i|)>0$, we have $n_i\bigl(n_i-|r_i|-|h_i|\bigr)\leq(n_i-|r_i|)(n_i-|h_i|)$, and taking their difference
\begin{align*}
(n_i-|r_i|)(n_i-|h_i|)-n_i\bigl(n_i-|r_i|-|h_i|\bigr)
= |r_i||h_i| \;\ge\;0 ,
\end{align*}
which proves \eqref{eq:min_ab_d} and thus the second part of \eqref{eq:properties_entries}.

\noindent\emph{Upper Bound.} As $a_1a_2,\;b_1b_2\ge0$, we obtain the upper bound 
\begin{equation}\label{eq:upper_bound}
\bigl[\mathbf{W}^{*}_{\un{n}}-\mathbf{I}_{L_{\un{n}}}\bigr]_{\un{h},\un{r}}
\le d_1d_2
\le\frac1{n_1n_2}
=\frac1N.
\end{equation}

\noindent\emph{Lower Bound.} 
From \eqref{eq:entry_formula}, the lower bound $-a_1a_2-b_1b_2+d_1d_2\geq-\frac1N $ is equivalent to
\begin{equation}\label{eq:lower_bound_target}
a_1a_2+b_1b_2\leq \frac1N+d_1d_2,
\end{equation}
By the second part of \eqref{eq:properties_entries}, for each $i$ at least
one of $a_i,b_i$ is bounded by $d_i$. Since \eqref{eq:lower_bound_target} is
invariant under interchanging $(a_1,a_2)$ and $(b_1,b_2)$ this leaves the following two cases.

\begin{enumerate}[label=(\roman*)]
\item Suppose that $a_1\leq d_1$ and $a_2\leq d_2$.
Then $a_1a_2\leq d_1d_2$,
whereas $b_i\leq 1/n_i$, hence $b_1b_2\leq 1/(n_1n_2)=1/N$,
which gives \eqref{eq:lower_bound_target}
The case $b_1\leq d_1$ and $b_2\le d_2$ follows in the same way.

\item Suppose that $a_1\le d_1$ and
$b_2\le d_2$. Since $a_2\le1/n_2$ and $b_1\le1/n_1$, respectively,
\[
a_1a_2\le d_1a_2\leq\frac{d_1}{n_2},
\quad
b_1b_2\le b_1d_2\leq\frac{d_2}{n_1},
\]
so that it suffices to show 
\[
a_1a_2+b_1b_2\leq \frac{d_1}{n_2}+\frac{d_2}{n_1}\leq d_1d_2+\frac1N.
\]
Since $d_i=(n_i-|r_i|)/n_i^2$ and $N=n_1n_2$, multiplication by the
positive quantity $n_1^2n_2^2$ transforms this inequality into
\[
n_2(n_1-|r_1|)+n_1(n_2-|r_2|)
\le n_1n_2+(n_1-|r_1|)(n_2-|r_2|).
\]
The difference between the right-hand and left-hand sides is
$|r_1||r_2|\ge0$. The case $b_1\le d_1$ and $a_2\le d_2$ is identical.
\end{enumerate}
Hence \eqref{eq:lower_bound_target} holds in both cases, which together with
\eqref{eq:upper_bound} proves \eqref{eq:entrywise}..

\noindent For the attainment, let
$\un{r}=\un{0}$. Then, for $i \in \{1,2\}$,
\[
\nu_{n_i}(0,h_i)=\nu_{n_i}(0,-h_i)=n_i-|h_i|,
\qquad
\nu_{n_i}(0,0)=n_i,
\]
so that $a_i=b_i=d_i=1/n_i$
and \[
\bigl[\mathbf{W}_{\un n}^*-\mathbf{I}_{L_{\un n}}\bigr]_{\un{h},\un{0}}=-\frac{1}{N}-\frac{1}{N}+\frac{1}{N}
=-\frac{1}{N}, \quad \un{h}\in H_{\un{n}}
\]
so the bound~\eqref{eq:entrywise}
is attained. This proves~\eqref{eq:maxnorm_weight}.
\end{proof}

\begin{proof}[Proof of Theorem\ref{prop:joint_bias}]
While the first part of Theorem \ref{prop:joint_bias} are direct consequences of the formulas derived before stating the results, it remains to show the rank deficiency of $\mathbf{W}_{\underline{n}}^*$ and the bound in the following.
For this purpose, we begin by noting a few properties of the counting matrices  \[\mathbf{A}_i,\mathbf{B}_i,\mathbf{D}_i\in\mathbb{R}^{L_i\times L_i},\quad \text{with}\quad L_i=2n_i-1
   \] 
and their normalized counterparts 
   \[
\mathbf{A}_i^*=\mathbf{N}_{h,i}^{-1}\mathbf{A}_i,\qquad
\mathbf{B}_i^*=\mathbf{N}_{h,i}^{-1}\mathbf{B}_i,\qquad
\mathbf{D}_i^*=n_i^{-2}\mathbf{D}_i, \quad \text{for}\; i =1,2
\]
where 
\[
\mathbf{N}_{h,i}:=\mathrm{diag}\!\big(N_{h,1,i},\dots,N_{h,L_i,i}\big),\qquad
N_{h,p,i}=n_i\bigl(n_i-|p-n_i|\bigr).
\]
First, as they are counting matrices, $\mathbf{A}_i, \mathbf{B}_i$ (respectively $\mathbf{A}_i^*, \mathbf{B}_i^*$) have non-negative entries by construction. Additionally, $\mathbf{D}_i$ (respectively $\mathbf{D}_i^*$) has only strictly positive entries.

By construction, the $p$-th row–sum of $\mathbf{A}_i$ and $\mathbf{B}_i$ equals $n_i(n_i-|p-n_i|$, while the $p$-th row sum of $\mathbf{D}_i$ is $n_i^2$. This means that the normalization matrices $\mathbf{A}_i^*,\mathbf{B}_i^*, \mathbf{D}_i^* $ have row sums equal to 1 and thus are row–stochastic. Hence
\[
\mathbf{A}_i^*\eones_i=\eones_i,\qquad \mathbf{B}_i^*\eones_i=\eones_i,\qquad \mathbf{D}_i^*\eones_i=\eones_i,
\]
where $\eones_i=(1,1,\dots, 1)$ is the $L_i \times 1$ vector of ones. We define the weight matrix $\mathbf{W}^*$ as a linear mapping
\[W^*: \mathbb{R}^{L_1\times L_2}\to\mathbb{R}^{L_1\times L_2}, \quad W^*(\mathbf{X})= \mathbf{X} - \mathbf{A}_1^*\mathbf{X}\mathbf{A}_2^{*\top} -\mathbf{B}_1^*\mathbf{X}\mathbf{B}_2^{*\top}+\mathbf{D}_1^*\mathbf{X}\mathbf{D}_2^{*\top}. \]

In particular, for the all-ones matrix $\eones_1\eones_2^\top$ of size $L_1\times L_2$,
\begin{align*}
\mathbf{A}^*\eones_1\eones_2^\top\mathbf{A}^{*\top} &= \bigl(\mathbf{A}^*\eones_1\bigr)\bigl(\mathbf{A}^*\eones_2\bigr)^\top = \eones_1\eones_2^\top\\
\mathbf{B}^*\eones_1\eones_2^\top\mathbf{B}^{*\top} &= \bigl(\mathbf{B}^*\eones_1\bigr)\bigl(\mathbf{B}^*\eones_2\bigr)^\top = \eones_1\eones_2^\top\\
\mathbf{D}^*\eones_1\eones_2^\top\mathbf{D}^{*\top} &= \bigl(\mathbf{D}^*\eones_1\bigr)\bigl(\mathbf{D}^*\eones_2\bigr)^\top = \eones_1\eones_2^\top
\end{align*}
so we have
\[
W^*(\eones_1\eones_2^\top) = \eones_1\eones_2^\top - \eones_1\eones_2^\top- \eones_1\eones_2^\top + \eones_1\eones_2^\top = \mathbf{0}.
\]
Thus, $\textup{span}\bigl(\eones_1\eones_2^\top\bigr)$ is in the nullspace of the linear operator $W^*$ and we have $\textup {dim}(\textup{ker}\bigl(W^*\bigr))\geq 1$.
   Hence, 
\begin{align*}
\text{rank}\bigl(\mathbf{W}^*\bigr) = L_{\un n} - \text{dim}\,\textup{ker}\;(W^*)\implies \text{rank}\bigl(\mathbf{W}^*\bigr) \leq L_{\un n} -1.
\end{align*}
Finally, by 
\[b_{\vect(\widehat{\boldsymbol{\Gamma}})} := \E\big(\vect(\widehat{\boldsymbol{\Gamma}})\big) - \vect(\boldsymbol{\Gamma}) =  \left(\mathbf{W}_{\underline{n}}^*-\mathbf{I}_{L_{\un n}}\right)\vect(\boldsymbol{\Gamma}),\]
and Lemma~\ref{lemma:entrywise}, we apply the Hölder inequality\footnote{Row-wise Hölder inequality $\lVert\mathbf{M}v\rVert_\infty\leq\lVert\mathbf{M}\rVert_{\max}\lVert v\rVert_1$} such that
\[
\bigl\lVert b_{\vect(\widehat{\boldsymbol{\Gamma}})}\bigr\rVert_\infty
=\bigl\lVert\bigl(\mathbf{W}_{\un n}^*-\mathbf{I}_{L_{\un n}}\bigr)\vect(\boldsymbol{\Gamma})\bigr\rVert_\infty
\leq\bigl\lVert\mathbf{W}_{\un n}^*-\mathbf{I}_{L_{\un n}}\bigr\rVert_{\max}\,
\bigl\lVert\vect(\boldsymbol{\Gamma})\bigr\rVert_1
=\frac{1}{N}\bigl\lVert\vect(\boldsymbol{\Gamma})\bigr\rVert_1,
\]
which is the bound \eqref{eq:naive_bias_bound}. Under Assumption~\ref{ass:abs_summability} it is bounded uniformly.

\end{proof}

\begin{proof}[Proof of Corollary~\ref{cor:joint_bias_const}]
The scaling \eqref{eq:connection_standard_constant} 
$N_h/N\E\left(\widehat{C}(\hu)\right) =  \E\left(\widetilde{C}(\hu)\right)$
holds entrywise, i.e., $\E(\widetilde{\boldsymbol{\Gamma}})_{p,q}=N_h/N\E(\widehat{\boldsymbol{\Gamma}})_{p,q}$.
By the definition of the Bartlett window matrices
\begin{equation*}
    \boldsymbol{\Delta}_i := \diag\left(1-\frac{|p-n_i|}{n_i},\; p=1, \dots, L_i\right) = \diag\left(1-\frac{|h_i|}{n_i},\; h_i \in I(1-n_i:n_i-1)\right),\quad i \in \{1,2\},
\end{equation*}
where we have the factorization 
\begin{equation}
\label{eq:barlett_window_matrices_proof}
    \frac{N_h}{N}=\frac{n_1-|h_1|}{n_1}\cdot\frac{n_2-|h_2|}{n_2}  = \left[\boldsymbol{\Delta}_1\right]_{p,p}\left[\boldsymbol{\Delta}_2\right]_{q,q}, \quad (h_1,h_2) = (p-n_1,q-n_2),
\end{equation}
we obtain 
\[\E(\widetilde{\boldsymbol{\Gamma}})=\boldsymbol{\Delta}_1\E(\widehat{\boldsymbol{\Gamma}})\boldsymbol{\Delta}_2.\]
By vectorizing the above and using the identity $\vect(\boldsymbol{\Delta}_1\mathbf{X}\boldsymbol{\Delta}_2)=(\boldsymbol{\Delta}_2\otimes\boldsymbol{\Delta}_1)\vect(\mathbf{X})$ as well as $\E(\vect(\widehat{\boldsymbol{\Gamma}})) = \mathbf{W}_{\underline{n}}^*\vect(\boldsymbol{\Gamma})$, we get \[\E(\vect(\widetilde{\boldsymbol{\Gamma}}))=(\boldsymbol{\Delta}_2\otimes\boldsymbol{\Delta}_1)\mathbf{W}_{\underline{n}}^*\vect(\boldsymbol{\Gamma}),\] 
which is \eqref{eq:W_n_tilde}. 

Finally, since $\boldsymbol{\Delta}_i$ are diagonal with positive entries $(n_i-|h_i|)/n_i>0$, for $|h_i|<n_i$, so is their Kronecker product $\boldsymbol{\Delta}_2\otimes \boldsymbol{\Delta}_1$, and since $\mathbf{W}_{\underline{n}}^*\eones=\underline{\mathbf 0}$ this implies $\widetilde{\mathbf{W}}_{\underline{n}}^*\eones=(\boldsymbol{\Delta}_2\otimes\boldsymbol{\Delta}_1)\mathbf{W}_{\underline{n}}^*\eones=\underline{\mathbf 0}$, so the null space of $\widetilde{\mathbf{W}}_{\underline{n}}^*$ is contained in that of $\mathbf{W}_{\underline{n}}^*$.
    
\end{proof}

\section{Proof of Theorem \ref{prop:covofcov}}

\begin{proof}[Proof of Theorem~\ref{prop:covofcov}]
Each summand of the standard estimator \eqref{eq:sampleAutocov} is a product
of two deviations from the sample mean, so that
\begin{equation}\label{eq:bilinear}
\Cov\bigl(\widehat C(\hu),\widehat C(\gu)\bigr)
=\frac{1}{N_{\hu}N_{\gu}}\sum_{\tu\in\nhset}\ \sum_{\su\in\ngset}
\Cov\Bigl(\bigl(X_{\tu+\hu}-\Xbar\bigr)\bigl(X_{\tu}-\Xbar\bigr),\;
          \bigl(X_{\su+\gu}-\Xbar\bigr)\bigl(X_{\su}-\Xbar\bigr)\Bigr),
\end{equation}
by the bilinearity of the covariance.  Fix $\tu\in\nhset$ and $\su\in\ngset$.
Under Assumption~\ref{ass:gaussianity}, the four centered variables in  \eqref{eq:bilinear} are jointly Gaussian. They have constant mean, because
$\E(X_{\tu})=\E(\Xbar)=\mu$ under Assumption~\ref{ass:stationarity}; and their
pairwise covariances are, by the definition \eqref{eq:C_centered} of the
centered covariance function, the values $\Cctr$ evaluated at the
corresponding pairs of locations.  Isserlis' theorem
\citep{Isserlis1918} applied to their fourth moment therefore gives three
pairings, of which the first cancels against the product of expectations in
the covariance, leaving the two cross pairings
\begin{align}
&\Cov\Bigl(\bigl(X_{\tu+\hu}-\Xbar\bigr)\bigl(X_{\tu}-\Xbar\bigr),\;
          \bigl(X_{\su+\gu}-\Xbar\bigr)\bigl(X_{\su}-\Xbar\bigr)\Bigr)
\notag\\[2pt]
&\qquad=\Cov\bigl(X_{\tu+\hu}-\Xbar,\,X_{\su+\gu}-\Xbar\bigr)\,
        \Cov\bigl(X_{\tu}-\Xbar,\,X_{\su}-\Xbar\bigr)\notag\\
&\qquad\quad+\Cov\bigl(X_{\tu+\hu}-\Xbar,\,X_{\su}-\Xbar\bigr)\,
        \Cov\bigl(X_{\tu}-\Xbar,\,X_{\su+\gu}-\Xbar\bigr)\notag\\[2pt]
&\qquad=\Cctr(\tu+\hu,\su+\gu)\,\Cctr(\tu,\su)
       +\Cctr(\tu+\hu,\su)\,\Cctr(\tu,\su+\gu).
\label{eq:isserlis_step}
\end{align}
Inserting \eqref{eq:isserlis_step} into \eqref{eq:bilinear} proves the centered
form \eqref{eq:cov_of_cov}.
 
For the lag form \eqref{eq:cov_of_cov_lag}, we apply the form given by
\eqref{eq:C_centered}, that is by
$\Cov(X_{\tu},X_{\su})=C(\su-\tu)$,, as in \eqref{eq:covariance}. Then, with the shift
$\du:=\su-\tu$, the four lags are
\begin{alignat*}{3}
(\su+\gu)-(\tu+\hu)&=\du+\gu-\hu,&\qquad
\su-\tu&=\du,\\
\su-(\tu+\hu)&=\du-\hu,&\qquad
(\su+\gu)-\tu&=\du+\gu,
\end{alignat*}
so that
\begin{align}
\Cctr(\tu+\hu,\su+\gu)&=C(\du+\gu-\hu)-\Cbar(\tu+\hu)-\Cbar(\su+\gu)+D,
\label{eq:CZ1}\\
\Cctr(\tu,\su)&=C(\du)-\Cbar(\tu)-\Cbar(\su)+D,
\label{eq:CZ2}\\
\Cctr(\tu+\hu,\su)&=C(\du-\hu)-\Cbar(\tu+\hu)-\Cbar(\su)+D,
\label{eq:CZ3}\\
\Cctr(\tu,\su+\gu)&=C(\du+\gu)-\Cbar(\tu)-\Cbar(\su+\gu)+D.
\label{eq:CZ4}
\end{align}
Substituting \eqref{eq:CZ1}--\eqref{eq:CZ4} into \eqref{eq:cov_of_cov} yields
\eqref{eq:cov_of_cov_lag}.  Finally, neither $C$, nor $\Cbar$, nor $D$ depends
on $\mu$, so neither does the right-hand side of
\eqref{eq:cov_of_cov}--\eqref{eq:cov_of_cov_lag}.
\end{proof}

\begin{proof}[Proof of Corollary~\ref{cor:cov_const}]
By \eqref{eq:connection_standard_constant}, $\widetilde C(\hu)=N_{\hu}/N\,\widehat C(\hu)$
with $N_{\hu}/N$ deterministic, so the bilinearity of the covariance yields
\[
\Cov\bigl(\widetilde C(\hu),\widetilde C(\gu)\bigr)
=\frac{N_{\hu}N_{\gu}}{N^2}\;
\Cov\bigl(\widehat C(\hu),\widehat C(\gu)\bigr).
\]
Substituting the
representation \eqref{eq:cov_of_cov} of Proposition~\ref{prop:covofcov}, whose
prefactor is $1/(N_{\hu}N_{\gu})$, the factors $N_{\hu}N_{\gu}$ cancel and leave
$1/N^{2}$, which is \eqref{eq:cov_of_cov_const}.
\end{proof}

\section{Proof of Theorem~\ref{prop:bias_of_bce}}
In this Appendix, we prove Theorem~\ref{prop:bias_of_bce} regarding the finite-sample bias of the bias-corrected estimator and its bound. For this purpose, first, we prove in Lemma~\ref{lemma:invertibility_W_nm} the invertibility of the weight matrix $\mathbf{W}_{\un n, \un m}^*$
when the sufficient condition $M_{\un n}/N<1$ is satisfied. 

\begin{proof}[Proof of Lemma~\ref{lemma:invertibility_W_nm}]
    Let $\mathbf{S}_{\un m}$ be the selection matrix as in \eqref{eq:selection_matrices} with $\mathbf{S}_{\un m}\mathbf{S}_{\un m}^\top$ and the reduced weight matrix $\mathbf{S}_{\un m}\mathbf{W}_{\un m, \un n}^{*\top}\mathbf{S}_{\un m}$ as in \eqref{eq:Wnm_star}, hence we have
\begin{equation}\label{eq:truncated_difference}
\mathbf{W}^{*}_{\un{n},\un{m}}-\mathbf{I}_{M_{\un{m}}}
=\mathbf{S}_{\un{m}}\bigl(\mathbf{W}^{*}_{\un{n}}-\mathbf{I}_{L_{\un{n}}}\bigr)
\mathbf{S}_{\un{m}}^{\top},
\end{equation}
so that every entry of $\mathbf{W}^{*}_{\un{n},\un{m}}-\mathbf{I}_{M_{\un{m}}}$ is an entry of $\mathbf{W}_{\un n}^*-\mathbf{I}_{L_{\un n}}$and is
bounded in absolute value by $1/N$ by Lemma~\ref{lemma:entrywise}. Since each row and each column in $\mathbf{W}^{*}_{\un{n},\un{m}}-\mathbf{I}_{M_{\un{m}}}$ has $M_{\un{m}}$ entries, we have the bounds in \eqref{eq:bounds_lemma_W_I}.

Suppose that $M_{\un m}/N<1$. By Lemma 2.3.3 of \citet{golub2013matrix} (4th edition), since
$\lVert \mathbf{I}_{M_{\un{m}}}-\bigl(\mathbf{I}_{M_{\un{m}}}-\mathbf{W}^{*}_{\un{n},\un{m}}\bigr)\rVert_{\infty}=\lVert\mathbf{W}^{*}_{\un{n},\un{m}}-\mathbf{I}_{M_{\un{m}}}\rVert_{\infty}
\leq M_{\un{m}}/N<1$ and $\lVert\cdot\rVert_{\infty}$ is submultiplicative, the
Neumann series applied to
$\mathbf{W}^{*}_{\un{n},\un{m}}
=\mathbf{I}_{M_{\un{m}}}-\bigl(\mathbf{I}_{M_{\un{m}}}-\mathbf{W}^{*}_{\un{n},\un{m}}\bigr)$
shows that $\mathbf{W}^{*}_{\un{n},\un{m}}$ is nonsingular with
\begin{align}
\bigl\lVert(\mathbf{W}^{*}_{\un n,\un m})^{-1}\bigr\rVert_\infty
&\leq\frac{1}{1-\bigl\lVert\mathbf{W}^{*}_{\un n,\un m}-\mathbf{I}_{M_{\un m}}\bigr\rVert_\infty}
\leq\frac{1}{1-(M_{\un m}/N)}=\frac{N}{N-M_{\un m}}\\
\bigl\lVert(\mathbf{W}^{*}_{\un n,\un m})^{-1}-\mathbf{I}_{M_{\un m}}\bigr\rVert_\infty
&\leq\frac{\bigl\lVert\mathbf{W}^{*}_{\un n,\un m}-\mathbf{I}_{M_{\un m}}\bigr\rVert_\infty}
{1-\bigl\lVert\mathbf{W}^{*}_{\un n,\un m}-\mathbf{I}_{M_{\un m}}\bigr\rVert_\infty}
\leq\frac{M_{\un m}/N}{1-(M_{\un m}/N)}= \frac{M_{\un m}}{N-M_{\un m}}.
\end{align}
which completes the proof.
\end{proof}

\begin{proof}[Proof of Theorem~\ref{prop:bias_of_bce}] By applying the expectation to \eqref{eq:bias_corrected_estimator} then,
\begin{align*}
\E\bigl(\widehat{\un C}^{*}_{\un m}\bigr)&=\bigl(\mathbf{W}^{*}_{\un n,\un m}\bigr)^{-1}
   \Bigl(\mathbf{W}^{*}_{\un n,\un m}\,\un C_{\un m}+\mathbf{W}^{*,c}_{\un n,\un m}\,\un C^{c}_{\un m}\Bigr)=\un C_{\un m}+\bigl(\mathbf{W}^{*}_{\un n,\un m}\bigr)^{-1}\mathbf{W}^{*,c}_{\un n,\un m}\,\un C^{c}_{\un m},
\end{align*}
Then subtracting $\widehat{\un C}_{\un m}$ on both sides gives \eqref{eq:bias_corrected_m}.\\
\emph{Bound.} By Lemma~\ref{lemma:invertibility_W_nm} and applying the Hölder inequality we have 
\begin{align*}
\bigl\lVert\underline{b}^{*}_{\un m}\bigr\rVert_\infty
&=\bigl\lVert(\mathbf{W}^{*}_{\un n,\un m})^{-1}
   \mathbf{W}^{*,c}_{\un n,\un m}\underline{C}^{c}_{\un m}\bigr\rVert_\infty
\leq\bigl\lVert(\mathbf{W}^{*}_{\un n,\un m})^{-1}\bigr\rVert_\infty
   \bigl\lVert\mathbf{W}^{*,c}_{\un n,\un m}\underline{C}^{c}_{\un m}\bigr\rVert_\infty\leq\frac{N}{N-M_{\un m}}\bigl\lVert\mathbf{W}^{*,c}_{\un n,\un m}\bigr\rVert_{\max}
\bigl\lVert\underline{C}^{c}_{\un m}\bigr\rVert_1\\
&\leq\Big(\frac{N}{N-M_{\un m}}\Big)\frac{1}{N}
   \sum_{\un h\in W^c_{\un m}}\bigl|C(\un h)\bigr| =\frac{1}{N-M_{\un m }}\bigl\lVert\underline{C}^{c}_{\un m}\bigr\rVert_1
\end{align*}
\end{proof}

\begin{proof}[Proof of Corollary~\ref{cor:coincide}]
Given that the deterministic scaling $\widetilde{C}(h_1,h_2) = (N_h/N)\cdot\widehat{C}(h_1,h_2)$ from \eqref{eq:connection_standard_constant} holds entrywise in $\vect(\widetilde{\boldsymbol{\Gamma}})$, as we have that $\vect(\widetilde{\boldsymbol{\Gamma}}) = (\boldsymbol{\Delta}_2\otimes \boldsymbol{\Delta}_1)\vect(\widehat{\boldsymbol{\Gamma}})$ and by \eqref{eq:selection_matrices} and \eqref{eq:bartlett_commuting} we have
\[
\vect(\widetilde{\boldsymbol{\Gamma}}_{\underline{m}})
=\mathbf{S}_{\underline{m}}(\boldsymbol{\Delta}_2\otimes\boldsymbol{\Delta}_1)\vect(\widehat{\boldsymbol{\Gamma}})
=(\boldsymbol{\Delta}_{2,m_2}\otimes\boldsymbol{\Delta}_{1,m_1})
\mathbf{S}_{\underline m}\vect(\widehat{\boldsymbol{\Gamma}})=(\boldsymbol{\Delta}_{2,m_2}\otimes\boldsymbol{\Delta}_{1,m_1})\,\vect(\widehat{\boldsymbol{\Gamma}}_{\underline{m}}).
\]
Substituting this together with \eqref{eq:Wnm_tilde} into \eqref{eq:bias_corrected_estimator_const}, and using that the Kronecker product  of the diagonal Bartlett window matrices $\boldsymbol{\Delta}_{2,m_2}\otimes\boldsymbol{\Delta}_{1,m_1}$ is invertible, since its diagonal entries are $(n_i-|h_i|)/n_i>0$ for $|h_i|\leq m_i<n_i$), we have 
\begin{align*}
\widetilde{\underline{C}}_{\underline{m}}^{*}
&=\bigl[(\boldsymbol{\Delta}_{2,m_2}\otimes\boldsymbol{\Delta}_{1,m_1})\,\mathbf{W}_{\underline{n},\underline{m}}^{*}\bigr]^{-1}
(\boldsymbol{\Delta}_{2,m_2}\otimes\boldsymbol{\Delta}_{1,m_1})\,\vect(\widehat{\boldsymbol{\Gamma}}_{\underline{m}})\\
&=\bigl(\mathbf{W}_{\underline{n},\underline{m}}^{*}\bigr)^{-1}(\boldsymbol{\Delta}_{2,m_2}\otimes\boldsymbol{\Delta}_{1,m_1})^{-1}(\boldsymbol{\Delta}_{2,m_2}\otimes\boldsymbol{\Delta}_{1,m_1})\,\vect(\widehat{\boldsymbol{\Gamma}}_{\underline{m}})
=\bigl(\mathbf{W}_{\underline{n},\underline{m}}^{*}\bigr)^{-1}\vect(\widehat{\boldsymbol{\Gamma}}_{\underline{m}})
=\widehat{\underline{C}}_{\underline{m}}^{*}.
\end{align*}
\end{proof}

\section{Proof of Proposition \ref{prop:asymptotics_without_and_with_bias_correction}}

By exploiting a common truncation argument (see, e.g., \citet{brockwell1991}, Proposition 7.3.4), for $\un q := (q_1,q_2)\in\mathbb{N}^2$, we define the truncated process $\mathbf{X}^{(\qu)}=\{X_{\tu}^{(\qu)}, \tu\in \mathbb{Z}^2\}$ by
\begin{align*}
X_{\tu}=\mu+\sum_{j_1=-q_1}^{q_1}\sum_{j_2=-q_2}^{q_2}\psi_{\un j}\epsilon_{\tu - \un j},    \quad   (\tu)\in\mathbb{Z}^2.
\end{align*}
Then, by exploiting the $m$-dependence property (in two-dimensions) of $\mathbf{X}^{(\qu)}$, it is straightforward to show that 
\begin{align*}
\sqrt{N}\left(\widehat{\underline{C}}_m^{(\qu)}-\underline{C}_m^{(\qu)}\right)\overset{d}{\rightarrow}\mathcal{N}\left(\underline{0},V(\widehat{\underline{C}}_m^{(\qu)})\right),
\end{align*}
where the entries of $V(\widehat{\underline{C}}_m^{(\qu)})$ are of the form
\begin{align*}
V(\widehat{\underline{C}}_m^{(\qu)})\big(\hu,\gu\big)
=& (\eta - 3)C^{(\qu)}(\hu)C^{(\qu)}(\gu) + \sum_{\du\in \mathbb Z^2} \Big\{C^{(\qu)}(\du)C^{(\qu)}(\du+\gu-\hu)\Big\}  \\
& + \sum_{\du\in \mathbb Z^2} \Big\{C^{(\qu)}(\du+\gu)C^{(\qu)}(\du-\hu)\Big\},
\end{align*}
where $C^{(\qu)}(\du)$ denotes the corresponding covariances of lag $\du$ of the truncated process $\mathbf{X}^{(\qu)}$. Hence, as $\mathbf{X}^{(\qu)}$ is $m$-dependent (in two-dimensions with $\un m=(m_1,m_2)$ and $m_i=2q_i+1$, in contrast to $C(\du)$ and $V(\widehat{\underline{C}}_m)$, where all sums are infinite, all sums for $C^{(\qu)}(\du)$ and $V(\widehat{\underline{C}}_m^{(\qu)})$ become finite. Further, by letting $\qu\rightarrow \infty$, we get $V(\widehat{\underline{C}}_m^{(\qu)})\rightarrow V(\widehat{\underline{C}}_m)$ and, finally, by using standard arguments (i.e., the absolute summability of the sequence $\{\psi_{\un j},\un j\in\mathbb{Z}^2\}$) and showing that for all $\delta>0$, we have
\begin{align*}
    \lim_{\qu\rightarrow\infty} \limsup_{n_1,n_2\rightarrow \infty} P\left(\left|\sqrt{N}\left(\widehat{\underline{C}}_m-\underline{C}_m\right)-\sqrt{N}\left(\widehat{\underline{C}}_m^{(\qu)}-\underline{C}_m^{(\qu)}\right)\right|>\delta\right)=0,
\end{align*}
the proof of $\sqrt{N}\left(\widehat{\underline{C}}_m-\underline{C}_m\right)\overset{d}{\rightarrow}\mathcal{N}\left(\underline{0},V(\widehat{\underline{C}}_m)\right)$ is complete. Further, for showing $\sqrt{N}\left(\widehat{\underline{C}}_m^*-\underline{C}_m\right)\overset{d}{\rightarrow}\mathcal{N}\left(\underline{0},V(\widehat{\underline{C}}_m)\right)$, using \eqref{eq:bias_corrected_estimator}, we get
\begin{align*}
\sqrt{N}\left(\widehat{\underline{C}}_m^*-\underline{C}_m\right) =& \sqrt{N}\left(\bigl(\mathbf{W}_{\underline{n},m}^{*}\bigr)^{-1}\widehat{\underline{C}}_m-\widehat{\underline{C}}_m+\widehat{\underline{C}}_m-\underline{C}_m\right) \\
=& \sqrt{N}\left(\bigl(\mathbf{W}_{\underline{n},m}^{*}\bigr)^{-1}-I_{M_{\un m}}\right)\widehat{\underline{C}}_m+\sqrt{N}\left(\widehat{\underline{C}}_m-\underline{C}_m\right).
\end{align*}
Hence, in view of the asymptotic normality result derived above
and the consistency of $\widehat{\underline{C}}_m$, that is, $\widehat{\underline{C}}_m\rightarrow \underline{C}_m$, it remains to show
\begin{align*}
\sqrt{N}\left(\bigl(\mathbf{W}_{\underline{n},m}^{*}\bigr)^{-1}-I_{M_{\un m}}\right)=o(1).
\end{align*}
From 
\begin{align*}
I_{M_{\un m}} = \mathbf{W}_{\underline{n},m}^{*}\bigl(\mathbf{W}_{\underline{n},m}^{*}\bigr)^{-1}    
\end{align*}
and 
\begin{equation*}
\mathbf{W}_{\underline{n},m}^{*}
=
\mathbf{I}_{M_{\un m}}
-
\mathbf{A}_{2,m}^{*}\otimes \mathbf{A}_{1,m}^{*}
-
\mathbf{B}_{2,m}^{*}\otimes \mathbf{B}_{1,m}^{*}
+
\mathbf{D}_{2,m}^{*}\otimes \mathbf{D}_{1,m}^{*},
\end{equation*}
we get
\begin{align}
& \sqrt{N}\left(\bigl(\mathbf{W}_{\underline{n},m}^{*}\bigr)^{-1}-I_{M_{\un m}}\right)   \notag   \\
=& -\sqrt{N}\left(-
\mathbf{A}_{2,m}^{*}\otimes \mathbf{A}_{1,m}^{*}
-
\mathbf{B}_{2,m}^{*}\otimes \mathbf{B}_{1,m}^{*}
+
\mathbf{D}_{2,m}^{*}\otimes \mathbf{D}_{1,m}^{*}\right)\bigl(\mathbf{W}_{\underline{n},m}^{*}\bigr)^{-1}.   \label{eq:W_nm_inv_calculation}
\end{align}
Finally, for all $i \in \{1,2\}$, note that all entries of $\mathbf{A}_{i,m}^{*}$, $\mathbf{B}_{i,m}^{*}$ and $\mathbf{D}_{i,m}^{*}$ are of order $O(1/n_i)$, respectively, due to the application of the selection matrices $\mathbf{P}_{i,m}$ to $\mathbf{A}_i^{*}$, $\mathbf{B}_i^{*}$ and $\mathbf{D}_i^{*}$. Hence, the Kronecker products $\mathbf{A}_{2,m}^{*}\otimes \mathbf{A}_{1,m}^{*}$, $\mathbf{B}_{2,m}^{*}\otimes \mathbf{B}_{1,m}^{*}$ and $\mathbf{D}_{2,m}^{*}\otimes \mathbf{D}_{1,m}^{*}$ are of order $O(1/N)$. Hence, \eqref{eq:W_nm_inv_calculation} becomes
\begin{align*}
-\sqrt{N} O\left(\frac{1}{N}\right)O(1) = O\left(\frac{1}{\sqrt{N}}\right),
\end{align*}
which completes the proof.  \hfill $\square$

\section{Supplementary figures}\label{sec:additional_plots}

In Figure~\ref{fig:masks10}, a heatmap for each spatial dependence level is shown, each related to observations in a grid of size $(10\times 10)$. In the axis, the $(h_1, h_2)$-lag pairs are represented. Colored cells indicates if the bias-correction estimator $ \widehat C_{\un m}^*$ has achieved a better MSE than the standard estimator $\widehat{ C}$, and, if so, it also shows which $m$ values achieve the best possible MSE reduction for each specific lag pair. Recall that the higher $m$, the more covariances are used in the weighting of the bias-corrected estimator and therefore the more bias it may reduce. We see that especially for high levels of spatial dependence, i.e., for $\lambda \in \{6, 8\}$, the bias corrected method can afford to use higher values of $m$ and thus is able to correct more of the bias. The crossed out, colored cells, corresponding to larger lag pairs, indicate cases where a better MSE is achieved by simply tapering the estimator, i.e., by setting $\widehat{C}(\un h):=0$. This behavior is typical especially for processes with short-range spatial dependence, where the true covariance function decays rapidly to zero, such that the estimation noise dominates any small signal. Consequently, tapering to zero these entries reduces the variance more than the bias it introduces, leading to an overall improvement in the MSE. Finally the cells that are left in white (without crosses) have a better MSE either with the standard estimation or with the tapered to zero estimation, but in those cases, bias correction performs worse than either of those two options.
Figure~\ref{fig:masks20} shows the same MSE heatmap, but for a grid of size $(20\times 20)$. On this larger lattice, the symmetry implied by the chosen (separable) covariance model becomes more clearly visible. Also here the optimal choice of $m$ for the same lag-pair increases together with the strength of the spatial dependence.

\begin{figure}[t]
\begin{subfigure}{\textwidth}
  \centering
 \begin{subfigure}{0.495\linewidth}
    \centering
    \includegraphics[width=\linewidth]{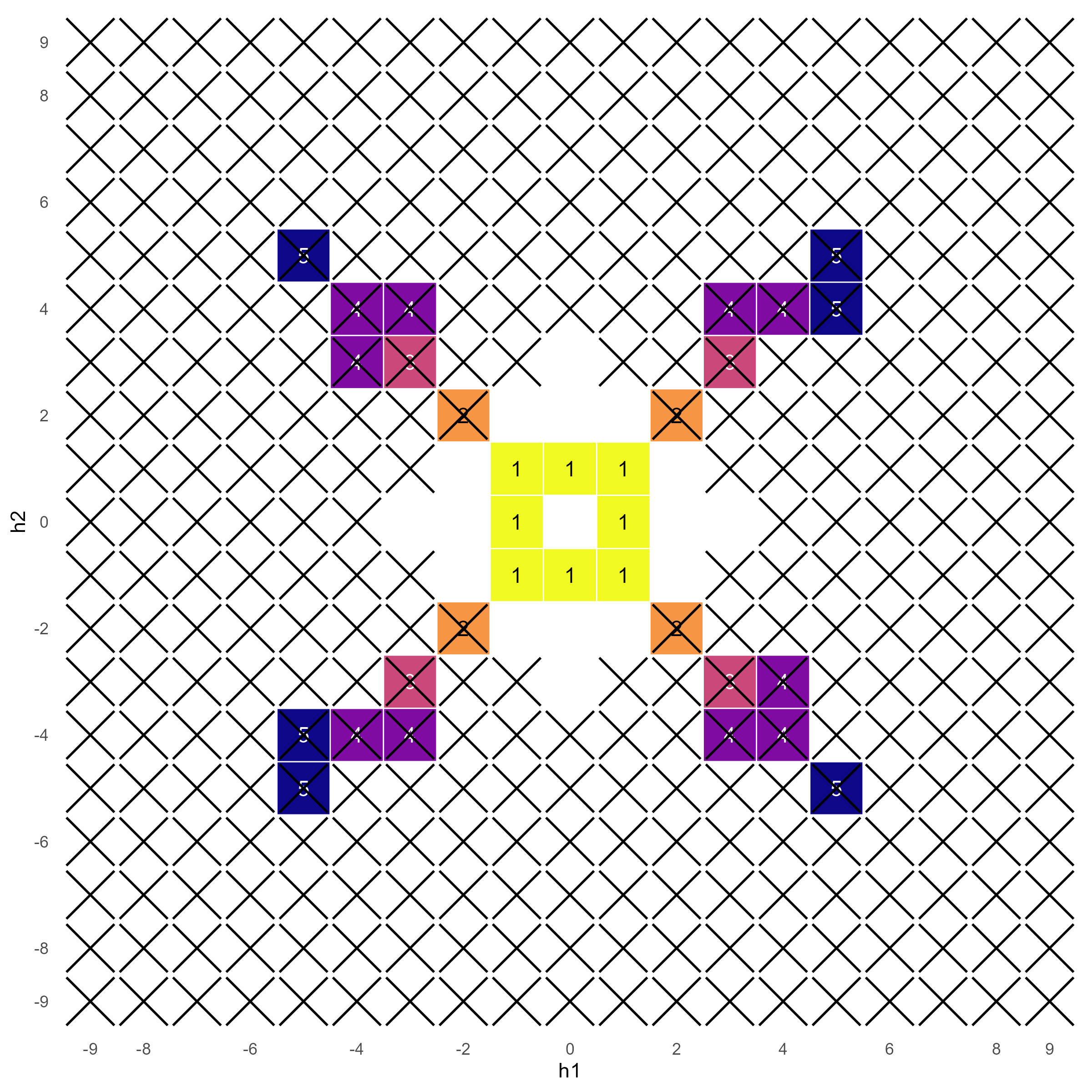}
    \caption{$\lambda=2$}
  \end{subfigure}\hfill
  \begin{subfigure}{0.495\linewidth}
    \centering
    \includegraphics[width=\linewidth]{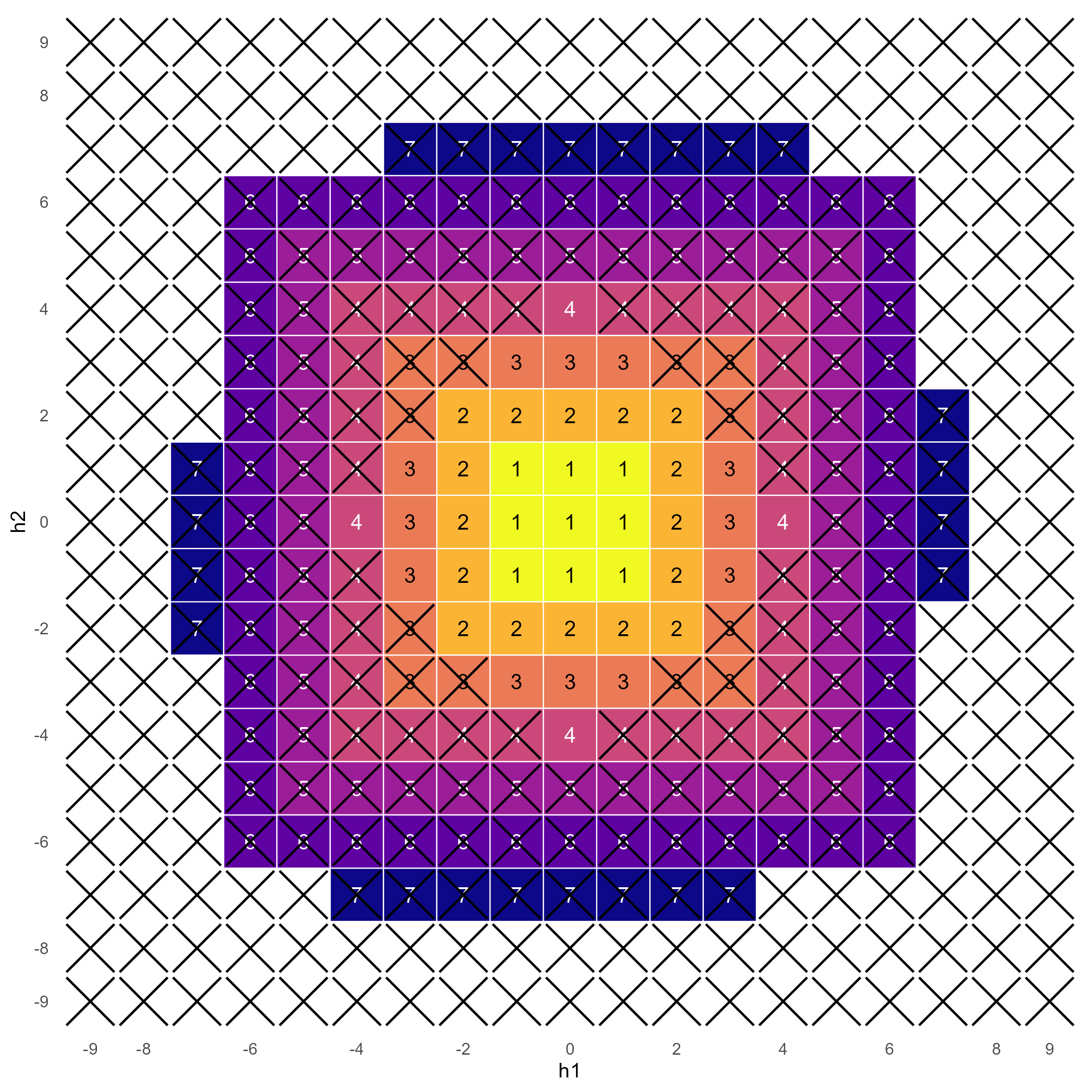}
    \caption{$\lambda=4$}
  \end{subfigure}

  \medskip

  \begin{subfigure}{0.495\linewidth}
    \centering
    \includegraphics[width=\linewidth]{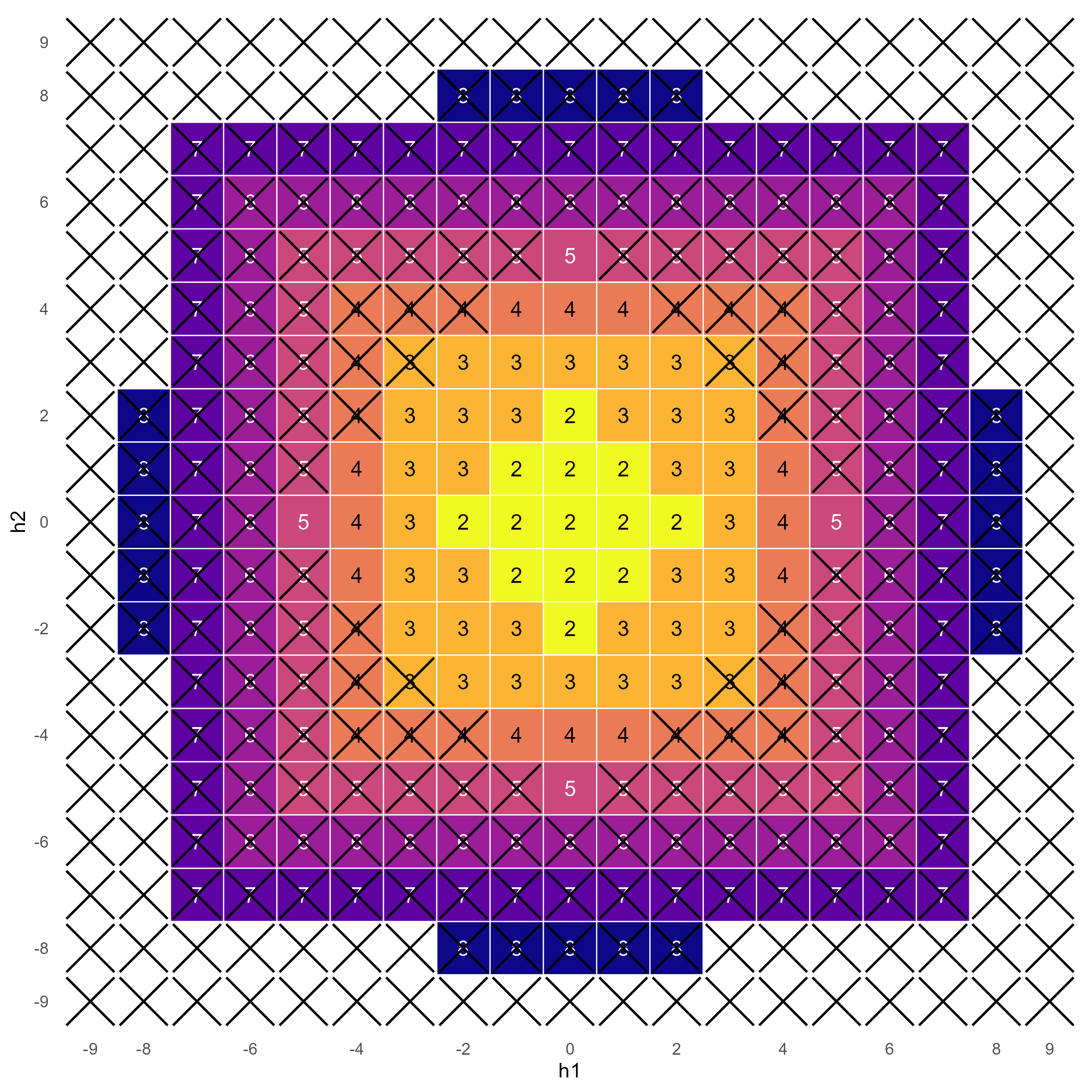}
    \caption{$\lambda=6$}
  \end{subfigure}\hfill
  \begin{subfigure}{0.495\linewidth}
    \centering
    \includegraphics[width=\linewidth]{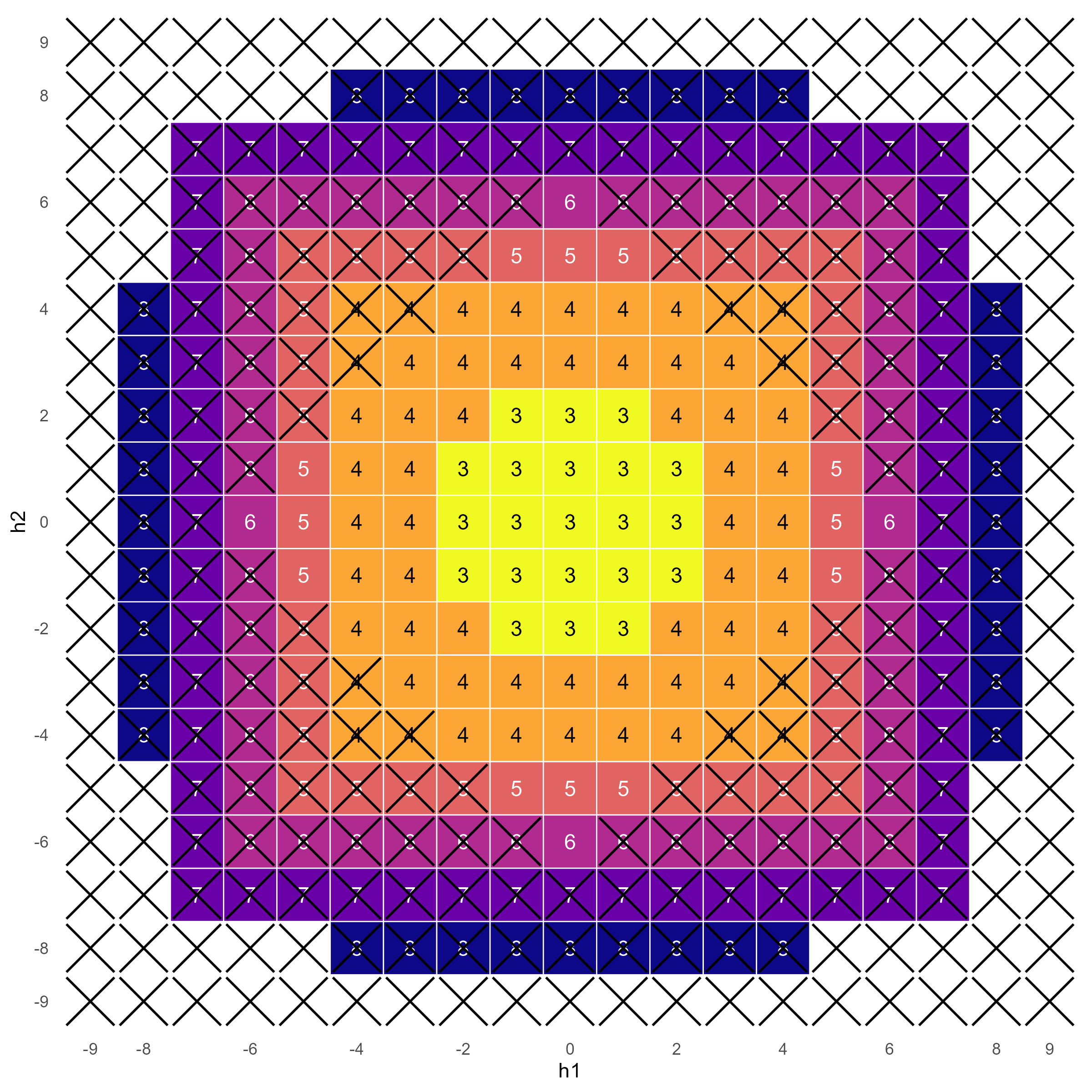}
    \caption{$\lambda=8$}
  \end{subfigure}
\end{subfigure}

\caption{Values of $m$ for which $\widehat{\underline{C}}_{m}^{*}$ achieves the best MSE reduction compared to the naive estimator $\widehat{C}$. Crossed-out cells indicate that tapering to zero  ($\widehat C(h_1,h_2):=0$) yields the best MSE. Grid size $n_1=n_2=10$.}
\label{fig:masks10}
\end{figure}

\begin{figure}[h!]
    \centering
\begin{subfigure}{0.495\textwidth}
  \centering
  \includegraphics[width=\linewidth]{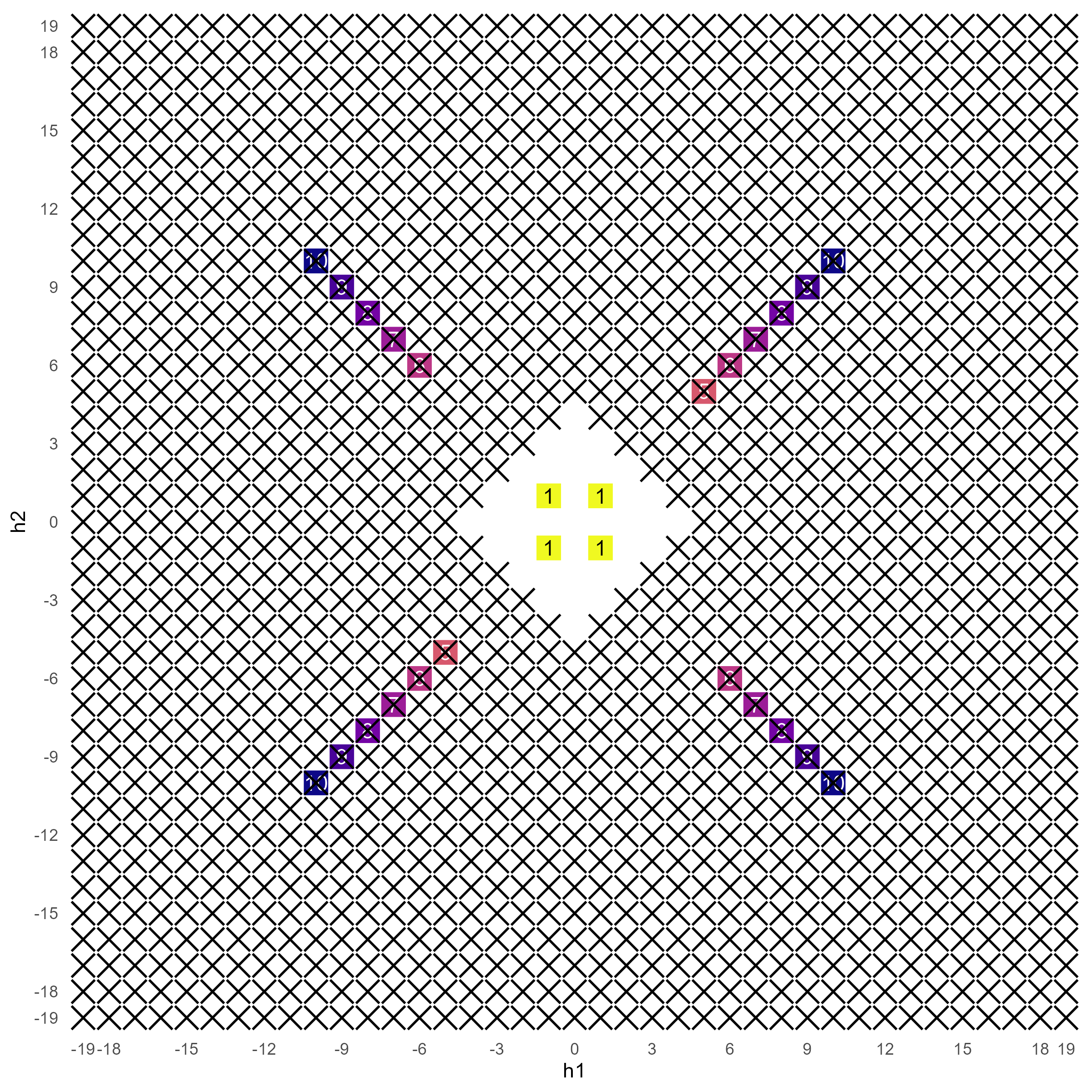}
  \caption{ $\lambda=2$}
  \label{fig:truecov-l2}
\end{subfigure}\hfill
\begin{subfigure}{0.495\textwidth}
  \centering
  \includegraphics[width=\linewidth]{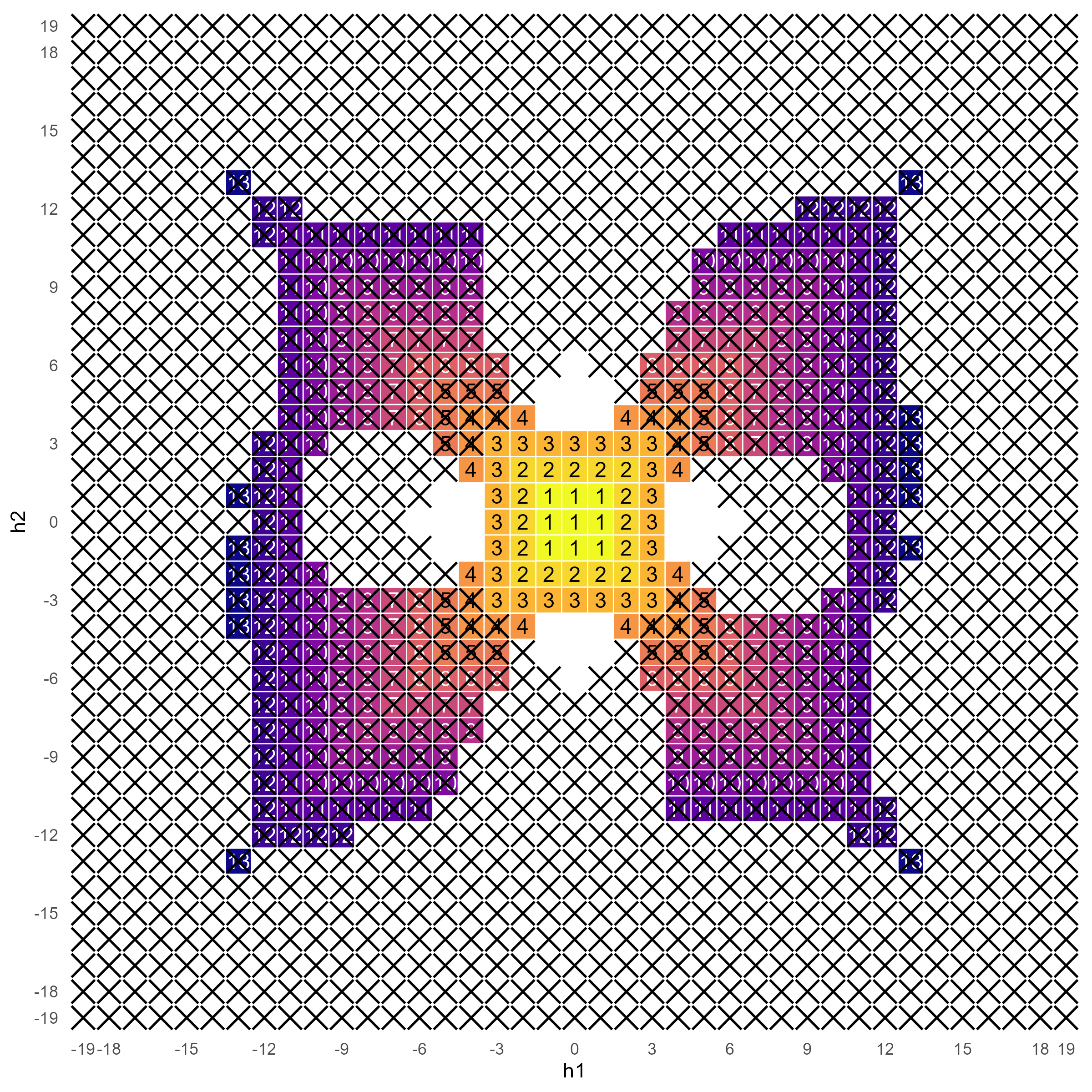}
  \caption{$\lambda=4$}
  \label{fig:truecov-l4}
\end{subfigure}
\begin{subfigure}{0.495\textwidth}
  \centering
  \includegraphics[width=\linewidth]{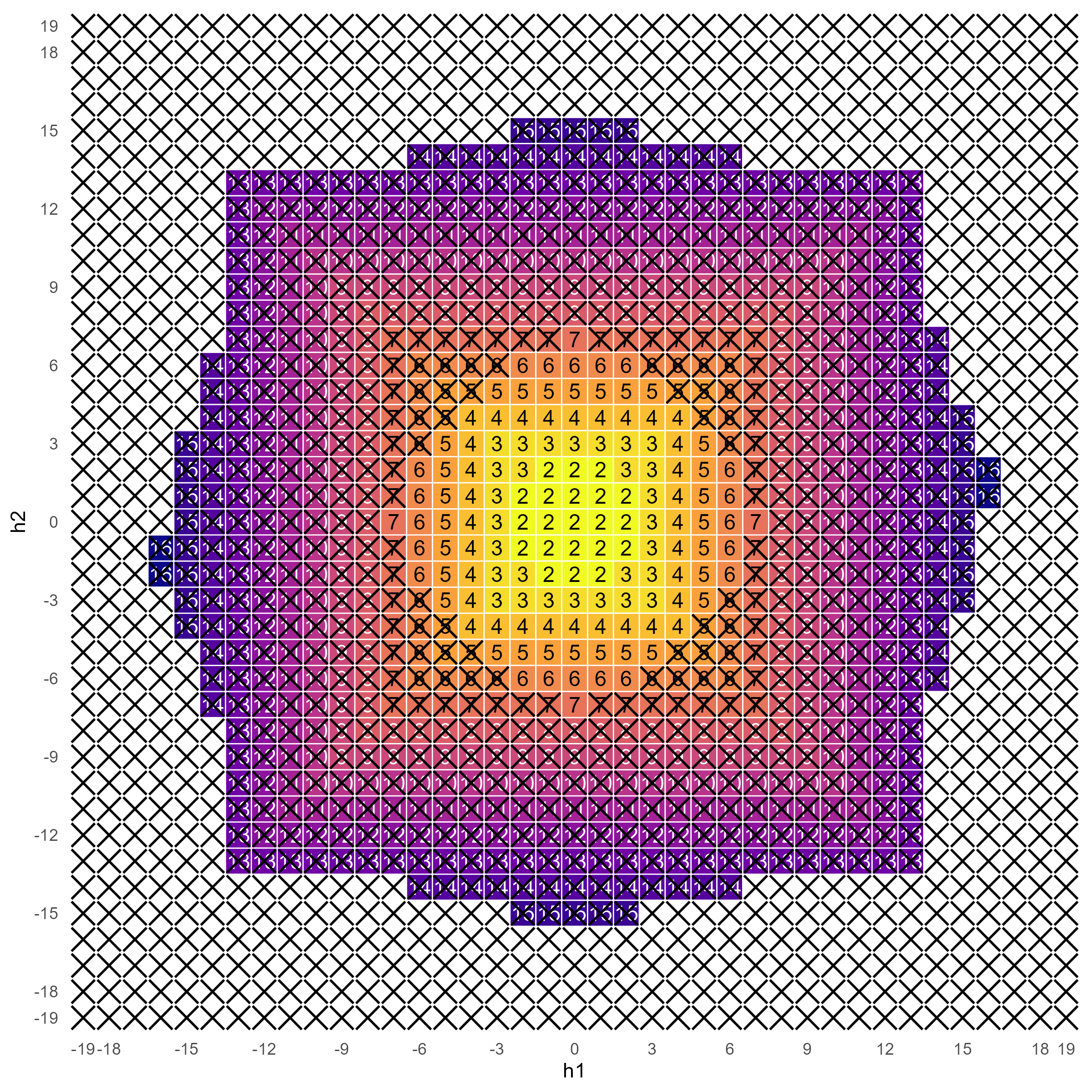}
  \caption{$\lambda=6$}
  \label{fig:truecov-l6}
\end{subfigure}\hfill
\begin{subfigure}{0.495\textwidth}
  \centering
  \includegraphics[width=\linewidth]{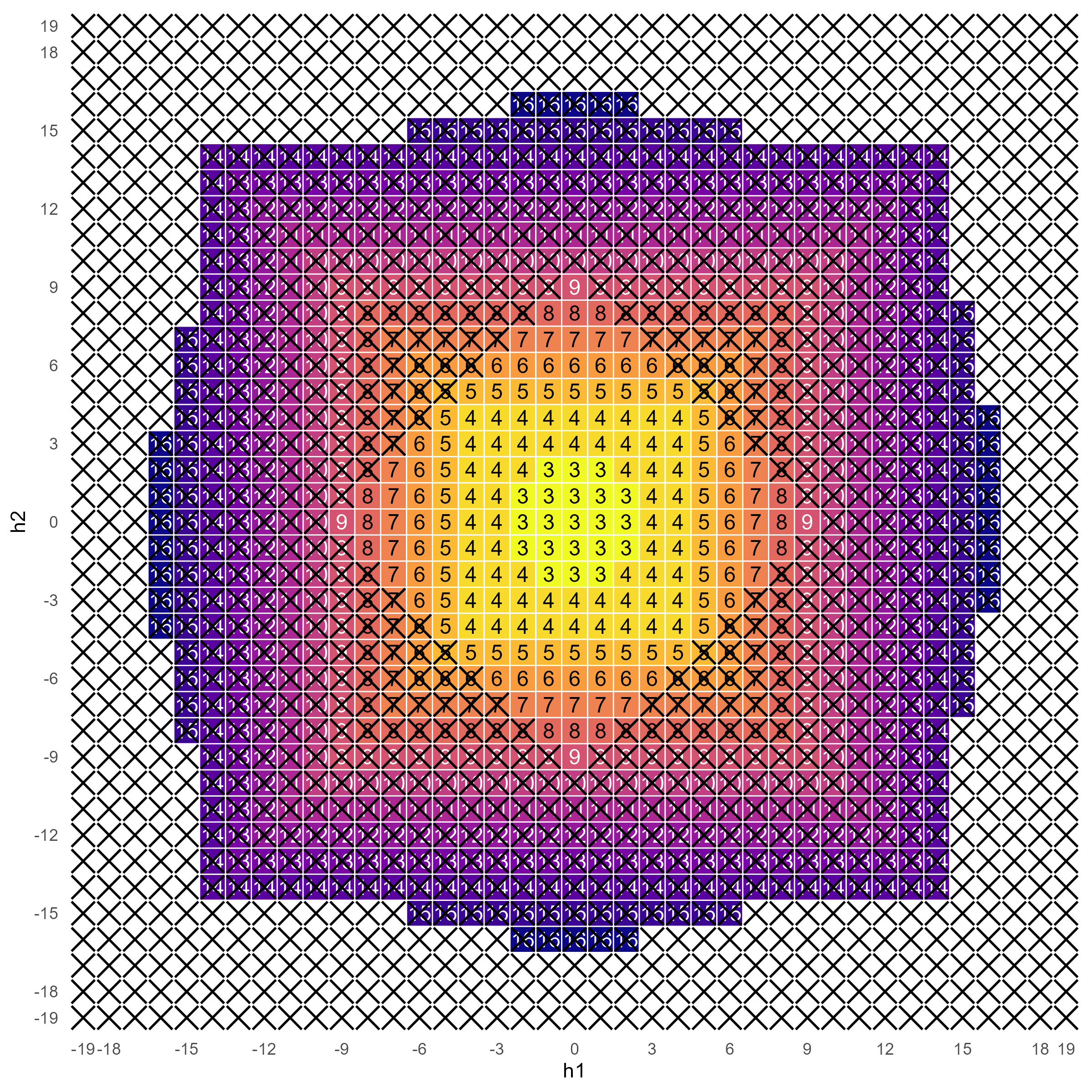}
  \caption{$\lambda=8$}
  \label{fig:masks20}
\end{subfigure}
\caption{Values of $m$ for which $\widehat{\underline{C}}_{m}^{*}$ achieves the best MSE reduction compared to the naive estimator $\widehat{C}$. Crossed-out cells indicate that tapering to zero  ($\widehat C(h_1,h_2):=0$) yields the best MSE. Grid size $n_1=n_2=20$.}
\label{fig:masks20}
\end{figure}

\FloatBarrier
\end{document}